\def\llncs{0}
\def\comments{0}
\def\anon{0}

\ifnum\llncs=1
\documentclass{llncs}
\else
\documentclass[11pt]{article}
\usepackage{fullpage}
\fi
\usepackage{graphicx} 
\usepackage{amsmath}

\usepackage{amsthm}
\usepackage{amssymb}
\usepackage{physics}
\usepackage{hyperref}
\usepackage{cleveref}
\usepackage{dsfont}
\usepackage{xcolor}
\usepackage{mathtools}
\usepackage{breakcites}
\usepackage{subcaption}

\ifnum\llncs=0
\newtheorem{theorem}{Theorem}[section]

\newtheorem{corollary}[theorem]{Corollary}
\newtheorem{definition}[theorem]{Definition}
\newtheorem{lemma}[theorem]{Lemma} 
\newtheorem{claim}{Claim}      
\newtheorem{proposition}[theorem]{Proposition}

\theoremstyle{definition}

\theoremstyle{remark}
\newtheorem{remark}[theorem]{Remark}

\fi

\newcommand{\A}{\mathcal{A}}

\renewcommand{\O}{\mathcal{O}}
\newcommand{\N}{\mathbb{N}}

\newcommand{\C}{\mathbb{C}}
\newcommand{\E}{\mathop{\mathbb{E}}}

\newcommand{\U}{\mathcal{U}}
\newcommand{\V}{\mathcal{V}}

\newcommand{\sA}{\mathsf{A}}
\newcommand{\sB}{\mathsf{B}}
\newcommand{\sC}{\mathsf{C}}

\newcommand{\poly}{\mathsf{poly}}
\newcommand{\negl}{\mathsf{negl}}

\newcommand{\wt}[1]{\widetilde{#1}}
\newcommand{\ol}[1]{\overline{#1}}

\DeclareMathOperator{\Ima}{Im}
\DeclareMathOperator{\Dom}{Dom}
\DeclareMathOperator{\Sim}{Sim}
\DeclareMathOperator{\Enc}{Enc}
\DeclareMathOperator{\Dec}{Dec}

\DeclareMathOperator{\Span}{Im}

\newcommand{\BigPr}[2]{
\Pr\left[
\begin{array}{c}
#1
\end{array}
:
\begin{array}{c}
#2
\end{array}
\right]
}

\newcommand{\xor}{\oplus}

\newcommand{\Haar}{Haar}
\newcommand{\HaarSt}{HaarSt}

\ifnum\comments=1
\newcommand{\todo}[1]{{\color{blue} (TODO: #1) }}
\newcommand{\eli}[1]{{\color{purple} (Eli: #1) }}
\newcommand{\james}[1]{{\color{red} (James: #1) }}
\else
\newcommand{\todo}[1]{}
\newcommand{\eli}[1]{}
\newcommand{\james}[1]{}
\fi

\newcommand{\Valid}{Val}
\newcommand{\Reg}[1]{\mathbf{#1}}

\newcommand{\Ciph}{\mathcal{CT}}

\newcommand{\Ora}[1]{\vec{#1}}

\title{Unclonable Encryption in the Haar Random Oracle Model}
\author{James Bartusek\thanks{Columbia University} \and Eli Goldin\thanks{New York University. Supported by an NSF graduate research fellowship.}}
\date{}
\ifnum\anon=1
\institute{}
\author{}
\date{}
\fi
\ifnum\llncs=1

\author{James Bartusek\inst{1} \and Eli Goldin\inst{2}}
\institute{Columbia University\\ \email{bartusek.james@gmail.com} \and New York University\\ \email{eli.goldin@nyu.edu}}
\fi

\begin{document}

\maketitle

\begin{abstract}
     We construct unclonable encryption (UE) in the Haar random oracle model, where all parties have query access to $U,U^\dagger,U^*,U^T$ for a Haar random unitary $U$. Our scheme satisfies the standard notion of unclonable indistinguishability security, supports reuse of the secret key, and can encrypt arbitrary-length messages. That is, we give the first evidence that (reusable) UE, which requires computational assumptions, exists in ``microcrypt'', a world where one-way functions may not exist. 
     
     As one of our central technical contributions, we build on the recently introduced path recording framework to prove a natural ``unitary reprogramming lemma'', which may be of independent interest.
\end{abstract}

\section{Introduction}

Unclonable encryption (UE), introduced by Broadbent and Lord \cite{BL20}, is a central primitive of study in quantum cryptography, both due to its potential real-world impact, say, to protect against ``store-now, decrypt-later'' attacks, and due to a plethora of open questions that remain about its feasibility. 

An unclonable encryption scheme encodes a message $m$ into a quantum ciphertext $\rho_{k,m}$ under secret key $k$ with the following security guarantee. For any pair of messages $m_0,m_1$ and $b \gets \{0,1\}$, no adversary can map $\rho_{k,m_b}$ to a state on two registers $\Reg{A},\Reg{B}$ such that $b$ can be recovered given either register along with the key $k$. To be precise, the standard security definition, termed ``unclonable indistinguishability'' by  \cite{BL20}, requires that for any cloning adversary, \[\Pr[b_A = b_B = b] \leq \frac{1}{2} + \negl(\lambda),\] where $b_A$ (resp. $b_B$) is the guess made given key $k$ and register $\Reg{A}$ (resp. $\Reg{B}$). We note that while several works have also considered the weaker notion of ``search security'', which bounds the probability that the adversary can guess an entropic message, our focus in this work is on the above notion of unclonable indistinguishability.

As in plain (private-key) encryption, one can consider two main variants: ``one-time'' UE, where each key is only used to encrypt a single message, and ``reusable'' (or many-time) UE, where a single private key can be used to encrypt any number of messages. While one-time UE is plausibly an information-theoretic object, reusable UE requires computational assumptions (as it implies reusable plain private-key encryption).

Surprisingly, several questions remain open about the feasibility of both one-time and reusable UE, even given the recent surge of activity in the field of unclonable cryptography. In the one-time setting, \cite{BL20} presented a candidate construction in the random oracle model, and variants of this basic scheme were later shown secure \cite{AKLLZ22,AKL23}. Unfortunately, these proofs rely heavily on the random oracle, despite the fact that cryptographic assumptions are not even believed to be necessary. Indeed, information-theoretic candidates have also been proposed (e.g. \cite{Botteron:2024orj,coladangelo_et_al:LIPIcs.ITCS.2026.41,bhattacharyya2026uncloneable}), with various partial results given about their security. For instance, \cite{bhattacharyya2026uncloneable} proves that one-time UE exists information-theoretically with advantage bound $\widetilde{O}(1/\lambda)$. 


While there has been less explicit study of the reusable variant, it is straightforward to modify the constructions of \cite{AKLLZ22,AKL23} to yield reusable UE in the random oracle model. As mentioned earlier, while reusable UE \emph{does} require computational assumptions (partially explaining the use of the random oracle), it is natural to wonder just how much cryptographic structure is required to build reusable UE. The goal of this work is to make progress on this question.

\paragraph{Minicrypt vs microcrypt.} As the constructions of \cite{AKLLZ22,AKL23} utilize a random oracle, they at the very least require the existence of one-way functions. This places (reusable) UE in the realm of ``minicrypt'', i.e. primitives that have (potentially heuristic) constructions from unstructured but classical forms of cryptographic hardness. The minimal form of such classical unstructured hardness is a one-way function.

However, the following question has been left unresolved: Does (reusable) UE \emph{require} one-way functions?\footnote{A recent work \cite{poremba_et_al:LIPIcs.ITCS.2026.109} has considered a similar question in the context of \emph{search-secure} unclonable encryption. We provide a detailed comparison with their work in \Cref{subsec:related}.} It may be instructive to contrast this situation with what is known about \emph{plain} (private-key) encryption and (private-key) quantum money. In fact, plain encryption can be built from pseudo-random unitaries \cite{10.1007/978-3-031-15802-5_8}, and quantum money can be built from pseudo-random states \cite{10.1007/978-3-319-96878-0_5}, both of which may exist even if one-way functions do not \cite{kre21,10.1145/3564246.3585225,10.1145/3717823.3718144}. 

That is, both live in the realm of ``microcrypt'', i.e. primitives that have (potentially heuristic) constructions from unstructured \emph{quantum} forms of cryptographic hardness. Just as one may consider any primitive with a construction in the random oracle model to live in minicrypt, one may also consider any primitive with a construction in the \emph{Haar} random oracle model to live in microcrypt. In this model, a random unitary $U$ is sampled from the Haar distribution at the beginning of time, and all parties get query access to $U,U^\dagger,U^*,$ and $U^T$. Indeed, any construction in the Haar random oracle model can be heuristically instantiated by replacing the Haar random oracle with a concrete (say, randomly sampled) quantum circuit, analogously to how random oracle constructions have heuristic instantiations from concrete cryptographic hash functions. This definition of microcrypt is robust, a series of recent works have shown that pseudo-random unitaries (and all derivatives, such as pseudo-random states, plain encryption, and quantum money) can be constructed in the Haar random oracle model~\cite{TCC:HhaYam25,EC:ABGL25,C:ABGL25}.

Now, unclonable encryption can be seen as a natural simultaneous strengthening of both plain encryption and quantum money. Indeed, it requires both the hiding properties of plain encryption and unclonable properties of quantum money. Thus, it is natural to ask whether unclonable encryption can also be placed in microcrypt.

\subsection{Results} 

We show that unclonable encryption is indeed a microcrypt primitive, establishing the following main theorem.

\begin{theorem}[Informal]
    For any polynomial message size, there exists a reusable unclonable encryption scheme with unclonable indistinguishability in the Haar random oracle model.
\end{theorem}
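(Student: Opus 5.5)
The plan has three parts: give an explicit construction, reduce reusability to one-time security, and prove one-time unclonable indistinguishability through a unitary reprogramming lemma.

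\textbf{Construction.} The key is a uniformly random string $k \in \{0,1\}^\lambda$. To encrypt a message $m$:
\begin{enumerate}
\item Sample a fresh random $r \in \{0,1\}^\lambda$.
\item Prepare a one-time unclonable encoding of $m$ whose secret is a random unitary ``key state''. Concretely, take BB84-style (Wiesner) states $|x^\theta\rangle$ with $x,\theta$ random, and mask the message as $m \oplus h(x)$, where $h$ is a hash extracted from queries to $U$.
\item Apply $U$ to the register $(k,r,\cdot)$, so that $U$ acts as a keyed pseudorandom unitary: output $r$ together with $U(|k\rangle|r\rangle \otimes |x^\theta\rangle)$.
\item Append $\theta$, encrypted under a key derived from $U$ on $(k,r)$, for instance via a PRS-based plain encryption.
\end{enumerate}
Decryption uses $k$ and $U^\dagger$ to strip the unitary layer, recovers $\theta$, measures to obtain $x$, and unmasks $m$. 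Correctness is immediate. Reuse of $k$ is handled by the fresh $r$: distinct $r$ values address essentially independent ``slices'' of $U$. The intended heuristic is that applying $U$ on $(k,\cdot)$ behaves like an independent Haar unitary $C_k$. By the results of \cite{TCC:HhaYam25,EC:ABGL25,C:ABGL25}, this is indistinguishable from a fresh Haar unitary even given access to $U,U^\dagger,U^*,U^T$ and to encryption queries.

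\textbf{Security, reusable to one-time.} I first use hybrids to replace the encryption oracle's use of $U$ on key-prefixed inputs by an independent Haar unitary $C$ (a PRU-style replacement). This step relies on the path-recording framework and on the fact that the adversary queries prefix $k$ only with negligible probability. What remains is a one-time challenge in which the cloner holds $C(|r\rangle|x^\theta\rangle)$ and the two splits later receive $k$. The challenge is that, once $k$ is revealed, the splits can query $U$ on $k$, so the replacement must stay consistent after the key is revealed. This is exactly what the \emph{unitary reprogramming lemma} should provide: if the adversary's query weight on the prefix $k$ before the split is small, then the oracle that is the ``real $U$'' before the split and the ``reprogrammed $U$'' after it is indistinguishable from one that is consistent throughout. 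I would prove the lemma in the path-recording picture (purified $V$ with relation registers) by bounding the norm of paths that touch prefix $k$, in analogy with the classical one-way-to-hiding lemma.

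\textbf{Unclonability.} After reprogramming, the challenge state is $C\,|x^\theta\rangle$, with $C$ and $\theta$ handed to both splits only after cloning. Up to an efficient map, this equals the BB84 monogamy-of-entanglement game with a basis reveal. I would then invoke the monogamy bound to show that both splits guessing $x$ (or enough of it) has negligible probability. Finally, I would convert this search bound into indistinguishability: since $h(x)$ is obtained from a Haar oracle on a fresh input, a one-way-to-hiding style extraction (again through reprogramming, this time applied to $h$) turns any simultaneous distinguishing advantage $\tfrac12+\epsilon$ into simultaneous extraction of $x$ with probability $\poly(\epsilon)$. Making this extraction work for \emph{both} splits at once is the main obstacle, because the standard O2H argument extracts from one party only. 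My first attempt would be to measure a random query of each split independently and appeal to the fact that the splits act on disjoint registers after cloning, so their extraction events factor. If that fails, I would instead use a bit-by-bit masking with a Goldreich--Levin-style inner product, following the approach of \cite{AKLLZ22}.
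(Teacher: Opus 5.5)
There is a fatal flaw in the construction. You feed the key into the public oracle as a plain input and output $U(\ket{k}\ket{r}\ket{x^\theta})$. In the Haar random oracle model every party, including the cloner, can query $U^\dagger$. The cloner applies $U^\dagger$ to the challenge ciphertext (or to any encryption-oracle answer), obtains $\ket{k}\ket{r}\ket{x^\theta}$, and measures $k$ without disturbing anything. It can then run the decryption procedure itself, learn $b$, and give this classical bit to both $\sA$ and $\sB$, winning with probability $1$.

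For the same reason, two of your supporting claims fail. The claim that ``$U$ on $(k,\cdot)$'' is indistinguishable from an independent Haar unitary given $U,U^\dagger,U^*,U^T$ is false. So is the hybrid step asserting that the adversary puts only negligible query weight on prefix $k$: every $U^\dagger$ query on a ciphertext has all of its weight there. No construction secure against inverse queries can feed $k$ into $U$ as a plain input register. The paper instead applies the key as a Pauli mask after the oracle, so the ciphertext is $X^kU\ket{m,r_1,0^{\ell_a},r_2}$, and inverting $U$ yields nothing without $k$.

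Even with the construction repaired, your plan rebuilds a QROM-style scheme inside the Haar model and leaves its two hardest pieces open.

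\begin{itemize}
\item \textbf{The hash $h$.} You cannot simply ``extract'' a hash $h$ from $U$. Measuring $U\ket{x}$ is randomized, so the encryptor and decryptor would not agree on $h(x)$, and correctness is not immediate. Moreover, a deterministic random-oracle-like $h^U$ would be a (quantum-computable) one-way function, which defeats the point of placing UE in microcrypt.
\item \textbf{Simultaneous extraction.} You flag the search-to-decision step as the main obstacle, and it is not resolved. Applying O2H to each party separately only shows that \emph{one} of them queries $x$ noticeably often. Contradicting the monogamy bound requires \emph{both} at once, and the two extraction events need not factor, since $\Reg{A},\Reg{B}$ may be correlated.
\end{itemize}

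The paper avoids both issues with a black-box compiler. It takes any \emph{pure} reusable scheme $\wt{E}^{\Ora{V}}_{\wt{k}}$ in any unitary-oracle model; for the QROM scheme of \cite{AKLLZ22,AKL23} it makes this reusable via $E_{F(k,s)}$. The reduction then absorbs $\wt{E}^{\Ora{V}}_{\wt{k}}$ into $U$ by unitary invariance. Because the reduction learns $\wt{k}$ only after cloning, it proceeds as follows:
\begin{itemize}
\item It splits $U=U_1U_2$, with $U_1$ Haar on a random subspace $S_2$ containing the encryption randomness.
\item The unitary reprogramming lemma says this split is indistinguishable from a single Haar $U$.
\item A Grover-type bound shows the cloner never queries inside $S_2$, since it sees only $X^k$-masked ciphertexts, which look like $s\oplus k$.
\item The reduction therefore simulates with $U_2$ alone during cloning and installs $U_1\cdot c_R\wt{E}^{\Ora{V}}_{\wt{k}}$ afterwards.
\end{itemize}
In this way the random oracle and the simultaneous-extraction argument stay entirely inside the already-proven QROM scheme.
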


Following the techniques of~\cite{kre21}, it is then straightforward to show

\begin{corollary}
    There exists an oracle $\mathcal{O}$ relative to which
    \begin{enumerate}
        \item Reusable unclonable encryption with unclonable indistinguishability exists relative to $\mathcal{O}$.
        \item $BQP^{\mathcal{O}}=QMA^{\mathcal{O}}$, and so one-way functions do not exist relative to $\mathcal{O}$.
    \end{enumerate}
\end{corollary}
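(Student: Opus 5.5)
The plan is to follow Kretschmer's oracle-separation template~\cite{kre21}. Take $\mathcal{O} = (\mathcal{U}, \mathsf{PSPACE})$, where $\mathcal{U} = \{U_n\}_{n}$ is a family of Haar random unitaries, one per input length, with access to $U_n, U_n^\dagger, U_n^*, U_n^T$, and $\mathsf{PSPACE}$ is a classical $\mathsf{PSPACE}$-complete oracle. The two items are handled separately. Each is shown to hold with probability $1$ over the sampling of $\mathcal{U}$, and we then fix one $\mathcal{U}$ for which both hold.

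\textbf{Item 1.} We run the informal main theorem's scheme relative to $\mathcal{U}$, ignoring the $\mathsf{PSPACE}$ part for honest algorithms. Security has to survive adversaries that may also query $\mathsf{PSPACE}$. Since $\mathsf{PSPACE}$ is a fixed classical oracle independent of $\mathcal{U}$, each query to it can be simulated by an inefficient but $U$-independent unitary. So the main theorem's security proof must be checked to hold for adversaries with a polynomial number of Haar-oracle queries but unbounded computation otherwise. This is the usual situation for path-recording/oracle-model arguments, where the bounds depend only on query count. This check is the step I expect to need the most care. The reusable scheme uses pseudorandomness internally, so we must confirm that the proof never relies on the adversary's time bound, only on its query bound.

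\textbf{From expected to almost-sure security.} The bound from the theorem is averaged over $U$. To get security for almost every fixed $\mathcal{U}$, we use the standard Borel–Cantelli argument from~\cite{kre21}. For each adversary $\mathcal{A}$ (countably many) and each $\lambda$, Markov's inequality, or concentration of measure (Lipschitz functions on the unitary group, as Kretschmer does), shows that the probability over $\mathcal{U}$ that $\mathcal{A}$'s advantage exceeds $\frac12+\lambda^{-c}$ is summable in $\lambda$. Concentration gives exponentially small tails. Alternatively one can take $\lambda^{-2c}$-type bounds from a $\negl$ average advantage and then Markov. Summing over $\lambda$ shows that only finitely many bad $\lambda$ occur almost surely, and a countable union over adversaries and over $c$ finishes the argument.

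\textbf{Item 2.} This is exactly Kretschmer's theorem that $\mathsf{BQP}^{\mathcal{O}} = \mathsf{QMA}^{\mathcal{O}}$ for $\mathcal{O} = (\text{Haar family}, \mathsf{PSPACE})$ with probability $1$. We reuse his argument: a $\mathsf{QMA}$ verifier's acceptance probability can be estimated by replacing $\mathcal{U}$ with $t$-designs (or with a "small-range" approximation of a poly-size description) and then using the $\mathsf{PSPACE}$ oracle. We need to note that his proof tolerates access to $U^*$ and $U^T$ in addition to $U, U^\dagger$, which follows since designs/approximations of $U$ induce the same for $U^*, U^T$. Then $\mathsf{BQP}^{\mathcal{O}} = \mathsf{QMA}^{\mathcal{O}}$ implies $\mathsf{NP}^{\mathcal{O}} \subseteq \mathsf{BQP}^{\mathcal{O}}$, so any efficiently computable function can be inverted (search-to-decision via $\mathsf{NP}$ queries), and one-way functions do not exist relative to $\mathcal{O}$. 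Intersecting the two probability-$1$ events yields the desired fixed oracle.
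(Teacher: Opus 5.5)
Your proposal follows the same route as the paper, which simply appeals to Kretschmer's template: a Haar family together with a $\mathsf{PSPACE}$ oracle, query-bounded security of the scheme lifted to almost-sure security over the oracle, and Kretschmer's $\mathsf{BQP}=\mathsf{QMA}$ argument; your checks on query-only security and on $U^*,U^T$ access are exactly the right points to verify. The one step to discard is the Markov alternative. For a game with baseline $\frac12$, an average success of $\frac12+\negl(\lambda)$ gives no control of the upper tail (success $1$ on half the oracles and $0$ on the rest is consistent with it), so you need the concentration bound for the $O(T)$-Lipschitz success probability as a function of $U_{\ell_u(\lambda)}$. Its $\exp\bigl(-\Omega(2^{\ell_u(\lambda)}\lambda^{-2c}/T^2)\bigr)$ tails also let you union-bound over all circuits of a given size, which you would need if you want non-uniform adversaries rather than countably many uniform ones.
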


In fact, we exhibit a general compiler, from which our main theorem can be derived by plugging in the schemes of \cite{AKLLZ22,AKL23} proven secure in the quantum random oracle model.

\begin{theorem}[Informal]\label{thm:in-reduct}
    If reusable unclonable encryption exists with respect to \emph{any} distribution over unitary oracles,\footnote{Technically, we need the scheme to satisfy a mild structural property which we call ``pure'' (see \Cref{def:pure-scheme}).} then it exists in the Haar random oracle model.
\end{theorem}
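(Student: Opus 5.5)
The plan is a compiler: simulate the unitary oracle distribution $\mathcal{D}$ of the given scheme inside the Haar random oracle model, keying the simulation by a secret so that the adversary cannot simulate it. I would take the compiled scheme's key to be $(k,s)$, with $k$ the original key and $s\in\{0,1\}^\lambda$ a fresh random string. The oracle $O\sim\mathcal{D}$ is replaced by a unitary $V_s$ built from $U$. Because $U$ is Haar random, $U$ restricted to a subspace labelled by $s$, namely $\ket{s}\ket{x}\mapsto$ an appropriately projected $U(\ket{s}\ket{x})$, behaves like an independent Haar unitary on that block. A Haar unitary can in turn be used to implement any unitary oracle distribution, up to the efficiency and approximation caveats addressed by purity; one concrete route is to plug in the known PRU-in-Haar-model constructions (\cite{EC:ABGL25,C:ABGL25}), which give a keyed family $V_s$. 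Encryption and decryption then run the original algorithms with every oracle call replaced by a call to $V_s$ (and $V_s^\dagger$, $V_s^*$, $V_s^T$ as needed, which are implementable from $U,U^\dagger,U^*,U^T$). The main theorem is then obtained by instantiating the random oracle schemes of \cite{AKLLZ22,AKL23}, since a classical random oracle is a special case of a unitary oracle distribution.

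For security, I would argue by hybrids.
\begin{enumerate}
\item Hybrid 0 is the real game. The cloning adversary $(\sA,\sB,\sC)$ queries $U$ freely, while the honest encryption and decryption query $V_s$.
\item In Hybrid 1, $V_s$ is replaced by an oracle $O\sim\mathcal{D}$ sampled independently of $U$, and $\sA,\sB,\sC$ keep access to $U$.
\item Once Hybrid 1 is reached, $U$ is independent of everything else, so the adversary can sample it itself. Unclonable indistinguishability of the original scheme, relative to $O$, then gives the bound.
\end{enumerate}
The subtlety is that $\sB$ and $\sC$ receive $s$ together with $k$ at the end. Hence plain pseudorandomness of $V_s$ is not enough; I need a statement valid even after $s$ is revealed.

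This is where I would invoke the unitary reprogramming lemma, built on path recording. Informally, the lemma should say the following. An adversary first makes queries to $U$ while a hidden $s$ is used to make queries to $V_s$. When $s$ is later revealed, the joint state is close to the one produced when the $s$-block was reprogrammed to an independent unitary, provided the adversary's pre-reveal queries place small total weight on the $s$-labelled subspace. Since $s$ is uniform and unknown during the cloning phase, the expected weight on that subspace is $q\cdot 2^{-\lambda}$ per query. A hybrid argument over the $q$ queries, with the path recording purification in which the oracle is a database of recorded input/output pairs, should bound the distinguishing advantage by $\mathrm{poly}(q)/2^{\Omega(\lambda)}$. Purity of the scheme is used here: honest algorithms act unitarily and coherently with the oracle, so their queries can be handled within the same path-recording purification without measurement issues.

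The main obstacle is the inverse, conjugate and transpose queries. Path recording for $U$ and $U^\dagger$ requires the two-sided (relation-state) version, and the adversary's queries after the reveal in $\sB,\sC$ go to the reprogrammed region, so consistency must be preserved. My approach would be to prove the reprogramming lemma first for forward and inverse queries via the symmetric path-recording isometry, bounding the overlap terms between recorded pairs labelled $s$ and the adversary's recorded pairs. I would then handle $U^*$ and $U^T$ by noting that they are conjugates of $U$ and $U^\dagger$, so the same database analysis applies under complex conjugation.
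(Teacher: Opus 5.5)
Your proposal has a genuine gap at its first step. You assume that a Haar random unitary, or a block of one, can be used to implement an \emph{arbitrary} oracle distribution $\mathcal{D}$, so that the original scheme can be run with its oracle replaced by $V_s$. No such simulation is available. The PRU constructions of \cite{EC:ABGL25,C:ABGL25} only produce unitaries that look Haar random, not samples from a general $\mathcal{D}$. In the case you actually need ($\mathcal{D}$ a classical random oracle, to instantiate \cite{AKLLZ22,AKL23}), such a simulation would place a random oracle, and hence one-way functions, inside the Haar random oracle model. That is exactly what the microcrypt framing says we should not expect. Consequently your move from Hybrid 0 to Hybrid 1 (replacing $V_s$ by an independent $O\sim\mathcal{D}$) is unjustified. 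Even a successful ``reveal $s$ later'' reprogramming argument would only let you swap the $s$-block for an independent \emph{Haar} unitary, not for a sample of $\mathcal{D}$. Purity also has nothing to do with simulating oracles: in \Cref{def:pure-scheme} it only says that encryption is a unitary $E_k$ applied to a classical basis state $\ket{m,r,0}$.

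The missing idea is that the compiled scheme should not run the original scheme at all. The paper uses the fixed scheme $X^kU\ket{m,r_1,0^{\ell_a},r_2}$, and the original scheme appears only inside the reduction. By invariance of the Haar measure, $U_1\cdot c_R\wt{E}_{\wt{k}}^{\Ora{V}}$ is again Haar on $S_2$; this is where purity is used. So the reduction can push its own challenge ciphertexts through $X^kU_1(\cdot\otimes\ket{r_i})$. The real obstacle is then that the reduction learns $\wt{k}$ only after the cloning phase. The paper handles this in two steps:
\begin{itemize}
\item It splits $U=U_1U_2$ along a random subspace $S_2$ containing the known randomness set $S_1$. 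This is the paper's reprogramming lemma, a Haar versus direct-sum-of-Haar statement.
\item It shows the cloner essentially never queries $S_2$, so it can be simulated with $U_2$ alone.
\end{itemize}
That second step does not follow from ``$s$ is uniform and unknown'' as your sketch assumes. The encryption oracle and the challenge hand the cloner states supported on an $X^k$-shift of $S_2$. The paper therefore needs a dedicated Grover-type bound for finding an element of $S$ given $S\oplus k$ and a reflection oracle for $S$.
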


Note that the random oracle (consisting of a random classical function) can be seen as one particular distribution over unitary oracles, and so the above theorem applies. 

Along the way, we prove the following technical lemma, which may be of independent interest. Let $\mathcal{H}([N])$ be a Hilbert space spanned by $\{\ket{i}\}_{i \in [N]}$, and suppose that $S_2 \subset [N]$ is sampled at random conditioned on containing some set of fixed points $S_1$, where $\frac{|S_2|}{N}, \frac{|S_1|}{|S_2|} = \negl(\lambda)$.

\begin{lemma}[Informal: Unitary reprogramming lemma]
    No adversary has better than $\negl(\lambda)$ advantage in distinguishing whether they are querying a ``monolithic'' Haar random oracle $U$ on the entire space $\mathcal{H}([N])$ or a product $U_1U_2$ of two ``disjoint'' Haar random oracles, where $U_1$ is on $\mathcal{H}(S_2)$ and $U_2$ is on $\mathcal{H}([N] \setminus S_2)$.
\end{lemma}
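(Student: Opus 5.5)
We plan to prove the lemma inside the path recording framework of Ma and Huang, which the abstract says we build on. The key fact we use is that an adversary making $t$ forward queries to $U$ and $U^\dagger$ (and, via the standard extension, $U^*$ and $U^T$) to a Haar random $U$ on $\mathcal{H}([N])$ is simulated, up to error $O(t^2/N)$, by the path recording isometry $V$. This isometry keeps a purifying register of relation states $\ket{R}$, where $R$ is a set of pairs $(x,y)$. On a forward query to input $x$, it maps $\ket{x}\ket{R}$ to a uniform superposition over fresh outputs $y$ not already appearing in $R$, and appends $(x,y)$ to $R$. Inverse queries act symmetrically. The first step is to write down the analogous simulator for the product $U_1U_2$. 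Here $U_1$ is Haar on $\mathcal{H}(S_2)$ and $U_2$ is Haar on $\mathcal{H}([N]\setminus S_2)$, with $S_2$ random subject to $S_1\subseteq S_2$. Applying path recording to each block gives a simulator $V_{S_2}$ with the following behaviour: on input $x\in S_2$ it samples a fresh $y\in S_2$, and on input $x\notin S_2$ it samples a fresh $y\notin S_2$. The purifying register also holds the random choice of $S_2$. The simulation error is $O(t^2/|S_2|)+O(t^2/(N-|S_2|))$. This is negligible as long as $|S_2|$ is superpolynomial, which we assume implicitly; the hypothesis $|S_1|/|S_2|=\negl$ together with $|S_1|\ge 1$ guarantees it.

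The second step compares $V$ with $V_{S_2}$, averaged over $S_2$, through a hybrid over the $t$ queries. After $i$ queries, the joint state of $V_{S_2}$ is a superposition of relations $R$ in which every pair $(x,y)$ satisfies a consistency predicate: $x\in S_2 \iff y\in S_2$. We then average over $S_2$. Condition on the set $D(R)$ of all points touched by $R$, which has size at most $2t$, and note that only $D(R)\setminus S_1$ matters. Because $S_2\setminus S_1$ is a uniformly random subset of density about $|S_2|/N=\negl$, with high probability no point of $D(R)\setminus S_1$ lies in $S_2$. On that event, every point of $R$ outside $S_1$ lies outside $S_2$. We then argue that the reduced state of the adversary together with $R$, after tracing out $S_2$, matches the state produced by $V$ up to negligible trace distance. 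There are two sub-steps.
\begin{enumerate}
\item An input $x\notin S_2$ gets an output uniform over $[N]\setminus S_2\setminus\Ima(R)$ rather than $[N]\setminus\Ima(R)$. These two distributions differ by $O(|S_2|/N)$ in total variation, because $S_2$ is uniformly hidden.
\item An input $x\in S_1$ gets an output in $S_2$, which is a random set of size $|S_2|$ containing $S_1$. Its output therefore lands in $S_1$ only with probability about $|S_1|/|S_2|$, and is otherwise a uniformly random point of $S_2\setminus S_1$. Averaged over $S_2$, that point is a fresh uniform point of $[N]$. So this case also matches $V$, up to error $O(|S_1|/|S_2|)$.
\end{enumerate}
Each query therefore contributes $O(|S_2|/N+|S_1|/|S_2|+t/|S_2|)$. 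Summing over $t$ queries gives a negligible total.

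The main obstacle is that both sub-steps are statements about classical sampling distributions, while the adversary holds a coherent superposition over relations $R$ and over $S_2$. Its queries may depend on earlier outputs, and the distinguishing advantage concerns the purified joint state. We would handle this as follows.
\begin{enumerate}
\item Define an intermediate isometry $W$ that samples $S_2$ lazily. Points are placed into $S_2\setminus S_1$ only when they are first produced as outputs of a query from inside $S_2$, and this lazy $W$ is shown to be exactly equivalent to $V_{S_2}$.
\item Bound $\|(V-W)\ket{\psi}\|$ query by query. The norm difference comes only from collision events with $S_1$ and from the density correction $|S_2|/N$. Each such event has amplitude-squared that is negligible uniformly over the adversary's state, because the purifying register is orthogonal across distinct $R$.
\end{enumerate}
Queries to $U^*$ and $U^T$ are handled identically, using the paper's extension of path recording to those queries. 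If there is a difficulty, we expect it in step 1, in checking that the coherent lazy sampling of $S_2$ commutes correctly with inverse queries.
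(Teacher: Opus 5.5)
Your overall shape matches the paper's: purify the choice of $S_2$, simulate each block by path recording, pass to a lazy oracle in which a point counts as inside $S_2$ exactly when it is in $S_1$ or has been reached from it (this is essentially the paper's $W_4$), and then identify that with a single path-recording oracle. However, two steps do not hold as stated.

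In your second paragraph, the claim that with high probability no point of $D(R)\setminus S_1$ lies in $S_2$ is false as soon as the adversary queries a point of $S_1$. That answer lies in $S_2$ by construction and almost surely outside $S_1$. Your two sub-steps also only treat inputs in $S_1$ or outside $S_2$. They do not treat inputs in $S_2\setminus S_1$ reached along chains from $S_1$. And they are classical total-variation statements, which do not by themselves control the purified state.

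The real gap is the assertion that the lazy oracle is \emph{exactly} equivalent to $V_{S_2}$ with $S_2$ held in superposition. It is not. On a query to a uniform superposition over $[N]$, the branch $x\in S_2\setminus S_1$ is answered inside $S_2$ by $V_{S_2}$ but outside it by the lazy oracle. What is true (the paper's $V_3=V_4$) is exact agreement on the subspace $\Pi_{good}$ spanned by the states $\ket{D}\otimes|\mathrm{Val}(D)|^{-1/2}\sum_{S_2\in\mathrm{Val}(D)}\ket{S_2}$. Here $\mathrm{Val}(D)$ is the set of size-$M_2$ sets that contain $S_1$ and all points recorded on the inside, are disjoint from all points recorded on the outside, and do not contain the current query point unless it is already known to be inside. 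The substance of the lemma is proving that the joint state stays near this subspace, i.e.\ that $\Pi_{good}$ approximately commutes with every operation. Your sketch supplies neither of the two ingredients this needs.
\begin{enumerate}
\item[(a)] For the adversary's own unitaries on the query register, the error is the query landing on a not-yet-reached point of $S_2$. This is negligible \emph{because} of the invariant, not ``uniformly over the adversary's state.'' The paper compares with a relaxed projector that drops the query-point condition; it commutes with such unitaries exactly and loses only $O(M_2/N)$ in overlap.
\item[(b)] The inverse-capable oracle $V=V^L(I-V^RV^{R,\dagger})+(I-V^LV^{L,\dagger})V^{R,\dagger}$ contains adjoints. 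So it is not enough that forward maps send $\Pi_{good}$ into itself; you must also bound $\|\Pi_{good}V^L(I-\Pi_{good})\|_{op}$. This is governed by states whose $S_2$ register is supported on $\mathrm{Val}(D)$ with amplitudes summing to zero. Orthogonality across distinct relations says nothing about such states. The paper needs a spectral bound on the inclusion matrix of points versus $M'$-subsets: the all-ones vector is the unique top eigenvector, and the next eigenvalue $\binom{N'-2}{M'-1}$ yields leakage $O(1/M')$ with $M'=M_2-|D_1|$.
\end{enumerate}
You flagged inverse queries as the likely difficulty, but (b) is the technical heart of the proof and is left open.

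Finally, comparing the lazy oracle with the monolithic $V$ requires an explicit isometry on the internal register. The paper's map $T$ routes $(x,y)$ to the inside relation iff $x$ is connected to $S_1$. Without such a map, a direct bound on $\|(V-W)\ket{\psi}\|$ is not well-typed, since the two oracles act on differently structured databases.
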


Although $S_1$ and $S_2$ are described here as subsets, the proof of this lemma is not sensitive to the basis chosen. Thus, the same lemma holds where $S_1$ is some fixed subspace of $\mathcal{H}([N])$ and $S_2$ is a random subspace containing $S_1$ as long as $\frac{\dim(S_2)}{N},\frac{\dim(S_1)}{\dim(S_2)} =\negl(\lambda)$.


\subsection{Techniques}

\paragraph{Construction.} We begin with our construction of unclonable encryption in the Haar random oracle model. Let $U$ be the Haar random unitary, $k$ be a classical key, and suppose that we wish to encrypt a message $m \in \{0,1\}^\ell$ (of some fixed length). To do so, the encryptor samples some randomness $r \gets \{0,1\}^n$ and outputs \[X^k U\ket{m,r},\] where $X^k$ is the operation that applies a bit flip to qubit $i$ if $k_i = 1$. We can think of $U$ (which is publicly-queryable) as defining a partitioning of Hilbert space into one subspace $\{U\ket{m,r}\}_r$ for each choice of $m$. Then, $X^k$ defines a random ``shift'' of these subspaces so that they are not accessible to the cloning adversary (who does not know $k$). An encryption of $m$ is thus a shifted sample from the $m$'th subspace.

\paragraph{Security game.} Our goal is then to show that if there exists an unclonable encryption scheme in \emph{any} oracle model (with some minor restrictions on the format of the construction), then the above scheme is secure. We do so via a strategy akin to a random self-reduction. In order to explain the reduction, we first set up some more notation. 

The adversary in the unclonable encryption game is specified by a cloner $\sC$ and two distinguishers $\sA$ and $\sB$. The security game proceeds as follows, where all adversaries have access to a Haar random $U$ via $U, U^\dagger, U^*, U^T$ queries.

\begin{itemize}
    \item The encryption key $k$ is sampled and $\sC$ obtains access to an encryption oracle $\Enc(k,\cdot)$ which they can query any number of times to get sample encryptions under key $k$.
    \item $\sC$ outputs a pair of challenge messages $m_0,m_1$, and receives $\ket{\psi} = X^k U\ket{m_b,r}$ for a random choice of $b \gets \{0,1\}$ and $r \gets \{0,1\}^n$. 
    \item $\sC$ continues to have access to $\Enc(k,\cdot)$, and eventually outputs a state on two registers $\Reg{A},\Reg{B}$.
    \item Run $b_A \gets \sA(k,\Reg{A})$ and $b_B \gets \sB(k,\Reg{B})$.
    \item The adversary wins if $b = b_A = b_B$.
\end{itemize}

We refer to the operation of $\sC$ in the above game as the \emph{cloning phase}, and operations of $\sA,\sB$ as the \emph{distinguishing phase}.

Now, suppose we are given an unclonable encryption scheme that has been proven secure in 7some other oracle model (meaning that no adversary has better than $1/2 + \negl(\lambda)$ advantage). We demand that the encryption procedure for this scheme can be written as \[E_{\wt{k}}^{\wt{U}}\ket{m,\wt{r}},\] where $E_{\wt{k}}$ is some oracle-aided unitary ($\wt{U}$ being the oracle) and $\wt{r}$ is some classical random string.\footnote{In fact, we also allow the encryption unitary to operate on some auxiliary 0 qubits, but still demand that the output is pure (given $m$ and $\wt{r}$). That is, the scheme does not trace out any part of the encryption unitary's output.} We will exhibit a reduction showing that any adversary against our Haar random oracle scheme can be used to break this scheme.

\paragraph{The reduction.} The main idea is to ``absorb'' the description of the unitary $E_{\wt{k}}^{\wt{U}}$ into $U$ so that the reduction can essentially re-encrypt its challenge ciphertexts encrypted under $E_{\wt{k}}^{\wt{U}}$ by additionally applying $X^k U$. In particular, we will set parameters so that $n$ (the size of the random string $r$ in our scheme) is twice the size of $\wt{r}$, and consider the encryption scheme that samples $\wt{r}, s \in \{0,1\}^{n/2}$ and outputs \[X^k U E_{\wt{k}}^{\wt{U}}\ket{m,\wt{r}}\ket{s}.\] The reduction can then absorb the description of $E_{\wt{k}}^{\wt{U}}$ into $U$, giving the adversary access to $U E_{\wt{k}}^{\wt{U}}$ rather than $U$, which is identically distributed due to the fact that $U$ is Haar random.

This same approach was successfully applied in~\cite{poremba_et_al:LIPIcs.ITCS.2026.109} to analyze unclonable encryption in their variant of the Haar random oracle model, where the cloner does not receive access to the Haar random unitary. In this setting, \cite{poremba_et_al:LIPIcs.ITCS.2026.109} showed that even without $X^k$ and $r$, encryptions $U\ket{m}$ have inverse exponential unclonable search security. However, this idea does \textit{not} work if the cloner is allowed query access to $U$ (as in our setting), since the reduction does not receive the description of $\wt{k}$ until \emph{after} the cloning phase. 

To address this inconsistency, we sample a random subset $R \subset \{0,1\}^{n/2}$ of size, say, $2^{n/4}$, and consider splitting $\mathcal{H}(\{0,1\}^{\ell + n})$ (the Hilbert space acted upon by $U$) into the following subspaces:\footnote{To simplify the proof, we actually use a slightly different definition of $S_2$, though the definition here still conveys the basic intuition.} \[S_2 \coloneqq \mathsf{Span}\{\ket{m,\wt{r},s}\}_{s \in R}, \quad S_2^\perp \coloneqq \mathsf{Span}\{\ket{m,\wt{r},s}\}_{s \notin R}.\] 

Now, we define the unitary queried by the adversary to be the product of two unitaries $U_1 U_2$, where $U_1$ applies $U E_{\wt{k}}^{\wt{U}}$ to $S_2$ (and acts as the identity on $S_2^\perp$), and $U_2$ applies $U$ to $S_2^\perp$ (and acts as the identity on $S_2$). Then, whenever the adversary makes an encryption query, we sample $s \gets R$ to use as the random coins. This takes care of our issue due to the following observations.

\begin{itemize}
    \item Since all of the adversary's ciphertexts are masked by $X^k$, they shouldn't learn anything about $R$, and therefore will not be able to query inside the subspace $S_2$ during the cloning phase. While intuitive, establishing this formally takes some care, which we do via a sequence of hybrids in \if\llncs=0 \Cref{sec:hardhyb}\else Appendix A of the full version\fi.
    \item Hence, the reduction can simulate the adversary's unitary using \emph{only} $U_2$ during the cloning phase, and then $U_1 U_2$ during the distinguishing phase, once it knows $E_{\wt{k}}^{\wt{U}}$.
\end{itemize}

\paragraph{The unitary reprogramming lemma.} While defining the adversary's unitary as $U = U_1U_2$ allows us to get the reduction to go through, we have now actually changed the distribution over oracles that the adversary sees. A key hybrid in our proof establishes that the adversary will not be able to distinguish this change in distribution.

In more detail, we abstract the setting as follows. Let $\mathcal{H}([N])$ be a Hilbert space spanned by $\{\ket{i}\}_{i \in [N]}$, and let $S_1 \subset [N]$ be some set of public fixed points. $S_1$ will correspond to the set of random strings $s$ used to respond to the adversary's encryption queries, which must be known to the reduction. Suppose that we sample a subset $S_2$ of some fixed size uniformly at random such that $S_1 \subset S_2 \subset [N]$. This larger set corresponds to the $S_2$ in the reduction described above. Then, as long as $\frac{|S_2|}{N} = \negl(\lambda)$ and $\frac{|S_1|}{|S_2|} = \negl(\lambda)$, we can show that no adversary has better than $\negl(\lambda)$ advantage in distinguishing the following cases.
\begin{itemize}
    \item Sample $U \gets \mathsf{Haar}([N])$ and give query access to $U,U^\dagger,U^*,U^T$.
    \item Sample $U_1 \gets \mathsf{Haar}(S_2)$ and $U_2 \gets \mathsf{Haar}([N] \setminus S_2)$ and give query access to $(U_1U_2),\allowbreak(U_1U_2)^\dagger,\allowbreak(U_1U_2)^*,\allowbreak(U_1U_2)^T$.
\end{itemize}

\paragraph{Path recording and proof intuition for unitary reprogramming}
To show this, we build on the recently introduced path recording framework \cite{MH25,SMLBH25}. For the purpose of exposition, we will discuss the case where adversaries only have forward access to $U$ or $U_1U_2$. 

In the path recording framework, oracle queries to a Haar random unitary $U\gets \Haar([N])$ can be efficiently simulated by making queries to the partial isometry $V$ operating on some internal state $\ket{D}$ where $D \subseteq [N]\times [N]$ is an injective relation (i.e. if $D=\{(x_1,y_1),\dots,(x_t,y_t)\}$, the $y_i$'s are all unique). $V$ acts as follows:
$$V\ket{x}\ket{D} \propto \sum_{y \in [N]\setminus \Ima(D)} \ket{y}\ket{D\cup \{(x,y)\}}.$$

Generalizing this, we can simulate queries to $U_1U_2$ by the following map $V^{S_2}$ defined by
\begin{enumerate}
    \item On input $\ket{x}\ket{D_1}\ket{D_2}$ for $x\in S_2$,
    $$V^{S_2}\ket{x}\ket{D_1}\ket{D_2}\propto\sum_{y\in S_2\setminus \Ima(D_1)}\ket{y}\ket{D_1\cup \{(x,y)\}}\ket{D_2}.$$
    \item On input $\ket{x}\ket{D_1}\ket{D_2}$ for $x\notin S_2$,
    $$V^{S_2}\ket{x}\ket{D_1}\ket{D_2}\propto\sum_{y\in [N]\setminus (S_2\cup \Ima(D_2))}\ket{y}\ket{D_1}\ket{D_2\cup \{(x,y)\}}.$$
\end{enumerate}

We then look at the purification of this process, where we begin with the state 
$$\sum_{\substack{S_1\subseteq S_2\\|S_2|=M_2}}\ket{S_2}$$
stored in an ancilla register, and then apply $V^{S_2}$ controlled on this register.

It turns out that as long as only polynomially many queries are made, the state stored in this ancilla register will always be approximately determined by the register containing $D_1$. In particular, the internal state of the oracle will always be close to a mixture of superpositions of states of the form
$$\ket{D_1}\ket{D_2}\sum_{\substack{S_1\cup \Ima(D_1)\subseteq S_2\\|S_2|=M_2}}\ket{S_2}.$$ 
Here, $S_1\cup \Ima(D_1)$ represents the strings the adversary is able to find in $S_2$. In particular, these are the strings which are either contained in $S_1$, or returned to the adversary from queries on strings it already knew were in $S_2$ (and thus are stored in $D_1$).

Simple algebra shows that applying the purified $V^{S_2}$ oracle to a state of this form will result in a superposition of states of this form. When $x\notin \Ima(D_1)\cup S_1$, this process results in some negligible error corresponding to the event that $x$ happens to lie in $S_2$. The fact that the same holds for inverse, transpose, and conjugate queries follows from a direct combinatorial bound.

Thus, since the $S_2$ register is always determined information-theoretically by the $D_1$ register, we can delete it with no consequences. Doing so properly will show that query access to $U_1U_2$ as determined by the process above is indistinguishable from query access to $\widetilde{V}$ defined by
\begin{enumerate}
    \item On input $\ket{x}\ket{D_1}\ket{D_2}$ for $x\in \Ima(D_1)\cup S_1$,
    $$\widetilde{V}\ket{x}\ket{D_1}\ket{D_2}\propto\sum_{y\in [N]\setminus (\Ima(D_1) \cup S_1)}\ket{y}\ket{D_1\cup \{(x,y)\}}\ket{D_2}.$$
    \item On input $\ket{x}\ket{D_1}\ket{D_2}$ for $x\notin \Ima(D_1)\cup S_1$,
    $$\widetilde{V}\ket{x}\ket{D_1}\ket{D_2}\propto\sum_{y\in N\setminus (\Ima(D_1)\cup S_1\cup \Ima(D_2))}\ket{x}\ket{D_1}\ket{D_2\cup \{(x,y)\}}.$$
\end{enumerate}

At this point, standard techniques can be used to show that this is equivalent to the original path recording oracle $V$ corresponding to a single random unitary $U$. In particular, there is an isometry $T$ which takes an internal state $\ket{D}$ of $V$ and turns it into an internal state consistent with $\widetilde{V}$, and since this isometry acts only on the internal state, queries to $V$ and $\widetilde{V}$ are indistinguishable. In slightly more detail, $T$ will look at the relations in $D$ and partition them between $D_1$ and $D_2$ as follows: for $(x,y)\in D$, if there is some path $(x_1,y_1),(x_2,y_2)\dots,(x_t,y_t) \in D$ such that 
\begin{enumerate}
    \item $(x_y,x_{t+1})=(x,y)$
    \item $x_1\in S_1$
    \item $x_i=y_{i-1}$
\end{enumerate}
then $T$ will add $(x,y)$ to $D_1$. Otherwise, it will add $(x,y)$ to $D_2$.

\subsection{Concurrent and independent work}

A recent preprint \cite{bhattacharyya2026uncloneablebitexists} studies the question of information-theoretic one-time secure UE for single-bit messages. They propose a construction based on Haar random unitaries and claim negligible security, improving on the inverse-polynomial security of \cite{bhattacharyya2026uncloneable}. The secret key in their scheme includes the description of a Haar random unitary that is applied to the message bit concatenated with a random string. Hence, their proposed construction is not efficient, and they leave open the question of information-theoretic one-time UE with efficient encryption and decryption.

\subsection{Other related work}\label{subsec:related}

\paragraph{Unclonable encryption and microcrypt.}

To the best of our knowledge, \cite{poremba_et_al:LIPIcs.ITCS.2026.109} is the only other work that has explicitly considered the question of whether unclonable encryption lies in microcrypt. However, their result is weaker than ours in the following respects:

\begin{itemize}
    \item They consider \emph{search-secure} unclonable encryption, which, as mentioned earlier, is weaker than the standard notion of unclonable indistinguishability.
    \item They prove that search-secure unclonable encryption exists in a non-standard variant of the Haar random oracle model where there exists an \emph{exponentially large family} $\{U_i\}_{i \in [2^\lambda]}$ of Haar random oracles for each security parameter. Moreover, the adversary is not given access to these oracles during the cloning phase, and is only given access to $U_i,U_i^\dagger$ for a single choice of $i$ in the distinguishing phase.
    \item They do not consider \emph{reusable} security, and only allow the adversary to see a single ciphertext per key. Like us, they do allow the message size to be bigger than the key size, which means their notion is indeed computational.
\end{itemize}

We also mention that they write down a search to decision reduction for a general class of ``oracular cloning games'' establishing that Haar random oracles are the hardest type of oracles for such games. Conceptually, this is similar to the compiler we analyze in this work for building UE in the Haar random oracle model. However, a crucial distinction is that in their setting, the adversary does not get access to the oracle in the cloning phase, which introduces significant complications in our setting.

\paragraph{Cryptography in the Haar random oracle model.} The works of~\cite{EC:ABGL25,C:ABGL25} construct pseudorandom unitaries in the Haar random oracle model by interleaving queries to $U$ by applications of $X^k$. Our construction of unclonable encryption is inspired by this, although it turns out that in our setting a simplified construction using only one query to $U$ is sufficient. Moreover, our proof structure is radically different. 

\paragraph{Unitary reprogramming.} \cite{TCC:HhaYam25} also establishes a lemma that they call a ``unitary reprogramming lemma" (a similar lemma, under the name ``one-way to hiding for unitary oracles", appears in~\cite{CGR26}). Intuitively, their lemma states that for any oracle, if there is a subspace for which it is difficult to find a member state, then the oracle is indistinguishable from the same oracle pre-composed with an arbitrary unitary acting on that subspace. This is essentially a generalization of the one-way to hiding lemma from~\cite{BBBV}. Our reprogramming lemma is unrelated to this: we show that a Haar random unitary is indistinguishable from the direct sum of two Haar random unitaries on random subspaces, even if a large portion of the subspaces are revealed. 

One of the key challenges in our work (which we use our new unitary reprogramming lemma to solve) is to show that the cloner cannot query its oracle on the subspace corresponding to encryptions under the correct key ($S_2$). However, since the cloner does get some information about $S_2$ in the form of valid ciphertexts, it may be possible for them to find member states of the relevant subspace, and thus we cannot immediately apply a unitary version of one-way to hiding. Instead, we use our unitary reprogramming oracle to isolate the information the adversary can obtain about the encryption subspace, after which we later apply one-way to hiding style techniques \ifnum\llncs=0(see~\Cref{sec:hardhyb})\else(see Appendix A of the full version)\fi.

More generally, our new reprogramming lemma is very similar to techniques used in reprogramming for random \textit{permutations} in the classical setting. This is quite different from a generalization of one-way to hiding as seen in~\cite{TCC:HhaYam25}, and the two techniques can be applied in conjunction (as we do here).

\subsection{Paper structure}

We begin with preliminaries in~\Cref{sec:prelims}. We then prove the ``unitary reprogramming lemma" described earlier in~\Cref{sec:reprog}. In~\Cref{sec:mainthm}, we give a formal proof of~\Cref{thm:in-reduct}. Finally, in~\Cref{sec:qrom}, we conclude the argument by showing that existing constructions of unclonable encryption in the QROM can be modified to achieve many-time security with a pure construction, thus establishing our main result.

\section{Preliminaries}
\label{sec:prelims}
\subsection{Notation}

We will use $n$ to refer to the security parameter. For a variable $X$, $\ell_{X}$ will generally denote some polynomial in $n$ where $\{0,1\}^{\ell_{X}(n)}$ will correspond to the domain of $X$. When the security parameter is clear from context, we will omit it.

$X$ will denote Pauli $X$. For $k\in \{0,1\}^{\ell_k(n)}$, when the ambient space is clear from context, $X^k$ will refer to the map $X^{k_1}\otimes \dots \otimes X^{k_{\ell_k}}\otimes I$.

For $S$ a set, we will define $\mathcal{H}(S)$ to be the Hilbert space with basis vectors $\ket{s}$ for $s\in S$. A register is a named Hilbert space, and will be represented by a bolded string, for example $\Reg{A}$. For a set $S$, $\Haar(S),\HaarSt(S)$ will refer to the Haar distribution over unitaries and states respectively on $\mathcal{H}(S)$. When there is an implicit larger ambient space $\mathcal{H}(T)$ for $S\subseteq T$, $\Haar(S)$ will refer to $\Haar(S) \oplus I_{\mathcal{H}(T\setminus S)}$.

For a set $T$ and a natural number $M$, we define ${T\choose M} = \{S\subseteq T : |S| = M\}$. We define $\mathcal{P}(T) = \{S \subseteq T\} = \bigcup_{i=0}^{|T|}{T\choose i}$ to be the powerset of $T$.
\subsection{Oracle algorithms}


\begin{definition}
    An oracle $\mathcal{O}$ is an arbitrary linear map with input registers $\Reg{In}$ and $\Reg{St}$ and output registers $\Reg{Out}$ and $\Reg{St}$ along with an initial state $\ket{\phi_0}$ in $\Reg{St}$. An oracle algorithm $\A^{(\cdot)} = (\A_1,\dots,\A_t)$ is a sequence of CPTP maps with input $\Reg{A},\Reg{Out}$ and output $\Reg{A},\Reg{In}$. We define the (mixed) state $\A^{\mathcal{O}}(\ket{\psi_0}_{\Reg{A},\Reg{Out},\Reg{St}})$ to be the state
    $$\A^{\mathcal{O}}(\ket{\psi_0}_{A,St}) = \left(\A_t \cdot \mathcal{O}\cdot \A_{t-1}\cdot \mathcal{O}\cdot \A_{t-2} \cdots  \mathcal{O}\cdot \A_1\right)\left( \ketbra{\psi_0}_{\Reg{A},\Reg{Out}} \otimes \ketbra{\phi_0}_{\Reg{St}}\right)$$

    We will also define oracle algorithms with access to multiple oracles, possibly interacting on the same internal state $\Reg{St}$. In this case, we will arbitrarily sequence the queries made to the oracles. In particular, we define
    $$\A^{\mathcal{O}_1,\dots,\mathcal{O}_s}(\ket{\psi_0}) = \left(\A_t \cdots \mathcal{O}_1\cdot\A_{s+1}\cdot \mathcal{O}_s \cdot \A_s \cdots \mathcal{O}_2 \cdot \A_2 \cdot \mathcal{O}_1 \cdot \A_1\right)\left( \ketbra{\psi_0}_{\Reg{A},\Reg{Out}} \otimes \ketbra{\phi_0}_{\Reg{St}}\right)$$

    For $V$ a partial isometry, we will denote $\A^V=\A^{V(\cdot)V^\dagger}$. We will frequently denote $\vec{V} = (V,V^\dagger,V^*,V^T)$ and likewise $\A^{\vec{V}}=\A^{(V(\cdot)V^\dagger,V^\dagger(\cdot)V,V^*(\cdot)V^T,V^T(\cdot)V^*)}$.
\end{definition}

\begin{definition}
    A unitary oracle ensemble $\mathcal{U} = \{\mathcal{U}_n\}_{n\in \N}$ is a family of distributions over unitaries. We will sometimes refer to unitary oracle ensembles as unitary oracles.
\end{definition}

\begin{definition}
    Let $\ell_u(n)$ be some polynomial. The quantum Haar random unitary model (QHROM) of length $\ell_u$ is the unitary oracle ensemble $\U = \{\U_n\}_{n\in \N}$ where $\U_n \coloneqq \Haar(\{0,1\}^{\ell_u(n)})$.
\end{definition}

\begin{definition}
    Let $\ell_m,\ell_n$ be some polynomials and let $\mathcal{F}_n=\{f:\{0,1\}^{\ell_m(n)}\to \{0,1\}^{\ell_n(n)}\}$. Given $f$ in $\mathcal{F}$, define $A_f \ket{x}\ket{y} = \ket{x}\ket{y\xor f(x)}$. The quantum query secure random oracle model (QROM) mapping $\ell_m$ to $\ell_n$ is the unitary oracle ensemble $\U=\{\U_n\}_{n\in \N}$ where $\U_n$ samples $f\gets \mathcal{F}_n$ and outputs $A_f$.
\end{definition}
\subsection{Distance metrics on quantum channels}

\begin{definition}[Diamond norm]
    Let $\Phi$ be a quantum channel operating on $n$ qubits. The diamond norm of $\Phi$ is
    $$\norm{\Phi}_{\diamond}=\sup_{\ket{\phi}}\norm{(\Phi \otimes I_n)(\ketbra{\phi})}_1$$

    The diamond distance between two quantum channels $\Phi_1,\Phi_2$ operating on $n$ qubits is then the diamond norm of their difference
    $$\norm{\Phi_1-\Phi_2}_{\diamond}=\sup_{\ket{\phi}}\norm{(\Phi_1 \otimes I_n)(\ketbra{\phi})-(\Phi_2 \otimes I_n)(\ketbra{\phi})}_1$$
\end{definition}

\begin{theorem}\label{thm:querydiam}
    Let $\A^{(\cdot)}$ be any quantum oracle algorithm making at most $t$ queries. Let $\Phi_1,\Phi_2$ be two quantum channels. Then
    $$\abs{\Pr[\A^{\Phi_1}\to 1] - \Pr[\A^{\Phi_2} \to 1]} \leq t\cdot \norm{\Phi_1-\Phi_2}_{\diamond}$$
\end{theorem}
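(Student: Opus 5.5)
We prove the bound by a standard hybrid argument, swapping the oracle from $\Phi_1$ to $\Phi_2$ one query at a time. Write the algorithm's execution with oracle $\Phi$ as
\[
\A^{\Phi} = \A_{t+1}\circ(\Phi\otimes I)\circ\A_t\circ\cdots\circ(\Phi\otimes I)\circ\A_1,
\]
applied to the fixed initial state. Here $I$ acts on the algorithm's private register $\Reg{A}$ (and on any part of the oracle's internal state $\Reg{St}$ that the query does not touch). For $i\in\{0,\dots,t\}$, define the hybrid $H_i$ to be the run in which the first $i$ queries are answered by $\Phi_2$ and the remaining $t-i$ queries by $\Phi_1$. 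Then $H_0=\A^{\Phi_1}$ and $H_t=\A^{\Phi_2}$. By the triangle inequality,
\[
\abs{\Pr[H_0\to1]-\Pr[H_t\to1]}\le\sum_{i=1}^t\abs{\Pr[H_{i-1}\to1]-\Pr[H_i\to1]}.
\]

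Next we bound a single term. Hybrids $H_{i-1}$ and $H_i$ differ only in the $i$-th query. Let $\rho_i$ be the joint state just before the $i$-th query; it is identical in both hybrids because the first $i-1$ queries are answered the same way. Purify $\rho_i$ into $\ketbra{\phi}$ using an extra reference register, which can only make the gap larger. The two hybrids then reach the states $(\Phi_1\otimes I)(\ketbra{\phi})$ and $(\Phi_2\otimes I)(\ketbra{\phi})$, traced down appropriately. After this point both hybrids apply the same post-processing: the channels $\A_{i+1},\dots$ together with the queries $i+1,\dots,t$, all answered by $\Phi_1$. This post-processing, followed by the final measurement, is a CPTP map followed by a two-outcome POVM. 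Both operations are contractive in trace norm. Hence the difference in acceptance probabilities is at most
\[
\tfrac12\norm{(\Phi_1\otimes I)(\ketbra{\phi})-(\Phi_2\otimes I)(\ketbra{\phi})}_1\le\norm{\Phi_1-\Phi_2}_\diamond.
\]
The factor $\tfrac12$ comes from Helstrom's bound, and we may simply discard it. Summing over the $t$ terms gives the bound $t\cdot\norm{\Phi_1-\Phi_2}_\diamond$.

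The only delicate point is that the ancilla in the diamond-norm definition has to be large enough to hold everything untouched by the oracle. That includes the algorithm's workspace, which may be larger than $n$ qubits, and the purification. This is harmless for two reasons. First, for a channel on $n$ qubits, the supremum over ancillas of arbitrary dimension is attained already with an $n$-qubit ancilla, since the Schmidt decomposition reduces any pure input to one whose reference part has rank at most $2^n$. Second, the oracle's internal state $\Reg{St}$ is handled by viewing $\Phi$ as a channel acting on $\Reg{In}$ together with $\Reg{St}$, with its initial state prepared as part of the input. Once these two points are noted, the telescoping argument above goes through directly.
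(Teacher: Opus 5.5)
Your proof is correct. The paper states this theorem in the preliminaries as a standard fact and gives no proof, so the only question is whether your argument is sound, and it is. It is the standard telescoping hybrid over the $t$ queries, with the following steps:
\begin{itemize}
\item adjacent hybrids share the state just before the $i$-th query;
\item purification together with the Schmidt-rank reduction matches the paper's definition of $\norm{\cdot}_{\diamond}$, which uses an ancilla of the same size as the channel's input;
\item the common post-processing and measurement are contractive in trace norm.
\end{itemize}

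Two remarks on how your argument relates to the rest of the paper. First, because the paper's diamond norm has no factor $\frac{1}{2}$, your argument actually gives the sharper bound $\frac{t}{2}\norm{\Phi_1-\Phi_2}_{\diamond}$.

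Second, you treat $\Phi$ as a channel on $\Reg{In}$ together with $\Reg{St}$, with the oracle's initial state as part of the input. This is exactly how the paper later uses the theorem: for the stateful path-recording oracles in the $V_1 \to V_2$ step of the reprogramming lemma. Those oracles are conjugations by partial isometries, so they are only trace-nonincreasing rather than genuine channels. Your argument still covers this case. Completely positive trace-nonincreasing maps are still contractive in trace norm on Hermitian inputs. Dropping the Helstrom factor $\frac{1}{2}$, as you did, removes any need for the two branches to have equal trace, since $\abs{\mathrm{Tr}(MX)} \le \norm{X}_1$ for any $0 \le M \le I$.
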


\begin{definition}[Operator norm]
    Let $V$ be a linear map. Then the operator norm of $V$ is
    $$\norm{V}_{op} = \sup_{\ket{\phi}} \norm{V\ket{\phi}}_2$$

    The operator distance between two linear operators $V_1,V_2$ is then the operator norm of their difference
    \begin{equation*}
        \begin{split}
            \norm{V_1-V_2}_{op} = \sup_{\ket{\phi}} \norm{V_1\ket{\phi} - V_2\ket{\phi}}_2
        \end{split}
    \end{equation*}
\end{definition}

\begin{proposition}\label{prop:diamop}
    Let $V_1,V_2$ be two partial isometries, let $\Phi_1$ be the operator $V_1(\cdot)V_1^\dagger$, and let $\Phi_2$ be the operator $V_2(\cdot)V_2^\dagger$. Then
    $$\norm{\Phi_1-\Phi_2}_\diamond \leq 2\norm{V_1-V_2}_{op}$$
\end{proposition}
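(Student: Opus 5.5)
We have to bound the diamond distance between the channels $\Phi_1 = V_1(\cdot)V_1^\dagger$ and $\Phi_2 = V_2(\cdot)V_2^\dagger$ by $2\norm{V_1-V_2}_{op}$. We do it directly from the definitions: fix an arbitrary input state, stabilized by the identity on a reference system, and bound the trace norm of the output difference via a telescoping identity.

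\textbf{Step 1 (reduction to a single pure input).} Let $\ket{\phi}$ be any pure state on the input system together with the reference system. Write $W_i = V_i \otimes I$. Then $\norm{W_1 - W_2}_{op} = \norm{V_1 - V_2}_{op}$, since tensoring with the identity does not change the operator norm. Set $\ket{a} = W_1\ket{\phi}$ and $\ket{b} = W_2\ket{\phi}$. Because $V_1, V_2$ are partial isometries, $\norm{\ket{a}}_2, \norm{\ket{b}}_2 \le 1$, and
$$\norm{\ket{a}-\ket{b}}_2 \le \norm{V_1-V_2}_{op}.$$

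\textbf{Step 2 (telescoping).} Write
$$\ketbra{a} - \ketbra{b} = \ket{a}\left(\bra{a}-\bra{b}\right) + \left(\ket{a}-\ket{b}\right)\bra{b}.$$
For rank-one operators, $\norm{\ket{u}\bra{v}}_1 = \norm{u}_2\norm{v}_2$. The triangle inequality then gives
$$\norm{\ketbra{a}-\ketbra{b}}_1 \le \norm{a}_2\norm{a-b}_2 + \norm{a-b}_2\norm{b}_2 \le 2\norm{V_1-V_2}_{op}.$$

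\textbf{Step 3 (supremum).} Taking the supremum over $\ket{\phi}$, as in the definition of the diamond norm given above, yields
$$\norm{\Phi_1-\Phi_2}_\diamond \le 2\norm{V_1-V_2}_{op}.$$
The definition is stated for pure inputs, so no convexity step is needed. If mixed inputs were allowed, they would reduce to this case by purification or convexity of the trace norm.

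The only point requiring care is that partial isometries, unlike unitaries, may shrink norms, so $\Phi_i$ need not be trace preserving. This causes no problem: the argument only used $\norm{a}_2, \norm{b}_2 \le 1$, which holds because a partial isometry has operator norm at most $1$. There is no real obstacle.
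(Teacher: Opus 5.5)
Your proof is correct and takes essentially the paper's approach: the paper defers to Kretschmer's Lemma 9 for unitaries and observes that the only property used there is that the operator norm is at most $1$, which also holds for partial isometries. Your telescoping identity $\ketbra{a}-\ketbra{b} = \ket{a}(\bra{a}-\bra{b}) + (\ket{a}-\ket{b})\bra{b}$ makes this explicit, since the only place you use that $V_1,V_2$ are partial isometries is to get $\norm{a}_2,\norm{b}_2 \le 1$ from $\norm{V_i}_{op}\le 1$.
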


The proof is included in~\cite{kre21}, Lemma 9. Note that while~\cite{kre21} proves this statement for unitaries, in fact the only fact about unitaries needed is that the absolute value of the eigenvalues of a unitary are bounded above by $1$ (i.e. the operator norm is bounded by $1$), and this also holds for partial isometries.

\begin{lemma}\label{lem:opprod}
    Let $V_1,V_2,W_1,W_2$ be partial isometries. Then
    $$\norm{V_1W_1 - V_2W_2}_{op} \leq \norm{V_1 - V_2}_{op}+\norm{W_1-W_2}_{op}$$
    and
    $$\norm{(V_1+V_2)-(W_1+W_2)}_{op} \leq \norm{V_1-W_1}_{op}+\norm{V_2-W_2}_{op}$$
\end{lemma}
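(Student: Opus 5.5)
For the first inequality I will use a standard telescoping argument: add and subtract the mixed term $V_2W_1$, write
$$V_1W_1 - V_2W_2 = (V_1 - V_2)W_1 + V_2(W_1 - W_2),$$
and apply the triangle inequality for the operator norm. This gives
$$\norm{V_1W_1 - V_2W_2}_{op} \leq \norm{(V_1-V_2)W_1}_{op} + \norm{V_2(W_1-W_2)}_{op}.$$

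The two terms are then bounded using submultiplicativity of the operator norm:
$$\norm{AB}_{op}\le \norm{A}_{op}\norm{B}_{op}.$$
This follows from the definition: for any unit vector $\ket{\phi}$, $\norm{AB\ket{\phi}}_2 \le \norm{A}_{op}\norm{B\ket{\phi}}_2 \le \norm{A}_{op}\norm{B}_{op}$.

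The one fact specific to the hypothesis is that a partial isometry $P$ has $\norm{P}_{op}\le 1$. To see this, decompose $\ket{\phi} = \ket{\phi_1}+\ket{\phi_2}$ with $\ket{\phi_1}$ in the initial subspace $(\ker P)^\perp$ and $\ket{\phi_2}\in\ker P$. Since $P$ acts isometrically on $(\ker P)^\perp$,
$$\norm{P\ket{\phi}}_2 = \norm{\ket{\phi_1}}_2 \le \norm{\ket{\phi}}_2.$$
It follows that $\norm{(V_1-V_2)W_1}_{op} \le \norm{V_1-V_2}_{op}$ and $\norm{V_2(W_1-W_2)}_{op}\le \norm{W_1-W_2}_{op}$, which gives the first claim.

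The second inequality does not need any property of partial isometries. I will rewrite
$$(V_1+V_2)-(W_1+W_2) = (V_1-W_1)+(V_2-W_2)$$
and apply the triangle inequality for the operator norm. That triangle inequality holds because $\norm{\cdot}_2$ on vectors satisfies it pointwise in $\ket{\phi}$, and then one takes the supremum.

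No step here is a real obstacle. The only point needing care is the bound $\norm{P}_{op}\le 1$ for partial isometries, which is exactly the property the paper already relies on when it extends \Cref{prop:diamop} from unitaries to partial isometries.
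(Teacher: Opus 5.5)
Your proposal is correct and matches the paper's proof: add and subtract $V_2W_1$, then apply the triangle inequality, then submultiplicativity with the bound that partial isometries have operator norm at most $1$; the second inequality is just the triangle inequality. Your bound $\norm{P}_{op}\le 1$ is slightly more careful than the paper's claimed equality $\norm{W_1}_{op}=\norm{V_2}_{op}=1$, since it also covers the zero partial isometry.
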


\begin{proof}
    The first inequality follows from the following calculation (taken from~\cite{stackexchange}):
    \begin{equation}
        \begin{split}
            &\norm{V_1W_1-V_2W_2}_{op}\\
            =& \norm{V_1W_1-V_2W_2-V_2W_1+V_2W_1}_{op}\\
            \leq& \norm{V_1W_1 - V_2W_1}_{op} + \norm{V_2W_1-V_2W_2}_{op}\\
            =& \norm{(V_1-V_2)W_1}_{op} + \norm{V_2(W_1-W_2)}_{op}\\
            \leq& \norm{V_1-V_2}_{op}\norm{W_1}_{op} + \norm{V_2}_{op}\norm{W_1-W_2}_{op}\\
            \leq& \norm{V_1-V_2}_{op} + \norm{W_1-W_2}_{op}
        \end{split}
    \end{equation}
    since $\norm{W_1}_{op} = \norm{V_2}_{op}=1$ as they are partial isometries.

    The second inequality follows immediately from the triangle inequality.
\end{proof}

\begin{proposition}\label{prop:opinv}
    Let $V$ be any linear map, then
    $$\norm{V}_{op} = \norm{V^\dagger}_{op}$$
\end{proposition}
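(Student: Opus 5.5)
The plan is to reduce the statement to the standard fact that a matrix and its adjoint have the same largest singular value. I would argue this in the pure-state formulation used in the definition of the operator norm, via the identity $\norm{V}_{op}^2 = \sup_{\ket{\phi}} \bra{\phi}V^\dagger V\ket{\phi}$, so that $\norm{V}_{op}^2$ is the largest eigenvalue of the positive semidefinite operator $V^\dagger V$. Likewise, $\norm{V^\dagger}_{op}^2$ is the largest eigenvalue of $VV^\dagger$. The key step is then that $V^\dagger V$ and $VV^\dagger$ have the same nonzero spectrum. Indeed, if $V^\dagger V\ket{v} = \mu\ket{v}$ with $\mu\neq 0$, then $V\ket{v}\neq 0$ and $VV^\dagger(V\ket{v}) = \mu V\ket{v}$, and the symmetric argument gives the reverse direction. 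Hence the two suprema coincide. If both operators are zero, the claim is trivial.

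A cleaner alternative, which avoids eigenvalues and also covers non-square $V$ (as with partial isometries between different registers), is a direct Cauchy--Schwarz argument. I would write
\[
\norm{V\ket{\phi}}_2^2 = \bra{\phi}V^\dagger V\ket{\phi} \le \norm{\ket{\phi}}_2\,\norm{V^\dagger (V\ket{\phi})}_2 \le \norm{V^\dagger}_{op}\,\norm{V\ket{\phi}}_2
\]
for unit $\ket{\phi}$. Dividing by $\norm{V\ket{\phi}}_2$ when it is nonzero gives $\norm{V\ket{\phi}}_2\le\norm{V^\dagger}_{op}$, and taking the supremum yields $\norm{V}_{op}\le\norm{V^\dagger}_{op}$. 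Applying the same bound to $V^\dagger$, and using $(V^\dagger)^\dagger = V$, gives the reverse inequality.

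I expect no real obstacle. The only care needed is to handle the case $V\ket{\phi}=0$ in the division, and to note that in finite dimensions the supremum is attained, although attainment is not even needed for the Cauchy--Schwarz version. I would present the Cauchy--Schwarz argument as the proof.
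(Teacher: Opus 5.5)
The paper states this proposition as a standard fact and gives no proof, so there is no argument of theirs to compare against. Your Cauchy--Schwarz proof is correct and complete. It handles the $V\ket{\phi}=0$ case, needs no attainment of the supremum, and works for non-square $V$, which is the situation that matters here since the paper applies the proposition to partial isometries between different registers.

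Your eigenvalue argument is also valid. An even shorter route uses the variational form $\norm{V}_{op}=\sup_{\ket{\phi},\ket{\psi}}|\bra{\psi}V\ket{\phi}|$ over unit vectors, which is visibly invariant under taking adjoints because $|\bra{\psi}V\ket{\phi}|=|\bra{\phi}V^\dagger\ket{\psi}|$.
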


\begin{proposition}\label{prop:opbound}
    For all quantum states $\ket{\phi},\ket{\psi}$,
    $$\norm{\ketbra{\phi} -\ketbra{\psi}}_2 \leq \norm{\ketbra{\phi} - \ketbra{\psi}}_1 = 2\sqrt{1 - \abs{\bra{\phi}\ket{\psi}}^2}$$
\end{proposition}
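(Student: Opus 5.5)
The statement to prove is Proposition~\ref{prop:opbound}: for pure states $\ket{\phi},\ket{\psi}$, $\norm{\ketbra{\phi}-\ketbra{\psi}}_2 \le \norm{\ketbra{\phi}-\ketbra{\psi}}_1 = 2\sqrt{1-\abs{\bra{\phi}\ket{\psi}}^2}$. My approach is to compute the spectrum of $A \coloneqq \ketbra{\phi}-\ketbra{\psi}$ explicitly and read off both norms from it.

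\textbf{Reducing to a $2\times 2$ problem.} First I would dispose of the degenerate case. If $\ket{\psi} = e^{i\theta}\ket{\phi}$, then $A=0$ and $\abs{\bra{\phi}\ket{\psi}}=1$, so every quantity in the statement is zero. Otherwise, $\ket{\phi}$ and $\ket{\psi}$ span a two-dimensional subspace. The operator $A$ is Hermitian, is supported on that subspace, and vanishes on its orthogonal complement. It therefore has at most two nonzero eigenvalues $\lambda_1,\lambda_2$.

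\textbf{Finding the eigenvalues.} The trace gives $\lambda_1+\lambda_2 = \Tr A = 1-1 = 0$. For the Frobenius norm I would expand $\Tr(A^2)$, using that each projector squares to itself and $\Tr(\ketbra{\phi}\ketbra{\psi}) = \abs{\bra{\phi}\ket{\psi}}^2$:
\[
\lambda_1^2+\lambda_2^2 = \Tr(A^2) = 2 - 2\abs{\bra{\phi}\ket{\psi}}^2 .
\]
Setting $c \coloneqq \abs{\bra{\phi}\ket{\psi}}$, these two equations give $\lambda_{1,2} = \pm\sqrt{1-c^2}$.

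\textbf{Reading off the norms.} The trace norm is the sum of absolute eigenvalues, so $\norm{A}_1 = 2\sqrt{1-c^2}$, which is the claimed equality. The Schatten-$2$ norm is $\norm{A}_2 = \sqrt{\lambda_1^2+\lambda_2^2} = \sqrt{2}\sqrt{1-c^2}$, and this is at most $2\sqrt{1-c^2}$. More generally, $\norm{\cdot}_2 \le \norm{\cdot}_1$ holds for any operator, because $\sum_i\lambda_i^2 \le \left(\sum_i \abs{\lambda_i}\right)^2$.

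\textbf{Main obstacle.} The argument is short, and I expect no step to be a real obstacle. The only care needed is to justify that $A$ has rank at most $2$ with trace zero, and that is immediate from its form.
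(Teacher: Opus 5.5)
Your proof is correct. $A=\ketbra{\phi}-\ketbra{\psi}$ is Hermitian of rank at most two, with $\mathrm{Tr}\,A=0$ and $\mathrm{Tr}\,A^2=2-2\abs{\braket{\phi}{\psi}}^2$. Its nonzero eigenvalues are therefore $\pm\sqrt{1-\abs{\braket{\phi}{\psi}}^2}$, and both norms can be read off from them.

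The paper states this proposition as a standard fact without proof, and your spectral computation is the standard argument. It also covers the case where $\norm{\cdot}_2$ is read as the spectral norm, which here equals $\sqrt{1-\abs{\braket{\phi}{\psi}}^2}$.

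One small point: your side remark that $\norm{X}_2\le\norm{X}_1$ for arbitrary $X$ should be phrased in terms of singular values rather than eigenvalues. This does not affect the proof, since $A$ is Hermitian.
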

\subsection{Path Recording Framework}

Here we present the path recording framework for simulating random unitaries defined in~\cite{MH25}, along with the extension to conjugate and transpose queries from~\cite{SMLBH25}. In particular, we expand their definition to encompass random unitaries over subsets of $\{0,1\}^n$. These restated theorems follow from the same proofs as in~\cite{MH25}.

\begin{definition}
    Given a set $S\subseteq \{0,1\}^n$, define $R_S=\mathcal{P}(\{0,1\}^{2n})$ to be the set of relations over $S$. 
    
    Furthermore, define $R^{inj}_{S}\subseteq R_S$ to be the set of injective relations over $S$. Formally,
    $$R^{inj}_S = \{R\subseteq \{0,1\}^{2n}:\forall(x,y)\neq(x',y')\in R, y\neq y'\}$$
    
    We similarly define $R_{S,t} \subseteq R_S$ to be
    $$R_{S,t} = \{R\in R_S:|R|\leq t\}$$
    $R^{inj}_{S,t}\subseteq R^{inj}_{S}$ is defined similarly.
    
    Let $D=\{(x_1,y_1),\dots,(x_\ell,y_\ell)\}\in R_{S,t}$. We say $\Ima(D) = \{y_1,\dots,y_\ell\}$ and $\Dom(D)=\{(x_1,\dots,x_\ell)\}$.
\end{definition}

\begin{definition}[Forward query path recording oracle]
    Let $S\subseteq \{0,1\}^n$ be any set and let $t_{max} \leq |S|$. Let $V_S$ be the partial isometry over $\mathcal{H}(\{0,1\}^n)\otimes \mathcal{H}(R^{inj}_S)$ defined as follows: for $x\in S,D\in R^{inj}_{S,t_{max}-1}$
    $$V\ket{x}\ket{D} \mapsto \frac{1}{\sqrt{|S|-|\Ima(D)|}}\sum_{y\in S\setminus \Ima(D)}\ket{y}\ket{D\cup \{(x,y)\}}$$
    for $x\notin S,D\in R^{inj}_S$,
    $$V_S\ket{x}\ket{D} \mapsto \ket{x}\ket{D}$$
\end{definition}

\begin{theorem}
    Let $\Reg{A}$ be an arbitrary Hilbert space, let $\Reg{In}=\mathcal{H}(\{0,1\}^n)$ and let $\Reg{S}=\mathcal{H}(R_{S}^{inj})$. Let $\mathcal{A}^{(\cdot)}$ be any $t$-query algorithm operating on registers $\Reg{A},\Reg{In}$. Let $t\leq t_{max}$ and let $V_{S,t_{max}}$ operate on registers $\Reg{In},\Reg{St}$. Then,
    $$TD\left(\E_{U\gets \Haar(S)}\left[\A^{U}\right] , \A^{V}\right)\leq \frac{2t(t-1)}{|S|+1}$$
\end{theorem}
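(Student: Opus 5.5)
The plan is to follow the proof of the corresponding theorem in \cite{MH25} essentially verbatim, with $\{0,1\}^n$ replaced by $S$. The argument has three steps.

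\textbf{Step 1: purify the Haar oracle.} For a Haar random $U$ on $\mathcal{H}(S)$, write the $t$-query algorithm as a purified state. By unitary invariance, the joint state of the algorithm and the query transcript depends only on $\E_U[U^{\otimes t}(\cdot)U^{\dagger\otimes t}]$, i.e.\ the $t$-fold twirl over $\mathcal{H}(S)^{\otimes t}$.

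\textbf{Step 2: compare with a random-permutation-type oracle.} Following \cite{MH25}, introduce the intermediate oracle $\Pi\,\circ$ (random function restricted to distinct outputs) defined by
\[
\E_{\pi,\,C}\bigl[(P_\pi C)^{\otimes t}\bigr],
\]
where $C$ is a random phase (or Clifford) on $\mathcal{H}(S)$ and $P_\pi$ is a uniformly random permutation of $S$. Three facts are needed:
\begin{itemize}
\item Right-multiplying by an independent Haar unitary shows that the Haar twirl equals the twirl of this oracle composed with an extra Haar layer.
\item The path recording isometry $V$, restricted to the distinct-subspace on the $D$ register, exactly purifies the ``random injective map with distinct outputs'' oracle. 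This is immediate from the definition of $V_S$: outputs are drawn uniformly from $S\setminus\Ima(D)$, and inputs $x\notin S$ are untouched, which matches $\Haar(S)\oplus I$.
\item Hence the two mixed states differ only by the weight of the component of the algorithm's query registers lying outside the ``distinct'' subspace.
\end{itemize}

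\textbf{Step 3: bound the non-distinct weight.} This is where the $|S|$-dependence enters. The Haar twirl over $\mathcal{H}(S)^{\otimes t}$ maps any state to something whose overlap with the subspace spanned by tuples with a collision is at most
\[
\binom{t}{2}\Big/\tfrac{|S|+1}{2}\;\lesssim\;\frac{t(t-1)}{|S|+1}.
\]
This comes from the symmetric-subspace projector bound applied pairwise, a union bound over the $\binom{t}{2}$ pairs, and the fact that for each pair the probability mass on $y_i=y_j$ after a Haar twirl is at most $2/(|S|+1)$.

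\textbf{Conclusion.} Converting this weight bound into trace distance gives the factor $2$ (via a gentle-measurement argument or \Cref{prop:opbound}), yielding $\frac{2t(t-1)}{|S|+1}$. The restriction $t\le t_{max}\le |S|$ guarantees that $V$ is well defined at every step, since $S\setminus\Ima(D)$ is never empty before the $t$-th query. The main obstacle is the bookkeeping in Steps 2--3 that identifies the purified Haar state with the path-recording state up to the collision projector; I would import this directly from \cite{MH25}, checking only that nothing in their argument uses $S=\{0,1\}^n$ beyond its cardinality.
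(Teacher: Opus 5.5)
Your top-level plan coincides with the paper's. The paper offers no argument for this statement beyond asserting that the proof of \cite{MH25} carries over with $\{0,1\}^n$ replaced by $S$, reading $\Haar(S)$ as $\Haar(S)\oplus I$ to match the identity action of $V_S$ off $S$. As you reconstruct that proof, however, two steps would not deliver the stated bound.

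First, your Step 3 twirl bound concerns a \emph{fixed} state, whereas the algorithm is adaptive. In addition, Step 2 leaves an extra Haar layer $C$ in front of the oracle that you never remove. In \cite{MH25} both issues are handled by the exact right-invariance $V_S(C\oplus I)=\widetilde{C}\,V_S\,\widetilde{C}^\dagger$. Here $\widetilde{C}$ applies $C$ to every recorded input of the relation, which is well defined because the recorded outputs are distinct. Consequently the purified state of $\A^{V_S(C\oplus I)}$ is $\widetilde{C}$ applied to that of $\A^{V_S}$, so the two reduced states agree. Moreover, the weight on relations with a repeated input, averaged over Haar $C$, is the collision weight of a $t$-fold twirl of the recorded-input part of one fixed relation-register state. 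That is where your $\binom{t}{2}\cdot\frac{2}{|S|+1}$ applies. Note that the collisions that matter are among recorded inputs; the outputs of $V_S$ are distinct by construction.

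Second, gentle measurement or \Cref{prop:opbound} only turn a weight $\epsilon$ into a trace-distance bound of order $\sqrt{\epsilon}$, which cannot give $\frac{2t(t-1)}{|S|+1}$. The linear bound comes from the fact that the projectors onto the distinct-input components act only on the purification registers, which are traced out, so cross terms vanish. The on-subspace parts of the $P_\pi F_f$-side and $V_S$-side states are related by an isometry on those registers, so they contribute identically. The reduced states therefore differ only by the partial traces of the two off-subspace parts. Each of these has trace $\epsilon_C$, with $\E_C[\epsilon_C]\le\frac{t(t-1)}{|S|+1}$, and this is where the factor $2$ comes from. Both ingredients hold verbatim for arbitrary $S$, so with them your plan goes through.
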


\begin{theorem}[Efficient simulation over $\{0,1\}^n$ - \cite{MH25} Appendix A]
    There exists an efficient algorithm $\Sim(1^{t_{max}})$ acting on registers $\Reg{In},\Reg{St}$ such that the following holds.
    Let $\mathcal{A}^{(\cdot)}$ be any $t$-query algorithm operating on registers $\Reg{A},\Reg{In}$, where register $\Reg{In}$ is over $\mathcal{H}(\{0,1\}^n)$. Let $t\leq t_{max}$ and let $V^{\{0,1\}^n,t_{max}}$ operate on registers $\Reg{In},\Reg{St}$.
    $$\A^{\Sim(1^{t_{max}})} = \A^{V_{\{0,1\}^n,t_{max}}}$$
\end{theorem}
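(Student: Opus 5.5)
The plan is to reduce the claim to an equality of operators, not merely of output distributions. Concretely, I will show that $\Sim(1^{t_{max}})$, acting on registers $\Reg{In},\Reg{St}$, implements exactly the partial isometry $V_{\{0,1\}^n,t_{max}}$ on the relevant subspace. The algorithm $\A$ interacts with its oracle only through the channel $V(\cdot)V^\dagger$ applied to $\Reg{In},\Reg{St}$, with the internal state initialized to $\ket{\emptyset}$. So if the two oracles agree as linear maps on every state reachable after at most $t_{max}-1$ previous queries, then the final states $\A^{\Sim(1^{t_{max}})}$ and $\A^{V}$ coincide exactly. The first step is therefore to fix an efficient encoding of the register $\Reg{St}=\mathcal{H}(R^{inj}_{\{0,1\}^n,t_{max}})$. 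I would store an injective relation $D$ of size $\ell\le t_{max}$ as a sorted list of its pairs $(x_i,y_i)$, padded with a blank symbol up to length $t_{max}$. Sorting makes the encoding canonical: each set corresponds to a unique basis string, which is what lets a superposition over sets be represented faithfully.

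Next I would build $V$ out of efficient reversible pieces. Given $\ket{x}\ket{D}$, the simulator first prepares, in a fresh register, the uniform superposition $\frac{1}{\sqrt{2^n-|\Ima(D)|}}\sum_{y\notin \Ima(D)}\ket{y}$. This state can be prepared efficiently as follows:
\begin{itemize}
\item sample $y$ uniformly from $\{0,1\}^n$;
\item coherently test membership of $y$ in the at most $t_{max}$ listed image values;
\item apply exact amplitude amplification, or equivalently a controlled rotation built from the known count $|\Ima(D)|$.
\end{itemize}
Alternatively, one can prepare a uniform superposition over indices $j\in[2^n-\ell]$ and map $j$ to the $j$-th element of the complement of $\Ima(D)$ by a reversible rank computation over the sorted image list. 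Either way this uses $\poly(n,t_{max})$ gates. The simulator then inserts $(x,y)$ into the sorted list. Finally, it must uncompute the original $x$ from $\Reg{In}$ and swap $y$ into $\Reg{In}$.

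The main obstacle is this last uncomputation. After the insertion, the register $\Reg{In}$ still holds $x$, while the target state holds $\ket{y}\ket{D\cup\{(x,y)\}}$. The map $(x,D,y)\mapsto(y,D\cup\{(x,y)\})$ is injective: because $D$ is injective and $y\notin\Ima(D)$, the pair containing $y$ in the new relation is unique. So I would run its inverse coherently. Given $y$ and $D'$, the simulator locates the unique pair whose second coordinate equals $y$, reads off $x$, and XORs it into the old input register to clear it. Then it deletes that pair, which re-sorts the list, using a standard reversible sorting network.

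Two edge cases remain:
\begin{itemize}
\item Inputs with $|D|\ge t_{max}$ fall outside the domain of the partial isometry, so any behaviour there is acceptable. These states are never reached by a $t\le t_{max}$ query algorithm starting from $\ket{\emptyset}$.
\item The normalization $1/\sqrt{2^n-|\Ima(D)|}$ must be matched exactly rather than approximately. The index-based preparation handles this exactly. Amplitude amplification needs care here, since it is only exact when the rotation angle is computed precisely; if exactness fails, one only gets the equality up to $2^{-\poly}$, as \cite{MH25} do.
\end{itemize}

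Once the simulator is shown to act as $V$ on reachable states, equality of the oracle algorithm outputs follows by induction over the $t$ queries.
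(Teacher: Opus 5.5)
Your proposal is correct and takes essentially the same route as the paper's sketch: encode $D$ as a sorted, $\bot$-padded list, prepare the uniform superposition over $\{0,1\}^n\setminus\Ima(D)$, write in $D\cup\{(x,y)\}$, and uncompute $x$ and the old $D$ using the fact that both are determined by $(y,D\cup\{(x,y)\})$; you also spell out how to prepare the superposition exactly, which the paper leaves implicit. One small clarification: the step ``delete that pair'' should only be used to regenerate and clear a leftover copy of the old $D$ when you insert out of place, and must not be applied to the output register, which has to keep $D\cup\{(x,y)\}$.
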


\begin{proof}[Proof Sketch]
    The simulator will represent elements of $R^{inj}_{\{0,1\}^n,t_{max}}$ as sorted lists of pairs of elements of $\{0,1\}^n$. That is, for $D\in R^{inj}_{\{0,1\}^n,t_{max}}$, define $Rep(D)\in (\{0,1\}^{2n}\cup \{(\bot,\bot)\})^{t_{max}}$ as follows: let $(x_1,y_1),\dots,(x_t,y_t)$ be the elements of $D$ in sorted order.
    $$Rep(D) = ((x_1,y_1),\dots,(x_t,y_t),\underbrace{(\bot,\bot),\dots,(\bot,\bot)}_{t_{max}-t})$$

    $\Reg{St}$ will then be a register over $\mathcal{H}((\{0,1\}^{2n}\cup \{(\bot,\bot)\})^{t_{max}})$ which will always contain an element of $R^{inj}_{S,t_{max}}$.

    $\Sim(1^{t_{max}})\ket{x}\ket{D}$ will then act as follows
    \begin{enumerate}
        \item Generate the state $\sum_{y\in \{0,1\}^n\setminus \Ima(D)} \ket{y}$.
        \item Perform the isometry
        $\ket{x}\ket{D}\ket{y} \mapsto \ket{x}\ket{D}\ket{y}\ket{D\cup \{(x,y)\}}$. This leaves the state of the simulator as
        $$\sum_{y\in \{0,1\}^n\setminus \Ima(D)}\ket{x}\ket{D}\ket{y}\ket{D\cup\{(x,y)\}}$$
        \item Finally, uncompute the registers containing $x$ and $D$ (which is possible since these are exactly determined by $y,D\cup\{(x,y)\}$). This results in the final state
        $$\sum_{y\in \{0,1\}^n\setminus \Ima(D)}\ket{y}\ket{D\cup\{(x,y)\}}$$
    \end{enumerate}
\end{proof}

Note that the only part of this simulator which requires knowing that the set $S$ is $\{0,1\}^n$ is computing 
$$\sum_{y\in S\setminus \Ima(D)} \ket{y} = \sum_{y\in \{0,1\}^n\setminus \Ima(D)} \ket{y}$$
For an arbitrary $S\subseteq \{0,1\}^n$, this state can be constructed by the following process:
\begin{enumerate}
    \item Construct the state $\sum_{y\in \{0,1\}^n}\ket{y}$
    \item Apply the measurement checking if $y\in S$
    \item Apply the measurement checking if $y\notin D$
\end{enumerate}
If both measurements succeed, then the residual state is exactly
$$\sum_{y\in S\setminus \Ima(D)} \ket{y}$$
and both measurements succeed with probability $\frac{|S|-|\Ima(D)|}{2^n}$. These measurements can further be performed using a reflection oracle for $S$. This leads to the following corollary

\begin{corollary}\label{cor:forwardsim}
    There exists an efficient oracle algorithm $\Sim^{(\cdot)}(1^{t_{max}})$ acting on registers $\Reg{In},\Reg{St}$ such that the following holds.
    Let $\mathcal{A}^{(\cdot)}$ be any $t$-query algorithm operating on registers $\Reg{A},\Reg{In}$, where register $\Reg{In}$ is over $\mathcal{H}(\{0,1\}^n)$. Let $S\subseteq [N]$ and let $R_S$ be reflection around $S$. Let $t\leq t_{max}$. Then,
    $$\A^{\Sim^{R_S}(1^{t_{max}})} = \A^{V_{S,t_{max}}}$$
\end{corollary}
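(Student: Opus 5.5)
Corollary~\ref{cor:forwardsim} follows by modifying the simulator from the preceding theorem (the $\{0,1\}^n$ case). The only step of $\Sim$ that depends on $S$ is preparing the uniform superposition $\sum_{y\in S\setminus\Ima(D)}\ket{y}$. Every other step is independent of $S$: the isometry $\ket{x}\ket{D}\ket{y}\mapsto\ket{x}\ket{D}\ket{y}\ket{D\cup\{(x,y)\}}$ and the uncomputation of $x,D$ from $(y,D\cup\{(x,y)\})$. So I will keep those steps verbatim, replace the state-preparation subroutine by one that uses $R_S$, and separately handle inputs $x\notin S$, on which $V_{S,t_{max}}$ acts as the identity.

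\textbf{Step 1: branch on $x\in S$.} Query $R_S$ on the input register to compute the bit $[x\in S]$ coherently into an ancilla. Controlled on this bit:
\begin{itemize}
\item if it is $0$, do nothing;
\item if it is $1$, run the modified $\Sim$.
\end{itemize}
Then uncompute the bit. This step needs care, because after a successful query the output register holds $y$ rather than $x$. I will uncompute using $y\in S$, which holds on every branch where the query was made; a second query to $R_S$, now on the output register, resets the ancilla.

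\textbf{Step 2: coherent preparation of the superposition.} The proposed subroutine uses measurements, so it prepares the state only with probability $(|S|-|\Ima(D)|)/2^n$. To make it exact and unitary I will use exact amplitude amplification instead. Start from $\sum_{y\in\{0,1\}^n}\ket{y}$. The good subspace is $\{y\in S\setminus\Ima(D)\}$. Reflection about it is available as the product of two reflections:
\begin{itemize}
\item the reflection $R_S$;
\item the reflection about $y\notin\Ima(D)$, which is efficiently computable from the classical list $Rep(D)$.
\end{itemize}
Exact amplitude amplification needs the success amplitude to be known. Here it is known exactly, equal to $\sqrt{(|S|-|\Ima(D)|)/2^n}$, as a function of $|S|$ and $|D|$ (the latter can be read from $Rep(D)$). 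Given it, the standard Brassard--Høyer--Mosca--Tapp construction, with an adjusted final rotation, gives exact preparation using $O(\sqrt{2^n/|S|})$ reflections. This is efficient whenever $|S|/2^n$ is at least inverse polynomial. Otherwise one accepts the "efficient" claim with that proviso, or allows exponentially small error, which suffices for all downstream uses by Theorem~\ref{thm:querydiam} and Proposition~\ref{prop:diamop}.

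\textbf{Step 3: check the action.} On basis inputs $\ket{x}\ket{D}$ with $x\in S$, the resulting map sends the input to $\frac{1}{\sqrt{|S|-|\Ima(D)|}}\sum_{y\in S\setminus\Ima(D)}\ket{y}\ket{D\cup\{(x,y)\}}$. On inputs with $x\notin S$ it is the identity. This matches $V_{S,t_{max}}$ on every basis state in its domain. Hence the two oracles are the same partial isometry, so $\A^{\Sim^{R_S}}=\A^{V_{S,t_{max}}}$ holds identically.

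\textbf{Main obstacle.} I expect the hardest point to be making the preparation step genuinely exact and coherent, since the $D$-dependent amplitude must be handled in superposition over $D$. I will handle it by controlling the amplification rotation angles on the value $|D|$ computed into an ancilla, and uncomputing that value afterwards.
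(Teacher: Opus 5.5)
Your proposal is correct. It shares the paper's skeleton: reuse the simulator for $S=\{0,1\}^n$ and replace only the preparation of $\sum_{y\in S\setminus\Ima(D)}\ket{y}$. But you implement that step differently.

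\textbf{The paper's route.} The paper prepares the uniform superposition over $\{0,1\}^n$ and then \emph{measures}, using $R_S$, whether $y\in S$ and whether $y\notin\Ima(D)$, noting that both succeed with probability $(|S|-|\Ima(D)|)/2^n$. It does not say what happens on failure. It also does not separate out inputs $x\notin S$, on which $V_{S,t_{max}}$ is the identity. This route is simpler and needs no knowledge of $|S|$, but it delivers the claimed equality only for a classical $D$. With $D$ in superposition, postselection distorts the relative amplitudes of branches with different $|\Ima(D)|$. A failed $\Ima(D)$-check also leaves $y$ correlated with $D$, so restarting decoheres $D$. The resulting error is controlled by $t/|S|$, which is all the paper needs in its application: hybrid $H_5$ in the proof of \Cref{lem:hardhybtrick}, where the simulated set $\Ciph\setminus S_2$ is overwhelmingly dense.

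\textbf{Your route.} Your coherent version branches on $[x\in S]$, uncomputes that bit via $y\in S$, and then runs exact amplitude amplification with $|D|$-controlled angles. This genuinely gives $\A^{\Sim^{R_S}}=\A^{V_{S,t_{max}}}$, at the price of taking $|S|$ as advice. It uses $O(\sqrt{2^n/|S|})$ reflections instead of roughly $2^n/|S|$ expected attempts.

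Your density proviso is unavoidable, not a weakness. One query on a known point of $S$ returns a uniformly random point of $S$. When $|S|/2^n$ is negligible, a simulator holding only that point and $R_S$ cannot produce such a point with polynomially many queries, by the Grover bound (\Cref{thm:grover}). The paper's construction carries the same implicit restriction.

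Two small repairs are needed.
\begin{enumerate}
\item Computing the bit $[x\in S]$, like the paper's measurement, requires \emph{controlled} access to $R_S$. Since $R_{\{0,1\}^n\setminus S}=-R_S$, a bare phase oracle cannot reveal membership.
\item The product of $R_S$ with the reflection about $y\notin\Ima(D)$ equals the reflection about $S\setminus\Ima(D)$ only up to an overall sign, and only under the invariant $\Ima(D)\subseteq S$. Because your iteration count depends on $|D|$, that sign becomes a relative phase between branches. Either correct it with a $|D|$-controlled phase, or compute the AND of the two membership bits into an ancilla.
\end{enumerate}
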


\begin{definition}[Inverse, conjugate, transpose capable path recording oracle~\cite{SMLBH25}]
    Let $S\subseteq \{0,1\}^n$ be any set and let $t_{max} \leq |S|$. Let $V^L_S,V^R_S,\ol{V}_S^L,\ol{V}_S^R$ be the partial isometries over $\mathcal{H}(\{0,1\}^n)\otimes \mathcal{H}(R_S)\otimes \mathcal{H}(R_S)$ defined as follows: 
    for $x\in S,$ for $L,R\in R_{S,t_{max}-1}$
    $$V_S^L\ket{x}\ket{L}\ket{R} \mapsto \frac{1}{\sqrt{|S|-|\Ima(L\cup R)|}}\sum_{y\in S\setminus \Ima(L\cup R)}\ket{y}\ket{L\cup \{(x,y)\}}\ket{R}$$
    $$\ol{V}_S^L\ket{x}\ket{L}\ket{R} \mapsto \frac{1}{\sqrt{|S|-|\Ima(L\cup R)|}}\sum_{y\in S\setminus \Ima(L\cup R)}\ket{y}\ket{L}\ket{R\cup \{(x,y)\}}$$
    for $y\in S,$ for $L,R\in R_{S,t_{max}-1}$
    $$V^R_S\ket{y}\ket{L}\ket{R} \mapsto \frac{1}{\sqrt{|S|-|\Dom(L\cup R)|}}\sum_{x\in S\setminus \Dom(L\cup R)}\ket{x}\ket{L}\ket{R\cup \{(x,y)\}}$$
    $$\ol{V}^R_S\ket{y}\ket{L}\ket{R} \mapsto \frac{1}{\sqrt{|S|-|\Dom(L\cup R)|}}\sum_{x\in S\setminus \Dom(L\cup R)}\ket{x}\ket{L\cup \{(x,y)\}}\ket{R}$$
    for $x\notin S,(L,R)\in R^2_S$,
    $$V^L_S\ket{x}\ket{L}\ket{R}=V^R_S\ket{x}\ket{L}\ket{R} = \ket{x}\ket{L}\ket{R}$$

    Finally, define 
    $$V_S = V^L_S (I-V^R\cdot V^{R,\dagger})+(I-V^L_S\cdot V^{L,\dagger}_S)V^{R,\dagger}_S$$
    $$\ol{V}_S = \ol{V}^L_S (I-\ol{V}^R_S\cdot \ol{V}^{R,\dagger}_S)+(I-\ol{V}^L_S\cdot \ol{V}^{L,\dagger}_S)\ol{V}^{R,\dagger}_S$$
\end{definition}

\begin{theorem}\label{thm:pathrecordingreverse}
    Let $\mathcal{A}^{(\cdot)}$ be any $t$-query algorithm operating on registers $AB$, where register $A$ is over $\mathcal{H}(\{0,1\}^n)$. Let $t\leq t_{max}$ and let $V_{S,t_{max}}$ operate on registers $AR$. Define $\vec{V}_{S,t_{max}} = (V_{S,t_{max}},V_{S,t_{max}}^\dagger,\ol{V}_{S,t_{max}},\ol{V}_{S,t_{max}}^\dagger)$ Then,
    $$TD\left(\E_{U\gets \Haar(S)}\left[\A^{\vec{U}}\right] , \A^{\vec{V}_{S,t_{max}}}\right)\leq \frac{9t(t+1)}{|S|^{1/8}}$$
\end{theorem}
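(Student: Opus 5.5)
This statement is the \cite{SMLBH25} result (inverse, conjugate and transpose access) transported from the full space $\{0,1\}^n$ to an arbitrary subset $S$. My plan is to reduce it to the known case by relabelling rather than to reprove the path-recording analysis from scratch.

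\textbf{Step 1: reduce to the unrestricted space.} Fix a bijection $\pi : S \to [|S|]$ and let $W$ be the partial isometry $\ket{x} \mapsto \ket{\pi(x)}$ on $\mathcal{H}(S)$.
\begin{itemize}
\item On the Haar side, a unitary $\Haar(S)$ (extended by the identity on $\mathcal{H}(\{0,1\}^n \setminus S)$) equals $W^\dagger U' W \oplus I$ with $U' \gets \Haar([|S|])$.
\item The same conjugation holds for $U^\dagger, U^*, U^T$. This uses only that $W$ is a real permutation-type map, so it commutes with complex conjugation.
\item On the simulator side, the definitions of $V^L_S, V^R_S, \ol{V}^L_S, \ol{V}^R_S$ act as the identity on inputs $x \notin S$. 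They are defined purely combinatorially in terms of membership in $S$, $\Ima(L\cup R)$ and $\Dom(L\cup R)$.
\item Therefore relabelling the relations $L,R$ via $\pi$ gives an isometry $T_\pi$ on the state register that intertwines $\vec{V}_{S,t_{max}}$ with $\vec{V}_{[|S|],t_{max}} \oplus I$. The same holds for the combined operators $V_S$ and $\ol{V}_S$, since these are polynomial expressions in the $L/R$ pieces.
\end{itemize}

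\textbf{Step 2: apply the known bound.} Given an adversary $\A$ against the $S$-version, I build $\A'$ against the $[|S|]$-version with the same number of queries $t$. Before each query, $\A'$ measures or coherently branches on whether the input lies in $S$: outside $S$ it applies the identity itself, and inside $S$ it conjugates by $W$ and forwards the query. Both the Haar experiment and the path-recording experiment are mapped exactly onto the corresponding experiments for $\A'$; on the path-recording side, $T_\pi$ acts only on the internal register, which is traced out.

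\textbf{Step 3: read off the constant.} Trace distance is preserved under these isometries, so the bound $\frac{9t(t+1)}{N^{1/8}}$ of \cite{SMLBH25}, with $N = |S|$, transfers directly. The only requirement is $t \le t_{max} \le |S|$, which the hypotheses guarantee.

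\textbf{Main obstacle.} The delicate point is checking that the combined operator $V_S = V^L_S(I - V^R V^{R,\dagger}) + (I - V^L V^{L,\dagger})V^{R,\dagger}$ behaves correctly on inputs outside $S$. There both $V^L_S$ and $V^R_S$ act as the identity, so $V_S$ might vanish or double there. I would first verify directly that on $x \notin S$ this expression reduces to the identity, possibly after redefining it to be the identity on that block. Only then does the direct-sum decomposition $\vec{V}_{S} \cong \vec{V}_{[|S|]} \oplus I$ hold.

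If that verification fails, the fallback is to rerun the \cite{SMLBH25} proof verbatim with $2^n$ replaced by $|S|$. That proof is purely combinatorial, counting collision-free relations, and the $|S|^{1/8}$ arises from the same Schur–Weyl/twirling estimate applied to $\Haar(S)$.
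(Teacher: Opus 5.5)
Your ``main obstacle'' is real, and it resolves the wrong way: with the definitions as written, $V_S$ is exactly $0$ on inputs $x\notin S$, not the identity. The operators $V^L_S$ and $V^R_S$ act as the identity on the block spanned by $\ket{x}\ket{L}\ket{R}$ with $x\notin S$, and they map the $S$-block into itself. So that whole block lies in the images of both operators. Hence $I-V^R_SV^{R,\dagger}_S$ and $I-V^L_SV^{L,\dagger}_S$ annihilate it, while $V^{R,\dagger}_S$ fixes it, and both summands of $V_S$ vanish there. If an adversary queries some $\ket{z}$ with $z\notin S$, the path-recording side returns the zero vector while the Haar side ($\Haar(S)$ extended by the identity) returns $\ket{z}$, so the bound fails. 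The statement is only true for the oracle that acts as path recording on $\mathcal{H}(S)$ and as $I$ on the complement. Redefining is therefore mandatory, and your fallback of rerunning the proof cannot rescue the literal operator. This is also how the paper actually uses the theorem: in the first hybrid of \Cref{sec:rephyb} it writes the oracles as $PR_1=(\cdots)+I_{A_2}$ and $PR_2=I_{A_1}+(\cdots)$.

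The more serious gap is Step 2. After relabelling, $\A$ needs the oracle $U'\oplus I$ on $\mathcal{H}([|S|])\oplus\mathcal{H}(\{0,1\}^n\setminus S)$, i.e.\ a query to $U'$ controlled on membership in $S$. An algorithm with only black-box access to $\vec{U'}$ cannot produce this. Measuring membership is not allowed. Suppose $\A$ applies $U$ and then $U^\dagger$ to $(\ket{s}+\ket{z})/\sqrt{2}$ with $s\in S$, $z\notin S$. The true oracle returns the superposition intact, whereas your $\A'$ returns the mixture $\frac{1}{2}(\ket{s}\bra{s}+\ket{z}\bra{z})$, so closeness of $\A'$'s outputs says nothing about $\A$'s. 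Coherent branching is exactly controlled-$U'$, which cannot be built from $U'$: $U'$ and $e^{i\theta}U'$ give the same channel but different controlled unitaries.

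Hence the full-space theorem of \cite{SMLBH25} does not imply the subset version as a black-box corollary. (It is also stated over $\{0,1\}^m$, while $|S|$ need not be a power of two.) You need a path-recording bound for queries of the form $U\oplus I$, which is a property of the proof, not of its statement. That is precisely what the paper relies on: it gives no separate argument and simply asserts that the proofs of \cite{MH25,SMLBH25} go through with $S$ in place of $\{0,1\}^n$. So your fallback, not your main route, matches the paper. The point to verify when rerunning the proof is the identity block, i.e.\ the coherences between branches in which different queries land in $S$. The relabelling, by contrast, is the trivial part.
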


Observing the efficient implementation of the inverse capable path recording oracle from~\cite{MH25}, it is fortunately the case that~\Cref{cor:forwardsim} generalizes to the inverse setting as well. Furthermore, by applying the symmetric construction,~\Cref{cor:forwardsim} also applies in the case of conjugate and transpose queries.

\begin{corollary}\label{cor:effpr}
    There exists efficient oracle algorithms $\Sim^{(\cdot)}(1^{t_{max}}),\ol{\Sim}^{(\cdot)}(1^{t_{max}})$ acting on registers $\Reg{In},\Reg{St}$ such that the following holds.
    Let $\mathcal{A}^{(\cdot)}$ be any $t$-query algorithm operating on registers $\Reg{A},\Reg{In}$, where register $A$ is over $\mathcal{H}(\{0,1\}^n)$. Let $S\subseteq \{0,1\}^n$ and let $R_{S}$ be reflection around $S$
    $$R_S\ket{x} \mapsto (-1)^{\mathds{1}[x\in S]}\ket{x}$$
    Let $t\leq t_{max}$. Define 
    $$\Ora{\Sim}^{R_S}(1^{t_{max}}) = (\Sim^{R_S}(1^{t_{max}}),(\Sim^{R_S}(1^{t_{max}}))^\dagger,\ol{\Sim}^{R_S}(1^{t_max}),(\ol{\Sim}^{R_S}(1^{t_{max}}))^\dagger)$$
    then we have
    $$\A^{\Ora{\Sim}^{R_S}(1^{t_{max}})} = \A^{\vec{V}_{S,t_{max}}}$$
    when clear from context, we will omit $1^{t_{max}}$.
\end{corollary}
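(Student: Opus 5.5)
The plan is to reduce to the explicit inverse-capable simulator of \cite{MH25} and to isolate its dependence on $S$. Concretely, I would open up the efficient implementation of $V_S=V^L_S(I-V^R_SV^{R,\dagger}_S)+(I-V^L_SV^{L,\dagger}_S)V^{R,\dagger}_S$ given in \cite{MH25}. Its state register stores the pair $(L,R)$ as sorted padded lists $Rep(L),Rep(R)$, exactly as in the proof sketch of the forward simulator. I would then check that, apart from classical reversible bookkeeping on these lists, it uses only three $S$-dependent primitives:
\begin{enumerate}
\item a membership test $x\in S$, which decides whether to act as the identity;
\item preparation of the uniform state $\frac{1}{\sqrt{|S|-|\Ima(L\cup R)|}}\sum_{y\in S\setminus \Ima(L\cup R)}\ket{y}$;
\item the analogous state over $S\setminus\Dom(L\cup R)$.
\end{enumerate}
The bookkeeping includes inserting a pair into $L$ or $R$, uncomputing $x$ (resp.\ $y$) from the new pair, and computing the projectors $V^{L}_SV^{L,\dagger}_S$ and $V^{R}_SV^{R,\dagger}_S$. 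The latter reduces to checking whether the input appears in $\Ima$ or $\Dom$ of the recorded relation, together with reversing the state preparation.

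Primitive (1) is a single query to $R_S$, applied coherently with phase kickback onto an ancilla. Primitives (2) and (3) would be obtained exactly as in the discussion preceding \Cref{cor:forwardsim}, but run coherently rather than by postselection. I would start from $\sum_{y\in\{0,1\}^n}\ket{y}$ and mark the ``good'' subspace $S\setminus \Ima(L\cup R)$, using one call to $R_S$ and one classical comparison against the list. I would then rotate onto the good subspace with exact amplitude amplification, whose rotation angle is fixed by the known success amplitude. The output must be exactly the required uniform state, tensored with clean ancillas, so that the simulator acts identically to $V_S$ as a partial isometry. Since every step is then an exact implementation, the equality $\A^{\Ora{\Sim}^{R_S}}=\A^{\vec V_{S,t_{max}}}$ holds with no approximation error.

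For $\ol{\Sim}$ I would use the symmetry of the definitions. The map $\ol V^L_S$ is $V^L_S$ with the update written into the $R$ register instead of $L$, and likewise $\ol V^R_S$ writes into $L$. Hence $\ol{\Sim}^{R_S}$ is $\Sim^{R_S}$ conjugated by the swap of the two relation registers inside the bookkeeping, with the same primitives (1)--(3); the dagger versions are simply the circuits run in reverse. I would then verify the four identities for $V,V^\dagger,\ol V,\ol V^\dagger$ on the basis states $\ket{x}\ket{L}\ket{R}$, splitting into the cases $x\notin S$, $x\in S$ with $x$ unrecorded, and $x$ recorded.

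The main obstacle is primitive (2)/(3). The state must be prepared exactly and coherently, with no garbage, for every $S$ and using only $R_S$. Its amplitude $\sqrt{(|S|-|\Ima(L\cup R)|)/2^n}$ depends on $|S|$, which is not given to the simulator. My first attempt would be to make this amplitude available coherently by amplitude estimation on the uniform superposition (one query to $R_S$ per iteration), computing it exactly from $|S|$ and the classically known list size, and uncomputing it afterwards. I expect the efficiency claim to go through cleanly when $|S|/2^n$ is at least inverse-polynomial, which I believe is the regime in which the corollary is applied. I would check carefully whether the statement genuinely needs more than this.
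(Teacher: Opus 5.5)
Your decomposition is the same as the paper's. You open up the simulator of \cite{MH25}, observe that its only $S$-dependent ingredients are the membership check and the preparation of the uniform superposition over $S\setminus\Ima(L\cup R)$ (resp.\ $S\setminus\Dom(L\cup R)$), and obtain $\ol{\Sim}$ by exchanging the roles of the two relation registers. The routes split at primitives (2)/(3), and there your plan has a genuine gap. Exact amplitude amplification needs the success amplitude $\sqrt{(|S|-k)/2^n}$ exactly, and amplitude estimation cannot supply $|S|$ exactly at polynomial cost. Exact quantum counting needs $\Theta(\sqrt{(|S|+1)(2^n-|S|+1)})$ queries, which is exponential precisely in the dense regime where you expect things to go through cleanly. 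In the paper's use, $S=\Ciph\setminus S_2$ and $2^n-|S|=|\Ciph|/2^d$ is exponential. With polynomially many queries you get only an estimate that is $1/\poly$-accurate (and occasionally wrong), held in a register entangled with the rest of the state. Feeding it into the rotation angle gives a state only $1/\poly$-close to the target, which is neither exact nor even negligibly close, and the estimate register cannot be cleanly uncomputed. So as written, the proposal does not establish $\A^{\Ora{\Sim}^{R_S}}=\A^{\vec V_{S,t_{max}}}$.

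The paper never needs $|S|$. It prepares $\sum_y\ket{y}$, measures whether $y\in S$ using $R_S$ and whether $y\notin\Ima(D)$ by comparison with the stored list. On success, the residual state is exactly the desired uniform superposition. The quantity $(|S|-|\Ima(D)|)/2^n$ enters only as the success probability, i.e.\ the running time. That is also where the restriction to dense $S$ you suspected comes from: for $|S|=1$ no efficient simulator can exist, by the Grover lower bound, so the corollary as quantified is only meaningful for dense $S$.

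If you want to keep your coherent route, the simple fix is to give the simulator $|S|$ as a parameter. In the only place the corollary is used, hybrid $H_5$ in the proof of \Cref{lem:hardhybtrick}, the simulated set $\Ciph\setminus S_2$ has the public size $|\Ciph|-|\Ciph|/2^d$. Exact amplitude amplification, with the angle computed coherently from $k=|\Ima(L\cup R)|$ and uncomputed afterwards, is then genuinely exact. This even has an advantage over the paper's measure-and-accept step in the inverse-capable setting. There $k$ is in superposition, and postselecting on success reweights branches by $\sqrt{|S|-k}$, which is harmless (relative error $O(t/|S|)$) but not literally exact.
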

\section{Unitary reprogramming lemma}\label{sec:reprog}

\begin{lemma}[Unitary reprogramming lemma\label{lem:reprogramming}]
    Let $N,M_1,M_2,t$ be functions such that $\frac{M_1}{M_2}=\negl(n)$, $\frac{M_2}{N} = \negl(n)$, and $t=\poly(n)$. Let $S_1\subseteq [N]$ be any set of size $|S_1| = M_1$. Define $\O$ as follows
    \begin{enumerate}
        \item Sample $S_2 \subseteq [N]$ uniformly at random such that $S_1\subseteq S_2$ and $|S_2| = M_2$.
        \item Sample $U_1$ a Haar random unitary acting on $S_2$.
        \item Sample $U_2$ a Haar random unitary acting on $[N]\setminus S_2$.
        \item Let $\O = U_1U_2$.
    \end{enumerate}
    Let $\A^{(\cdot)}$ be any $t$ query oracle algorithm (allowed to make inverse, transpose, and conjugate queries). Then
    $$\abs{\Pr[\A^{\Ora{\mathcal{O}}}(S_1) \to 1] - \Pr_{U\gets \Haar([N])}[\A^{\Ora{U}}(S_1)\to 1]} \leq \negl(n)$$
\end{lemma}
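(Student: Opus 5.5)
We follow the strategy sketched in the techniques overview, and work throughout with purified path-recording oracles. First, by \Cref{thm:pathrecordingreverse} applied separately to $U_1\gets\Haar(S_2)$ and $U_2\gets\Haar([N]\setminus S_2)$, with a hybrid over the two oracles and \Cref{thm:querydiam}, we replace $\Ora{\O}$ by a ``split'' path-recording oracle $\Ora{V}^{S_2}$. This oracle keeps two pairs of relation registers: $(L_1,R_1)$ for $S_2$ and $(L_2,R_2)$ for $[N]\setminus S_2$. It routes each forward, inverse, conjugate or transpose query according to whether the input lies in $S_2$. The error is $O(t^2/M_2^{1/8}+t^2/(N-M_2)^{1/8})$, which is negligible. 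We then purify the choice of $S_2$ by an ancilla register initialized to $\binom{N-M_1}{M_2-M_1}^{-1/2}\sum_{S_1\subseteq S_2,|S_2|=M_2}\ket{S_2}$, and apply $\Ora{V}^{S_2}$ controlled on this register. Similarly, $\Ora{U}$ is replaced by the single path-recording oracle $\Ora{V}_{[N]}$, with negligible loss.

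The core step is an invariant. Let $K=K(L_1,R_1)=S_1\cup\Ima(L_1\cup R_1)\cup\Dom(L_1\cup R_1)$ be the set of elements ``known'' to lie in $S_2$. We define $\Pi$ as the projector onto the span of states $\ket{L_1,R_1}\ket{L_2,R_2}\otimes\ket{\tau_K}$, where $\ket{\tau_K}$ is the uniform superposition over $S_2\supseteq K$ with $|S_2|=M_2$, and where additionally the elements appearing in $(L_2,R_2)$ are disjoint from $K$. We show that for each of the four query types $Q$, $\norm{(I-\Pi)Q\Pi}_{op}\le\negl(n)$.

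The proof is a case analysis on the query input $x$. If $x\in K$, the query deterministically goes to the $S_2$ branch. The fresh output $y$ is uniform over $S_2\setminus\Ima(\cdot)$, and conditioned on the ancilla being $\ket{\tau_K}$ it is uniform over the complement of $K$ with weight $(M_2-|K|)/(N-|K|)$. After recording, the ancilla becomes exactly $\ket{\tau_{K\cup\{y\}}}$ up to normalization. This is the ``simple algebra'' step: it uses the symmetry of $\binom{[N]\setminus K}{M_2-|K|}$. If $x\notin K$, the branch $x\in S_2$ has amplitude $\sqrt{(M_2-|K|)/(N-|K|)}=\negl$, using $M_2/N=\negl$. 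On the complementary branch the ancilla is conditioned to exclude $x$ (and the fresh $y$), which perturbs $\ket{\tau_K}$ by only $O(\sqrt{M_2/N})$ in $\ell_2$. Inverse, conjugate and transpose queries are handled symmetrically by swapping the roles of $\Ima$ and $\Dom$ and of $L$ and $R$. The care needed to handle the combined $V=V^L(I-V^RV^{R\dagger})+\dots$ form is what I expect to be the main obstacle. The cross terms involving $(I-V^RV^{R\dagger})$ must respect $\Pi$ up to a combinatorial bound. Having established the invariant, \Cref{lem:opprod} telescopes it over the $t$ queries to give an error of $t\cdot\negl$.

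On $\Pi$ the ancilla is a deterministic function of $(L_1,R_1)$, so it can be uncomputed. We obtain an oracle $\Ora{\wt V}$ on $(L_1,R_1,L_2,R_2)$ that routes by membership in $K$ instead of $S_2$, and samples fresh outputs outside $K\cup$(used elements). Where the sampling-normalization constants differ ($N-|K|$ versus $M_2-|\cdot|$), the discrepancy contributes a further negligible operator-norm error. Finally, we exhibit the isometry $T$ on internal states from the overview. It sends $(L,R)$ of $\Ora{V}_{[N]}$ to $(L_1,R_1,L_2,R_2)$ by placing into $(L_1,R_1)$ exactly those pairs reachable by a path starting in $S_1$, and all other pairs into $(L_2,R_2)$. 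We check query by query that $\Ora{\wt V}\,T=T\,\Ora{V}_{[N]}$ on the relevant subspace. Since $T$ acts only on the oracle's internal register, the two interactions give identical adversary views. Chaining all of these hybrids yields the claimed $\negl(n)$ bound.
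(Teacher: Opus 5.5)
Your architecture is the paper's: split path recording with a purified $S_2$ register, an approximately invariant subspace on which that register is the uniform superposition over $S_2\supseteq K$ (your $K$ is the paper's $D_{1,all}$), uncomputing that register, and the reachability isometry $T$. Your $\Pi$ is coarser than the paper's $\Pi_{good}$, since you do not exclude $D_{2,all}$ or an unknown input $D_x$ from $S_2$. As a result your $\Pi$ commutes exactly with the adversary's own unitaries, which the paper handles separately via $\wt{\Pi}_{good}$. The cost is that recording preserves $\Pi$ only approximately. The paper instead gets exact forward invariance by first passing to $V_2$, with fresh-output domains $S_2\setminus D_{1,all}$ and $[N]\setminus(S_2\cup D_{all})$. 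That trade-off is fine.

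The genuine gap is the step you flag as the main obstacle. What is missing there is a separate estimate, not cross-term bookkeeping.

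\emph{What your case analysis covers.} It shows only that the \emph{recording} maps send $\Pi$ approximately into $\Pi$, i.e.\ $\norm{(I-\Pi)V^{X}\Pi}_{op}\le\negl(n)$ for $X\in\{L_1,R_1,L_2,R_2\}$. Swapping $L\leftrightarrow R$ and $\Ima\leftrightarrow\Dom$ maps recording to recording, so symmetry adds nothing beyond this.

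\emph{What is needed.} Every query type contains an \emph{un-recording} factor: $V^{R\dagger}$ inside $V$, $V^{L\dagger}$ inside $V^\dagger$, and likewise for $\ol{V}$. To telescope you therefore also need $\norm{\Pi V^X(I-\Pi)}_{op}\le\negl(n)$, and this does not follow from the forward analysis.

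\emph{A concrete failure.} Take $\sum_y\beta_y\ket{D^{L_1,f}_y}\ket{\tau_{K\cup\{y\}}}\in\Pi$ with $\sum_y\beta_y=0$ and $\norm{\beta}_2=1$. Applying $V^{L_1\dagger}$ gives $\ket{D}\otimes\sum_y\beta_y\ket{\tau_{K\cup\{y\}}}$, times a constant of order $M_2^{-1/2}$. This vector is orthogonal to $\ket{\tau_K}$, because $\langle\tau_K|\tau_{K\cup\{y\}}\rangle$ does not depend on $y$. So it lies entirely outside $\Pi$, and you must show its norm is negligible. The naive bound via the largest Gram eigenvalue only gives $O(1)$.

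\emph{The missing spectral-gap estimate.} The Gram matrix of $\{\ket{\tau_{K\cup\{y\}}}\}_y$ is $(1-c)I+cJ$ with $c=\frac{M_2-|K|-1}{N-|K|-1}$. On mean-zero $\beta$ the squared norm is therefore $O(1/M_2)$. This is exactly the paper's combinatorial lemma (\Cref{lem:comb}): off the all-ones direction $B^TB$ has eigenvalue $\binom{N-2}{M-1}$, versus $M\binom{N-1}{M-1}$ on it. The paper combines it with a decomposition of $\Pi_{good}^\perp$ into two parts: basis states with an invalid $S_2$, which recording keeps invalid, and mean-zero superpositions over valid $S_2$ for a fixed database. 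Until you supply this, $\norm{(I-\Pi)Q\Pi}_{op}\le\negl(n)$ is unproven even for the plain forward query $Q=V$, and the hybrid chain does not close.
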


We will hybrid over a series of partial isometries $\Ora{V}=(V_i,V_i^\dagger,\ol{V_i},\ol{V_i}^\dagger)$ representing forward, inverse, transpose, and conjugate queries respectively. We will frequently define operators $V_i^{L_1},\allowbreak V_i^{R_1},\allowbreak V_i^{L_2}, \allowbreak V_i^{R_2},\allowbreak \ol{V}_i^{L_1},\allowbreak \ol{V}_i^{R_1},\allowbreak \ol{V}_i^{L_2},\allowbreak \ol{V}_i^{R_2}$, in which case, in the style of path recording, we will implicitly set (unless otherwise stated)
\begin{equation}
    \begin{split}
        V_i^L &= V_i^{L_1} V_i^{L_2}\\
        V_i^R &= V_i^{R_1} V_i^{R_2}\\
        V_i &= V_i^L \cdot (I-V_i^R\cdot V_i^{R,\dagger})+(I-V_i^L\cdot V_i^{L,\dagger})\cdot V_i^{R,\dagger}\\
        \ol{V}_i &= \ol{V}_i^L \cdot (I-\ol{V}_i^R\cdot \ol{V}_i^{R,\dagger})+(I-\ol{V}_i^L\cdot \ol{V}_i^{L,\dagger})\cdot \ol{V}_i^{R,\dagger}
    \end{split}
\end{equation}
Each of these operators will operate on a register $\Reg{In}$ over $\mathcal{H}([N])$ and $\Reg{St}$ over $\mathcal{H}(R_{[N]})^{\otimes 4}\otimes \mathcal{H}(\mathcal{P}([N]))$. This space is spanned by states of the form $\ket{x}\ket{L_1}\ket{L_2}\ket{R_1}\ket{R_2}\ket{S_2}$ where $x$ is in $\{0,1\}^n$, the $L_1,L_2,R_1,R_2$ are relations in $R_{[N]}$, and $S_2$ is a subset of $[N]$.

We will further define notation for basis vectors $D$ of the space $\mathcal{H}([N])\otimes \mathcal{H}(R_{[N]})^{\otimes 4}$. These will be elements of $\mathcal{DB} = [N]\times R_{[N]}^{4}$. Given $D=(x,L_1,R_1,L_2,R_2)$, we will write $D_x=x,D_{L_1}=L_1,D_{R_1}=R_1,D_{L_2}=L_2,D_{R_2}=R_2$. We will define $\Dom(D_b)=\Dom(L_b)\cup \Dom(R_b)$ and $\Ima(D_b)=\Ima(L_b)\cup \Ima(R_b)$. For ease of notation, we will write $D_{1,all} = \Ima(D_1)\cup \Dom(D_1) \cup S_1$ and $D_{2,all} = \Ima(D_2)\cup \Dom(D_2)$ (note the asymmetry here). We will write $D_{all}\coloneqq D_{1,all}\cup D_{2,all}\cup \{D_x\}$. We will also define $D_y^{L_1,f} = (y,L_1\cup \{(x,y)\},R_1,L_2,R_2)$ and $D_y^{L_1,b} = (y,L_1\cup \{(y,x)\},R_1,L_2,R_2)$. We define $D_y^{R_1,f},D_y^{L_2,f},D_y^{R_2,f},D_y^{R_1,b},D_y^{L_2,b},D_y^{R_2,b}$ analogously.

For all hybrids, we will work with the initial internal state 
$$\ket{\phi_0} = \frac{1}{\sqrt{{N\choose M_2}}}\ket{\emptyset}^{\otimes 4} \sum_{\substack{S_2\in \mathcal{P}([N])\\|S_2|=M_2}} \ket{S_2}$$

We set $V_0 = U_1U_2$ and $\Ora{V_0} = (V_0, V_0^\dagger, V_0^*, V_0^T)$ for $U_1,U_2$ distributed as in the lemma statement. We set $V_f = U$ and $\Ora{V_f} = (V_f, V_f^\dagger, V_f^*, V_f^T)$ for $U$ a Haar random unitary over $\mathcal{H}([N])$. Our remaining hybrids are then as follows
\begin{enumerate}
    \item In this hybrid we replace the truly random unitaries with a path recording variant. On input $\ket{D}\ket{S_2}$ such that $D_x\in S_2$, define
    \begin{equation}
        \begin{split}
            V_{1}^{L_1} \ket{D}\ket{S_2} &= \frac{1}{\sqrt{|S_2\setminus \Ima(D_1)|}}\sum_{y\in S_2\setminus \Ima(D_1)} \ket{D_y^{L_1,f}}\ket{S_2} \\
            V_{1}^{R_1} \ket{D}\ket{S_2} &= \frac{1}{\sqrt{|S_2\setminus \Dom(D_1)|}}\sum_{y\in S_2\setminus \Dom(D_1)} \ket{D_y^{R_1,b}}\ket{S_2} \\
            \ol{V_{1}}^{L_1} \ket{D}\ket{S_2} &= \frac{1}{\sqrt{|S_2\setminus \Ima(D_1)|}}\sum_{y\in S_2\setminus \Ima(D_1)} \ket{D_y^{R_1,f}}\ket{S_2}\\
            \ol{V_{1}}^{R_1} \ket{D}\ket{S_2} &= \frac{1}{\sqrt{|S_2\setminus \Dom(D_1)|}}\sum_{y\in S_2\setminus \Dom(D_1)}\ket{D_y^{L_1,b}}\ket{S_2}\\
        \end{split}
    \end{equation}
    with $V_{1}^{L_2},V_1^{R_2},\ol{V_1}^{L_2},\ol{V_1}^{R_2}$ acting as identity on all such states. On input $\ket{D}\ket{S_2}$ such that $D_x \in [N]\setminus S_2$, define
    \begin{equation}
        \begin{split}
            V_{1}^{L_2} \ket{D}\ket{S_2} &= \frac{1}{\sqrt{|[N]\setminus (S_2\cup \Ima(D_2))|}}\sum_{y\in [N]\setminus (S_2\cup \Ima(D_2))} \ket{D_y^{L_2,f}}\ket{S_2} \\
            V_{1}^{R_2} \ket{D}\ket{S_2} &= \frac{1}{\sqrt{|[N]\setminus (S_2\cup \Dom(D_2))|}}\sum_{y\in [N]\setminus (S_2\cup \Dom(D_2))}\ket{D_y^{R_2,b}}\ket{S_2}\\
            \ol{V_{1}}^{L_2} \ket{D}\ket{S_2} &= \frac{1}{\sqrt{|[N]\setminus (S_2\cup \Ima(D_2))|}}\sum_{y\in [N]\setminus (S_2\cup \Ima(D_2))} \ket{D_y^{R_2,f}}\ket{S_2}\\
            \ol{V_{1}}^{R_2} \ket{D}\ket{S_2} &= \frac{1}{\sqrt{|[N]\setminus (S_2\cup \Dom(D_2))|}}\sum_{y\in [N]\setminus (S_2\cup \Dom(D_2))}\ket{D_y^{L_2,b}}\ket{S_2}\\
        \end{split}
    \end{equation}
    with $V_{1}^{L_1},V_1^{R_1},\ol{V_1}^{L_1},\ol{V_1}^{R_1}$ acting as identity on all such states.
    \item In this hybrid we adjust the domains for which new outputs of the path-recording oracle are sampled. On input $\ket{D}\ket{S_2}$ such that $D_x\in S_2$, define
    \begin{equation}
        \begin{split}
            V_{2}^{L_1} \ket{D}\ket{S_2} &= \frac{1}{\sqrt{|S_2\setminus D_{1,all}|}}\sum_{y\in S_2\setminus D_{1,all}} \ket{D_y^{L_1,f}}\ket{S_2} \\
            V_{2}^{R_1} \ket{D}\ket{S_2} &= \frac{1}{\sqrt{|S_2\setminus D_{1,all}|}}\sum_{y\in S_2\setminus D_{1,all}} \ket{D_y^{R_1,b}}\ket{S_2}\\
            \ol{V_{2}}^{L_1} \ket{D}\ket{S_2} &= \frac{1}{\sqrt{|S_2\setminus D_{1,all}|}}\sum_{y\in S_2\setminus D_{1,all}} \ket{D_y^{R_1,f}}\ket{S_2}\\
            \ol{V_{2}}^{R_1} \ket{D}\ket{S_2} &= \frac{1}{\sqrt{|S_2\setminus D_{1,all}|}}\sum_{y\in S_2\setminus D_{1,all}}\ket{D_y^{L_1,b}}\ket{S_2}\\
        \end{split}
    \end{equation}
    with $V_{2}^{L_2},V_2^{R_2},\ol{V_2}^{L_2},\ol{V_2}^{R_2}$ acting as identity on all such states. On input $\ket{D}\ket{S_2}$ such that $D_x \in [N]\setminus S_2$, define
    \begin{equation}
        \begin{split}
            V_{2}^{L_2} \ket{D}\ket{S_2} &= \frac{1}{\sqrt{|[N]\setminus (S_2\cup D_{all})|}}\sum_{y\in [N]\setminus (S_2\cup D_{all})} \ket{D_y^{L_2,f}}\ket{S_2} \\
            V_{2}^{R_2} \ket{D}\ket{S_2} &= \frac{1}{\sqrt{|[N]\setminus (S_2\cup D_{all})|}}\sum_{y\in [N]\setminus (S_2\cup D_{all})}\ket{D_y^{R_2,b}}\ket{S_2}\\
            \ol{V_{2}}^{L_2} \ket{D}\ket{S_2} &= \frac{1}{\sqrt{|[N]\setminus (S_2\cup D_{all})|}}\sum_{y\in [N]\setminus (S_2\cup D_{all})} \ket{D_y^{R_2,f}}\ket{S_2}\\
            \ol{V_{2}}^{R_2} \ket{D}\ket{S_2} &= \frac{1}{\sqrt{|[N]\setminus (S_2\cup D_{all})|}}\sum_{y\in [N]\setminus (S_2\cup D_{all})}\ket{D_y^{L_2,b}}\ket{S_2}\\
        \end{split}
    \end{equation}
    with $V_{2}^{L_1},V_2^{R_1},\ol{V_2}^{L_1},\ol{V_2}^{R_1}$ acting as identity on all such states.
    \item 
    \begin{enumerate}
        \item We call a set $S_2$ valid if the following hold
        \begin{enumerate}
            \item $D_{1,all}\subseteq S_2$
            \item $D_{2,all}\cap S_2 = \emptyset$
            \item If $D_x\notin D_{1,all}$, then $D_x\notin S_2$
        \end{enumerate}
        We define $\Valid(D) = \{S_2\subseteq [N]:|S_2|=M_2\text{ and }S_2\text{ is valid for }D\}$.

        Intuitively, the first two conditions state the $D_1$ contains only values in $S_2$ and $D_2$ contains no values in $S_2$ (which would occur if $D_1$ and $D_2$ were the internal states of independent Haar random unitaries). The last condition enforces that the adversary never guesses a point in $S_2$ without being previously given that information (either via $S_1$ or given in a previous query). 
        \item Define the state 
        $$\ket{ValSt_{D}} = \frac{1}{\sqrt{|\Valid(D)|}}\ket{D}\sum_{S_2\in \Valid(D)} \ket{S_2}$$
        Note that for $D\neq D'$, $\ket{ValSt_D}$ and $\ket{ValSt_{D'}}$ are orthogonal. Define the projector
        $$\Pi_{good} = \sum_{D} \ketbra{ValSt_D}$$
        \item For all $X$, define $V_{3}^X = \Pi_{good} V_{2}^X \Pi_{good}$ and $\ol{V_{3}}^X=\Pi_{good} \ol{V_{2}}^X \Pi_{good}$.
    \end{enumerate}
    \item 
    \begin{enumerate}
    \item Define the following isometry $ConS$ to reconstruct the $S_2$ register from the rest of the state. That is,
    $$ConS\ket{D} = \frac{1}{\sqrt{|\Valid(D)|}}\ket{D}\sum_{S_2\in \Valid(D)} \ket{S_2}$$
    Note $\Pi_{good}=\Ima(ConS)$.
    \item We will define a new set of partial isometries $W^{L_1}_4,W^{R_1}_4,\dots$ acting only on $\mathcal{H}([N])\otimes \mathcal{H}(R_{[N]})^{\otimes 4}$ as follows (which will implicitly define $\Ora{W_4}=(W_4,W_4^\dagger,\ol{W}_4,\ol{W}_4^\dagger)$). On input $\ket{D}$ such that $D_x\in D_{1,all}$, define
    \begin{equation}
        \begin{split}
            W^{L_1}_4\ket{D} = \frac{1}{\sqrt{|[N]\setminus D_{all}|}} \sum_{y\in [N]\setminus D_{all}} \ket{D_y^{L_1,f}}\\
            W^{R_1}_4\ket{D} = \frac{1}{\sqrt{|[N]\setminus D_{all}|}} \sum_{y\in [N]\setminus D_{all}} \ket{D_y^{R_1,b}}\\
            \ol{W_4}^{L_1}\ket{D} = \frac{1}{\sqrt{|[N]\setminus D_{all}|}} \sum_{y\in [N]\setminus D_{all}} \ket{D_y^{R_1,f}}\\
            \ol{W_4}^{R_1}\ket{D} = \frac{1}{\sqrt{|[N]\setminus D_{all}|}} \sum_{y\in [N]\setminus D_{all}} \ket{D_y^{L_1,b}}\\
        \end{split}
    \end{equation}
    with $W_4^{L_2},W_4^{R_2},\ol{W}_4^{L_2},\ol{W}_4^{R_2}$ acting as identity on all such states. On input $\ket{D}$ such that $D_x\notin D_{1,all}$, define
    \begin{equation}
        \begin{split}
            W^{L_2}_4\ket{D} = \frac{1}{\sqrt{|[N]\setminus D_{all}|}} \sum_{y\in [N]\setminus D_{all}} \ket{D_y^{L_2,f}}\\
            W^{R_2}_4\ket{D} = \frac{1}{\sqrt{|[N]\setminus D_{all}|}} \sum_{y\in [N]\setminus D_{all}} \ket{D_y^{R_2,b}}\\
            \ol{W_4}^{L_2}\ket{D} = \frac{1}{\sqrt{|[N]\setminus D_{all}|}} \sum_{y\in [N]\setminus D_{all}} \ket{D_y^{R_2,f}}\\
            \ol{W_4}^{R_2}\ket{D} = \frac{1}{\sqrt{|[N]\setminus D_{all}|}} \sum_{y\in [N]\setminus D_{all}} \ket{D_y^{L_2,b}}\\
        \end{split}
    \end{equation}
    with $W_4^{L_1},W_4^{R_1},\ol{W}_4^{L_1},\ol{W}_4^{R_1}$ acting as identity on all such states.
    \item For all $X$, define $V_4^{X} = ConS\cdot W^{X}_4 \cdot ConS^{\dagger}$ and $\ol{V}_4^X = ConS\cdot \ol{W}^X_4 \cdot ConS^{\dagger}$.
    \end{enumerate}
    \item We set $V_5 = W_4 \otimes I$ and $\ol{V_5}=\ol{W_4}\otimes I$.
    \item $V_6$ will correspond to a single path-recording oracle no longer split into two parts. In particular, define partial isometries $W_6^{L_1},W_6^{R_1},\ol{W_6}^{L_1},\ol{W_6}^{R_1}$ acting on registers $\mathcal{H}([N])\otimes \mathcal{H}(R_{[N]})^{\otimes 4}$ as follows
    \begin{equation}
        \begin{split}
            W_6^{L_1} \ket{D} \mapsto \frac{1}{\sqrt{|[N]\setminus D_{all}|}} \sum_{y\in [N]\setminus D_{all}} \ket{D_y^{L_1,f}}\\
            W_6^{R_1} \ket{D} \mapsto \frac{1}{\sqrt{|[N]\setminus D_{all}|}} \sum_{y\in [N]\setminus D_{all}} \ket{D_y^{R_1,b}}\\
            \ol{W_6}^{L_1} \ket{D} \mapsto \frac{1}{\sqrt{|[N]\setminus D_{all}|}} \sum_{y\in [N]\setminus D_{all}} \ket{D_y^{R_1,f}}\\
            \ol{W_6}^{R_1} \ket{D} \mapsto \frac{1}{\sqrt{|[N]\setminus D_{all}|}} \sum_{y\in [N]\setminus D_{all}} \ket{D_y^{L_1,b}}
        \end{split}
    \end{equation}
    We will define 
    $$W_6 = W_6^{L_1} (I-W_6^{R_1}\cdot W_6^{R_1,\dagger})+(I-W_6^{L_1}\cdot W_6^{L_1,\dagger})W_6^{R_1,\dagger}$$
    $$\ol{W}_6 = \ol{W}_6^{L_1} (I-\ol{W}_6^{R_1}\cdot \ol{W}_6^{R_1,\dagger})+(I-\ol{W}_6^{L_1}\cdot \ol{W}_6^{L_1,\dagger})\ol{W}_6^{R_1,\dagger}$$
    Then we will set $V_6 = W_6 \otimes I$ and $\ol{V_6} = \ol{W_6} \otimes I$.
    \item $V_7$ will correspond to the standard path-recording oracle for a single Haar random unitary.
\end{enumerate}

Before arguing about the hybrids, we will first show that each of the claimed partial isometries are actually partial isometries. This follows immediately from the following observation
\begin{lemma}\label{lem:outputortho}
    For all $A\neq B\in \mathcal{DB}$, for all $y\in [N]\setminus (A_1\cap B_1)$,
    \begin{enumerate}
        \item $\braket{A_{y}^{L_1,f}}{B_{y}^{L_1,f}}=0$
        \item $\braket{A_{y}^{L_1,b}}{B_{y}^{L_1,b}} = 0$
        \item $\braket{A_{y}^{R_1,f}}{B_{y}^{R_1,f}}=0$
        \item $\braket{A_{y}^{R_1,b}}{B_{y}^{R_1,b}} = 0$
    \end{enumerate}
    and for all $y \in [N] \setminus (A_2\cap B_2)$,
    \begin{enumerate}
        \item $\braket{A_{y}^{L_2,f}}{B_{y}^{L_2,f}}=0$
        \item $\braket{A_{y}^{L_2,b}}{B_{y}^{L_2,b}} = 0$
        \item $\braket{A_{y}^{R_2,f}}{B_{y}^{R_2,f}}=0$
        \item $\braket{A_{y}^{R_2,b}}{B_{y}^{R_2,b}} = 0$
    \end{enumerate}
\end{lemma}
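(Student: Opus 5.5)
The plan is to argue by contradiction, one item at a time, using only the definition of the map $D \mapsto D_y^{X,\cdot}$. Suppose $A \neq B$ and $\ket{A_y^{X,\cdot}} = \ket{B_y^{X,\cdot}}$ for some $X \in \{L_1,R_1,L_2,R_2\}$, some direction $f$ or $b$, and some admissible $y$. Since these are computational basis states, the tuples must coincide, and we show this forces $A = B$. Throughout, $y \notin A_1 \cap B_1$ (resp. $A_2\cap B_2$) is read as: $y$ does not appear in both $A$ and $B$'s relevant records, i.e. $y$ is not already a coordinate of a pair in both $A_X$ and $B_X$.

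\emph{Untouched coordinates.} Take $X = L_1$ with direction $f$; the other seven cases are symmetric. By definition, $A_y^{L_1,f} = (y, A_{L_1}\cup\{(A_x,y)\}, A_{R_1}, A_{L_2}, A_{R_2})$, and similarly for $B$. Equality of the tuples immediately gives $A_{R_1}=B_{R_1}$, $A_{L_2}=B_{L_2}$ and $A_{R_2}=B_{R_2}$. It also gives
\[
A_{L_1}\cup\{(A_x,y)\} = B_{L_1}\cup\{(B_x,y)\}.
\]

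\emph{Recovering $x$ and $L_1$.} This is the only nontrivial step. If neither $A_{L_1}$ nor $B_{L_1}$ contains a pair with second coordinate $y$, then the unique pair with image $y$ on the left must equal the unique such pair on the right. Hence $A_x = B_x$. Removing that pair from both sides then gives $A_{L_1} = B_{L_1}$, so $A = B$, a contradiction.

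\emph{The main obstacle.} The hypothesis on $y$ only excludes $y$ from \emph{both} records simultaneously, so I expect the main work to be the case where $y$ already occurs as an image in exactly one of them, say $A_{L_1}$. In that case $B_{L_1}\cup\{(B_x,y)\}$ contains exactly one pair with image $y$, namely $(B_x,y)$. Then $A_{L_1}\cup\{(A_x,y)\}$ must also contain exactly one such pair, which forces $(A_x,y)$ to already lie in $A_{L_1}$. One must then check that this configuration is consistent with the intended interpretation, where $A_1$ includes $\Ima(A_1)$. I expect this to either yield $A=B$ directly or to be excluded by the statement's condition once it is read as ``$y$ is a fresh point for at least one of $A,B$.'' The backward variants $D_y^{X,b}$ are handled identically with the roles of domain and image swapped, using the pair $(y,A_x)$ and $\Dom$ in place of $\Ima$.

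With the lemma in hand, the images of distinct basis vectors under each of the defined maps are orthonormal. So each map sends an orthonormal set to an orthonormal set, after normalization by the stated square roots, which makes it a partial isometry.
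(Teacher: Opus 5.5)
There is a genuine gap, and it is exactly the case you call the main obstacle. You leave it open, expecting it either to force $A=B$ or to be excluded by the hypothesis, but neither happens. Finishing your own computation there: suppose $y\in\Ima(A_{L_1})$, $y\notin\Ima(B_{L_1})$ and $A_{L_1}\cup\{(A_x,y)\}=B_{L_1}\cup\{(B_x,y)\}$. The right side has a unique pair with image $y$, so the left side does too. This gives $A_x=B_x$ and $A_{L_1}=B_{L_1}\cup\{(A_x,y)\}$ with $(A_x,y)\notin B_{L_1}$. Hence $A\neq B$, yet $A_y^{L_1,f}=B_y^{L_1,f}$. Concretely, take $A=(x,\{(x,y)\},\emptyset,\emptyset,\emptyset)$ and $B=(x,\emptyset,\emptyset,\emptyset,\emptyset)$ with $y\notin S_1$. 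Both map to $(y,\{(x,y)\},\emptyset,\emptyset,\emptyset)$. Meanwhile $y$ lies in $A$'s first database but not in $B$'s, so $y\notin A_1\cap B_1$, and the inner product is $1$, not $0$. The same pattern, e.g.\ $A_{L_1}=\{(y,x)\}$, breaks the $b$ variants. Your proposed reading, ``$y$ is fresh for at least one of $A,B$,'' is precisely the literal $\cap$ hypothesis, under which this configuration is allowed. So the statement as written is false, and no argument can close that case.

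What is missing is to strengthen the hypothesis to $y\notin A_1\cup B_1$. More precisely, $y$ should not lie, in \emph{both} $A$ and $B$, in the image (for the $f$ variants) or the domain (for the $b$ variants) of the relation being extended. This is all the paper ever uses. Each map built from these states sums only over $y$ outside $\Ima(D_1)$, $\Dom(D_1)$, $D_{1,all}$, $\Ima(D_2)$, $\Dom(D_2)$, or $D_{all}$ of its own input, so any $y$ common to the images of two inputs is fresh for both. Under that hypothesis your paragraph ``Recovering $x$ and $L_1$'' is already a complete proof. It is also cleaner than the paper's case split on $A_{L_1}=B_{L_1}$, $A_x=B_x$, and $|A_{L_1}|=|B_{L_1}|$. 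The paper's argument has the same hole. It asserts that the unions differ ``by definition'' when $A_x=B_x$ and $A_{L_1}\neq B_{L_1}$, and ``trivially'' when $|A_{L_1}|\neq|B_{L_1}|$. Both claims fail precisely when $A_{L_1}=B_{L_1}\cup\{(A_x,y)\}$, and all of its cases become correct once $y$ is fresh for both.
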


\begin{proof}

    We will explicitly write the proof for $\braket{A_{y}^{L_1,f}}{B_{y}^{L_1,f}}=0$, and the rest follow by the same argument. 
    
    If $(A_{L_2},A_{R_1},A_{R_2})\neq (B_{L_2},B_{R_1},B_{L_2})$ then this is trivial. Thus, without loss of generality, we assume $(A_{L_2},A_{R_1},A_{R_2})= (B_{L_2},B_{R_1},B_{L_2})$. We will then show that $A_{L_1}\cup \{(A_x,y)\} \neq B_{L_1}\cup \{(B_x,y)\}$. 

    If $A_{L_1}=B_{L_1}$, since $A_x\neq B_x$ we by definition have $A_{L_1}\cup \{(A_x,y)\} \neq B_{L_1}\cup \{(B_x,y)\}$. If $A_x=B_x$ and $A_{L_1}\neq B_{L_1}$, then again by definition we have $A_{L_1}\cup \{(A_x,y)\} \neq B_{L_1}\cup \{(A_x,y)\}$. If $|A_{L_1}|\neq |B_{L_1}|$, then the result trivially follows. In the final case, we consider $A_x\neq B_x$, $A_{L_1} \neq B_{L_1}$, and $|A_{L_1}|=|B_{L_1}|$. In this case, the only way we can have $A_{L_1}\cup \{(A_x,y)\} = B_{L_1}\cup \{(B_x,y)\}$ is if $A_{L_1}\setminus \{(B_x,y)\} = B_{L_1} \setminus \{(A_x,y)\}$. 

    But since $A_x\neq B_x$, this implies that $\{(B_x,y)\}\in A_{L_1}$ and $\{(A_x,y)\}\in B_{L_1}$. And so, $y\in A_1 \cap B_1$, which is a contradiction.
\end{proof}

To conclude the proof, we argue the formal proofs that $$\norm{\A^{\Ora{V_i}} - \A^{\Ora{V_{i-1}}}}_2 \leq \negl(n)$$ for all $i$. \Cref{lem:reprogramming} follows immediately. As these arguments are all technically involved, we will defer them to \ifnum\llncs=0~\Cref{sec:rephyb} \else Appendix B of the full version \fi and provide a proof sketch of the hybrids here. 

\begin{enumerate}
    \item $\norm{\A^{\Ora{V_1}} - \A^{\Ora{V_0}}}_2 \leq \negl(n)$: This follows from path recording (\Cref{thm:pathrecordingreverse}), although some care needs to be taken to verify that $\Ora{V_1}$ is indeed the correct instantiation of the path recording oracle.
    \item $\norm{\A^{\Ora{V_2}} - \A^{\Ora{V_{1}}}}_2 \leq \negl(n)$: $V_1^L$ and $V_2^L$ are close in diamond distance, since $V_2^L$ chooses $y$ from a space which has negligibly fewer options. A similar argument holds for the other components of $\Ora{V_1},\Ora{V_2}$. Since all maps here are partial isometries, this implies that $\Ora{V_1}$ and $\Ora{V_2}$ are close in diamond distance and so by induction the final states of the adversary when interacting with both maps will be close.
    \item $\norm{\A^{\Ora{V_3}} - \A^{\Ora{V_{2}}}}_2 \leq \negl(n)$: We will show that all maps making up $\A^{\Ora{V_2}}$ approximately commute with $\Pi_{good}$. And so the claim follows by induction. In particular, let $\ket{\phi}\in \Span(\Pi_{good})$. We can observe directly that $V_2^{L_1}\ket{\phi}$ will also be in $\Span(\Pi_{good})$. Some more care needs to be taken to ensure that $V_2^{L_1,\dagger}\ket{\phi}$ will be close to a state in $\Span(\Pi_{good})$, and will follow from a direct combinatorial argument given in detail in \ifnum\llncs=0~\Cref{sec:combinatorics}\else Appendix C of the full version\fi. Similarly, for any map $A$ operating only in $\Reg{In}$, $A\ket{\phi}$ will be close to a state in $\Pi_{good}$, with error corresponding to the event that the new component in register $\Reg{In}$ lands in $S_2\setminus D_{1,all}$. Since this is effectively a random set, the corresponding error is neglgible. 
    \item $\Ora{V_3}=\Ora{V_4}$: In particular, $\Ora{W_4}$ is an exact description of the map you get when you conjugate $\Ora{V_3}$ by $ConS$. 
    \item $\A^{\Ora{V_5}} = \A^{\Ora{V_4}}$: This follows immediately from the map that $\Ora{V_4}$ and $\Ora{V_5}$ are identical up to conjugation by the isometry $ConS$ which acts only on the internal state register.
    \item $\A^{\Ora{V_6}} = \A^{\Ora{V_5}}$: We develop a partial isometry $T$ acting only on the internal register such that $\Ora{V_6}$ and $\Ora{V_5}$ are identical up to conjugation by $T$. In particular, $T$ will describe how to split a database $L,R$ coming from $\A^{\Ora{V_6}}$ into a database $(L_1,R_1,L_2,R_2)$ consistent with $\A^{\Ora{V_5}}$. The idea is that we partition all $(x,y)\in L\cup R$ into either $(L_1,R_1)$ or $(L_2,R_2)$. In particular, we will place $(x,y)$ in $(L_1,R_1)$ if and only if there is a "path" in $(L,R)$ reaching $(x,y)$ from a string $s\in S_2$. That is, there exists $(x_1,y_1),(x_2,y_2),\dots,(x_t,y_t)\in L\cup R$ such that
    \begin{enumerate}
        \item $(x_t,y_t)=(x,y)$
        \item $x_1\in S_1$ or $y_1\in S_2$
        \item $x_i=y_{i-1}$
    \end{enumerate}
    \item $\norm{\A^{\Ora{V_7}} - \A^{\Ora{V_{6}}}}_2 \leq \negl(n)$: This follows from the same argument as for $\norm{\A^{\Ora{V_2}} - \A^{\Ora{V_{1}}}}_2 \leq \negl(n)$
    \item $\norm{\A^{\Ora{V_f}} - \A^{\Ora{V_7}}}_2 \leq \negl(n)$: This follows from path recording (\Cref{thm:pathrecordingreverse}).
\end{enumerate}
\section{Unclonable encryption compiler}\label{sec:mainthm}

\begin{definition}
    A (many-time) unclonable encryption scheme with keys of length $\ell_k(n)$ for messages of length $\ell_m(n)$ is a pair of uniform quantum algorithms $\Enc,\Dec$ with the following syntax
    \begin{enumerate}
        \item $\Enc(1^n,k,m) \to \rho_{k,m}$ takes as input a security parameter $n$, a key $k$, and a message $m$, and outputs a mixed state $\rho_{k,m}$.
        \item $\Dec(1^n, k, \rho) \to m$ takes as input a security parameter $n$, a key $k$, and a quantum ciphertext $\rho$, and outputs a decrypted message $m$.
    \end{enumerate}
    satisfying the following properties.
    \begin{enumerate}
        \item (Perfect) correctness: for all $n\in \N, m\in \{0,1\}^{\ell_m(n)}$,
        $$\BigPr{\Dec(1^n, k, \rho_{k,m}) = m}{k\gets \{0,1\}^{\ell_k(n)}\\\Enc(1^n, k, m)\to \rho_{k,m}} = 1.$$
        \item Security:
        For all (non-uniform) QPT algorithms $A,B$ and for all (non-uniform) QPT oracle algorithms $C^{(\cdot)}$, define $GAME_{A,B,C}^n(\Enc,\Dec)$ to be the following game
        \begin{enumerate}
            \item Sample $k \gets \{0,1\}^{\ell_k(n)}$
            \item Run $C^{\Enc(1^n, k,\cdot)} \to (m_0,m_1,\textbf{st})$.
            \item Sample $b\gets \{0,1\}$
            \item Set $\rho_b = \Enc(1^n, k, m_b)$.
            \item Run $C^{\Enc(1^n, k,\cdot)}(\rho_b, \textbf{st}) \to \sigma_{\Reg{A}\Reg{B}}$. In full detail, the oracle $\Enc(1^n,k,\cdot)$ first measures its input register to produce a string $m$ and then outputs $\Enc(1^n,k,m)$.
            \item Run $A(1^n,k,\Reg{A})\to b_A$, $B(1^n,k,\Reg{B})\to b_B$.
            \item Output $1$ if and only if $b=b_A=b_B$.
        \end{enumerate}
        The unclonable encryption scheme must satisfy that for all $n$, QPT $A,B,C$,
        $$\Pr[GAME_{A,B,C}^n(\Enc,\Dec) \to 1] \leq \frac{1}{2}+\negl(n).$$
    \end{enumerate}

    We will often omit the security parameter when it is clear from context.
\end{definition}

\begin{definition}
    Let $\mathcal{U} = \{\mathcal{U}_n\}_{n\in \N}$ be a unitary oracle. An unclonable encryption scheme relative to $\mathcal{U}$ is a pair of quantum oracle algorithms $\Enc^{\vec{U}},\Dec^{\vec{U}}$ with the following syntax
    \begin{enumerate}
        \item $\Enc^{\vec{U}}(1^n,k,m) \to \rho_{k,m}$: takes as input a security parameter $n$, a key $k$, and a message $m$, and outputs a mixed state $\rho_{k,m}$.
        \item $\Dec^{\vec{U}}(1^n, k, \rho) \to m$ takes as input a security parameter $n$, a key $k$, and a quantum ciphertext $\rho$, and outputs a decrypted message $m$.
    \end{enumerate}
    satisfying the following notions of security.
    \begin{enumerate}
        \item (Perfect) correctness: for all $n$, $m\in \{0,1\}^{\ell_m(n)}$,
        $$\BigPr{\Dec^{\vec{U}}(1^n, k, \rho_{k,m}) = m}{U\gets \mathcal{U}_n\\k\gets \{0,1\}^{\ell_k(n)}\\\Enc^{\vec{U}}(1^n, k, m)\to \rho_{k,m}} = 1$$
        \item Security: For oracle algorithms $A^{\Ora{U}},B^{\Ora{U}},C^{\Ora{U},(\cdot)}$, define $GAME_{A,B,C}^n(\Enc,\Dec)$ to be the following game
        \begin{enumerate}
            \item Sample $U\gets \mathcal{U}_n$.
            \item Sample $k \gets \{0,1\}^{\ell_k(n)}$
            \item Run $C^{\vec{U},\Enc^{\vec{U}}(1^n,k,\cdot)}\to (m_0,m_1,\textbf{st})$.
            \item Sample $b\gets \{0,1\}$
            \item Set $\rho_b = \Enc^{\vec{U}}(1^n, k, m_b)$.
            \item Run $C^{\vec{U},\Enc^{\vec{U}}(1^n,k,\cdot)}(\rho_b,\textbf{st})\to \sigma_{\Reg{A}\Reg{B}}$.
            \item Run $A^{\vec{U}}(1^n,k,\Reg{A})\to b_A$, $B^{\vec{U}}(1^n,k,\Reg{B})\to b_B$.
            \item Output $1$ if and only if $b=b_A=b_B$.
        \end{enumerate}
        Security requires that for all polynomials $t$, for all oracle adversaries $A^{(\cdot)},B^{(\cdot)},C^{(\cdot)}$ making at most $t$ oracle queries,
        $$\Pr[GAME_{A,B,C}^n(\Enc,\Dec) \to 1] \leq \frac{1}{2}+\negl(n)$$
    \end{enumerate}
\end{definition}

\begin{definition}\label{def:pure-scheme}
    We will call a candidate unclonable encryption scheme $(\Enc,\Dec)$  \textit{pure} if the encryption algorithm $\Enc$ is of the following form:
    \begin{enumerate}
        \item On input a key $k$ and a message $m$.
        \item Sample $r\gets \{0,1\}^{\ell_r(n)}$.
        \item Output $E_k \ket{m,r,0}$ where $E_k$ is some efficient unitary.
    \end{enumerate}
    We will call $E = \{E_k\}$ the purification of $\Enc$.

    We will call an oracle unclonable encryption scheme $(\Enc^{\vec{U}},\Dec^{\vec{U}})$ pure if the same definition holds for some oracularly efficient unitary $E_k^{\vec{U}}$.
    
    Note that if a (secure) scheme is pure, then without loss of generality we may assume that $\Dec(1^n,k,\rho)$ is the map which applies $E_k^{-1}$ to $\rho$ and then measures the first register in the standard basis.

    For $r\in \{0,1\}^{\ell_r}$, we will denote $\Enc(k,m;r) = E_k \ket{m,r,0}$.
\end{definition}

\todo{These parameters are really bad, it is inherited from the bad Grover's search argument.}

\begin{theorem}\label{thm:unclonable-encryption}
   Let $\ell_r,\ell_k$ be any polynomials in $n$, and let $\V$ be any unitary oracle model. Let $(\wt{\Enc}^{\Ora{V}},\wt{\Dec}^{\Ora{V}})$ be any pure unclonable encryption scheme relative to $\V$ operating on message space $\{0,1\}^{\ell_m}$, key space $\{0,1\}^{\ell_{\wt{k}}(n)}$, randomness space $\{0,1\}^{\ell_{\wt{r}}(n)}$, and using $\ell_a(n)$ ancilla qubits. Let $\U$ be the QHROM of length $\ell_m+\ell_{\wt{r}}+\ell_a+\ell_r$. Then the following encryption scheme is a (pure) unclonable encryption scheme relative to $\U$ with keys in $\{0,1\}^{\ell_k}$.
    \begin{enumerate}
        \item $\Enc^{\Ora{U}}(1^n, k, m)$:
        \begin{enumerate}
            \item Sample $r_1\gets \{0,1\}^{\ell_{\wt{r}}}$, $r_2\gets \{0,1\}^{\ell_r}$
            \item Output $X^k U\ket{m,r_1,0^{\ell_a(n)},r_2}$.
        \end{enumerate}
        \item $\Dec^{\Ora{U}}(1^n, k, \rho)$: computes $\sigma = U^{\dagger} X^k \rho X^k U$, then measures the first register.
    \end{enumerate}
\end{theorem}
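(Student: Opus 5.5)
Correctness is immediate, since $U^\dagger X^k X^k U\ket{m,r_1,0,r_2}=\ket{m,r_1,0,r_2}$. Purity is also immediate, with purification $X^kU$ and randomness $(r_1,r_2)$. The substance is security. I will argue by contradiction: from an adversary $(\sA,\sB,\sC)$ winning $GAME(\Enc,\Dec)$ with probability $\frac12+\varepsilon$, I will build an adversary against $(\wt{\Enc}^{\Ora{V}},\wt{\Dec}^{\Ora{V}})$. The guiding idea is a random self-reduction. Because $U$ is Haar random, $U$ and $U\cdot(\wt{E}_{\wt k}^{\Ora{V}}\otimes I)$ have the same distribution. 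Hence a real ciphertext $X^kU\ket{m,r_1,0,r_2}$ is distributed like $X^kU(\wt{E}^{\Ora V}_{\wt k}\ket{m,r_1,0}\otimes\ket{r_2})$. The reduction could then take its own challenge $\wt{E}_{\wt k}\ket{m_b,\wt r,0}$, append $\ket{r_2}$, and apply $X^kU$ with its own $k$ and its own Haar $U$. The obstacle is that during the cloning phase the reduction does not know $\wt k$, so it cannot answer the cloner's queries to the composed oracle $U\wt E_{\wt k}$.

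To handle this I will run a sequence of hybrids on the Haar-model game.

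\textbf{Hybrid 1.} Encryption samples $r_2$ only from a small set $S_1$ of size $\poly\cdot$ (number of encryption queries), fixed in advance by the reduction. Each fresh $r_2$ is drawn from a random set $R$ of size $2^{\ell_r/4}$; statistically this changes only collision probabilities, which are negligible.

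\textbf{Hybrid 2.} Let $S_2$ be the span of all $\ket{m,r_1,a,r_2}$ with $r_2\in R$, containing the $S_1$-part. Replace $U$ by $U_1U_2$, with $U_1\gets\Haar(S_2)$ and $U_2\gets\Haar(S_2^\perp)$. I will invoke \Cref{lem:reprogramming} in its subspace form. Here $\dim S_2/N\approx 2^{-3\ell_r/4}$ and $|S_1|/|S_2|$ are negligible, provided $\ell_r$ is a large enough polynomial. The reduction only needs $S_1$, and the ciphertexts are publicly simulatable from it, so the indistinguishability transfers to the whole game.

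\textbf{Hybrid 3.} By Haar invariance on $S_2$, rewrite $U_1$ as $U_1'(\wt E_{\wt k}\otimes I)|_{S_2}$. This requires $\wt E_{\wt k}\otimes I$ to preserve $S_2$, which holds because it acts as the identity on $r_2$.

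\textbf{Hybrid 4.} During the cloning phase only, the cloner's oracle is replaced by $U_2$ alone (identity-extended on $S_2$), or equivalently by an oracle independent of $\wt k$. This is the step I expect to be the main obstacle. I must show that $\sC$ places negligible query weight on $S_2$ (more precisely, on the part of $S_2$ not determined by the $U_1$-images it has already seen). The intuition is that every ciphertext it holds is masked by $X^k$ with $k$ unknown, so the ciphertexts reveal nothing about $R$ beyond $S_1$. I will first use path recording to make the ciphertexts and the oracle jointly simulatable. I will then apply a unitary one-way-to-hiding argument in the style of \cite{TCC:HhaYam25}: bound the distinguishing advantage by the query mass on $S_2\setminus S_1$, and bound that mass by a Grover-type search bound, giving $O(t^2\dim(S_2)/N)$-type quantities. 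The delicate part is that $U_1$ and $U_2$ interact through inverse, transpose and conjugate queries and through the $X^k$ masks. I will therefore need the path-recording database view to argue that $D_1$ stays empty, apart from ciphertext-derived entries, throughout the cloning phase.

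With this in place the reduction is as follows. It samples $k$, $S_1$, $R$, $U_1'$ and $U_2$ itself, and simulates $U_1'$ and $U_2$ efficiently via \Cref{cor:effpr} with reflection oracles. It answers the cloner's encryption queries by querying its own $\wt\Enc$ oracle, appending $r_2\in S_1$, and applying $X^kU_1'$. The challenge is handled the same way. In the distinguishing phase $\sA$ and $\sB$ receive $\wt k$, so the reduction can implement the full $U_1'(\wt E_{\wt k}^{\Ora V}\otimes I)U_2$ using its $\Ora V$ access, and it passes along its own $k$. The success probability is thus preserved up to negligible terms, contradicting the security of $\wt\Enc$.
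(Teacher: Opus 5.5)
Your overall architecture matches the paper's. You pre-sample the encryption randomness, split $U$ into $U_1U_2$ with the reprogramming lemma, absorb $\wt{E}_{\wt k}$ into $U_1$ by Haar invariance, give the cloner only $U_2$, and let $\wt A,\wt B$ rebuild the full oracle once they hold $\wt k$. The gap is your choice $S_2=\mathsf{span}\{\ket{x,r_2}: x\in\wt\Ciph,\ r_2\in R\}$ with $|R|=2^{\ell_r/4}$. This is a union of whole fibres over $R$, the simplified picture from the introduction, and it breaks two of your steps.

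First, the reprogramming lemma does not apply. It requires $S_2$ to be uniformly random among size-$M_2$ supersets of $S_1$. Its proof uses the validity condition that a freshly queried point outside $D_{1,all}$ lies outside $S_2$, and this fails when one known point $(x,r)\in S_2$ reveals its entire fibre.

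Second, and fatally, your Hybrid 4 is false. $X^k$ acts only on the first $\ell_k$ qubits, and the theorem allows $\ell_k\le\ell_m+\ell_{\wt r}+\ell_a$ (e.g.\ $n$-bit keys with long messages). So the last $\ell_r$ bits of $U_1\ket{m,r_1,0,r_i}$ are unmasked. Measuring any ciphertext yields some $r\in R$, and hence all of $\wt\Ciph\times\{r\}\subseteq S_2$. Querying the oracle on $\ket{x',r}$ then returns the Haar-random state $U_1\ket{x',r}$ in Hybrid 3, but returns $\ket{x',r}$ itself once the oracle is $U_2$ extended by the identity. The two hybrids are therefore distinguishable with probability close to $1$. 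Since $U_1=U_1'(\wt E_{\wt k}\otimes I)$ on all of $S_2$, the correct answer on such a fibre depends on $\wt k$, so no ``oracle independent of $\wt k$'' can stand in for it either. Your claim that the ciphertexts reveal nothing about $R$ is exactly what fails. Note also that your density $2^{-3\ell_r/4}$ does not involve $\ell_k$, although the hiding has to come from the key mask.

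The paper avoids this as follows. It takes $R$ to be only the $t$ strings actually used and sets $S_1=\wt\Ciph\times R$. It then takes $S_2=S_1\cup S'$, where $S'$ is a uniformly random, unstructured set of density $2^{-d}$ with $d=\frac12\min(\ell_k,\ell_r)$. The reprogramming lemma then applies verbatim. The unmasked suffix of a random element of $S_2$ leaves $2^{\ell_k}$ candidate prefixes, of which only about a $2^{-d}$ fraction lie in $S_2$. Because $\wt E_{\wt k}\otimes I$ no longer preserves such an $S_2$, the paper absorbs the controlled unitary $c_R\wt E_{\wt k}$, which acts nontrivially only on $S_1\subseteq S_2$.

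Your Hybrid 4 then becomes the paper's FindSEnc argument. The ciphertexts are replaced by independent Haar states and then by classical samples $s\oplus k$ with $s\in S_2$. $U_2$ is simulated by path recording with a reflection oracle for $S_2$, $S_1$ is dropped, and one reduces to the Grover-type bound for FindSXOR. Your O2H-plus-Grover plan for this step is in the same spirit, but it only has a chance once $S_2$ is unstructured.
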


\begin{proof}
    Correctness follows by observation.

    To prove security, we will define the following reduction. Let $C^{\Ora{U},\Enc^{\vec{U}}},A^{\Ora{U}},B^{\Ora{U}}$ be any attacker against $\Enc$ making at most $t$ queries to the encryption oracle. We will define an attacker $\wt{C}^{\Ora{V},\Enc^{\Ora{V}}},\wt{A}^{\Ora{V}},\wt{B}^{\Ora{V}}$ against $\wt{\Enc}$.

    In particular, let $\wt{E}^{\Ora{V}}$ be the purification of $\wt{\Enc}^{\Ora{V}}$. Let $\wt{\Ciph} = \{0,1\}^{\ell_m}\times \{0,1\}^{\ell_{\wt{r}}} \times \{0,1\}^{\ell_{\wt{a}}}$ and let $\Ciph = \wt{\Ciph}\times \{0,1\}^{\ell_r}$. Note that $\mathcal{H}(\wt{\Ciph}),\mathcal{H}(\Ciph)$ are the ciphertext spaces of $\Enc$ and $\wt{\Enc}$ respectively. Let $d=\frac{1}{2}\min(\ell_k,\ell_r)$.

    We will define $\wt{C}^{\Ora{V},\wt{\Enc}^{\vec{V}}(\wt{k},\cdot)}$ as follows
    \begin{enumerate}
        \item Sample $R\gets {\{0,1\}^{\ell_r} \choose t}$.\footnote{Recall this means to sample a uniformly random subset of $\{0,1\}^{\ell_r}$ of size $t$.} Write $R = \{r_1,\dots,r_t\}$ in some arbitrary order.
        \item Set $S_1 = \{(x,r):r\in R,x\in \wt{\Ciph}\}$
        \item Sample $S' \gets {\Ciph \choose \frac{|\Ciph|}{2^{d}}-|S_1|}$
        \item Set $S_2 = S_1 \cup S'$
        \item Sample $U_1 \gets \Haar(S_2)$
        \item Sample $U_2\gets \Haar(\Ciph\setminus S_2)$
        \item Sample $k\gets \Ciph$
        \item Define $\mathcal{O}(m)$ as follows
        \begin{enumerate}
            \item On the $i$th query, set $r= r_i$.
            \item Output $X^{k} U_1(\wt{\Enc}^{\vec{V}}(\wt{k},m)\otimes \ket{r})$
        \end{enumerate}
        \item Run $C^{\Ora{U_2},\mathcal{O}}\to (m_0,m_1,\textbf{st})$
        \item Output challenge $m_0,m_1$, receive ciphertext $\ket{\phi}$
        \item Set $\ket{\psi} = X^{k}U_1(\ket{\phi}\otimes \ket{r_t})$
        \item Run $C^{\Ora{U_2},\mathcal{O}}(\ket{\psi},\textbf{st})\to \sigma_{\Reg{A}\Reg{B}}$
        \item Output $\wt{\Reg{A}} = (k,R,U_1,U_2,\Reg{A})$
        \item Output $\wt{\Reg{B}} = (k,R,U_1,U_2,\Reg{B})$
    \end{enumerate}

    We will then define $\wt{A}^{\Ora{V}}(\wt{k},\wt{\Reg{A}})$ as follows
    \begin{enumerate}
        \item Parse $\wt{\Reg{A}}=(k,R,U_1,U_2,\Reg{A})$
        \item Let $c_R\wt{E}_{\wt{k}}^{\Ora{V}}$ be the unitary $\wt{E}_{\wt{k}}^{\Ora{V}}$ controlled on the following register containing a string in $R$. Formally,
        $$c_R \wt{E}_{\wt{k}}^{\Ora{V}} \coloneqq \wt{E}_{\wt{k}}^{\Ora{V}} \otimes \sum_{r\in R}\ketbra{r} + I \otimes \left(I-\sum_{r\in R}\ketbra{r}\right)$$
        \item Set $U_1' = U_1 \cdot c\wt{E}_k^{\Ora{V}}$
        \item Set $U = U_1' U_2$
        \item Run $A^{\vec{U}}(k,\Reg{A}) \to b_A$
    \end{enumerate}

    $\wt{B}^{\Ora{V}}(k,\wt{\Reg{B}})$ will be defined symmetrically.

    The proof of security then follows by applying a sequence of hybrids described in~\Cref{fig:hybrids12,fig:hybrids34,fig:hybrids56}.

\begin{figure}
    \fbox{
    \begin{minipage}{0.48\textwidth}
        $G_1$:\\
        $k \gets \{0,1\}^{\ell_k}$\\
        $U\gets \Haar(\Ciph)$\\
        $C^{\Ora{U},\mathcal{O}}\to (m_0,m_1,\textbf{st})$\\
        $b \gets \{0,1\}$\\
        $\ket{\phi_b} \gets \mathcal{O}(m_b)$\\
        $C^{\Ora{U},\mathcal{O}}(\ket{\phi_b},\textbf{st})\to \sigma_{\Reg{A}\Reg{B}}$\\
        $A^{\Ora{U}}(k,\Reg{A})\to b_A$\\
        $B^{\Ora{U}}(k,\Reg{B})\to b_B$\\
        Output $1$ if and only if $b=b_A=b_B$\\
        \newline
        $\mathcal{O}(m)$:\\
        $r_{orig}\gets \{0,1\}^{\ell_{\wt{r}}}$\\
        $r\gets \{0,1\}^{\ell_r}$\\
        Output $\Enc^{\Ora{U}}(k,m;r_{orig},r)$\\
    \hrule
    \subcaption*{Hybrid $1$: The security game for unclonable encryption}
    \end{minipage}
    }
    \fbox{
    \begin{minipage}{0.48\textwidth}
        $G_2$:\\
        $k \gets \{0,1\}^{\ell_k}$\\
        \color{red}$R:\{r_i\}_{i\in [t]}\gets \{0,1\}^{\ell_r}$\\
        $t=1$\\
        \color{black}
        $U\gets \Haar(\Ciph)$\\
        $C^{\Ora{U},\mathcal{O}}\to (m_0,m_1,\textbf{st})$\\
        $b \gets \{0,1\}$\\
        $\ket{\phi_b} \gets \mathcal{O}(m_b)$\\
        $C^{\Ora{U},\mathcal{O}}(\ket{\phi_b},\textbf{st})\to \sigma_{\Reg{A}\Reg{B}}$\\
        $A^{\Ora{U}}(k,\Reg{A})\to b_A$\\
        $B^{\Ora{U}}(k,\Reg{B})\to b_B$\\
        Output $1$ if and only if $b=b_A=b_B$\\
        \newline
        $\mathcal{O}(m)$:\\
        $r_{orig} \gets \{0,1\}^{\ell_{\wt{r}}}$\\
        \color{red}
        $r \gets r_i$\\
        $t\gets t+1$\\
        \color{black}
        Output $\Enc^{\Ora{U}}(k,m;r_{orig},r)$\\
    \hrule
    \subcaption*{Hybrid $2$: We sample the randomness used for encryption at the beginning of the game.}
    \end{minipage}
    }
\hfill
\caption{The hybrid games used in the proof of \Cref{thm:unclonable-encryption}.}\label{fig:hybrids12}
\end{figure}

\begin{figure}
    \fbox{
    \begin{minipage}{0.48\textwidth}
        $G_3$:\\
        $k \gets \{0,1\}^{\ell_k}$\\
        $R:\{r_i\}_{i\in [t]}\gets \{0,1\}^{\ell_r}$\\
        $t=1$\\
        \color{red}
        $S_1 \gets \{(x,r):r\in R,x\in \wt{\Ciph}\}$\\
        $S' \gets {\Ciph \choose \frac{|\Ciph|}{2^d} - |S_1|}$\\
        $S_2 \gets S_1 \cup S'$\\
        $U_1 \gets \Haar(S_2)$\\
        $U_2 \gets \Haar(\Ciph\setminus S_2)$\\
        $U \gets U_1U_2$\\
        \color{black}
        $C^{\Ora{U},\mathcal{O}}\to (m_0,m_1,\textbf{st})$\\
        $b \gets \{0,1\}$\\
        $\ket{\phi_b} \gets \mathcal{O}(m_b)$\\
        $C^{\Ora{U},\mathcal{O}}(\ket{\phi_b},\textbf{st})\to \sigma_{\Reg{A}\Reg{B}}$\\
        $A^{\Ora{U}}(k,\Reg{A})\to b_A$\\
        $B^{\Ora{U}}(k,\Reg{B})\to b_B$\\
        Output $1$ if and only if $b=b_A=b_B$\\
        \newline
        $\mathcal{O}(m)$:\\
        $r_{orig} \gets \{0,1\}^{\ell_{\wt{r}}}$\\
        $r \gets r_i$\\
        $t\gets t+1$\\
        Output $\Enc^{\Ora{U}}(k,m;r_{orig},r)$\\
    \hrule
    \subcaption*{Hybrid $3$: We split the \Haar\ random unitary $U$ into two random unitaries: $U_1$ acting on a superset of the encryption randomness and $U_2$ acting on the orthogonal space.}
    \end{minipage}
    }
    \fbox{
    \begin{minipage}{0.48\textwidth}
        $G_4$:\\
        $k \gets \{0,1\}^{\ell_k}$\\
        $R:\{r_i\}_{i\in [t]}\gets \{0,1\}^{\ell_r}$\\
        $t=1$\\
        $S_1 \gets \{(x,r):r\in R,x\in \wt{\Ciph}\}$\\
        $S' \gets {\Ciph \choose \frac{|\Ciph|}{2^d} - |S_1|}$\\
        $S_2 \gets S_1 \cup S'$\\
        $U_1 \gets \Haar(S_2)$\\
        $U_2 \gets \Haar(\Ciph\setminus S_2)$\\
        $U \gets U_1U_2$\\
        $C^{{\color{red}\Ora{U}_2},\mathcal{O}}\to (m_0,m_1,\textbf{st})$\\
        $b \gets \{0,1\}$\\
        $\ket{\phi_b} \gets \mathcal{O}(m_b)$\\
        $C^{{\color{red}\Ora{U}_2},\mathcal{O}}(\ket{\phi_b},\textbf{st})\to \sigma_{\Reg{A}\Reg{B}}$\\
        $A^{\Ora{U}}(k,\Reg{A})\to b_A$\\
        $B^{\Ora{U}}(k,\Reg{B})\to b_B$\\
        Output $1$ if and only if $b=b_A=b_B$\\
        \newline
        $\mathcal{O}(m)$:\\
        $r_{orig} \gets \{0,1\}^{\ell_{\wt{r}}}$\\
        $r \gets r_i$\\
        $t\gets t+1$\\
        \color{red}
        Output $X^{k}U_1\ket{m,r_{orig},0^{\ell_a},r}$\\
        \color{black}
    \hrule
    \subcaption*{Hybrid $4$: We rewrite encryption as $X^kU_1 \ket{b,r}$, and we replace $C$'s oracle access to $U$ with oracle access to $U_2$.}
    \end{minipage}
    }
\hfill
\caption{The hybrid games used in the proof of \Cref{thm:unclonable-encryption}.}\label{fig:hybrids34}
\end{figure}

\begin{figure}
    \fbox{
    \begin{minipage}{0.48\textwidth}
        $G_5$:\\
        $k \gets \{0,1\}^{\ell_k}$\\
        $R:\{r_i\}_{i\in [t]}\gets \{0,1\}^{\ell_r}$\\
        $t=1$\\
        $S_1 \gets \{(x,r):r\in R,x\in \wt{\Ciph}\}$\\
        $S' \gets {\Ciph \choose \frac{|\Ciph|}{2^d} - |S_1|}$\\
        $S_2 \gets S_1 \cup S'$\\
        \color{red}
        $\wt{k} \gets \{0,1\}^n$\\
        $V\gets \wt{\V_n}$\\
        $U_1 \gets \Haar(S_2)$\\
        $U_1' \gets U_1' \cdot c_R \wt{E}_{\wt{k}}^{\Ora{V}}$\\
        \color{black}
        $U_2 \gets \Haar(\Ciph\setminus S_2)$\\
        $U \gets U_1'U_2$\\
        $C^{\Ora{U},\mathcal{O}}(\ket{\phi_b},\textbf{st})\to (m_0,m_1,\textbf{st})$\\
        $b \gets \{0,1\}$\\
        $\ket{\phi_b} \gets \mathcal{O}(m_b)$\\
        $C^{{\Ora{U}_2},\mathcal{O}}(\ket{\phi_b})\to \sigma_{\Reg{A}\Reg{B}}$\\
        $A^{\Ora{U}}(k,\Reg{A})\to b_A$\\
        $B^{\Ora{U}}(k,\Reg{B})\to b_B$\\
        Output $1$ if and only if $b=b_A=b_B$\\
        \newline
        $\mathcal{O}(m)$:\\
        $r_{orig} \gets \{0,1\}^{\ell_{\wt{r}}}$\\
        $r \gets r_i$\\
        $t\gets t+1$\\
        Output $X^{k}{\color{red}U_1'}\ket{m,r_{orig},0^{\ell_a},r}$\\
    \hrule
    \subcaption*{Hybrid $5$: We sample $V$, and then set $U_1$ to first apply $\wt{E}_k^{\Ora{V}}$.}
    \end{minipage}
    }
    \fbox{
    \begin{minipage}{0.48\textwidth}
        $G_6$:\\
        $V\gets \V_n$\\
        $\wt{C}^{\Ora{V},\wt{\Enc}^{\Ora{V}}(\wt{k},\cdot)}\to (m_0,m_1,\textbf{st})$
        $b \gets \{0,1\}$\\
        $\ket{\phi_b} \gets \mathcal{O}(m_b)$\\
        $\wt{C}^{\Ora{V},\wt{\Enc}^{\Ora{V}}(\wt{k},\cdot)}(\ket{\phi_b},\textbf{st}) \to \wt{\sigma}_{\Reg{A}\Reg{B}}$\\
        $\wt{A}^{\wt{V}}(k,\wt{\Reg{A}})\to b_A$\\
        $\wt{B}^{\wt{V}}(k,\wt{\Reg{B}})\to b_B$\\
        Output $1$ if and only if $b=b_A=b_B$\\
    \hrule
    \subcaption*{Hybrid $6$: The security game for the secure scheme under $\wt{\V}$.}
    \end{minipage}
    }
\hfill
\caption{The hybrid games used in the proof of \Cref{thm:unclonable-encryption}.}\label{fig:hybrids56}
\end{figure}
    
    In particular, we will show that for each $i$, $|\Pr[G_i \to 1] - \Pr[G_{i-1}\to 1]|\leq \negl(n)$.

    \begin{lemma}
        $G_1\equiv G_2$
    \end{lemma}

    \begin{proof}
        The two games are equivalent other than the fact that in $G_2$, the values $r_1,\dots,r_t$ are sampled in the beginning instead of when they are used. But the time when they are sampled does not affect the overall distribution of the game, and so the claim follows.
    \end{proof}

    \begin{lemma}
        $|\Pr[G_3 \to 1] - \Pr[G_2 \to 1]|\leq \negl(n)$
    \end{lemma}

    \begin{proof}
        This follows immediately from our ``unitary reprogramming lemma"~\Cref{lem:reprogramming} by setting $S_1$ to be $S_1$ from $G_3$, $N=|\Ciph|$ $M_2 = \frac{|\Ciph|}{2^{d}}$, $M_2 = |S_1| = t\cdot |\wt{\Ciph}|$. It is clear that 
        $$\frac{M_1}{M_2} = \frac{|\wt{\Ciph}|}{\frac{|\Ciph|}{2^{d}}} = \frac{\frac{|\Ciph|}{2^{\ell_r}}}{\frac{|\Ciph|}{2^{d}}} = 2^{d-\ell_r}\leq 2^{-\ell_r/2}=\negl(n)$$
        and
        $$\frac{M_2}{N} = \frac{|\Ciph|\cdot 2^{-d}}{|\Ciph|} = 2^{-d}=\negl(n)$$
        Furthermore, the only information used in the game about $U_1,U_2$ is the values in $R$, which is less information than contained in $S_1$. And so the lemma applies.
    \end{proof}

    \begin{lemma}\label{lem:hardhyb}
        $|\Pr[G_4\to 1] - \Pr[G_3 \to 1]|\leq \negl(n)$
    \end{lemma}

    This claim is significantly more complicated, and so the proof is deferred to \ifnum\llncs=0~\Cref{sec:hardhyb}\else Appendix A of the full version\fi. The intuition is that $\wt{C}$ cannot query $U$ on any state with more than negligible support on $S_2$. In particular, the only information about $S_2$ which is revealed to $\wt{C}$ are ciphertext outputs, which are of the form $X^k \ket{\psi}$ for some $\ket{\psi}$ in the support of $S_2$. Since only one sample of these is provided, these look like strings of the form $k \oplus s$ for some $s\in S_2$, and given many samples of strings of this form it should be difficult to find a single string in $S_2$. Some care needs to be taken to ensure that these facts hold with unitary queries to $U$, and the key idea to handle this is the fact that unitary queries to $U_1$ can be simulated efficiently using a reflection oracle around states with support in $S_2$. We then reduce to a Grover's search bound on the ability to find $s\in S_2$ given all strings of the form $s\xor k$ as well as a reflection oracle around $S_2$.

    \begin{lemma}
        $G_5\equiv G_4$
    \end{lemma}

    \begin{proof}
        This follows immediately from unitary invariance. In particular, $c_R\wt{E}_{\wt{k}}^{\Ora{V}}$ is the identity on all basis states not ending with a string in $R$, and so in particular is identity on all states with support in $\Ciph\setminus S_1 \supseteq \Ciph\setminus S_2$, just as $U_1$ is.
    \end{proof}

    \begin{lemma}
        $G_6 \equiv G_5$
    \end{lemma}

    \begin{proof}
        This follows from the definitions of $\wt{A},\wt{B},\wt{C}$.
    \end{proof}
\end{proof}
\section{Unclonable encryption in the QROM}\label{sec:qrom}

Here, we observe that (reusable) unclonable encryption in the quantum random oracle model readily follows from prior work~\cite{AKLLZ22,AKL23}. 

\begin{theorem}\label{thm:in-qrom}
    For any message length $\ell_m = \poly(n)$, there exists a pure many-time unclonable encryption scheme in the QROM.
\end{theorem}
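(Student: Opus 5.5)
We will obtain the scheme by taking the QROM construction of \cite{AKLLZ22,AKL23} (one-time unclonable indistinguishable encryption of single bits or short messages in the QROM) and modifying it in three ways: making encryption pure, making the scheme many-time, and extending to arbitrary polynomial message length. Recall that their scheme encrypts by sampling a random subspace-coset or BB84 state $\ket{x^\theta}$ for random $x,\theta$, and masking the message as $m\oplus H(x)$ (possibly together with a short one-time key used to encode $x,\theta$). The key $k$ is essentially the basis string $\theta$ together with extra masking material.

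\textbf{Many-time.} For reusability, we will replace the fixed key material by key-derived randomness. On each encryption we sample a fresh nonce $r$ and set the per-ciphertext one-time key to $(\theta,\dots)=H'(k,r)$ for a second independent random oracle $H'$ (domain-separated from $H$). The ciphertext is then $(r,\ \text{one-time ciphertext under } H'(k,r))$. Since $k$ is uniform and unknown, standard QROM arguments (one-way to hiding / adaptive reprogramming) show the following: for all encryption-query nonces, which are distinct with overwhelming probability, the values $H'(k,r_i)$ are indistinguishable from independent uniform keys, even given the adversary's queries to $H'$. Here the probability the adversary queries $H'$ near $(k,\cdot)$ is bounded by $O(q^2/2^{\ell_k})$. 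The encryption oracle answers then become independent one-time ciphertexts that the reduction can generate itself, using its own fresh keys and its own access to $H$. The challenge ciphertext then reduces to the one-time game. For the distinguishing phase, the reduction hands $A,B$ the one-time key for the challenge nonce and programs $H'(k,r^*)$ accordingly; it does this by sampling $k$ itself and reprogramming at the single point $(k,r^*)$.

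\textbf{Arbitrary message length.} To handle long messages, we let $H$ have output length $\ell_m$ (or use a one-time pad derived from $H(x)$). The one-time unclonable indistinguishability proof of \cite{AKL23} already handles the output of a random oracle masking a long message.

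\textbf{Purity.} All randomness ($r$, $x$, $\theta$ where sampled rather than derived) will be placed in the classical string $r$ of \Cref{def:pure-scheme}. We define the unitary $E_k$ to act on $\ket{m,r,0}$ as follows: it computes the derived keys coherently via oracle queries, prepares $H^{\theta}\ket{x}$ with the Hadamard pattern controlled on $\theta$, and XORs $H(x)\oplus m$ into the output register, keeping $r$ in the clear. No register is traced out. The leftover workspace holding $\theta$ must be uncomputed by re-querying $H'$; this keeps the output pure while remaining of the same distribution as the original ciphertext (any classical copies of $m$ are replaced by the masked value).

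The main obstacle is the purity step. In particular, we must ensure that all garbage, such as the plaintext copy of $m$ and the coherently computed $\theta$ and $x$, can be uncomputed without changing the ciphertext state. We would handle this by making $x$ and the mask the only retained values, keeping both $x$ (in basis $\theta$) and $m\oplus H(x)$ in the output, and noting that $m$ is recoverable from them so the map is injective and thus unitary-extendable.
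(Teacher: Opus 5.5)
Your proposal is correct and follows essentially the same route as the paper. The paper also starts from the one-time scheme $\ket{m\oplus F(x)}H^{\theta}\ket{x}$ of \cite{BL20,AKLLZ22,AKL23} (which is already pure, with $F$ of output length $\ell_m$), derives the per-ciphertext key as $F(k,s)$ for a fresh public nonce $s$, and proves many-time security by resampling the oracle at the challenge nonce; your coherent computation and uncomputation of $\theta$ and your one-way-to-hiding step just fill in what the paper calls a ``standard'' reduction.
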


\begin{proof}
    Prior work \cite{BL20,AKLLZ22,AKL23} has established that the following is a one-time unclonable encryption scheme, where $F$ is a random oracle, $k \gets \{0,1\}^n$, and $m \in \{0,1\}^{\ell_m}$.

    \begin{itemize}
        \item $\Enc^F(1^n,k,m):$
        \begin{itemize}
            \item Sample $x \gets \{0,1\}^n$
            \item Output $E_k\ket{m}\ket{x} = \ket{m \oplus F(x)}H^k\ket{x}$
        \end{itemize}
    \end{itemize}

    To complete the proof, we demonstrate a compiler that takes any pure one-time unclonable encryption scheme in the QROM and outputs a pure many-time unclonable encryption scheme in the QROM. Let $(\Enc^F,\Dec^F)$ be a one-time scheme, with encryption unitary $E_k$ and randomness space $\mathcal{R}$. The many-time scheme is the following.

    \begin{itemize}
        \item $\widetilde{\Enc}^F(1^n,k,m):$
        \begin{itemize}
            \item Sample $s \gets \{0,1\}^n$.
            \item Sample $r \gets \mathcal{R}$.
            \item Output $\widetilde{E}_k\ket{m}\ket{r}\ket{s} = (E_{F(k,s)}\ket{m}\ket{r})\ket{s}$.
        \end{itemize}
        \item $\widetilde{\Dec}^F(1^n,k,\rho):$ Parse $\rho$ as $\rho' \otimes \ketbra{s}{s}$ and output $\Dec^F(1^n,F(k,s),\rho')$.
    \end{itemize}

    The many-time security of $(\widetilde{\Enc}^F,\widetilde{\Dec}^F)$ reduces in a standard way to the one-time security of $(\Enc^F,\Dec^F)$ by re-sampling $F(k,\cdot)$ at the point $s$ sampled when generating the challenge ciphertext.
    
\end{proof}

The main result of this paper follows as a direct corollary of \Cref{thm:in-qrom} and \Cref{thm:unclonable-encryption}.

\begin{corollary}
    Let $n$ be the security parameter. For any message length $\ell_m = \poly(n)$, there exists reusable unclonable encryption with $n$-bit keys in the Haar random oracle model (QHROM).
\end{corollary}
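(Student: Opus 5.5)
The statement is the concluding corollary: for any polynomial message length $\ell_m$, reusable unclonable encryption with $n$-bit keys exists in the QHROM. I will obtain it by composing \Cref{thm:in-qrom} with \Cref{thm:unclonable-encryption}, so the work lies in checking that the hypotheses of the compiler theorem are met.

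First, I invoke \Cref{thm:in-qrom} to obtain a pure many-time scheme $(\wt{\Enc}^F,\wt{\Dec}^F)$ in the QROM. Its purification is $\wt{E}_{\wt{k}}\ket{m}\ket{r}\ket{s}=(E_{F(\wt{k},s)}\ket{m}\ket{r})\ket{s}$. The QROM is an instance of a unitary oracle model $\V$: $\V_n$ samples $f$ and outputs $A_f$. Hence $\wt{E}_{\wt{k}}$ should be an oracularly efficient unitary in $\Ora{A_F}$. This holds provided the evaluation of $F(\wt{k},s)$, and of the inner $F(x)$, is done by computing into fresh ancillas with $A_F$, applying the controlled operations $X^{(\cdot)}$ and $H^{(\cdot)}$, and then uncomputing. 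The ancillas start in $\ket{0^{\ell_a}}$ and are returned to $0$, so the scheme is pure in the sense of \Cref{def:pure-scheme}. The randomness length $\ell_{\wt{r}}$ (covering $r$ and $s$) and the ancilla count $\ell_a$ are polynomial.

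Next, I apply \Cref{thm:unclonable-encryption} with this $\V$ and this scheme, taking $\ell_k=\ell_r=n$ (any polynomial $\ell_r\ge n$ also works). The QHROM then has length $\ell_m+\ell_{\wt{r}}+\ell_a+\ell_r=\poly(n)$, and the resulting scheme $X^kU\ket{m,r_1,0^{\ell_a},r_2}$ has $n$-bit keys. It is also efficient, since it makes one query to $U$, and it is perfectly correct. Reusable security is exactly the many-time security game guaranteed by the theorem.

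The main obstacle is the purity bookkeeping in the first step. The QROM scheme must be written as a single unitary on $\ket{m,r,0}$ with no measurement or trace-out; in particular the ancillas that hold $F(\wt{k},s)$ must be uncomputed cleanly. Everything else is a direct parameter instantiation.
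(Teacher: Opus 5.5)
Your proposal is correct and follows the paper's route: the paper proves the corollary directly by plugging the pure many-time QROM scheme of \Cref{thm:in-qrom} into the compiler of \Cref{thm:unclonable-encryption}. Your extra checks make explicit what the paper leaves implicit, namely that purity holds because $F(\wt{k},s)$ is computed into an ancilla and then uncomputed, and that choosing $\ell_k=\ell_r=n$ gives $d=n/2$ and $n$-bit keys.
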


\bibliographystyle{alpha}

\bibliography{main}

@InProceedings{BL20,
  author =	{Broadbent, Anne and Lord, S\'{e}bastien},
  title =	{{Uncloneable Quantum Encryption via Oracles}},
  booktitle =	{15th Conference on the Theory of Quantum Computation, Communication and Cryptography (TQC 2020)},
  pages =	{4:1--4:22},
  series =	{Leibniz International Proceedings in Informatics (LIPIcs)},
  ISBN =	{978-3-95977-146-7},
  ISSN =	{1868-8969},
  year =	{2020},
  volume =	{158},
  editor =	{Flammia, Steven T.},
  publisher =	{Schloss Dagstuhl -- Leibniz-Zentrum f{\"u}r Informatik},
  address =	{Dagstuhl, Germany},
  URL =		{https://drops.dagstuhl.de/entities/document/10.4230/LIPIcs.TQC.2020.4},
  URN =		{urn:nbn:de:0030-drops-120639},
  doi =		{10.4230/LIPIcs.TQC.2020.4}
}

@InProceedings{AKLLZ22,
author="Ananth, Prabhanjan
and Kaleoglu, Fatih
and Li, Xingjian
and Liu, Qipeng
and Zhandry, Mark",
editor="Dodis, Yevgeniy
and Shrimpton, Thomas",
title="On the Feasibility of Unclonable Encryption, and More",
booktitle="Advances in Cryptology -- CRYPTO 2022",
year="2022",
publisher="Springer Nature Switzerland",
address="Cham",
pages="212--241",
isbn="978-3-031-15979-4"
}

@inproceedings{AKL23,
author = {Ananth, Prabhanjan and Kaleoglu, Fatih and Liu, Qipeng},
title = {Cloning Games: A General Framework for Unclonable Primitives},
year = {2023},
isbn = {978-3-031-38553-7},
publisher = {Springer-Verlag},
address = {Berlin, Heidelberg},
url = {https://doi.org/10.1007/978-3-031-38554-4_3},
doi = {10.1007/978-3-031-38554-4_3},
booktitle = {Advances in Cryptology – CRYPTO 2023: 43rd Annual International Cryptology Conference, CRYPTO 2023, Santa Barbara, CA, USA, August 20–24, 2023, Proceedings, Part V},
pages = {66–98},
numpages = {33},
location = {Santa Barbara, CA, USA}
}

@inproceedings{MH25,
author = {Ma, Fermi and Huang, Hsin-Yuan},
title = {How to Construct Random Unitaries},
year = {2025},
isbn = {9798400715105},
publisher = {Association for Computing Machinery},
address = {New York, NY, USA},
url = {https://doi.org/10.1145/3717823.3718254},
doi = {10.1145/3717823.3718254},
booktitle = {Proceedings of the 57th Annual ACM Symposium on Theory of Computing},
pages = {806–809},
numpages = {4},
location = {Prague, Czechia},
series = {STOC '25}
}

@InProceedings{AGKL24,
author="Ananth, Prabhanjan
and Gulati, Aditya
and Kaleoglu, Fatih
and Lin, Yao-Ting",
editor="Joye, Marc
and Leander, Gregor",
title="Pseudorandom Isometries",
booktitle="Advances in Cryptology -- EUROCRYPT 2024",
year="2024",
publisher="Springer Nature Switzerland",
address="Cham",
pages="226--254",
isbn="978-3-031-58737-5"
}

@MISC{stackexchange,
    TITLE = {Is the diamond norm subadditive under composition?},
    AUTHOR = {Niel de Beaudrap (https://quantumcomputing.stackexchange.com/users/124/niel-de-beaudrap)},
    HOWPUBLISHED = {Quantum Computing Stack Exchange},
    NOTE = {URL:https://quantumcomputing.stackexchange.com/questions/7043/is-the-diamond-norm-subadditive-under-composition (version: August 18, 2019)},
    URL = {https://quantumcomputing.stackexchange.com/questions/7043/is-the-diamond-norm-subadditive-under-composition}
}

@article{kre21,
    author = "Kretschmer, William",
    title = "{Quantum Pseudorandomness and Classical Complexity}",
    eprint = "2103.09320",
    archivePrefix = "arXiv",
    primaryClass = "quant-ph",
    doi = "10.4230/LIPIcs.TQC.2021.2",
    journal = "Leibniz Int. Proc. Inf.",
    volume = "197",
    pages = "2:1--2:20",
    year = "2021"
}

@article{Botteron:2024orj,
    author = "Botteron, Pierre and Broadbent, Anne and Culf, Eric and Nechita, Ion and Pellegrini, Cl{\'e}ment and Rochette, Denis",
    title = "{Towards Unconditional Uncloneable Encryption}",
    eprint = "2410.23064",
    archivePrefix = "arXiv",
    primaryClass = "quant-ph",
    month = "10",
    year = "2024"
}

@article{bhattacharyya2026uncloneable,
  author       = {Archishna Bhattacharyya and Eric Culf},
  title        = {Uncloneable Encryption from Decoupling},
  journal      = {Nature Physics},
  year         = {2026},
  note         = {published online 4 Feb 2026},
  doi          = {10.1038/s41567-025-03154-7},
  url          = {https://doi.org/10.1038/s41567-025-03154-7},
}

@InProceedings{10.1007/978-3-031-15802-5_8,
author="Ananth, Prabhanjan
and Qian, Luowen
and Yuen, Henry",
editor="Dodis, Yevgeniy
and Shrimpton, Thomas",
title="Cryptography from Pseudorandom Quantum States",
booktitle="Advances in Cryptology -- CRYPTO 2022",
year="2022",
publisher="Springer Nature Switzerland",
address="Cham",
pages="208--236",
isbn="978-3-031-15802-5"
}

@InProceedings{10.1007/978-3-319-96878-0_5,
author="Ji, Zhengfeng
and Liu, Yi-Kai
and Song, Fang",
editor="Shacham, Hovav
and Boldyreva, Alexandra",
title="Pseudorandom Quantum States",
booktitle="Advances in Cryptology -- CRYPTO 2018",
year="2018",
publisher="Springer International Publishing",
address="Cham",
pages="126--152",
isbn="978-3-319-96878-0"
}

@misc{SMLBH25,
      title={Strong random unitaries and fast scrambling}, 
      author={Thomas Schuster and Fermi Ma and Alex Lombardi and Fernando Brandao and Hsin-Yuan Huang},
      year={2025},
      eprint={2509.26310},
      archivePrefix={arXiv},
      primaryClass={quant-ph},
      url={https://arxiv.org/abs/2509.26310}, 
}

@InProceedings{coladangelo_et_al:LIPIcs.ITCS.2026.41,
  author =	{Coladangelo, Andrea and Liu, Qipeng and Xie, Ziyi},
  title =	{{The Curious Case of "XOR Repetition" of Monogamy-Of-Entanglement Games}},
  booktitle =	{17th Innovations in Theoretical Computer Science Conference (ITCS 2026)},
  pages =	{41:1--41:20},
  series =	{Leibniz International Proceedings in Informatics (LIPIcs)},
  ISBN =	{978-3-95977-410-9},
  ISSN =	{1868-8969},
  year =	{2026},
  volume =	{362},
  editor =	{Saraf, Shubhangi},
  publisher =	{Schloss Dagstuhl -- Leibniz-Zentrum f{\"u}r Informatik},
  address =	{Dagstuhl, Germany},
  URL =		{https://drops.dagstuhl.de/entities/document/10.4230/LIPIcs.ITCS.2026.41},
  URN =		{urn:nbn:de:0030-drops-253281},
  doi =		{10.4230/LIPIcs.ITCS.2026.41}
}

@inproceedings{10.1145/3564246.3585225,
author = {Kretschmer, William and Qian, Luowen and Sinha, Makrand and Tal, Avishay},
title = {Quantum Cryptography in Algorithmica},
year = {2023},
isbn = {9781450399135},
publisher = {Association for Computing Machinery},
address = {New York, NY, USA},
url = {https://doi.org/10.1145/3564246.3585225},
doi = {10.1145/3564246.3585225},
booktitle = {Proceedings of the 55th Annual ACM Symposium on Theory of Computing},
pages = {1589–1602},
numpages = {14},
location = {Orlando, FL, USA},
series = {STOC 2023}
}

@inproceedings{10.1145/3717823.3718144,
author = {Kretschmer, William and Qian, Luowen and Tal, Avishay},
title = {Quantum-Computable One-Way Functions without One-Way Functions},
year = {2025},
isbn = {9798400715105},
publisher = {Association for Computing Machinery},
address = {New York, NY, USA},
url = {https://doi.org/10.1145/3717823.3718144},
doi = {10.1145/3717823.3718144},
booktitle = {Proceedings of the 57th Annual ACM Symposium on Theory of Computing},
pages = {189–200},
numpages = {12},
location = {Prague, Czechia},
series = {STOC '25}
}

@InProceedings{poremba_et_al:LIPIcs.ITCS.2026.109,
  author =	{Poremba, Alexander and Ragavan, Seyoon and Vaikuntanathan, Vinod},
  title =	{{Cloning Games, Black Holes and Cryptography}},
  booktitle =	{17th Innovations in Theoretical Computer Science Conference (ITCS 2026)},
  pages =	{109:1--109:21},
  series =	{Leibniz International Proceedings in Informatics (LIPIcs)},
  ISBN =	{978-3-95977-410-9},
  ISSN =	{1868-8969},
  year =	{2026},
  volume =	{362},
  editor =	{Saraf, Shubhangi},
  publisher =	{Schloss Dagstuhl -- Leibniz-Zentrum f{\"u}r Informatik},
  address =	{Dagstuhl, Germany},
  URL =		{https://drops.dagstuhl.de/entities/document/10.4230/LIPIcs.ITCS.2026.109},
  URN =		{urn:nbn:de:0030-drops-253961},
  doi =		{10.4230/LIPIcs.ITCS.2026.109}
}

@article{Dohotaru2008ExactQL,
  title={Exact quantum lower bound for grover's problem},
  author={Catalin Dohotaru and Peter H{\o}yer},
  journal={Quantum Inf. Comput.},
  year={2008},
  volume={9},
  pages={533-540},
  url={https://api.semanticscholar.org/CorpusID:10757294}
}

@article{Hoeffding,
 ISSN = {01621459, 1537274X},
 URL = {http://www.jstor.org/stable/2282952},
 author = {Wassily Hoeffding},
 journal = {Journal of the American Statistical Association},
 number = {301},
 pages = {13--30},
 publisher = {[American Statistical Association, Taylor & Francis, Ltd.]},
 title = {Probability Inequalities for Sums of Bounded Random Variables},
 urldate = {2026-03-05},
 volume = {58},
 year = {1963}
}

@misc{bhattacharyya2026uncloneablebitexists,
      title={The uncloneable bit exists}, 
      author={Archishna Bhattacharyya and Anne Broadbent and Eric Culf},
      year={2026},
      eprint={2603.08916},
      archivePrefix={arXiv},
      primaryClass={quant-ph},
      url={https://arxiv.org/abs/2603.08916}, 
}

@inproceedings{C:ABGL25,
author = {Ananth, Prabhanjan and Bostanci, John and Gulati, Aditya and Lin, Yao-Ting},
title = {Pseudorandom Unitaries in the Haar Random Oracle Model},
year = {2025},
isbn = {978-3-032-01877-9},
publisher = {Springer-Verlag},
address = {Berlin, Heidelberg},
url = {https://doi.org/10.1007/978-3-032-01878-6_10},
doi = {10.1007/978-3-032-01878-6_10},
booktitle = {Advances in Cryptology – CRYPTO 2025: 45th Annual International Cryptology Conference, Santa Barbara, CA, USA, August 17–21, 2025, Proceedings, Part II},
pages = {301–333},
numpages = {33},
location = {Santa Barbara, CA, USA}
}

@InProceedings{TCC:HhaYam25,
author="Hhan, Minki
and Yamada, Shogo",
editor="Applebaum, Benny
and Lin, Huijia (Rachel)",
title="Pseudorandom Function-Like States from Common Haar Unitary",
booktitle="Theory of Cryptography",
year="2026",
publisher="Springer Nature Switzerland",
address="Cham",
pages="134--165",
isbn="978-3-032-12296-4"
}

@InProceedings{EC:ABGL25,
author="Ananth, Prabhanjan
and Bostanci, John
and Gulati, Aditya
and Lin, Yao-Ting",
editor="Fehr, Serge
and Fouque, Pierre-Alain",
title="Pseudorandomness in the (Inverseless) Haar Random Oracle Model",
booktitle="Advances in Cryptology -- EUROCRYPT 2025",
year="2025",
publisher="Springer Nature Switzerland",
address="Cham",
pages="138--166",
isbn="978-3-031-91098-2"
}

@article{BBBV,
author = {Bennett, Charles H. and Bernstein, Ethan and Brassard, Gilles and Vazirani, Umesh},
title = {Strengths and Weaknesses of Quantum Computing},
journal = {SIAM Journal on Computing},
volume = {26},
number = {5},
pages = {1510-1523},
year = {1997},
doi = {10.1137/S0097539796300933},

URL = { 
    
        https://doi.org/10.1137/S0097539796300933
    
    

},
eprint = { 
    
        https://doi.org/10.1137/S0097539796300933
    
    

}
}

@misc{CGR26,
      author = {Alper {\c{C}}akan and Vipul Goyal and Justin Raizes},
      title = {How to Delete Without a Trace: Certified Deniability in a Quantum World},
      howpublished = {Cryptology {ePrint} Archive, Paper 2024/1832},
      year = {2024},
      url = {https://eprint.iacr.org/2024/1832}
}

\ifnum\llncs=0
\newpage 
\appendix
\section{Proof of~\Cref{lem:hardhyb}}\label{sec:hardhyb}

The goal of this section will be to prove the following.
\begin{lemma}[\Cref{lem:hardhyb} restated]
    $|\Pr[G_3\to 1] - \Pr[G_4\to 1]|\leq \negl(n)$, where $G_2,G_3$ are as described in~\Cref{fig:hybrids12,fig:hybrids34}.
\end{lemma}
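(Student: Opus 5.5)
The two games differ only in the cloning phase. In $G_3$ the cloner $C$ queries $\Ora{U}=\Ora{U_1U_2}$, and encryption uses $U=U_1U_2$. In $G_4$ the cloner queries only $\Ora{U_2}$, and encryption is written as $X^kU_1\ket{m,r_{orig},0^{\ell_a},r}$. Since $r\in R$, the encryption inputs lie in $S_1\subseteq S_2$, where $U_1U_2$ and $U_1$ agree. So the encryption oracle is literally unchanged. The only change is that the cloner's forward, inverse, conjugate and transpose queries have their $\mathcal{H}(S_2)$ component acted on by $U_1$ in $G_3$ and by the identity in $G_4$. The plan is a one-way-to-hiding style argument. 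The distinguishing advantage is bounded by the total query weight that $C$ places on $\mathcal{H}(S_2)$ (and on $\mathcal{H}(S_2^*)$ for conjugate and transpose queries, which for computational-basis subsets is the same set). This weight is then bounded via a search problem.

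\textbf{Step 1 (hybrid over queries).} Write $U_1U_2 = U_2 + (U_1-I)\Pi_{S_2}$. Using \Cref{lem:opprod}, \Cref{prop:diamop} and a standard BBBV/O2H hybrid over the $t$ queries of $C$, the trace distance between the final states of $G_3$ and $G_4$ is bounded by $2\sum_i \norm{\Pi_{S_2}\ket{\psi_i}}$, where $\ket{\psi_i}$ is the query state at step $i$ in the $G_4$ execution. For inverse queries one uses $\Pi_{S_2}$ on the output side, which is the same set since $U_1$ preserves $\mathcal{H}(S_2)$. It therefore suffices to show that in $G_4$ each query has negligible weight on $S_2$. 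The distinguishing phase is identical in both games, so this bound carries through.

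\textbf{Step 2 (removing $U_1$ from the view).} In $G_4$, the cloner's view depends on $U_1$ only through the ciphertexts $X^kU_1\ket{\cdot}$. I would simulate these using a reflection oracle about $\mathcal{H}(S_2)$, which is the observation highlighted in the proof sketch. By \Cref{cor:effpr}, the restriction of $U_1$ to $\mathcal{H}(S_2)$ is simulated (up to negligible error from \Cref{thm:pathrecordingreverse}, since $|S_2|$ is exponential) by $\Ora{\Sim}^{R_{S_2}}$. Moreover, $U_2$ on the complement is itself simulatable from $R_{S_2}$. Hence the whole $G_4$ cloning phase is simulatable from $R_{S_2}$, the set $R$, and $k$-masked outputs of the $S_2$ path-recording simulator. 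Path-recording outputs are fresh uniform $y\in S_2$, so each ciphertext looks like $\ket{k\oplus s}$ with $s$ uniform in $S_2$, entangled with the simulator state.

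\textbf{Step 3 (search bound; main obstacle).} We must bound the probability of measuring a query in $S_2$. The cloner has a reflection oracle about $S_2$ (polynomially many uses), polynomially many samples $k\oplus s_j$ with $s_j\in S_2$, and knowledge of $R$, and hence of $S_1$. The $S_1$ part must be handled separately: the query states never touch $S_1$ except through information the cloner already has. Technically this needs $S_1$ excluded, for example by running the argument on $S'=S_2\setminus S_1$, which I expect is why $G_4$ queries only $U_2$. I would argue as follows. Over the random choice of $S'$ of density $2^{-d}$ and a uniform $k$ of length $\ell_k\ge 2d$, the strings $k\oplus s_j$ are nearly independent of $S'$ given the reflection. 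Conditioned on the samples, $S'$ is still a random set of density about $2^{-d}$. Grover's lower bound (e.g.\ \cite{Dohotaru2008ExactQL}) then gives success probability $O(t^2 2^{-d})$, negligible after a union bound over the polynomially many samples. The delicate point, and what I expect to be hardest, is making rigorous that the masked samples leak essentially nothing about the location of $S'$ while the reflection oracle is also available. I would first try to replace $k$ by a fresh key independently per hybrid, using that $k$ is uniform and hidden. If that fails, I would bound the leakage by a counting argument over the $\binom{|\Ciph|}{|S'|}$ choices, using Hoeffding's inequality \cite{Hoeffding} to show that almost all shifts $k$ give similar conditional densities.
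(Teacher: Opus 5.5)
Your Steps 1 and 2 are essentially sound and follow the paper. Step 1 spells out the BBBV/O2H reduction that the paper asserts in one line before \Cref{lem:hardhybtrick}. Your path-recording simulation of the ciphertexts stands in for the paper's $H_2$--$H_4$ (\Cref{lem:separatesample} plus the fact that one copy of a Haar state on $\mathcal{H}(S_2)$ is maximally mixed). Both give $k$-shifted uniform samples from $S_2$.

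The first problem is your premise in Step 3 that the cloner knows $R$, and hence $S_1$. It does not. In $G_3$ and $G_4$ only the challenger holds $R$, and in your Step 2 the inputs $(m,r_{orig},0^{\ell_a},r_i)$ live only in the traced-out path-recording database. If the cloner did know some $r_i$, the lemma would be false. A query on $\ket{0^{\ell_m},0^{\ell_{\wt{r}}},0^{\ell_a},r_1}\in S_1\subseteq S_2$ returns a Haar-random state of $\mathcal{H}(S_2)$ in $G_3$ but returns the input itself in $G_4$. For the same reason, running the search argument only on $S'=S_2\setminus S_1$ cannot finish Step 1. The difference $U_1U_2-U_2=(U_1-I)\Pi_{S_2}$ acts on all of $\mathcal{H}(S_2)$, including $\mathcal{H}(S_1)$. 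What you need instead is two things:
(a) the cloner's reduced state depends on the encryption inputs only through their distinctness, so the search reduction can be run without $R$;
(b) a separate bound on the query weight on $S_1$, using that $R$ is a hidden uniformly random $t$-subset of $\{0,1\}^{\ell_r}$, so $S_1$ has density $t2^{-\ell_r}$ in $\Ciph$. This is the paper's $H_5\to H_6$ step.

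The second, more serious gap is that the core search bound is left open, and your heuristic for it is off. Conditioned on the samples $z_j=k\xor s_j$, the set $S_2$ is not a fresh random set of density $2^{-d}$: it must contain the planted points $z_j\xor k$. The weight on those points is controlled by the entropy of $k$, not by a Grover bound for density $2^{-d}$.

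The paper's key idea is to strengthen the adversary by giving it the \emph{entire} masked set $S\xor k$ (\Cref{lem:grovermod}). Then $R_S$ can be simulated from this classical set plus a Grover oracle $\delta_k$. This works because $x\in S$ iff $x\xor s'=k$ for one of the few $s'\in S\xor k$ that agree with $x$ on the last $\ell_n-\ell_k$ bits (few by \Cref{lem:chernoff}). Any element of $S$ the adversary finds reveals $k$, so the problem reduces to unstructured search for $k$.

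If you take this route, the parameters need care. Guess or verify $s'$ only within that bucket, not over all of $S\xor k$. First shorten the mask to about $d+O(\log n)$ bits, which is without loss of generality since a longer mask can be simulated from a shorter one. At $\ell_k=2d$, each simulated reflection costs about $2^{\ell_k-d}=2^d$ queries to $\delta_k$, and the resulting bound is vacuous.

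A route closer to your fresh-key idea is one more BBBV step: replace $R_{S_2}$ by the reflection about $S_2$ with the planted points removed. In that world the cloner's view is, up to negligible statistical distance, independent of $k$. So each query has expected weight at most $t2^{-\ell_k}$ on $\{z_j\xor k\}_j$. What remains is plain Grover search for a random set of density about $2^{-d}$.
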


To do this, it suffices to show that $C$ will never make a query to its first oracle that has weight on the support of $S_2$. In particular, the claim will immediately follow from the following lemma

\begin{lemma}\label{lem:hardhybtrick}
    Let $C^{(\cdot)}$ be any oracle algorithm making at most $t \leq \poly(n)$ queries. Let $\ell_m,\allowbreak\ell_k,\allowbreak\ell_{\wt{k}},\allowbreak\ell_r,\allowbreak\ell_{\wt{r}},\allowbreak\ell_a,\allowbreak\Ciph,\allowbreak\wt{\Ciph}$ be as in~\Cref{thm:unclonable-encryption} and its proof. Consider the following game $FindSEnc(C)$
    \begin{enumerate}
        \item $k \gets \{0,1\}^{\ell_k}$
        \item $R = \{r_i\}_{i\in [t]} \gets \{0,1\}^{\ell_r}$
        \item $S_1 \gets \{(x,r):r\in R,x\in \wt{\Ciph}\}$
        \item $S' \gets {\Ciph \choose \frac{|\Ciph|}{2^{d}} - |S_1|}$
        \item $S_2 \gets S_1 \cup S'$
        \item $U_1 \gets \Haar(S_2)$
        \item $U_2 \gets \Haar(\Ciph\setminus S_2)$
        \item $U \gets U_1U_2$
        \item $C^{\Ora{U},\mathcal{O}} \to s$
        \item $C$ wins if $s \in S_2$
    \end{enumerate}
    where $\mathcal{O}(m)$ is the oracle which on the $i$th query responds with
    \begin{enumerate}
        \item Sample $r_{orig}\gets \{0,1\}^{\ell_{\wt{r}}}$
        \item Output $X^k U_1 \ket{m,r_{orig},0^{\ell_a},r_i}$
    \end{enumerate}
    Then 
    $$\Pr[FindSEnc(C)\to 1]\leq \negl(n)$$
\end{lemma}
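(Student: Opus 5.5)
The strategy is to make everything $C$ sees simulable from two primitives: the classical strings $k\oplus s$ for random $s\in S_2$, and a reflection oracle $R_{S_2}$. Then finding a point of $S_2$ reduces to an unstructured search problem, where Grover-type lower bounds apply.

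\textbf{Step 1 (path recording for both blocks).} First I will replace $U_1\gets\Haar(S_2)$ and $U_2\gets\Haar(\Ciph\setminus S_2)$ by their inverse/conjugate/transpose-capable path recording oracles $\vec V_{S_2}$ and $\vec V_{\Ciph\setminus S_2}$. By \Cref{thm:pathrecordingreverse} this costs at most $O(t^2/|S_2|^{1/8})+O(t^2/|\Ciph\setminus S_2|^{1/8})=\negl(n)$, since $|S_2|=|\Ciph|/2^d$ is exponential. By \Cref{cor:effpr}, both oracles can be implemented with a polynomial number of calls to $R_{S_2}$, because reflection about $\Ciph\setminus S_2$ equals $-R_{S_2}$.

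\textbf{Step 2 (encryption queries).} An encryption answer has the form $X^kU_1\ket{m,r_{orig},0,r_i}$. Under path recording, $U_1$ applied to a fresh basis input produces a uniform superposition over $y\in S_2\setminus \Ima(D)$ entangled with the database. The adversary therefore effectively receives $\ket{k\oplus y}$ with $y$ nearly uniform in $S_2$, while the database register remembers $(x,y)$ and can be purified. I will rewrite each answer as a classical sample: draw $y\gets S_2$ fresh, give $\ket{k\oplus y}$, and record $(x,y)$ in the database. This is exact up to the measurement/purification bookkeeping, because the database is held by the challenger and the adversary never sees it. The resulting game gives $C$ the strings $\{k\oplus y_j\}_{j\le t}$, with $y_j$ uniform in $S_2$, together with oracle access to $R_{S_2}$. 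The fixed points $S_1$ need no special treatment, since they are hidden inside the uniformly random $S_2$ and only enter through the $r_i$ used by the challenger.

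\textbf{Step 3 (Grover bound).} This is the main obstacle. The distribution of $S_2$ conditioned on $S_1$ is not uniform, and $k$ masks the samples. I will argue by a one-way-to-hiding/hybrid argument that $R_{S_2}$ can be replaced by $R_{T}$ with $T=\{y_1,\dots,y_t\}$, i.e.\ reflection about only the sampled points. A query with weight $\epsilon$ on $S_2\setminus T$ must find an unsampled point of $S_2$. I intend to bound this by the BBBV/\cite{Dohotaru2008ExactQL} search bound, using that $S_2\setminus T$ is a random set of density $2^{-d}$ that is independent of the view. Once this substitution is made, the view depends on $T$ only through $k\oplus T$ and $R_T$. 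Because $k$ is uniform and independent, the pair $(k\oplus T, R_T)$ is equivalent to a uniformly random shift given with reflection access to $T$. Finding an element of $T$ then means finding a shift $k$ among $2^{\ell_k}$ possibilities using $\poly(n)$ queries, which succeeds with probability $O(t^3/2^{\ell_k})$. Outputting some other point of $S_2\setminus T$ succeeds with probability at most $2^{-d}$.

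The subtle point is justifying that $S_2$ really looks uniformly random after Step 2, even though $S_1\subseteq S_2$ is structured (the $r$-suffix lies in $R$). I will handle this by observing that $R$ is never revealed to $C$: it appears only masked by $k$ in the last $\ell_r$ bits, when $\ell_k$ covers those coordinates. For the remaining coordinates, I will use that $|S_1|/|S_2|=\negl$, so $S_1$ is invisible except through the masked samples. Summing the errors from the three steps gives $\Pr[FindSEnc(C)\to1]\le\negl(n)$.
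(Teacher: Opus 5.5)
\paragraph{The gap is in Step~1.} Everything after Step~1 depends on it, so it is fatal as written.

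You keep $C$'s oracle as the full $\vec{U}$ with $U=U_1U_2$, and assert that both path-recording oracles can be run with polynomially many calls to $R_{S_2}$.
\begin{itemize}
\item This holds for the $U_2$ block. Its domain $\Ciph\setminus S_2$ has density $1-2^{-d}$. The simulator behind the efficient-simulation corollary is rejection sampling, which succeeds with probability about $(|S|-|\Ima(D)|)/|\Ciph|$.
\item It fails for the $U_1$ block. There the oracle must prepare uniform superpositions over $S_2\setminus\Ima(D)$, a set of density about $2^{-d}$. Rejection sampling essentially never succeeds, and amplitude amplification needs about $2^{d/2}$ reflections.
\item This cost is unavoidable, because such a preparation would itself find a point of $S_2$. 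Your Step~3 search bound counts reflection queries, so it has nothing to work with.
\end{itemize}

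The same issue breaks your endgame. In Step~2 you record $(x_j,y_j)$ in the clear in the $U_1$ database. The adversary can still query $U_1$, and its later behaviour depends on these records: fresh outputs avoid $\Ima(D)$, and inverse queries at $y_j$ undo records. So the view does \emph{not} depend on $T$ only through $k\oplus T$ and $R_T$; simulating it needs the $y_j$ themselves, i.e.\ $k$. For the same reason, measuring that database in Step~2 is harmless only if $C$ never queries near the $y_j$, which is the very conclusion you want. Rescuing the full-$\vec U$ version would need an inductive one-way-to-hiding argument showing that $C$'s queries carry negligible weight on $S_2$ before $U_1$ can matter. That argument is not supplied.

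\paragraph{How the paper avoids this.} In its hybrid $H_0$, $C$ is given only $\vec{U}_2$. That is exactly what the $G_3$-versus-$G_4$ step needs, since $G_4$ gives $C$ only $\vec U_2$.
\begin{itemize}
\item $U_1$ is then used solely by the encryption oracle, on the distinct basis inputs $(m,r_{orig},0^{\ell_a},r_i)$.
\item By the separate-sampling lemma, those outputs are close to independent Haar states on $S_2$. Each is used once, so each is exactly the maximally mixed state on $S_2$: a uniform classical $s\in S_2$, delivered as $s\oplus k$. No $U_1$ database ever arises.
\item $\vec U_2$ is then simulated efficiently from $R_{S_2}$.
\item $S_1$ is discarded, because samples land in it with probability at most $t|S_1|/|S_2|$, and hitting it via reflections is itself a search problem.
\end{itemize}

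Your remark that $R$ is hidden by $k$ ``in the last $\ell_r$ bits'' is also mistaken. $X^k$ acts on the first $\ell_k$ bits, and the ciphertexts are $k\oplus y_j$ with $y_j$ random in $S_2$, so $R$ never appears in them.

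\paragraph{What survives if you restrict $C$ to $\vec U_2$.} Your Steps~1--2 then go through. Forward-only path recording on distinct inputs yields uniform distinct samples, which is an alternative to the separate-sampling lemma.

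Your Step~3 endgame also becomes sound. It punctures $R_{S_2}$ to the sampled points via BBBV, then searches for the shift $k$ among $2^{\ell_k}$ values. This is a good substitute for the paper's FindSXOR route, and it avoids the factor-$M$ loss in the paper's reduction from FindSXOR to Grover search.
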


We will prove the lemma by reducing to the information-theoretic lower bound on unstructured search.

\begin{theorem}[Grover's search lower bound~\cite{Dohotaru2008ExactQL}]\label{thm:grover}
    Let $N\in \N$ and for $k\in [N]$, let $\delta_k \coloneqq R_{\{k\}}$. Let $\A^{(\cdot)}$ be any QPT oracle algorithm making at most $T$ queries such that
    $$\Pr_{k\gets [N]}\left[\A^{\delta_k} \to k\right] \geq p$$
    Then
    $$T \geq \frac{\sqrt{N}}{2}\left(1+\sqrt{p}-\sqrt{1-p}-\frac{2}{\sqrt{N}}\right)$$
\end{theorem}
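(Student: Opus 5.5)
We would prove this bound with the standard hybrid argument of~\cite{BBBV}, sharpened in the style of~\cite{Dohotaru2008ExactQL}. Write the algorithm as unitaries $A_0,\dots,A_T$ interleaved with oracle calls, and purify any intermediate measurements. Let $\ket{\psi_T^k}$ be the final state when the oracle is $\delta_k$. Let $\ket{\psi_T}$ be the final state when the oracle is the identity, so the queries do nothing. Let $\ket{\psi_t}$ be the corresponding state just before the $(t+1)$-st query in the identity run, and let $\alpha_{t,k}$ be its amplitude on query input $k$. Note that $\sum_k |\alpha_{t,k}|^2 \le 1$.

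\textbf{Step 1 (progress bound).} We swap oracles one query at a time. Replacing the $(t+1)$-st identity call by $\delta_k$ changes the state by exactly $2\alpha_{t,k}$ times the component of $\ket{\psi_t}$ on input $k$. Unitarity of the later operations and the triangle inequality then give
\[
\norm{\ket{\psi_T^k}-\ket{\psi_T}} \le 2\sum_{t<T}|\alpha_{t,k}|.
\]
Summing over $k$ and applying Cauchy--Schwarz for each $t$ gives $\sum_k \norm{\ket{\psi_T^k}-\ket{\psi_T}} \le 2T\sqrt N$. For the exact constant we would track $\mathrm{Re}\langle \psi_T^k|\psi_T\rangle$ directly, as Dohotaru--H{\o}yer do. This per-query bound loses a little and only accounts for lower-order terms.

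\textbf{Step 2 (success forces distance).} Let $q_k$ be the probability that the identity run outputs $k$, so $\sum_k q_k = 1$. Let $p_k$ be the success probability under $\delta_k$, so $\frac1N\sum_k p_k\ge p$. Apply the two-outcome measurement ``output $=k$'' to both states. The fidelity of the resulting distributions bounds the overlap:
\[
|\langle \psi_T^k|\psi_T\rangle| \le \sqrt{p_kq_k}+\sqrt{(1-p_k)(1-q_k)}.
\]
Therefore
\[
\norm{\ket{\psi_T^k}-\ket{\psi_T}}^2 \ge 2-2\left(\sqrt{p_kq_k}+\sqrt{(1-p_k)(1-q_k)}\right).
\]

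\textbf{Step 3 (averaging).} Sum over $k$ and use concavity of the square root. This gives $\sum_k \sqrt{p_kq_k}\le \sqrt{Np}$ and $\sum_k\sqrt{(1-p_k)(1-q_k)}\le N\sqrt{1-p}$. Together these yield a lower bound on $\sum_k\norm{\ket{\psi_T^k}-\ket{\psi_T}}$ of order $N\left(1+\sqrt p-\sqrt{1-p}\right)$, up to additive $O(\sqrt N)$ corrections. Comparing with Step 1 and rearranging gives
\[
T\ge \frac{\sqrt N}{2}\left(1+\sqrt p-\sqrt{1-p}-\frac{2}{\sqrt N}\right).
\]

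\textbf{Main obstacle.} We expect the main obstacle to be the exact constant rather than the asymptotics. The crude route above bounds distances, not squared distances, and passing between the two under averaging loses a constant factor. To fix this, we would first work with the inner products $\sum_k\mathrm{Re}\langle\psi_T^k|\psi_T\rangle$. We would then show that each query lowers this sum by at most $2\sqrt N$ (up to the additive $-2/\sqrt N$ slack), and compare directly with the fidelity bound of Step 2. Since this is precisely the content of~\cite{Dohotaru2008ExactQL}, in the paper we would ultimately cite it and use only its $\Omega(\sqrt{pN})$ consequence in the reduction.
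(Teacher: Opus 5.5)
\textbf{There is a genuine gap, and it cannot be closed: the displayed inequality is false as written.} Grover's algorithm with $T$ iterations finds $k$ with probability $\sin^2((2T+1)\theta)$, where $\theta=\arcsin(1/\sqrt N)$, for every $k$. Take $N=10^6$ and $T=393$. The success probability is then $\sin^2(787\,\theta)\approx 0.5016\ge\frac12$. Yet the statement with $p=\frac12$ demands $T\ge\frac{\sqrt N}{2}-1=499$.

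More generally, write $\sqrt p=\sin\alpha$. Then $1+\sin\alpha-\cos\alpha>\alpha$ on $(0,\pi/2]$, since the two sides agree at $0$ and the left side has derivative $\sin\alpha+\cos\alpha>1$. So for every fixed $p\in(0,1)$ and all large $N$, the claimed bound exceeds the $\frac{\sqrt N}{2}\arcsin\sqrt p+O(1)$ queries with which Grover achieves success $p$. The qualifier ``QPT'' cannot rescue this: the statement is for a fixed $N$, and Lemma~\ref{lem:grovermod} applies it to an algorithm making $tM/2^{\ell_n-\ell_k}$ queries.

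What Dohotaru and H{\o}yer actually prove is the exact optimality of Grover, $p\le\sin^2((2T+1)\theta)$, i.e.\ $T\ge\frac{\arcsin\sqrt p}{2\theta}-\frac{1}{2}$. So your closing claim that the exact constant is ``precisely the content of'' their paper is mistaken. The paper gives no proof of its own, only this citation, so it has the same defect.

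Within your argument, the failure is in Step~3; Steps~1 and~2 are standard and fine. The fidelity bound of Step~2 controls squared distances, and it does not give $\sum_k\norm{\ket{\psi_T^k}-\ket{\psi_T}}\ge N(1+\sqrt p-\sqrt{1-p})-O(\sqrt N)$. At $p\approx 1$ this would say $\approx 2N$. But for Grover at $T\approx\frac{\pi}{4}\sqrt N$, your own Step~1 bounds the sum by $\frac{\pi}{2}N$. Individually, too, $p_k=1$ and $q_k=0$ only force distance $\sqrt2$, not $2$.

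An expression of this shape comes from adding $\norm{\Pi_k v}\ge\sqrt{p_k}-\sqrt{q_k}$ and $\norm{(I-\Pi_k)v}\ge\sqrt{1-q_k}-\sqrt{1-p_k}$ for $v=\ket{\psi_T^k}-\ket{\psi_T}$. However, $\norm{v}^2=\norm{\Pi_k v}^2+\norm{(I-\Pi_k)v}^2$, so that sum only lower-bounds $\sqrt2\,\norm{v}$. This route therefore necessarily loses a factor $1/\sqrt2$.

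Your proposed inner-product fix fails for the same reason. Carried out, it gives $N-2T\sqrt N\le\sqrt{Np}+N\sqrt{1-p}$, i.e.\ $T\ge\frac{\sqrt N}{2}(1-\sqrt{1-p})-\frac{\sqrt p}{2}$. The $\sqrt{Np}$ term enters with a minus sign and at lower order, and no $+\frac{\sqrt N}{2}\sqrt p$ term ever appears.

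The right repair is to change the statement rather than the proof. The Dohotaru--H{\o}yer bound, together with $\arcsin\sqrt p\ge\sqrt p$ and $\theta\le 1/\sqrt{N-1}$, gives $T\ge\frac{\sqrt{(N-1)p}}{2}-\frac{1}{2}\ge\frac{\sqrt{Np}}{2}-1$. That weaker inequality is the only consequence the proof of Lemma~\ref{lem:grovermod} draws from this theorem.
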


We will rely on the following useful concentration bound.

\begin{lemma}[Sampling Chernoff bound (Section 5 of ~\cite{Hoeffding})]\label{lem:chernoff}
    Let $N,M\in \N$, $\delta>0$, and let $S\subseteq [N]$ be any subset. Then
    $$\Pr_{S'\gets {[N]\choose M}}\left[|S\cap S'| \geq (1+\delta)\frac{M\cdot |S|}{N}\right] \leq e^{-\frac{\delta^2\cdot M\cdot |S|}{3 N}}$$
\end{lemma}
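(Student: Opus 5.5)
The plan is to reduce the hypergeometric tail to the usual binomial Chernoff bound, following Hoeffding's comparison between sampling without and with replacement. Write $S'=\{s_1,\dots,s_M\}$ as the first $M$ draws of a uniformly random ordering of $[N]$ drawn \emph{without} replacement. Then
\[
X:=|S\cap S'|=\sum_{i=1}^M \mathds{1}[s_i\in S].
\]
Let $Y=\sum_{i=1}^M \mathds{1}[s'_i\in S]$, where the $s'_i$ are drawn uniformly \emph{with} replacement, so $Y\sim \mathrm{Bin}(M,p)$ with $p=|S|/N$. Both variables have mean $\mu=\frac{M\cdot|S|}{N}$.

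The key step is Hoeffding's Theorem 4 (Section 6 of \cite{Hoeffding}): for every continuous convex function $f$ we have $\E[f(X)]\le \E[f(Y)]$. Applying it with $f(x)=e^{\theta x}$ for $\theta>0$ gives
\[
\E[e^{\theta X}]\le (1-p+pe^\theta)^M\le \exp\bigl(\mu(e^\theta-1)\bigr).
\]
If I had to prove this comparison myself rather than cite it, I would write $\E[f(X)]$ as an average over ordered $M$-tuples with distinct entries. I would then exhibit $Y$ as a mixture of such tuples (conditioning on the pattern of repeated indices) and apply Jensen's inequality. This comparison is the only non-routine ingredient, and I expect it to be the main obstacle if it is not simply imported.

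From there the argument is the standard multiplicative Chernoff computation. By Markov's inequality,
\[
\Pr[X\ge(1+\delta)\mu]\le \exp\bigl(\mu(e^\theta-1)-\theta(1+\delta)\mu\bigr).
\]
Choosing $\theta=\ln(1+\delta)$ gives the bound $\bigl(e^\delta/(1+\delta)^{1+\delta}\bigr)^\mu$. The elementary inequality $(1+\delta)\ln(1+\delta)-\delta\ge \delta^2/3$ for $0<\delta\le 1$ (checked via derivatives at $0$) then yields $e^{-\delta^2\mu/3}$.

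One caveat I would flag: for $\delta>1$ the clean exponent is $\delta^2\mu/(2+\delta)$, so the statement as written should be read with $\delta\in(0,1]$, or with $\delta^2$ replaced by $\min(\delta,\delta^2)$. I would check that the later application in \Cref{lem:hardhybtrick} only needs constant $\delta\le 1$ (e.g.\ $\delta=1$). In that regime the lemma as stated suffices.
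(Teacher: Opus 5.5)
Your proof is correct and is the standard derivation behind the paper's bare citation of Hoeffding: bound the hypergeometric moment generating function by the binomial one using Hoeffding's convex-ordering theorem, then run the multiplicative Chernoff computation. One small fix: the inequality $(1+\delta)\ln(1+\delta)-\delta\ge\delta^2/3$ on $(0,1]$ needs a global check, e.g.\ that $\ln(1+\delta)-2\delta/3$ is concave, vanishes at $0$ and is positive at $1$; derivatives at $0$ alone do not suffice.

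Your caveat is right and worth keeping. As literally stated for all $\delta>0$ the lemma is false: take $M=|S|=1$, $N=20$, $\delta=19$, where the left side is $1/20$ but the right side is $e^{-361/60}<0.003$. Its only use, in \Cref{lem:grovermod}, has threshold $2M/2^{\ell_n-\ell_k}=2\mu$, i.e.\ $\delta=1$. There your bound gives $e^{-\mu/3}$ rather than the $e^{-4\mu/3}$ written in the paper, which is still ample for \Cref{cor:grovermod}.
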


Our search problem is of a slightly different flavor than standard unstructured search. In particular, to prove~\Cref{lem:hardhybtrick}, we will reduce to the information-theoretic hardness of the following task: for a random set $S$, given access to a reflection oracle for $S$ as well as a classical description of the set $\{s \xor k:s\in S\}$ for a random $k$, the goal is to output a single value $s\in S$.

\begin{lemma}\label{lem:grovermod}
    Let $\ell_n(n),\ell_k(n),M(n)$ be functions. Define $N=2^{\ell_n}$. Let $C^{(\cdot)}$ be any oracle algorithm making at most $t$ queries. Given a set $S\subseteq \mathcal{P}(\{0,1\}^{\ell_n})$ and a value $k \in \{0,1\}^{\ell_k}$, define $S \xor k = \{s \xor (k||0^{\ell_n-\ell_k}) : s \in S\}$. Consider the following game $FindSXOR(C)$
    \begin{enumerate}
        \item Sample $S\gets {[N] \choose M}$
        \item Sample $k\gets \{0,1\}^{\ell_k}$
        \item Run $C^{R_S}(S\xor k) \to s$
        \item Output $1$ if $s\in S$
    \end{enumerate}
    Then $\Pr[C\text{ wins}] \leq \frac{9 t^2 M^3}{2^{3\ell_n-2\ell_k}} - 2^{\ell_n-\ell_k}\cdot e^{-\frac{4\cdot M}{3\cdot 2^{\ell_n-\ell_k}}}$
\end{lemma}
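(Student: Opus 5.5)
The plan is to reduce to a hybrid (BBBV-style) argument in which the reflection oracle is replaced by the identity, after first observing that the classical advice $S\xor k$ reveals essentially nothing about \emph{where} $S$ lies. Write $T = S\xor k$.

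First I will show that the posterior on $k$ is uniform given $T$. Since $S\gets{[N]\choose M}$ and $k\gets\{0,1\}^{\ell_k}$ are independent, and $S\mapsto S\xor k$ is a bijection on ${[N]\choose M}$ for each fixed $k$, the pair $(T,k)$ is distributed as (uniform $T$, independent uniform $k$). Hence the game is equivalent to the following: sample a uniform $T$ and a uniform $k$ independent of $T$, give $C$ the set $T$ and oracle access to $R_{T\xor k}$, and declare that $C$ wins if it outputs $s$ with $s\xor k\in T$. (Here $S\xor k$ applies $k$ only to the first $\ell_k$ bits, so $T\xor k=S$.)

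Next I will bound how many shifts make any single point land in $S$. Partition $\{0,1\}^{\ell_n}$ into $2^{\ell_n-\ell_k}$ classes according to the last $\ell_n-\ell_k$ bits; each class has size $2^{\ell_k}$. For a point $x$, the number of $k$ with $x\in T\xor k$ equals $|T\cap \mathrm{class}(x)|$. By \Cref{lem:chernoff} with $\delta=2$ and $S$ a single class, each class meets $T$ in at most $3M\cdot 2^{\ell_k}/N$ points except with probability $e^{-4M/(3\cdot 2^{\ell_n-\ell_k})}$. A union bound over the $2^{\ell_n-\ell_k}$ classes shows that $T$ is \emph{good} (every class is light) except with probability $2^{\ell_n-\ell_k}e^{-4M/(3\cdot 2^{\ell_n-\ell_k})}$; this is the exponential term in the statement, which I read as an additive error. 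For good $T$ and every fixed $x$,
$$\Pr_k[x\in T\xor k]\leq \frac{3M}{2^{\ell_n-\ell_k}}\cdot 2^{-\ell_k}\cdot 2^{\ell_k}/2^{\ell_k}\cdot 2^{\ell_k}=: p,$$
that is, $p = 3M/2^{\ell_n-\ell_k}$ times $2^{\ell_k}/2^{\ell_k}$, which simplifies to $p \le 3M\,2^{\ell_k}/N$.

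The central step is the hybrid argument, carried out conditioned on a good $T$. Let $\ket{\psi_i}$ be $C$'s state just before its $i$th query when every query is answered by the identity instead of $R_{T\xor k}$. These states depend only on $T$, not on $k$. Switching query $i$ back from $I$ to $R_{T\xor k}$ moves the state by at most $2\norm{\Pi_{T\xor k}\ket{\psi_i}}$. Averaging over $k$ gives
$$\E_k\norm{\Pi_{T\xor k}\ket{\psi_i}}^2=\sum_x|\langle x|\psi_i\rangle|^2\Pr_k[x\in T\xor k]\leq p.$$
Summing over queries, using Cauchy--Schwarz and Jensen, the real and identity-oracle final states differ by $O(t\sqrt p)$ in expectation. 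In the identity world, the output $s$ is independent of $k$, so it lands in $S$ with probability at most $p$. Converting the state distance into a difference in success probability and collecting terms yields a bound of the form $\mathrm{poly}(t)\cdot\mathrm{poly}(M 2^{\ell_k}/N)$ plus the bad-$T$ probability. I expect this to match the stated $\frac{9t^2M^3}{2^{3\ell_n-2\ell_k}}$ after the authors' (loose) bookkeeping. An alternative with the same structure is to embed a \Cref{thm:grover} instance as the unknown shift $k$, but the direct hybrid seems cleaner.

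I expect the main obstacle to be the dependence between the advice $T$ and the oracle. This dependence is exactly why I first re-express the game with $k$ independent of $T$ and condition on $T$ being good: after that, the query states in the identity world are fixed functions of $T$, and the only randomness the BBBV argument needs is the uniform shift $k$.
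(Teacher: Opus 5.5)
Your route differs from the paper's, and its skeleton is sound. Because $T=S\oplus k$ is uniform and independent of $k$, you may condition on a balanced $T$ and run a BBBV hybrid averaged over the uniform shift. The gap is in the one quantity that hybrid hinges on. For fixed $x$, as $k$ ranges over $\{0,1\}^{\ell_k}$, the point $x\oplus k$ ranges over the class of $x$, so
\[
\Pr_k[x\in T\oplus k]=\frac{|T\cap\mathrm{class}(x)|}{2^{\ell_k}}\le\frac{3M}{N}\qquad\text{for good }T.
\]
Your $p=3M2^{\ell_k}/N$ is the load bound $3\mu$ itself, where $\mu=M2^{\ell_k}/N$ is the expected class load; it is missing the division by the $2^{\ell_k}$ keys. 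Moreover $3\mu>1$ whenever $\gamma=2^{\ell_n-\ell_k}e^{-4\mu/3}<1$, in particular whenever the stated bound is below $1$. So the bound you actually arrive at, $\mathrm{poly}(t)\cdot\mathrm{poly}(\mu)$, is vacuous. No bookkeeping turns it into $\frac{9t^2M^3}{2^{3\ell_n-2\ell_k}}=9t^2\mu^3 2^{-\ell_k}$, and the sentence ``I expect this to match'' is standing in for the final step of the proof.

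With $p=3M/N$ the argument goes through. The identity-world states depend only on $T$, and $\mathbb{E}_k\|\Pi_{T\oplus k}\ket{\psi_i}\|^2\le p$. Cauchy--Schwarz gives $\mathbb{E}_k\|\ket{\phi_{\mathrm{real}}}-\ket{\phi_{\mathrm{id}}}\|^2\le 4t^2p$, and Minkowski's inequality gives $\Pr[\text{win}\mid T\text{ good}]\le(2t+1)^2p$. Hence $\Pr[C\text{ wins}]\le 3(2t+1)^2M/N+\gamma$.

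You must still compare this explicitly with the target. Read with $+\gamma$, as you rightly do, the target equals $9t^2\frac{M}{N}\mu^2+\gamma$. For $t\ge1$ the comparison holds once $\mu\ge\sqrt{3}$. For $\mu<\sqrt{3}$, already $\gamma\ge 2^{\ell_n-\ell_k}e^{-4/\sqrt{3}}\ge 1$ when $\ell_n-\ell_k\ge 4$. Note that $t\ge 1$ is really needed: with zero queries, outputting a point of the most loaded class wins with probability at least $M/N$, which can exceed $\gamma$.

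As for what each route buys: the paper instead reduces to \Cref{thm:grover}, planting the search key as the shift. It simulates $R_{S'\oplus k}$ with $O(\mu)$ queries to $\delta_k$ per query of $C$, then converts $C$'s output $s$ into the guess $s\oplus s'$ for a uniformly random $s'\in S'$, losing a factor $M$. These two losses are what produce the $M^3$. Your direct averaging over the shift avoids both and yields the natural density bound $O(t^2M/N)+\gamma$, which is far stronger. In the regime of \Cref{cor:grovermod} ($\ell_k=2(\ell_n-\ell_M)$, so $\mu=N/M=2^{\ell_n-\ell_M}$), your bound is negligible, whereas the stated expression $9t^2\frac{M}{N}\mu^2=9t^2 2^{\ell_n-\ell_M}$ exceeds $1$.
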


\begin{proof}
    We begin by defining the following notation. For $x\in \{0,1\}^{\ell_n}$, define $x_{\leq \ell_k}$ to be the first $\ell_k$ bits of $x$, and define $x_{>\ell_k}$ to be the last $\ell_n-\ell_k$ bits of $x$. For $k\in \{0,1\}^{\ell_k}$, we will for simplicity write $x\xor k\coloneqq x\xor (k||0^{\ell_n-\ell_k})$.

    For $S\subseteq \{0,1\}^{\ell_n},v\in \{0,1\}^{\ell_n-\ell_k}$, define $S_v = \{x\in S:x_{>\ell_k} = v\}$.

    We then reduce to the lower bound for Grover's search~\Cref{thm:grover}. In particular, for a random $k \in \{0,1\}^{\ell_n}$, given quantum query access to $\delta_k \coloneqq R_{\{k\}}$, we give an algorithm $\A^{\delta_k}$ to find $k$
    \begin{enumerate}
        \item Sample $S' \gets {[N] \choose M}$
        \item If there exists some $v\in \{0,1\}^{\ell_n-\ell_k}$ such that $|S_v|\geq \frac{2M}{2^{\ell_n-\ell_k}}$, output $\bot$.
        \item Define $\mathcal{O}\ket{s}$ to be the coherent version of the following process
        \begin{enumerate}
            \item For each $s'\in S'$ such that $s'_{>\ell_k} = s_{>\ell_k}$, run $\delta_k(s\xor s')$. If the result is $1$, output $1$.
            \item If the result is $0$ for all such $s'$, output $0$.
        \end{enumerate}
        Formally,
        \begin{equation}
        \begin{split}
            \mathcal{O} \coloneqq I-2\sum_{\substack{s' \in S'\\s'_{>\ell_k} = s_{>\ell_k}}} \ket{s'\oplus k}\bra{s'\oplus k}\\
            = \prod_{\substack{s' \in S'\\s'_{>\ell_k} = s_{>\ell_k}}} X^{s'}\delta_{k} X^{s'}
        \end{split}
        \end{equation}
        \item Run $C^{\mathcal{O}}(S') \to s$
        \item Pick a random $s'\in S'$, output $s-s'$
    \end{enumerate}

    Clearly, this algorithm makes at most $t\cdot \frac{M}{2^{\ell_n-\ell_k}}$ queries.

    Note that for each $v\in \{0,1\}^{\ell_n-\ell_k}$, by~\Cref{lem:chernoff} we have that 
    $$\Pr\left[|S_v| \geq \frac{2M}{2^{\ell_n-\ell_k}}\right] \leq e^{-\frac{4\cdot M}{3\cdot 2^{\ell_n-\ell_k}}}$$
    Define 
    $$\gamma \coloneqq 2^{\ell_n-\ell_k}\cdot e^{-\frac{4\cdot M}{3\cdot 2^{\ell_n-\ell_k}}}$$
    by union bound, the algorithm passes step 2 with probability
    $$\geq 1-\gamma$$

    Thus, it only remains to lower bound its success probability.

    Define $S \coloneqq S'\xor k = \{s' \xor k:s'\in S'\}$. Note that $S$ is also a uniformly random set, and from the perspective of $C$, $S'$ is identically distributed to $S\xor k$. It is not hard to see that $\mathcal{O}$ exactly implements $R_S$.

    If $C$ succeeds with probability $\epsilon$ in its own game, then by union bound, with probability $\epsilon-\gamma$, in step 4 $\A^{\delta_k}$ produces some $s\in S=S'\xor k$. But note that if $s\in S$, then there exists exactly one point $s'\in S'$ such that $s\xor s' = k$. And so $\A^{\delta_k}$ succeeds with probability $\frac{1}{M}\cdot \left(\epsilon - \gamma\right)$.

    And so we have an algorithm for unstructured search succeeding with probability $\frac{\epsilon-\gamma}{M}$ and making at most $\frac{t\cdot M}{2^{\ell_n-\ell_k}}$ queries.

    Set $p\coloneqq \frac{\epsilon-\gamma}{M}$

    By~\Cref{thm:grover}, we have
    \begin{equation}
    \begin{split}
        \frac{t\cdot M}{2^{\ell_n-\ell_k}} &\geq \frac{\sqrt{N}}{2}\left(1+\sqrt{p} - \sqrt{1-p}-\frac{2}{\sqrt{N}}\right)\\
        &= \frac{\sqrt{N}}{2}\left((1-\sqrt{1-p})+\sqrt{p}\right) - 1\\
        &\geq \frac{\sqrt{Np}}{2}-1\\
        &\geq \frac{\sqrt{\frac{N(\epsilon-\gamma)}{M}}}{3}
    \end{split}
    \end{equation}
    And so
    $$\epsilon \leq \frac{9 t^2 M^3}{2^{3\ell_n-2\ell_k}}-\gamma$$
\end{proof}

\begin{corollary}\label{cor:grovermod}
    Let $\ell_n(n),\ell_k(n)$ be any polynomials such that $\ell_k<\ell_n$. Let $\ell_M(n)$ be any polynomial satisfying $\ell_M \geq \ell_n - \frac{\ell_k}{2}$ and $\ell_n-\ell_M\geq n^c$ for some $c>0$. Define $N=2^{\ell_n}$, $M(n)=2^{\ell_M}$. Let $C^{(\cdot)}$ be any oracle algorithm making at most $t=\poly(n)$ queries. Given a set $S\subseteq \mathcal{P}(\{0,1\}^{\ell_n})$ and a value $k \in \{0,1\}^{\ell_k}$, define $S \xor k = \{s \xor (k||0^{\ell_n-\ell_k}) : s \in S\}$. Consider the game $FindSXOR(C)$ as in ~\Cref{lem:grovermod}.
    Then $\Pr[C\text{ wins}] \leq \negl(n)$
\end{corollary}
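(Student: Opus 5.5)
The plan is to instantiate \Cref{lem:grovermod} directly with $N=2^{\ell_n}$ and $M=2^{\ell_M}$, and to check that each of the two terms in its bound is negligible. I read the bound as $\epsilon \le \frac{9t^2M^3}{2^{3\ell_n-2\ell_k}}+\gamma$, with $\gamma = 2^{\ell_n-\ell_k}e^{-\frac{4M}{3\cdot 2^{\ell_n-\ell_k}}}$. The proof of that lemma shows $\epsilon-\gamma \le \frac{9t^2M^3}{2^{3\ell_n-2\ell_k}}$, so the minus sign in the displayed statement is a typo. The corollary then follows once each summand is $\negl(n)$. Here $t=\poly(n)$, so the factor $9t^2$ can be absorbed as long as the accompanying power of two decays like $2^{-n^{c}}$.

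The second term $\gamma$ should be controlled by the hypothesis $\ell_M \ge \ell_n-\ell_k/2$. Indeed,
\[\frac{M}{2^{\ell_n-\ell_k}} = 2^{\ell_M-\ell_n+\ell_k}\ge 2^{\ell_k/2},\]
so the exponential factor is at most $e^{-\frac{4}{3}2^{\ell_k/2}}$. Meanwhile the prefactor $2^{\ell_n-\ell_k}$ is only $2^{\poly(n)}$. Their product is doubly-exponentially small in $\ell_k$ against a single exponential, so it is negligible whenever $\ell_k$ grows polynomially (with $\ell_k\ge n^{c'}$, which holds in our application since $\ell_k$ is a key length).

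The first term is $9t^2\cdot 2^{3\ell_M-3\ell_n+2\ell_k}$. This is the step I expect to be the main obstacle, because the exponent $2\ell_k-3(\ell_n-\ell_M)$ must be at most $-n^{c}$. That requires $\ell_n-\ell_M$ to exceed $\frac{2}{3}\ell_k$ by a polynomial margin. The first hypothesis, however, only gives $\ell_n-\ell_M\le \ell_k/2$, so the two conditions pull in opposite directions. My first move is therefore to revisit the query-count and domain accounting in the proof of \Cref{lem:grovermod}. The Grover instance there is really over $k\in\{0,1\}^{\ell_k}$, so \Cref{thm:grover} should be applied with domain $2^{\ell_k}$ rather than $N$. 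I would also tighten the per-query cost: each simulated query touches only the at most $2M/2^{\ell_n-\ell_k}$ elements of a single bucket $S'_v$, and one could amplify by searching the bucket coherently rather than paying a factor $M$ in success probability. Concretely, I would aim to replace the factor $1/M$ loss by a loss of $2^{\ell_n-\ell_k}/M$, since only the bucket matters.

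After that, I would redo the algebra to obtain a bound of the form $\poly(t)\cdot 2^{-\Omega(\ell_n-\ell_M)}$, up to terms controlled by the bucket-size event. The hypothesis $\ell_n-\ell_M\ge n^c$ then makes this term negligible, completing the corollary. If this sharpening does not go through, the fallback is to adjust the parameter regime in \Cref{thm:unclonable-encryption}, namely the choice of $d$, so that the first term's exponent is negative. I would then verify that the regime still satisfies the bucket-concentration requirement used for $\gamma$.
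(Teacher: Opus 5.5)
Your diagnosis is correct, and it is exactly where the paper's own proof breaks. The paper instantiates \Cref{lem:grovermod} directly after reducing to $\ell_k=2(\ell_n-\ell_M)$. It then rewrites $2^{3\ell_n-2\ell_k}$ as $2^{\ell_n+2\ell_M}$, i.e.\ it uses $3\ell_n-\ell_k$. In fact $3\ell_n-2\ell_k=4\ell_M-\ell_n$, so the first term is $9t^2\,2^{\ell_n-\ell_M}\ge 9t^2 2^{n^c}$. You are also right that $-\gamma$ should be $+\gamma$.

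The gap in your proposal is the repair. Redo the reduction with Grover domain $2^{\ell_k}$, a cost of $B\approx 2^{\ell_k}M/N$ per simulated query (the maximal bucket size), and a loss of only $1/B$ at the output step. You get $\epsilon-\gamma\lesssim B\cdot(tB)^2/2^{\ell_k}=\Theta\bigl(t^2\,2^{2\ell_k}(M/N)^3\bigr)$. The two corrections cancel, and you recover the lemma's bound up to a constant. Under $\ell_M\ge\ell_n-\ell_k/2$ this is $\Omega(t^2 2^{\ell_k/2})$.

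This is not a bookkeeping issue. The hypothesis says precisely that $B\ge 2^{\ell_k/2}=\sqrt{2^{\ell_k}}$. So simulating even one reflection query to $R_S$, at one $\delta_k$-query per bucket element, already spends the entire Grover budget. No reduction of this shape can give $\poly(t)\,2^{-\Omega(\ell_n-\ell_M)}$. Your fallback (choosing $d$ so that $\ell_n-\ell_M>\tfrac23\ell_k$) contradicts the hypothesis $\ell_M\ge\ell_n-\ell_k/2$, so it proves a different statement, not this corollary.

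The missing idea is to drop the reduction and bound the search directly.

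First, let $T=S\oplus k$. For each fixed $k$ the map $S\mapsto S\oplus k$ is a bijection of $\binom{[N]}{M}$, so $T$ is uniform and independent of $k$.

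Second, condition on $T$. A fixed $x$ lies in $S$ iff $x\oplus k\in T$, and $x\oplus k$ is uniform on the $2^{\ell_k}$-element bucket of $x$. Hence $\Pr_k[x\in S\mid T]\le p_{\max}:=\max_v|T_v|/2^{\ell_k}$.

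Third, run the BBBV hybrid argument. Let $\ket{\phi_i}$ be the state before the $i$th query when $R_S$ is replaced by the identity; these depend on $T$ but not on $k$. Then:
\begin{itemize}
\item the real final state is within $2\sum_i\|\Pi_S\ket{\phi_i}\|$ of the null one;
\item $\E_k\|\Pi_S\ket{\phi_i}\|^2\le p_{\max}$;
\item the null output lies in $S$ with probability at most $p_{\max}$.
\end{itemize}
Together these give $\Pr[C\text{ wins}\mid T]\le (2+8t^2)\,p_{\max}$.

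Finally, apply \Cref{lem:chernoff} with a union bound over the $2^{\ell_n-\ell_k}$ buckets. This gives $p_{\max}\le 2M/N$ except with probability $2^{\ell_n-\ell_k}e^{-\mu/3}$, where $\mu=2^{\ell_k}M/N\ge 2^{\ell_k/2}\ge 2^{n^c}$. The hypotheses already force $\ell_k\ge 2(\ell_n-\ell_M)\ge 2n^c$, so you need no outside assumption on $\ell_k$.

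Altogether $\Pr[C\text{ wins}]\le O(t^2\,2^{-(\ell_n-\ell_M)})+\negl(n)=\negl(n)$. The hypothesis $\ell_M\ge\ell_n-\ell_k/2$ is used only for bucket concentration.
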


\begin{proof}
    Note that $2(\ell_n-\ell_M)\leq \ell_k$, and so since increasing the length of the key only decreases $C$'s advantage, we can assume without loss of generality that $\ell_k = 2(\ell_n-\ell_M)$.

    Observe that $\ell_n-\ell_k = \ell_n-2\ell_n+2\ell_M=2\ell_M-\ell_n$.

    Plugging in~\Cref{lem:grovermod}, we get
    \begin{equation}
        \begin{split}
            \Pr[C\text{ wins}] &\leq \frac{9 t^2 M^3}{2^{3\ell_n-2\ell_k}} - 2^{\ell_n-\ell_k}\cdot e^{-\frac{4\cdot M}{3\cdot 2^{\ell_n-\ell_k}}}\\
            &\leq \frac{9t^22^{3\ell_M}}{2^{\ell_n+2\ell_M}}-\exp\left(\ln(2)(2\ell_M-\ell_n)-\frac{4\cdot 2^{\ell_M}}{3\cdot 2^{2\ell_M-\ell_n}}\right)\\
            &=9t^2\cdot 2^{\ell_M-\ell_n}-\exp\left(\ln(2)(2\ell_M-\ell_n)-\frac{4}{3}\cdot 2^{\ell_n-\ell_M}\right)\\
            &\leq \poly(n)\cdot  2^{-n^c} - \exp\left(\poly(n)-\frac{4}{3}\cdot 2^{n^c}\right)\\
            &\leq \negl(n)
        \end{split}
    \end{equation}
\end{proof}

Finally, before we begin the proof of~\Cref{lem:hardhybtrick}, we will cite an important fact about random unitaries.

\begin{lemma}[Modified Lemma 4.1 from~\cite{AGKL24}]\label{lem:separatesample}
    Let $S$ be a set. Let $n,s,t\in \N$ and $\vec{x}= (x_1,\dots,x_s)\in S^s$ such that $\vec{x}$ has no repeating coordinates. Let
    $$\rho \coloneqq \E_{U\gets \Haar(S)}\left[\bigotimes_{j=1}^s\left(U\ketbra{0}U^\dagger\right)^{\otimes t}\right]$$
    and let
    $$\sigma \coloneqq \bigotimes_{j=1}^t\E_{\ket{\phi_j}\gets \Haar(S)}\left[\ketbra{\phi_j}^{\otimes t}\right]$$
    Then
    $$TD(\rho,\sigma)\leq \frac{s^2t}{|S|}$$
\end{lemma}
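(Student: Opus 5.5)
The approach is a hybrid argument over the $s$ output registers. Read the intended $\rho$ as $\E_U\bigl[\bigotimes_{j=1}^s (U\ketbra{x_j}U^\dagger)^{\otimes t}\bigr]$ with $s$ independent Haar states in $\sigma$; the distinctness of the $x_j$ only makes sense this way. Set $d=|S|$ and $\ket{\psi_j}=U\ket{x_j}$. Since the $x_j$ are distinct, $(\ket{\psi_1},\dots,\ket{\psi_s})$ is a Haar-random orthonormal $s$-frame. Sampling it sequentially, $\ket{\psi_j}$ is Haar random on the orthogonal complement $W_{j}$ of $\mathrm{span}\{\ket{\psi_1},\dots,\ket{\psi_{j-1}}\}$, which has dimension $d-j+1$.

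For $0\le j\le s$, define the hybrid $H_j$ as follows. The first $j$ blocks are $\ket{\psi_1}^{\otimes t},\dots,\ket{\psi_j}^{\otimes t}$ from a common frame. The remaining $s-j$ blocks are $\ket{\phi_i}^{\otimes t}$ for fresh independent Haar states. Then $H_s=\rho$ and $H_0=\sigma$, and by the triangle inequality $TD(\rho,\sigma)\le \sum_{j=1}^s TD(H_j,H_{j-1})$.

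To compare $H_j$ and $H_{j-1}$, write both as an expectation over the first $j-1$ frame vectors, which are identically distributed in the two hybrids. Conditioned on those vectors, the first $j-1$ blocks and the last $s-j$ blocks are the same product state in both hybrids. The only difference is the $j$-th block. In $H_j$ it is $\E_{\psi\gets\HaarSt(W_j)}[\ketbra{\psi}^{\otimes t}] = \Pi_{sym}(W_j)/\dim \mathrm{Sym}^t(W_j)$. In $H_{j-1}$ it is $\Pi_{sym}(\mathcal{H}(S))/\dim\mathrm{Sym}^t(\mathcal{H}(S))$. Joint convexity of trace distance, together with its invariance under tensoring with a fixed state, reduces $TD(H_j,H_{j-1})$ to the worst-case conditional trace distance between these two normalized projectors.

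The key computation is this single-block distance. Since $\mathrm{Sym}^t(W_j)\subseteq \mathrm{Sym}^t(\mathcal{H}(S))$, the trace distance between the two normalized projectors is exactly $1-\dim\mathrm{Sym}^t(W_j)/\dim\mathrm{Sym}^t(\mathcal{H}(S))$. With $k=j-1$ this ratio is
\[
\binom{d-k+t-1}{t}\Big/\binom{d+t-1}{t}=\prod_{i=0}^{t-1}\frac{d-k+i}{d+i}\ \ge\ \Bigl(1-\frac{k}{d}\Bigr)^t\ \ge\ 1-\frac{tk}{d}.
\]
So $TD(H_j,H_{j-1})\le t(j-1)/d$, and summing over $j$ gives $TD(\rho,\sigma)\le t\binom{s}{2}/d\le s^2t/|S|$.

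The main point needing care is the conditioning step, since $\ket{\psi_j}$ is correlated with the earlier blocks. This is exactly why the hybrid is ordered so that only the $j$-th block changes while the preceding frame vectors are held fixed. Convexity over those fixed vectors then handles the correlation cleanly, and the rest is the dimension count above.
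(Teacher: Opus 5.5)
Your proof is correct, and your reading of the statement is the right one. The product in $\sigma$ must run over $j=1,\dots,s$, with $\ket{\phi_j}\gets\HaarSt(S)$. Block $j$ of $\rho$ must use $\ketbra{x_j}$: even with $\sigma$ fixed in this way, putting $\ketbra{0}$ in every block of $\rho$ makes the claim false. For example, with $s=2$ and $t=1$ the antisymmetric subspace has weight about $1/2$ under $\sigma$ and none under $\rho$.

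The paper gives no argument of its own here. It imports Lemma 4.1 of AGKL24 and only remarks that the proof there does not depend on $S=\{0,1\}^n$. Your hybrid over the $s$ blocks is a self-contained substitute, and its two key steps check out:
\begin{itemize}
\item You hold the first $j-1$ frame vectors fixed so that only block $j$ changes. Joint convexity then handles the correlation between blocks correctly.
\item The quantitative input is the nested-projector identity $TD = 1-\dim\mathrm{Sym}^t(W_j)/\dim\mathrm{Sym}^t(\mathcal{H}(S))$, combined with $\prod_{i=0}^{t-1}(d-k+i)/(d+i)\ge (1-k/d)^t\ge 1-tk/d$.
\end{itemize}

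Compared with the citation, your argument makes the paper's remark about general $S$ immediately checkable, because the only property of $S$ it uses is $\dim\mathcal{H}(S)=|S|$. It also gives the slightly sharper bound $t\,s(s-1)/(2|S|)$. The citation's only advantage is brevity.
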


\begin{remark}
    Lemma 4.1 from~\cite{AGKL24} defined the set $S$ as explicitly $\{0,1\}^n$, but the same proof applies.
\end{remark}

We will now proceed to the proof of~\Cref{lem:hardhybtrick}.

\begin{proof}
    Given an adversary $C$ for $FindSEnc$, we will come up with an adversary $D$ for $FindSXOR$ (with $N = |\Ciph|$ and $M=\frac{|\Ciph|}{2^d} - t$) such that
    $$|\Pr[FindSEnc(C)\to 1] - \Pr[FindSXOR(D)\to 1]|\leq \negl(n)$$
    \Cref{lem:hardhybtrick} will then follow immediately from~\Cref{lem:grovermod}.

    Let $\Ora{\Sim}^{R_S}$ be the simulator from~\Cref{cor:effpr} with $t_{max}=t$. $D^{R_S}(S\xor k)$ will simulate $C^{\Ora{U},\mathcal{O}}$ as follows
    \begin{enumerate}
        \item $D^{R_S}$ will replace all queries to $\Ora{U}$ with queries to $\Ora{\Sim}^{R_S}$
        \item $D^{R_S}(S\xor k)$ will replace all queries to $\mathcal{O}$ with the following oracle, which we will name $\Sim_{\Enc}(m)$:
        \begin{enumerate}
            \item Sample $r \gets S\xor k$
            \item Output $r$
        \end{enumerate}
        \item At the end, $D^{R_S}(S\xor k)$ will output the same value as $C^{\Ora{\Sim}^{R_S}}$
    \end{enumerate}

    To show 
    $$|\Pr[FindSEnc(C)\to 1] - \Pr[FindSXOR(D)\to 1]|\leq \negl(n)$$
    we will proceed through a sequence of hybrids $H_0,\dots,H_f$ with $H_0 = FindSEnc(C)$ and $H_f = FindSXOR(D)$. The hybrids are defined in~\Cref{fig:hardlemmahybs,fig:hardlemmahybs2,fig:hardlemmahybs3}.

    \begin{claim}
        $|\Pr[H_1\to 1] - \Pr[H_0\to 1]|\leq \negl(n)$
    \end{claim}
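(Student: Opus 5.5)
I expect $H_1$ to be obtained from $H_0 = FindSEnc(C)$ by replacing the two Haar random unitaries with their path recording counterparts. Concretely, every query made either by $C$ (through $\Ora{U}$) or by the encryption oracle $\mathcal{O}$ (which applies $U_1$ once per query) would go to $\Ora{V}_{S_2,t_{max}}$ on the $S_2$ part and to $\Ora{V}_{\Ciph\setminus S_2,t_{max}}$ on the complement. The plan is to prove the claim by conditioning on everything sampled before the oracles are used and then invoking \Cref{thm:pathrecordingreverse} twice.

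\textbf{Step 1 (conditioning).} Fix $k$, $R$, $S_1$, $S'$, hence $S_2$. Once these are fixed, the whole of $H_0$ is an oracle algorithm $\mathcal{B}$ with query access to $\Ora{U_1}$ and $\Ora{U_2}$. It runs $C$ and implements $\mathcal{O}$ by preparing $\ket{m,r_{orig},0^{\ell_a},r_i}$, calling $U_1$ once, and applying $X^k$. It outputs $1$ iff $C$'s final string lies in $S_2$, which $\mathcal{B}$ can check since $S_2$ is fixed. A query to $U=U_1U_2$, or to its inverse, conjugate or transpose, is a fixed composition of one query to each factor in the corresponding direction. Hence $\mathcal{B}$ makes at most $t' \le 2t + t = O(t)$ queries to each of $\Ora{U_1}$ and $\Ora{U_2}$. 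Here I use that $C$ makes at most $t$ oracle queries and $\mathcal{O}$ is called at most $t$ times.

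\textbf{Step 2 (two applications of path recording).} First swap $\Ora{U_1}$, with $U_1 \gets \Haar(S_2)$, for $\Ora{V}_{S_2,t_{max}}$, treating the independent $U_2$ as part of the algorithm's internal randomness. \Cref{thm:pathrecordingreverse} bounds the change in output distribution by $9t'(t'+1)/|S_2|^{1/8}$. Then swap $\Ora{U_2}$ for $\Ora{V}_{\Ciph\setminus S_2,t_{max}}$ in the same way, at cost $9t'(t'+1)/|\Ciph\setminus S_2|^{1/8}$.

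The two path recording oracles act on disjoint subspaces and have separate internal registers. The fact that Haar$(S)$ is taken as $\Haar(S)\oplus I$ means the relevant identity-padding is respected. Averaging over the conditioned randomness and using the triangle inequality gives
\[
|\Pr[H_1\to 1]-\Pr[H_0\to 1]| \le 9t'(t'+1)\left(|S_2|^{-1/8} + |\Ciph\setminus S_2|^{-1/8}\right).
\]

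\textbf{Step 3 (parameters).} We have $|S_2| = |\Ciph|/2^{d}$ with $d=\frac12\min(\ell_k,\ell_r)$, and $|\Ciph\setminus S_2| \ge |\Ciph|/2$. Both are $2^{\Omega(\poly(n))}$ because $\ell_r$ contributes to $\log|\Ciph|$ and $d\le \ell_r/2$. Since $t'=\poly(n)$, the bound is negligible.

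\textbf{Main obstacle.} The delicate point is Step 2: checking that the ``product'' oracle really is a legitimate algorithm with oracle access to each factor separately. Queries to $U$ are in superposition over inputs in $S_2$ and its complement, so one must confirm that $U_1U_2$, $(U_1U_2)^\dagger = U_2^\dagger U_1^\dagger$, $(U_1U_2)^* = U_1^*U_2^*$ and $(U_1U_2)^T = U_2^TU_1^T$ are each implementable by one query to each factor in the appropriate direction. This holds because the factors act as identity off their blocks. One must also confirm that the query budget $t_{max}\ge t'$ is respected on each block. Because \Cref{thm:pathrecordingreverse} is stated in trace distance against arbitrary algorithms, no further structural property of $C$ is needed.

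If instead $H_1$ replaces the encryption outputs $U_1\ket{m,r_{orig},0^{\ell_a},r_i}$ by independent Haar states on $S_2$, I would use \Cref{lem:separatesample} in place of the path recording step. It applies because the $r_i$ are distinct, so the inputs to $U_1$ have no repeated coordinates, and it gives error $t^2\cdot O(t)/|S_2|$, again negligible. Making that version rigorous, however, requires first isolating $C$'s own queries to $U_1$. I would do so via path recording as above, which is why I would pursue the path recording route first.
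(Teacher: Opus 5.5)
\textbf{Gap: the proposal bounds the wrong hybrid.} In the paper, $H_1$ does not change the oracles. Compared with $H_0$, the only change is how the encryption randomness is sampled. It is drawn as $R \gets \binom{\{0,1\}^{\ell_r}}{t}$, a uniformly random set of $t$ \emph{distinct} strings, instead of as $t$ independent samples $r_1,\dots,r_t$. Correspondingly, $S'$ has the fixed size $|\Ciph|/2^d - t\cdot|\wt{\Ciph}|$; in $H_0$ its size $|\Ciph|/2^d-|S_1|$ depends on how many $r_i$ are distinct. Path recording enters this hybrid sequence only at $H_5$, via \Cref{cor:effpr}. So your Steps 1--3, whatever their merits, do not bound $|\Pr[H_1\to 1]-\Pr[H_0\to 1]|$.

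\textbf{The argument that is needed.} Let $E$ be the event that $r_1,\dots,r_t$ in $H_0$ are pairwise distinct. Conditioned on $E$, $(r_1,\dots,r_t)$ is a uniformly random $t$-subset listed in a uniformly random order, and $|S_1| = t\cdot|\wt{\Ciph}|$. Hence $S'$ has exactly the size used in $H_1$, and the two games coincide. This gives
$|\Pr[H_1\to 1]-\Pr[H_0\to 1]| \le \Pr[\neg E] \le \binom{t}{2}2^{-\ell_r} \le t^2/2^{\ell_r}$
by a union bound over pairs, which is negligible.

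Your fallback route has the same blind spot. You invoke \Cref{lem:separatesample} on the grounds that ``the $r_i$ are distinct,'' but in $H_0$ they need not be. Guaranteeing distinctness is the entire content of this claim, and it is exactly what makes \Cref{lem:separatesample} applicable in the next step ($H_2$ versus $H_1$).
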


    \begin{proof}
        Note that conditioned on $R$ containing no duplicates in $H_0$, $H_0$ and $H_1$ are identically distributed. But the probability that $R$ contains a duplicate is 
        $$\leq \frac{t^2}{2^{\ell_r}} = \negl(n)$$
        by the union bound.
    \end{proof}

    \begin{claim}
        $|\Pr[H_2\to 1]-\Pr[H_1\to 1]|\leq \negl(n)$
    \end{claim}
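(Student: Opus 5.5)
The plan is to isolate exactly what changes between $H_1$ and $H_2$ and charge that change to one of the statistical facts already available. I expect $H_2$ to differ from $H_1$ only in how the oracles built from $U_1$ are realized. Recall that in $H_1$ the strings $r_1,\dots,r_t$ are distinct. Each encryption query therefore applies $U_1$ to a computational basis vector $\ket{m,r_{orig},0^{\ell_a},r_i}\in\mathcal{H}(S_2)$. These vectors are distinct whenever the pairs $(r_{orig},r_i)$ are distinct, and since the $r_i$ differ, they always are. I am not certain of the exact form of $H_2$. I expect it to be one of two things: $H_2$ replaces $U_1$ (and $U_2$) by their path-recording counterparts, or $H_2$ replaces the ciphertexts by $X^k\ket{\phi_i}$ for independent states $\ket{\phi_i}\gets\HaarSt(S_2)$. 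The argument is arranged so that either reading is handled.

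\textbf{Step 1: the oracle $\Ora{U}$ is a direct sum.} Since $U=U_1U_2$ with $U_1\gets\Haar(S_2)$ and $U_2\gets\Haar(\Ciph\setminus S_2)$, every query to $\Ora{U}$ is a query to $\Ora{U_1}\oplus\Ora{U_2}$. The same holds for the encryption oracle, which only touches $U_1$. I would fix $S_2$, $k$ and $R$, and treat the game as an oracle algorithm with two independent Haar oracles. The encryption oracle becomes an extra $\poly(n)$ number of forward queries to $U_1$ on known basis inputs, followed by $X^k$.

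\textbf{Step 2: replace each Haar oracle by path recording.} I would apply \Cref{thm:pathrecordingreverse} once with $S=S_2$ and once with $S=\Ciph\setminus S_2$. The total number of queries is at most $2t$, counting both the adversary's queries and the encryption queries. The two errors are $O(t^2/|S_2|^{1/8})$ and $O(t^2/|\Ciph\setminus S_2|^{1/8})$. Both are negligible, because $|S_2|=|\Ciph|/2^d$ is exponential in $n$.

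\textbf{Step 3: remove the correlation through the database.} If $H_2$ uses independent Haar states, I would additionally show the following. Under $V_{S_2}$, a forward query on a fresh basis input $x\notin\Dom(D)$ returns a uniform superposition over unused outputs $y$ and records $(x,y)$. Because the inputs $\ket{m,r_{orig},0,r_i}$ are never repeated, these outputs behave, up to the bound of \Cref{lem:separatesample} ($s^2t/|S_2|$ with $s,t=\poly(n)$), like fresh samples from $\HaarSt(S_2)$. Their only remaining correlation with the adversary's $\Ora{U_1}$ queries runs through the recorded pairs.

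\textbf{Main obstacle.} The key difficulty is exactly that remaining correlation. The adversary could apply $U_1^\dagger$ to a ciphertext after stripping off $X^k$. I would handle this by noting that the adversary never learns $k$, so its queries carry only negligible weight on the masked outputs $y$. This is the same Grover-type reasoning later quantified by \Cref{cor:grovermod}. If that route becomes circular, I would instead keep the recorded pairs inside $H_2$, so that $H_2$ is the path-recording game, and prove only Step 2 here. In that case the claim follows from \Cref{thm:pathrecordingreverse} together with a union bound over the two oracles.
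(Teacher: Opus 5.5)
There is a genuine gap, although you name the right tool. In the paper, $H_2$ is your second guess. $U_1$ is deleted, and the $i$th encryption query is answered by $X^k\ket{\phi_{m,r_{orig},0^{\ell_a},r_i}}$ with independent $\ket{\phi_s}\gets\HaarSt(S_2)$.

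The fact you are missing is that in $H_0$ and $H_1$ the adversary runs as $C^{\Ora{U}_2,\mathcal{O}}$. That is, it has oracle access only to $U_2$, which is sampled independently of $U_1$. The displayed statement of \Cref{lem:hardhybtrick} writes $\Ora{U}$, but the hybrid figures use $\Ora{U}_2$, as does the later replacement of $C$'s oracle by $\Ora{\Sim}^{R_{S_2}}$ in $H_5$.

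Consequently $U_1$ enters the game only through $\mathcal{O}$. That oracle applies $U_1$ once each to at most $t$ basis vectors, and these are pairwise distinct because $R$ has no collisions in $H_1$. \Cref{lem:separatesample} with $S=S_2$, $s=t$ and one copy of each output then gives the claim directly, with error at most $t^2/|S_2|=\negl(n)$. The adaptive choice of the $m_i$ is harmless: conditioned on earlier outputs, $U_1\ket{x}$ for a fresh $x$ is Haar on the orthogonal complement whichever $x$ is chosen. Your Steps 1 and 2, which path-record both $U_1$ and $U_2$, are therefore an unnecessary detour.

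Because you instead let $C$ query $\Ora{U_1}\oplus\Ora{U_2}$, your argument hinges on the ``main obstacle'', and neither proposed resolution works.
\begin{itemize}
\item Arguing that $C$'s $U_1$-queries carry negligible weight on the masked outputs, because it does not know $k$, is essentially the conclusion of \Cref{lem:hardhybtrick} itself: that $C$ cannot locate points of $S_2$. Using it inside one of that lemma's own hybrids is circular.
\item Under your reading, the paper's $H_2$ would not even be a well-defined game, since it no longer contains $U_1$ at all.
\item The fallback of redefining $H_2$ as the path-recording game and proving only Step 2 establishes a different statement, not the claim about the $H_2$ in question.
\end{itemize}
The repair is to drop the obstacle. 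Since $C$ never touches $U_1$ outside $\mathcal{O}$, your Step 3 alone is the whole proof: distinct inputs, one copy of each output, and \Cref{lem:separatesample} applied directly to the Haar unitary $U_1$, with no path recording needed.
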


    \begin{proof}
        This follows immediately from~\Cref{lem:separatesample} setting $S = S_2$ and $s=t$ since
        $$\frac{t^2}{|S_2|} \leq \frac{\poly(n)}{\frac{|\Ciph|}{2^d} - t\cdot |\wt{\Ciph}|} = \negl(n)$$
    \end{proof}

    \begin{claim}
        $H_3 \equiv H_2$
    \end{claim}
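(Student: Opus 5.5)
The claim should follow from a purely linear-algebraic identity, with no approximation. I am assuming the change from $H_2$ to $H_3$ is the natural next step in the simulation. In $H_2$, by the previous claim via \Cref{lem:separatesample}, the answer to the $i$th query to $\mathcal{O}$ is $X^k\ket{\phi_i}$, where the $\ket{\phi_i}\gets\HaarSt(S_2)$ are sampled independently of each other and of $U_1,U_2$. In $H_3$ the answer is instead $X^k\ket{s_i}$ for an independently sampled uniform $s_i\gets S_2$, i.e. the classical string $s_i\xor k$, as in $\Sim_{\Enc}$.

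\textbf{Step 1 (fix everything else).} Condition on $k$, $R$, $S_1$, $S'$ (hence $S_2$), $U_1$ and $U_2$. In $H_2$, once these are fixed, each $\ket{\phi_i}$ is drawn freshly and independently of the rest of the game. Each $\ket{\phi_i}$ also enters the game exactly once, as a single copy, since the answer to query $i$ is handed to $C$ once and never reused.

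\textbf{Step 2 (the key identity).} By unitary invariance of the Haar measure on $\mathcal{H}(S_2)$,
$$\E_{\ket{\phi}\gets\HaarSt(S_2)}\ketbra{\phi} \;=\; \frac{1}{|S_2|}\sum_{s\in S_2}\ketbra{s} \;=\; \E_{s\gets S_2}\ketbra{s}.$$
Conjugating by $X^k$ turns the right-hand side into the mixture of the classical states $\ket{s\xor k}$ with $s\gets S_2$.

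\textbf{Step 3 (lift to the whole game).} $C$'s behaviour, the oracle $\Ora{U}$ and the final check $s\in S_2$ together form a fixed CPTP map. This map is multilinear in the density matrices of the $t$ query answers, and for each $i$ it acts on the $i$th answer as a single fresh input. So I can replace the answers one at a time, $i=1,\dots,t$. Each replacement uses Step 2 together with linearity of expectation over the independent $\ket{\phi_i}$, and leaves the output distribution exactly unchanged. After $t$ replacements the game is $H_3$, so $\Pr[H_3\to1]=\Pr[H_2\to1]$.

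The only real obstacle is in Step 1: checking that in $H_2$ the $\ket{\phi_i}$ are genuinely independent of $U_1$ and of the oracle $\Ora{U}=\Ora{U_1U_2}$ that $C$ queries. This decoupling is exactly what the previous claim bought. If the $\ket{\phi_i}$ were still correlated with $U_1$, the averaging in Step 2 could not be pushed inside $C$'s queries to $U_1$. I would therefore first re-read the definition of $H_2$ to confirm that the states are sampled outside $U_1$, and only then run the linearity argument.
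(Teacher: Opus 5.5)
\textbf{There is a genuine gap: your proposal proves a different transition and leaves out the one idea this claim needs.}

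In the paper, $H_2$ does not hand out per-query states $\ket{\phi_i}$. It replaces $U_1$ by a family $\{\ket{\phi_s}\}_{s\in S_1}$ of independent Haar states on $\mathcal{H}(S_2)$, one for each basis string $s$, standing in for $U_1\ket{s}$. The $i$th query on message $m$ then returns $X^k\ket{\phi_{(m,r_{orig},0^{\ell_a},r_i)}}$. $H_3$ is the lazy-sampling version, in which each query returns $X^k\ket{\phi}$ for a fresh $\ket{\phi}\gets\HaarSt(S_2)$. Replacing Haar states by classical strings $s\xor k$ with $s\gets S_2$ is the next hybrid, $H_4$. Your Steps 2--3 use the fact that a single Haar copy averages to the maximally mixed state, i.e.\ the uniform mixture of $\ket{s}$ over $s\in S_2$. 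That is exactly the paper's argument for $H_4\equiv H_3$, not for this claim. Also, in $H_2$ the unitary $U_1$ is no longer sampled at all, and $C$ queries only $\Ora{U}_2$. So decoupling from $U_1$, which you flag as the main obstacle, is not where the difficulty lies.

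The actual content of $H_3\equiv H_2$ is what your Step 1 simply assumes: across the $t$ queries, $C$ receives independent single copies. This needs justification, because the index $(m,r_{orig},0^{\ell_a},r_i)$ depends on the adversarially chosen $m$ and on $r_{orig}$, both of which can repeat. If two queries hit the same index, $C$ would get two copies of the same $\ket{\phi_s}$. The average of $\ket{\phi}^{\otimes 2}$ is the normalized symmetric-subspace projector, not a product of maximally mixed states, so lazily resampling would change the game. The missing observation is that these indices are pairwise distinct because the $r_i$ are pairwise distinct. This is exactly what $H_1$ secured by sampling $R\gets\binom{\{0,1\}^{\ell_r}}{t}$ without repetition. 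Hence each $\ket{\phi_s}$ is consumed at most once, and sampling it up front or at query time yields identical games.
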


    \begin{proof}
        Observe that in $H_2$, each $(m,r_{orig},0^{\ell_a},r_i)$ is unique since each $r_i$ is distinct. Thus, $H_3$ is the same as $H_2$, but we simply sample each $\ket{\phi_s}$ later. This makes no difference to the games output probability.
    \end{proof}

    \begin{claim}
        $H_4 \equiv H_3$
    \end{claim}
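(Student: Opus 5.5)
Since the hybrids $H_3$ and $H_4$ are specified in the figures rather than in the running text, I will assume the following, which matches how $H_1$ through $H_3$ progress toward $D$'s simulation. In $H_3$, each response of $\mathcal{O}$ is $X^k\ket{\phi_i}$, where $\ket{\phi_i}\gets\HaarSt(S_2)$ is a fresh Haar random state on $\mathcal{H}(S_2)$, sampled lazily at the moment of the $i$th query. In $H_4$ this response is replaced by the classical state $X^k\ket{s_i}$ with $s_i\gets S_2$ uniform. This is a step toward $\Sim_{\Enc}$, which outputs a uniform element of $S\xor k$. I will argue that the two games have \emph{identical} output distributions by showing that, conditioned on everything else sampled in the game, the channel implementing each encryption query is the same in both games.

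\textbf{Step 1 (what $\ket{\phi_i}$ depends on).} Fix $k$, $R$, $S_2$, and the oracle seen by $C$ (its forward, inverse, conjugate and transpose accesses). By the construction of $H_2$ via \Cref{lem:separatesample}, the states $\ket{\phi_i}$ are sampled independently of one another and of $C$'s oracle. By the lazy sampling in $H_3$ (justified in the proof of $H_3\equiv H_2$ by the distinctness of the tuples $(m,r_{orig},0^{\ell_a},r_i)$), each $\ket{\phi_i}$ appears in exactly one output of $\mathcal{O}$ and in no other part of the game. I will check this explicitly against the figure; it is the only place where the claim could fail. In particular, I need that no copy of $\ket{\phi_i}$ is reused and that the winning condition depends on $S_2$ only, not on $\ket{\phi_i}$.

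\textbf{Step 2 (averaging a single copy).} Given Step 1, the joint state of $C$ after the $i$th query is obtained by applying $C$'s operations to $X^k\ketbra{\phi_i}X^k$ tensored with the rest. By linearity, and because $\ket{\phi_i}$ is independent of everything else, we may replace $\ketbra{\phi_i}$ by its expectation:
$$\E_{\ket{\phi}\gets\HaarSt(S_2)}\ketbra{\phi}=\frac{I_{\mathcal{H}(S_2)}}{|S_2|}=\E_{s\gets S_2}\ketbra{s}.$$
Conjugating by $X^k$ gives $\E_{s\gets S_2}\ketbra{s\xor k}$, which is exactly the response in $H_4$. I will apply this replacement one query at a time, from the first to the last. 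At each step the remaining game is a fixed CPTP map applied to a state that is linear in $\ketbra{\phi_i}$, so each replacement changes nothing. Iterating over all $t$ queries gives $H_4\equiv H_3$.

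\textbf{Main obstacle.} The averaging computation in Step 2 is immediate. The real content is Step 1: confirming that after $H_2$ and $H_3$ the encryption outputs are genuinely fresh single copies, decoupled from the oracle $C$ queries. If instead the figure defines $H_4$ as a different rewriting, for example replacing $\Ora{U}$ by $\Ora{\Sim}^{R_{S_2}}$, then I would argue exact equality via \Cref{cor:effpr} together with the identity of the path-recording oracle in the relevant regime. The structure of the argument would be the same: an exact equivalence of the oracles as channels, followed by induction over the queries.
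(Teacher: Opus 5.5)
Your proposal is correct and follows the paper's argument: a single fresh Haar random state on $\mathcal{H}(S_2)$ has density matrix $I/|S_2|$, which equals that of a uniformly random basis state $\ket{s}$ with $s\gets S_2$. After conjugating by $X^k$, each response $X^k\ket{\phi}$ can therefore be replaced by $s\xor k$ without changing the game. Your Step 1 (each $\ket{\phi_i}$ is sampled lazily, used exactly once, and independent of $\Ora{U}_2$) is exactly the paper's ``only one copy of $\ket{\phi}$ is ever used'' justification, and $H_4$ is indeed defined as you assumed, so your fallback is not needed.
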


    \begin{proof}
        Note that the mixed state corresponding to $\ket{\phi}\gets \HaarSt(S_2)$ is exactly the maximally mixed state. This is the same density matrix one would obtain from sampling a random string from $S_2$. And so since only one copy of $\ket{\phi}$ is ever used, replacing it with a random string has no effect on the hybrid.
    \end{proof}

    \begin{claim}
        $|\Pr[H_5\to 1]-\Pr[H_4\to 1]|\leq \negl(n)$
    \end{claim}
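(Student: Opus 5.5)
I expect $H_5$ to differ from $H_4$ only in how $C$'s oracle $\Ora{U}=\Ora{U_1U_2}$ is answered: the Haar random unitaries $U_1\gets\Haar(S_2)$ and $U_2\gets\Haar(\Ciph\setminus S_2)$ are replaced by the (inverse, conjugate and transpose capable) path recording oracles $\vec{V}_{S_2,t}$ and $\vec{V}_{\Ciph\setminus S_2,t}$. By \Cref{cor:effpr}, these are implemented as $\Ora{\Sim}^{R_{S_2}}$ and $\Ora{\Sim}^{R_{\Ciph\setminus S_2}}$, and the second reflection equals $-R_{S_2}$, so both are implementable from $R_{S_2}$ alone. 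If $H_5$ is defined differently, the same template still applies: isolate the oracle that was swapped and bound the swap by path recording. The plan is a two-step hybrid argument: change $U_1$ first, then $U_2$.

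The structural observation comes first. After $H_3$ and $H_4$, the responses of $\mathcal{O}$ no longer depend on $U_1$ or $U_2$: each response is $X^k$ applied to a fresh independent string sampled from $S_2$. Hence the only way $U_1,U_2$ enter $H_4$ is through $C$'s at most $t$ queries to $\Ora{U}$. I will also check that the winning predicate ($s\in S_2$) depends only on $S_2$ and $C$'s output, not on $U_1$ or $U_2$. So, conditioned on any fixed $S_2$, $k$, $R$ and any fixed sequence of sampled ciphertext strings, $H_4$ is an oracle algorithm whose oracle is $\Ora{U_1U_2}$, with $U_1,U_2$ independent Haar unitaries on complementary subspaces. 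The full game is an average of such conditioned games.

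Next I decompose the oracle. Since $U_1$ and $U_2$ act on orthogonal subspaces, each query to $U_1U_2$ (and likewise to its inverse, conjugate and transpose) equals one query to $U_1$ followed by one query to $U_2$; here $U_i$ acts as the identity outside its subspace, matching the convention that the path recording oracle is the identity for $x\notin S$. So $C$ becomes an algorithm making $t$ queries to $\Ora{U_1}$ and $t$ queries to $\Ora{U_2}$. I first replace $\Ora{U_1}$ by $\vec{V}_{S_2,t}$, treating $U_2$ as part of the algorithm. Since $U_2$ is sampled independently, averaging over it preserves the bound. \Cref{thm:pathrecordingreverse} bounds the change in trace distance by $9t(t+1)/|S_2|^{1/8}$. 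I then replace $\Ora{U_2}$ by $\vec{V}_{\Ciph\setminus S_2,t}$, treating the path recording oracle for $S_2$ (together with its internal register) as part of the algorithm, at cost $9t(t+1)/|\Ciph\setminus S_2|^{1/8}$. Here $|S_2|=|\Ciph|/2^d$, which is exponential since $d=\frac12\min(\ell_k,\ell_r)$ and $\Ciph$ has $\ell_m+\ell_{\wt r}+\ell_a+\ell_r$ bits. Thus both terms are $\poly(n)\cdot 2^{-\Omega(n)}=\negl(n)$. Averaging over $S_2,k,R$ and the ciphertext strings, the winning probabilities differ by at most the sum of these terms.

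The main obstacle I expect is the bookkeeping in the decomposition step. \Cref{thm:pathrecordingreverse} is stated for a single Haar unitary with all four query types, so I must check that a $U_1U_2$ query of each type (for instance, a transpose query becomes $U_2^TU_1^T$, with the order reversed) splits correctly into queries to the individual oracles. I must also check that the two path recording oracles, with separate internal registers, compose into the intended two-register simulator. A further subtlety is that the theorem's bound concerns the adversary's final state, while the winning predicate depends on $S_2$. This is handled by fixing $S_2$ before applying the theorem, which is legitimate because $S_2$ is sampled before, and independently of, $U_1$ and $U_2$.
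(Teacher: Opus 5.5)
Your proposal is correct and follows the same route as the paper. The paper's one-line proof cites \Cref{cor:effpr} and implicitly relies on \Cref{thm:pathrecordingreverse} to replace the Haar oracle by its path-recording simulator built from $R_{S_2}$. In the paper's $H_4$ the cloner only queries $\Ora{U}_2$ (the sampled $U_1$ is unused), so only your second replacement, costing $9t(t+1)/|\Ciph\setminus S_2|^{1/8}$, is needed; your extra step for $U_1$ is harmless and covers the reading of \Cref{lem:hardhybtrick} in which $C$ queries $\Ora{U_1U_2}$.
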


    \begin{proof}
        This follows immediately from~\Cref{cor:effpr}.
    \end{proof}

    \begin{claim}
        $|\Pr[H_6\to 1]-\Pr[H_5\to 1]|\leq \negl(n)$
    \end{claim}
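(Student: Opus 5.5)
I expect $H_6$ to be the hybrid that removes the planted structure of $S_2$: up to $H_5$, $S_2 = S_1\cup S'$ with $S_1=\{(x,r): r\in R, x\in\wt{\Ciph}\}$. After $H_5$ the adversary sees $S_2$ only through three things: the reflection $R_{S_2}$ used inside $\Ora{\Sim}^{R_{S_2}}$ (and $R_{\Ciph\setminus S_2}$, the same oracle up to sign); the classical samples $s\xor k$ with $s\gets S_2$ that replaced the ciphertexts in $H_4$; and the final test $s\in S_2$. So $H_6$ should replace $S_2$ by a uniformly random set $\hat S_2$ of the same size $M_2=|\Ciph|/2^d$, as required to match $FindSXOR(D)$. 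The plan is a coupling plus a one-way-to-hiding style argument. Sample $S'$ as before, and let $T$ be a uniformly random set of size $M_1=|S_1|$ disjoint from $S'$. Put $\hat S_2 = T\cup S'$, which is (essentially) uniform of size $M_2$. The two worlds then differ only on the set $\Delta = S_1\cup T$ (mod the sign convention).

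First I would handle the classical samples. Each sample is uniform over $S_2$ (resp. $\hat S_2$), so it lands in $S_1$ (resp. $T$) with probability $M_1/M_2\le 2^{-\ell_r/2}$. With $t$ samples, a union bound says that except with probability $t\cdot M_1/M_2=\negl(n)$, all samples come from $S'$ in both worlds. Under the coupling those samples are then identically distributed. I condition on this event for the rest of the argument.

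Next I would handle the reflection queries with the standard BBBV/one-way-to-hiding bound. The two oracles $R_{S_2}$ and $R_{\hat S_2}$ agree outside $\Delta$, so the output distance is at most $2\sqrt{t\cdot\sum_i q_i}$, where $q_i$ is the weight of the $i$-th query on $\Delta$ in (say) the $\hat S_2$ world. In that world, $S_1$ is determined by $R$, which is uniform and independent of everything the adversary sees (since $R$ no longer affects any sample). Likewise $T$ is a uniform set of size $M_1$ inside a universe of size about $|\Ciph|$ and enters the view only through $R_{\hat S_2}$. I would bound the expected weight of any fixed query state on $S_1$ by the density $|S_1|/|\Ciph| = t/2^{\ell_r}$. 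For $T$, the quantity $\E[q_i]$ is itself a search quantity; I would bound it with a nested hybrid in which $T$ is first removed from the oracle. Both weights are negligible, so the sum is $\poly(n)\cdot\negl(n)$. The final winning test $s\in S_2$ versus $s\in\hat S_2$ differs only if $s\in\Delta$, and this is bounded the same way: measuring the output is one additional "query".

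The main obstacle is the self-reference for $T$: the weight on $T$ is measured in a world whose oracle already depends on $T$. I would first try to sidestep this by defining the intermediate world as the one with oracle $R_{S'}$, where both $S_1$ and $T$ are hidden and independent of the view. I would then compare each of the two worlds to that intermediate world separately, so that each comparison only needs the weight on a set that is truly independent of the adversary's view. The parameter choice $d=\frac12\min(\ell_k,\ell_r)$ makes every density involved at most $2^{-\Omega(\ell_r)}$, which is negligible.
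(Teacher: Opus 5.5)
Your proposal is correct and is essentially the paper's argument: a union bound shows the $t$ classical samples avoid $S_1$ except with negligible probability (precisely $t\cdot|S_1|/|S_2| = t^2\cdot 2^{d-\ell_r}$, since $|S_1|=t|\wt{\Ciph}|$), and then a search/one-way-to-hiding bound handles the reflection queries and the final output landing in $S_1$, which the paper phrases as a reduction to Grover optimality. The paper's $H_6$ simply draws $S$ with the same distribution as $S'$ (size $|\Ciph|/2^d - t|\wt{\Ciph}|$, i.e.\ it just drops $S_1$), so it coincides with your intermediate $R_{S'}$ world; only your first comparison is needed, where, as you note, the weight on $S_1$ is correctly measured in the run independent of $S_1$, and the padding set $T$ is unnecessary.
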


    \begin{proof}
        It is sufficient to bound the probability that given access to $R_{S_2},\mathcal{O}$ as in $H_5$, you can output something in $S_1$.
    
        Note that the probability that in $H_5$, $\mathcal{O}$ ever outputs something in $S_1$ is bounded by 
        $$t\cdot \frac{|S_1|}{|S_2|} = \poly(n)\cdot \frac{2^d\cdot |\wt{\Ciph}|}{|\Ciph|} \leq 2^{d-\ell_r}\leq \negl(n)$$

        Thus, the problem reduces to unstructured search. We want to bound the probability that given polynomial queries to $R_{S_2}$. This is clearly bounded by the probability that given polynomial queries to $R_{S_1}$, you can find something in $S_1$. But this is negligible by optimality of Grover's.
    \end{proof}

    \begin{claim}
        $H_f\equiv H_6$
    \end{claim}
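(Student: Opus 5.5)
The plan is to prove this by direct inspection: unfold both games and show that they define the same joint distribution over the adversary's view and the winning condition. No approximation step should be needed. I expect $H_6$ to be the game in which the previous claim removed $S_1$ from the picture. That is, a set $S$ of size $M$ is sampled uniformly, queries to $\Ora{U}$ are answered by $\Ora{\Sim}^{R_S}$, each encryption query is answered with the basis state $X^k\ket{s}$ for a fresh uniform $s \gets S$, and $C$ wins if its output lies in $S$. The game $H_f = FindSXOR(D)$ instead samples $S \gets {[N]\choose M}$ and $k$, hands $D$ the classical set $S \xor k$ together with $R_S$, and $D$ runs $C$ as described above.

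\textbf{Step 1 (sampling).} I will match the randomness of the two games. In both, $S$ is uniform of size $M = \frac{|\Ciph|}{2^d}-t$ and $k$ is uniform in $\{0,1\}^{\ell_k}$, sampled independently. The distributions should agree exactly, provided the previous hybrid already conditioned on, or renamed, the set so that its size matches $M$. I will check this bookkeeping first. If $H_6$ instead still samples $S'$ of size $\frac{|\Ciph|}{2^d}-|S_1|$, then I will set $M$ to that value in the application of \Cref{lem:grovermod}. This changes nothing in the negligibility calculation of \Cref{cor:grovermod}.

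\textbf{Step 2 (oracle queries).} I will observe that in both games every query to $\Ora{U}$ is answered by the same efficient simulator $\Ora{\Sim}^{R_S}$ from \Cref{cor:effpr}, with the same $t_{max}=t$, acting on an internal register that starts in the same state. Hence these query answers coincide as quantum channels, jointly with the rest of the game.

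\textbf{Step 3 (encryption queries).} This is the only step with content. In $H_6$ the answer is $X^k\ket{s} = \ket{s \xor (k\|0)}$ with $s \gets S$ uniform. In $H_f$, $\Sim_{\Enc}$ outputs a uniform element of $S\xor k$. Since $s \mapsto s\xor k$ is a bijection from $S$ onto $S\xor k$, these two classical basis states have the same distribution. Moreover, in both games fresh independent samples are drawn on each query, so the joint distribution over all $t$ queries also agrees.

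The one subtlety to verify is that $D$ never needs $k$ or $S$ individually. It only needs the set $S\xor k$, which it receives as input, and the reflection $R_S$, which it has as its oracle. Finally, the output and winning condition ($s\in S$) are literally identical in the two games, so $H_f \equiv H_6$.
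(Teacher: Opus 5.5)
Your proposal is correct and is essentially the paper's argument, which simply asserts that $H_6$ and $H_f$ are identical by observation. Your unfolding (same $S$ and $k$, the same simulator $\Ora{\Sim}^{R_S}$, the bijection $s\mapsto s\xor k$ matching the encryption responses, and the same winning condition) is exactly the content of that observation; note that both games already sample $S$ of size $\frac{|\Ciph|}{2^d}-t\cdot|\wt{\Ciph}|$, so the fallback case you anticipate in Step 1 is the one that actually occurs.
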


    \begin{proof}
        These are identical games by observation.
    \end{proof}

    We then conclude the proof of the lemma by applying apply~\Cref{cor:grovermod}. In particular, we set $S=S$, $\ell_n = \ell_m + \ell_a + \ell_{\wt{r}} + \ell_r = |\Ciph|$, $\ell_k=\ell_k$, $\ell_M = \ell_n - d$. We then observe that $d=\frac{1}{2}\min(\ell_k,\ell_r)$, $\ell_M \geq \ell_n - \frac{\ell_k}{2}$ and $\ell_n-\ell_M=d\geq n^c$ for some $c>0$, and so~\Cref{cor:grovermod} gives us $\Pr[H_f\to 1]\leq \negl(n)$.

    The hybrids then imply that $\Pr[H_1\to 1]\leq \negl(n)$, and so we are done.
\end{proof}

\begin{figure}
\fbox{
    \begin{minipage}{0.48\textwidth}
        $H_0 = FindSEnc(C)$:\\
        $k\gets \{0,1\}^{\ell_n}$\\
        $R = \{r_i\}_{i\in [t]}\gets \{0,1\}^{\ell_r}$\\
        $i \gets 1$\\
        $S_1 \gets \{(x,r):r\in R,x\in \wt{\Ciph}\}$\\
        $S' \gets {\Ciph \choose \frac{|\Ciph|}{2^d} - |S_1|}$\\
        $S_2 \gets S_1 \cup S'$\\
        $U_1 \gets \Haar(S_2)$\\
        $U_2 \gets \Haar(\{0,1\}^n\setminus S_2)$\\
        $C^{\Ora{U}_2,\mathcal{O}}\to s$
        Output $1$ if and only if $s\in S$\\
        \newline
        $\mathcal{O}(m)$:\\
        $r_{orig} \gets \{0,1\}^{\ell_{\wt{r}}}$\\
        $r \gets r_i$\\
        $i \gets i+1$\\
        Output $X^k U_1 \ket{m,r_{orig},0,r_i}$\\
    \hrule
    \subcaption*{Hybrid $0$: The game $FindSEnc(C)$.}
    \end{minipage}
    }
    \fbox{
    \begin{minipage}{0.48\textwidth}
        $H_1$:\\
        $k\gets \{0,1\}^{\ell_n}$\\
        {\color{red}$R \gets {\{0,1\}^{\ell_r} \choose t}$}\\
        $i \gets 1$\\
        $S_1 \gets \{(x,r):r\in R,x\in \wt{\Ciph}\}$\\
        {\color{red}$S' \gets {\Ciph \choose \frac{|\Ciph|}{2^d} - t\cdot |\wt{\Ciph}|}$}\\
        $S_2 \gets S_1 \cup S'$\\
        $U_1 \gets \Haar(S_2)$\\
        $U_2 \gets \Haar(\{0,1\}^n\setminus S_2)$\\
        $C^{\Ora{U}_2,\mathcal{O}}\to s$
        Output $1$ if and only if $s\in S$\\
        \newline
        $\mathcal{O}(m)$:\\
        $r_{orig} \gets \{0,1\}^{\ell_{\wt{r}}}$\\
        $r \gets r_i$\\
        $i \gets i+1$\\
        Output $X^k U_1 \ket{m,r_{orig},0,r_i}$\\
    \hrule
    \subcaption*{Hybrid $1$: Here we guarantee that there are no collisions in $R$.}
    \end{minipage}
    }
    
\hfill
\caption{The hybrid games used in the proof of~\Cref{lem:hardhybtrick}.}\label{fig:hardlemmahybs}
\end{figure}

\begin{figure}
\fbox{
    \begin{minipage}{0.48\textwidth}
        $H_2$:\\
        $k\gets \{0,1\}^{\ell_n}$\\
        {$R \gets {\{0,1\}^{\ell_r} \choose t}$}\\
        $i \gets 1$\\
        $S_1 \gets \{(x,r):r\in R,x\in \wt{\Ciph}\}$\\
        {$S' \gets {\Ciph \choose \frac{|\Ciph|}{2^d} - t\cdot |\wt{\Ciph}|}$}\\
        $S_2 \gets S_1 \cup S'$\\
        {\color{red}$\{\ket{\phi_s}\}_{s \in S_1} \gets \HaarSt(S)$}\\
        $U_2 \gets \Haar(\{0,1\}^n\setminus S_2)$\\
        $C^{\Ora{U}_2,\mathcal{O}}\to s$
        Output $1$ if and only if $s\in S$\\
        \newline
        $\mathcal{O}(m)$:\\
        $r_{orig} \gets \{0,1\}^{\ell_{\wt{r}}}$\\
        $r \gets r_i$\\
        $i \gets i+1$\\
        {\color{red}Output $X^k \ket{\phi_{m,r_{orig},0^{\ell_a},r}}$}\\
    \hrule
    \subcaption*{Hybrid $2$: Instead of sampling $U_1 \gets \Haar(S_2)$, we sample each $\ket{\phi_s} = U_1 \ket{s}$ from $\HaarSt(S_2)$.}
    \end{minipage}
    }
    \fbox{
    \begin{minipage}{0.48\textwidth}
        $H_3$:\\
        $k\gets \{0,1\}^{\ell_n}$\\
        {$R \gets {\{0,1\}^{\ell_r} \choose t}$}\\
        $i \gets 1$\\
        $S_1 \gets \{(x,r):r\in R,x\in \wt{\Ciph}\}$\\
        {$S' \gets {\Ciph \choose \frac{|\Ciph|}{2^d} - t\cdot |\wt{\Ciph}|}$}\\
        $S_2 \gets S_1 \cup S'$\\
        $U_2 \gets \Haar(\{0,1\}^n\setminus S_2)$\\
        $C^{\Ora{U}_2,\mathcal{O}}\to s$
        Output $1$ if and only if $s\in S$\\
        \newline
        $\mathcal{O}(m)$:\\
        {\color{red}$\ket{\phi}\gets \HaarSt(S_2)$\\
        Output $X^k \ket{\phi}$}\\
    \hrule
    \subcaption*{Hybrid $3$: Instead of sampling $\ket{\phi_{s}}$ at the beginning, we sample it live when it is needed.}
    \end{minipage}
    }
    
\hfill
\caption{The hybrid games used in the proof of~\Cref{lem:hardhybtrick}.}\label{fig:hardlemmahybs2}
\end{figure}

\begin{figure}
\fbox{
    \begin{minipage}{0.48\textwidth}
        $H_4$:\\
        $k\gets \{0,1\}^{\ell_n}$\\
        {$R \gets {\{0,1\}^{\ell_r} \choose t}$}\\
        $i \gets 1$\\
        $S_1 \gets \{(x,r):r\in R,x\in \wt{\Ciph}\}$\\
        {$S' \gets {\Ciph \choose \frac{|\Ciph|}{2^d} - t\cdot |\wt{\Ciph}|}$}\\
        $S_2 \gets S_1 \cup S'$\\
        $U_1 \gets \Haar(S_2)$\\
        $U_2 \gets \Haar(\{0,1\}^n\setminus S_2)$\\
        $C^{\Ora{U}_2,\mathcal{O}}\to s$
        Output $1$ if and only if $s\in S$\\
        \newline
        $\mathcal{O}(m)$:\\
        {\color{red}$s \gets S_2$\\
        Output $s \xor k$\\}
    \hrule
    \subcaption*{Hybrid $4$: Instead of sampling $\ket{\phi_s}$ as a Haar random state, we sample it as a random element of $S$.}
    \end{minipage}
    }
    \fbox{
    \begin{minipage}{0.48\textwidth}
        $H_5$:\\
        $k\gets \{0,1\}^{\ell_n}$\\
        {$R \gets {\{0,1\}^{\ell_r} \choose t}$}\\
        $i \gets 1$\\
        $S_1 \gets \{(x,r):r\in R,x\in \wt{\Ciph}\}$\\
        {$S' \gets {\Ciph \choose \frac{|\Ciph|}{2^d} - t\cdot |\wt{\Ciph}|}$}\\
        $S_2 \gets S_1 \cup S'$\\
        $C^{{\color{red}\Ora{\Sim}^{R_{S_2}}(t_{max})},\mathcal{O}}\to s$
        Output $1$ if and only if $s\in S$\\
        \newline
        $\mathcal{O}(m)$:\\
        {$s \gets S_2$\\
        Output $s \xor k$\\}
    \hrule
    \subcaption*{Hybrid $5$: We replace $\vec{U}_2$ with $\Ora{\Sim}^{R_{S_2}}$.}
    \end{minipage}
    }
    
\hfill
\caption{The hybrid games used in the proof of~\Cref{lem:hardhybtrick}.}\label{fig:hardlemmahybs3}
\end{figure}

\begin{figure}
\fbox{
    \begin{minipage}{0.48\textwidth}
        $H_6$:\\
        $k\gets \{0,1\}^{\ell_n}$\\
        {$R \gets {\{0,1\}^{\ell_r} \choose t}$}\\
        $i \gets 1$\\
        {\color{red}$S \gets {\Ciph \choose \frac{|\Ciph|}{2^d} - t\cdot |\wt{\Ciph}|}$}\\
        $C^{{\Ora{\Sim}^{R_{\color{red}S}}(t_{max})},\mathcal{O}}\to s$
        Output $1$ if and only if $s\in S$\\
        \newline
        $\mathcal{O}(m)$:\\
        {$s \gets {\color{red}S}$\\
        Output $s \xor k$\\}
    \hrule
    \subcaption*{Hybrid $6$: We no longer include $S_1$ inside of $S_2$.}
    \end{minipage}
    }
    \fbox{
    \begin{minipage}{0.48\textwidth}
        $H_f = FindSXOR(D)$:\\
        {$S \gets {\Ciph \choose \frac{|\Ciph|}{2^d} - t\cdot |\wt{\Ciph}|}$}\\
        $D^{R_S}(S\xor k)\to s$
        Output $1$ if and only if $s\in S$\\
    \hrule
    \subcaption*{The final hybrid: $FindSXOR$.}
    \end{minipage}
    }
    
\hfill
\caption{The hybrid games used in the proof of~\Cref{lem:hardhybtrick}.}\label{fig:hardlemmahybs4}
\end{figure}

\section{Proof of hybrids for~\Cref{lem:reprogramming}}\label{sec:rephyb}

We provide the formal proofs that $$\norm{\A^{\Ora{V_i}} - \A^{\Ora{V_{i-1}}}}_2 \leq \negl(n)$$ for all $i$.

\begin{lemma}
    $$\norm{\A^{\Ora{V_1}}-\A^{\Ora{V_0}}}_2 \leq \negl(n)$$
\end{lemma}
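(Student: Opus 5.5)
The plan is to condition on the random set $S_2$ and apply the inverse/conjugate/transpose-capable path recording theorem (\Cref{thm:pathrecordingreverse}) separately to each of the two independent Haar unitaries $U_1 \gets \Haar(S_2)$ and $U_2 \gets \Haar([N]\setminus S_2)$. Since the initial state $\ket{\phi_0}$ holds a uniform superposition over $S_2$ in the last register and every $\Ora{V_1}$ operator is block diagonal in the basis $\ket{S_2}$, the final state of $\A^{\Ora{V_1}}$ (after tracing out $\Reg{St}$) is exactly $\E_{S_2}$ of the final state of $\A$ run against the oracle $\Ora{V_1}$ restricted to the block $\ket{S_2}$. By convexity of trace distance, it therefore suffices to bound the distance for each fixed $S_2$.

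For fixed $S_2$, I will first check that $V_1$ restricted to the $S_2$-block factors as $V^{(1)}\oplus V^{(2)}$. Here $V^{(1)}$ is exactly the path recording oracle $V_{S_2,t}$ acting on $(L_1,R_1)$ and identity off $S_2$. Likewise, $V^{(2)}$ is $V_{[N]\setminus S_2,t}$ acting on $(L_2,R_2)$ and identity on $S_2$. The same factorization should hold for $\ol{V_1}$. The point needing care is that the composite formula $V_1 = V_1^L(I-V_1^RV_1^{R,\dagger})+(I-V_1^LV_1^{L,\dagger})V_1^{R,\dagger}$, with $V_1^L=V_1^{L_1}V_1^{L_2}$ and $V_1^R=V_1^{R_1}V_1^{R_2}$, agrees with the direct sum of the two individual path-recording oracles. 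I would verify this by splitting the input by whether $D_x\in S_2$. On $D_x\in S_2$, the operators $V_1^{L_2}$ and $V_1^{R_2}$ act as identity, and their adjoints also fix this subspace, since the images of $V_1^{L_2}$ and $V_1^{R_2}$ only have $x$-register values outside $S_2$. So the formula collapses to the single-set path recording formula for $S_2$, and symmetrically on the complement. Using \Cref{lem:outputortho} to confirm these are partial isometries, $\Ora{V_1}$ on the block is exactly $\Ora{V_{S_2}}\oplus\Ora{V_{[N]\setminus S_2}}$ with independent registers.

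Next, I would hybridize, replacing the oracle one component at a time. First replace $U_1$ by $V_{S_2,t}$, viewing the $U_2$ part (an independently sampled Haar unitary, averaged) as part of the adversary, which is allowed since $U_2$ acts on the orthogonal block and the adversary is arbitrary. \Cref{thm:pathrecordingreverse} bounds this step by $9t(t+1)/M_2^{1/8}$. Then replace $U_2$ by $V_{[N]\setminus S_2,t}$, now treating the already-recorded $V_{S_2}$ together with its registers as part of the adversary. This step costs $9t(t+1)/(N-M_2)^{1/8}$. Both quantities are negligible because $M_2\ge M_1\cdot(M_2/M_1)$ is superpolynomial, given that $M_2/M_1$ is superpolynomial, and $N-M_2\ge N/2$. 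One subtlety is that the adversary embedding $U_1$ or $U_2$ must be able to control the query by whether $x\in S_2$, but this is a fixed projector, so the embedding is fine.

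I expect the main obstacle to be the bookkeeping in the second step: showing that the composite $V_1$ built from products of the $L_1,L_2$ and $R_1,R_2$ components really equals the direct sum of the two standard path recording oracles, including the inverse and conjugate branches and the cross terms involving $I-V^LV^{L,\dagger}$. Once that identity is established, the rest is convexity plus two invocations of \Cref{thm:pathrecordingreverse}.
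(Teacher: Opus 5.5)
Your proposal is correct and follows the same route as the paper. The paper also splits the space by whether $D_x\in S_2$, checks algebraically that $V_1=PR_1PR_2$, and then appeals to \Cref{thm:pathrecordingreverse}; here $PR_1,PR_2$ are the path-recording oracles for $U_1$ and $U_2$, each padded with the identity on the other block, which is exactly your direct-sum identity. Your explicit conditioning on $S_2$ and your two-step hybrid, absorbing one unitary into the adversary at a time, spell out how that theorem is applied to the pair of independent unitaries, a step the paper leaves implicit.
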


\begin{proof}
    While this seems like it should immediately follow from ~\Cref{thm:pathrecordingreverse}, in fact there is a slight mismatch in terms of how the operators are defined. 
    
    In particular, we can write $\mathcal{H}([N])\otimes \mathcal{H}(R_{[N]})^{\otimes 4}\otimes \mathcal{H}(\mathcal{P}([N])) = A_1 \oplus A_2$ for some subspaces $A_1,A_2$ defined as follows. $A_1$ will be spanned by states of the form $\ket{D}\ket{S_2}$ where $D_x\in S_2$, and $A_2$ will be spanned by states of the form $\ket{D}\ket{S_2}$ where $D_x\notin S_2$.

    For any $X$, we can write $V_1^X = V_{1,1}^X + V_{1,2}^X$ where $V_{1,1}^X$ is a partial isometry on $A_1$ and $V_{1,2}^X$ is a partial isometry on $A_2$. 

    Then, the proper path-recording oracle for $U_1$ is
    $$PR_1 = (V_{1,1}^{L_1}\cdot (I_{A_1}-V_{1,1}^{R_1}\cdot V_{1,1}^{R_1,\dagger}) + (I_{A_1}-V_{1,1}^{L_1}\cdot V_{1,1}^{L_1,\dagger})V_{1,1}^{R_1}) + I_{A_2}$$
    $$\ol{PR}_1 = (\ol{V}_1^{L_1}\cdot (I_{A_1}-\ol{V}_1^{R_1}\cdot \ol{V}_1^{R_1,\dagger}) + (I_{A_1}-\ol{V}_1^{L_1}\cdot \ol{V}_1^{L_1,\dagger})\ol{V}_1^{R_1}) + I_{A_2}$$

    Similarly, the proper path-recording oracle for $U_2$ is
    $$PR_2 = I_{A_1}+ (V_{1,2}^{L_2}\cdot (I_{A_2}-V_{1,2}^{R_2}\cdot V_{1,2}^{R_2,\dagger}) + (I_{A_2}-V_{1,2}^{L_2}\cdot V_{1,2}^{L_2,\dagger})V_{1,2}^{R_2})$$
    $$\ol{PR}_2 = I_{A_1}+ (\ol{V}_1^{L_2}\cdot (I_{A_2}-\ol{V}_1^{R_2}\cdot \ol{V}_1^{R_2,\dagger}) + (I_{A_2}-\ol{V}_1^{L_2}\cdot \ol{V}_1^{L_2,\dagger})\ol{V}_1^{R_2})$$

    ~\Cref{thm:pathrecordingreverse} tells us that 
    $$\norm{\A^{(PR_1PR_2,PR_2^\dagger PR_1^\dagger,\ol{PR}_1\ol{PR}_2,\ol{PR}_1^\dagger \ol{PR}_2^\dagger)}-\A^{\Ora{V_0}}}_2^2\leq \negl(n)$$

    We will show that in fact $V_1=PR_1PR_2$ and so~\Cref{thm:pathrecordingreverse} applies directly.

    We make the following simple observations
    \begin{enumerate}
        \item $V_1^{L_1} = V_{1,1}^{L_1} + I_{A_2}$
        \item $V_1^{R_1} = V_{1,1}^{R_1} + I_{A_2}$
        \item $V_1^{L_2} = I_{A_1} + V_{1,1}^{R_1}$
        \item $V_1^{R_2} = I_{A_1} + V_{1,1}^{R_2}$
    \end{enumerate}
    We can then compute
    \begin{equation}
        \begin{split}
            &V_1^{L_1}V_1^{L_2}(I-V_1^{R_1}V_1^{R_2}V_1^{R_2,\dagger}V_1^{R_1,\dagger})\\
            &=(V_{1,1}^{L_1} + V_{1,2}^{L_2})(I_{A_1}+I_{A_2} - (V_{1,1}^{R_1} + V_{1,2}^{R_2})(V_{1,1}^{R_1,\dagger} + V_{1,2}^{R_2,\dagger}))\\
            &=V_{1,1}^{L_1}(I_{A_1}-V_{1,1}^{R_1}\cdot V_{1,1}^{R_1,\dagger})+V_{1,2}^{L_2}(I_{A_2}-V_{1,2}^{R_2}\cdot V_{1,2}^{R_2,\dagger})
        \end{split}
    \end{equation}
    and so
    \begin{equation}
        \begin{split}
            V_1&\\
            =& V_1^{L_1}V_1^{L_2}(I-V_1^{R_1}V_1^{R_2}V_1^{R_2,\dagger}V_1^{R_1,\dagger})+(I-V_1^{L_1}V_1^{L_2}V_1^{L_2,\dagger}V_1^{L_1,\dagger})V_1^{R_2,\dagger}V_1^{R_1,\dagger}\\
            =&V_{1,1}^{L_1}(I_{A_1}-V_{1,1}^{R_1}\cdot V_{1,1}^{R_1,\dagger})+V_{1,2}^{L_2}(I_{A_2}-V_{1,2}^{R_2}\cdot V_{1,2}^{R_2,\dagger})\\
            &+(I_{A_1}-V_{1,1}^{L_1}\cdot V_{1,1}^{L_1,\dagger})V_{1,1}^{R_1,\dagger} + (I_{A_2}-V_{1,2}^{L_2}\cdot V_{1,2}^{L_2,\dagger})V_{1,2}^{R_2,\dagger}\\
            =&V_{1,1}^{L_1}(I_{A_1}-V_{1,1}^{R_1}\cdot V_{1,1}^{R_1,\dagger})+(I_{A_1}-V_{1,1}^{L_1}\cdot V_{1,1}^{L_1,\dagger})V_{1,1}^{R_1,\dagger}\\
            &+V_{1,2}^{L_2}(I_{A_2}-V_{1,2}^{R_2}\cdot V_{1,2}^{R_2,\dagger}) + (I_{A_2}-V_{1,2}^{L_2}\cdot V_{1,2}^{L_2,\dagger})V_{1,2}^{R_2,\dagger}\\
            =&(V_{1,1}^{L_1}(I_{A_1}-V_{1,1}^{R_1}\cdot V_{1,1}^{R_1,\dagger})+(I_{A_1}-V_{1,1}^{L_1}\cdot V_{1,1}^{L_1,\dagger})V_{1,1}^{R_1,\dagger}+I_{A_2})\\
            &\cdot(I_{A_1}+V_{1,2}^{L_2}(I_{A_2}-V_{1,2}^{R_2}\cdot V_{1,2}^{R_2,\dagger}) + (I_{A_2}-V_{1,2}^{L_2}\cdot V_{1,2}^{L_2,\dagger})V_{1,2}^{R_2,\dagger})\\
            =&PR_1\cdot PR_2
        \end{split}
    \end{equation}
\end{proof}

\begin{lemma}\label{claim:hyb2}
    $$\norm{\A^{\Ora{V_2}}-\A^{\Ora{V_1}}}_2 \leq \negl(n)$$
\end{lemma}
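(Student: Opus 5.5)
Throughout, write $V_1^X,V_2^X$ for the eight component operators; they differ only in the set from which the fresh output $y$ is drawn. I would bound each difference $\norm{V_2^X - V_1^X}_{op}$, lift these bounds to the assembled oracles using \Cref{lem:opprod} and \Cref{prop:opinv}, convert them to diamond distance with \Cref{prop:diamop}, and sum over queries using \Cref{thm:querydiam}.

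\textbf{Step 1: restrict to reachable states.} The comparison only has to hold on states the adversary can actually reach. After at most $t$ queries starting from $\ket{\phi_0}$, the relations satisfy $|L_1|+|R_1|+|L_2|+|R_2|\le t$. Hence $|D_{1,all}\setminus \Ima(D_1)|\le M_1+2t$, and similarly $|D_{all}\setminus(\Ima(D_2)\cup S_2)|\le M_1+2t+1$ after intersecting appropriately.

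\textbf{Step 2: operator-norm bound for one component.} Fix a basis input $\ket{D}\ket{S_2}$ with $D_x\in S_2$. Then $V_1^{L_1}\ket{D}\ket{S_2}$ is the uniform superposition of $\ket{D_y^{L_1,f}}$ over $y\in A:=S_2\setminus\Ima(D_1)$, and $V_2^{L_1}\ket{D}\ket{S_2}$ is the uniform superposition over the subset $B:=S_2\setminus D_{1,all}\subseteq A$. The overlap of the two images is $\sqrt{|B|/|A|}$, so
\[
\norm{(V_1^{L_1}-V_2^{L_1})\ket{D}\ket{S_2}}_2^2 = 2-2\sqrt{|B|/|A|}\le 2\,\frac{|A\setminus B|}{|A|}\le \frac{2(M_1+2t)}{M_2-t}.
\]
By the hypothesis $M_1/M_2=\negl(n)$, this is negligible. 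For the $D_x\notin S_2$ branch the same computation gives a bound of order $(M_1+2t+1)/(N-M_2-t)$, which is also negligible.

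\textbf{Step 3: from basis inputs to superpositions.} This is the step I expect to be the main obstacle. I need orthogonality of images for distinct basis inputs, so that the per-input bound extends to arbitrary superpositions and becomes an operator-norm bound. This is exactly what \Cref{lem:outputortho} provides.

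There is one subtlety. Orthogonality fails only when $y\in A_1\cap B_1$, that is, when $y$ already lies in the relation sets. Such $y$ are excluded from $B$, but they are not excluded from $A$ for $V_1$. I would therefore write $V_1^X - V_2^X$ as two pieces:
\begin{itemize}
\item the difference of normalisations on the shared $y\in B$, which is handled by \Cref{lem:outputortho} and is block-diagonal;
\item the extra terms with $y\in A\setminus B$.
\end{itemize}
I would bound the second piece by its Frobenius/column structure, using the fact that each output basis state has at most $t+1$ preimages, because it is determined by removing one pair. This costs a factor of $\poly(t)$, which is still negligible.

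\textbf{Step 4: assembling the bound.} Once each $\norm{V_1^X-V_2^X}_{op}\le\epsilon=\negl(n)$ holds on the reachable subspace, the combined operators satisfy
\[
\norm{V_1-V_2}_{op},\ \norm{\ol{V_1}-\ol{V_2}}_{op}\le O(\epsilon).
\]
This follows from \Cref{lem:opprod} applied to the products $V^{L_1}V^{L_2}$ and to the expressions $V^L(I-V^RV^{R,\dagger})+(I-V^LV^{L,\dagger})V^{R,\dagger}$, together with \Cref{prop:opinv} to handle the daggers. Finally, \Cref{prop:diamop} turns these into diamond-norm bounds, and \Cref{thm:querydiam} gives a total error of at most $O(t\epsilon)=\negl(n)$.

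One technical point remains: \Cref{thm:querydiam} concerns channels on the full space. I would handle this by noting that both oracles preserve the subspace of relations of size at most $t$, so the restriction in Step 1 is harmless.
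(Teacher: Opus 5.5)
Your proposal is correct and follows essentially the same route as the paper: a per-basis-input comparison of $V_1^X$ and $V_2^X$ (your quantity $2-2\sqrt{|B|/|A|}$ is exactly the paper's overlap bound), extension to superpositions via the orthogonality of \Cref{lem:outputortho}, and assembly through \Cref{lem:opprod}, \Cref{prop:opinv}, \Cref{prop:diamop} and \Cref{thm:querydiam}. Two small differences are worth noting. Your bound $|A\setminus B|\le M_1+2t$ and your explicit restriction to relations of size at most $t$ are more careful than the paper, whose $\sqrt{(M_2-2t)/M_2}$ silently drops the $S_1$ contribution. Conversely, your separate treatment of the terms with $y\in A\setminus B$ in Step 3 is harmless but unnecessary: $V_1$'s range already excludes $\Ima(D_1)$ (resp.\ $\Dom(D_1)$), so the newly added pair is the only one containing $y$ in that coordinate, and distinct basis inputs therefore already have orthogonal images under $V_1-V_2$.
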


\begin{proof}
    We will show that $\norm{V_2^{L_1} - V_1^{L_1}}_{op} \leq \negl(n)$. The same will hold for all other components of $V_2,\ol{V_2}$ by a symmetric argument. And so by~\Cref{prop:diamop,lem:opprod,prop:opinv} we get that $\norm{V_2 - V_1}_{\diamond} \leq \negl(n)$ (and similarly for the other components of $\Ora{V_2}$). Applying~\Cref{thm:querydiam} gives the full lemma.

    To begin, let $D\in \mathcal{DB}$. Let $A,B \subseteq [N]\setminus D_{1,all}$ be arbitrary subsets. Then
    \begin{equation}
    \begin{split}
        \left(\frac{1}{\sqrt{|A|}}\sum_{y\in A}\bra{D_y^{L_1,f}}\right)\left(\frac{1}{\sqrt{|B|}}\sum_{y\in B}\ket{D_y^{L_1,f}}\right)\\
        =\frac{1}{\sqrt{|A|\cdot|B|}} \sum_{y \in A\cap B} \braket{D_y^{L_1,f}}{D_y^{L_1,f}}\\
        =\frac{|A\cap B|}{\sqrt{|A|\cdot |B|}}\\
        \geq \frac{|A\cap B|}{\sqrt{|A\cap B|\cdot |A\cup B|}}\\
        = \sqrt{\frac{|A\cap B|}{|A\cup B|}}
    \end{split}
    \end{equation}

    For all $D,S_2$ such that $D_x \notin S_2$, it is clear that 
    $$\bra{D,S_2} V_2^{L_1,\dagger} V_1^{L_1}\ket{D,S_2} = \braket{D,S_2}{D,S_2} = 1$$

    For all $D,S_2$ such that $D_x\in S_2$, we have 
    \begin{equation}
        \begin{split}
            &\bra{D,S_2} V_2^{L_1,\dagger} V_1^{L_1}\ket{D,S_2}\\
            &=\left(\frac{1}{\sqrt{|S_2\setminus D_{1,all}|}} \sum_{y \in S_2\setminus D_{1,all}} \bra{D_y^{L_1,f},S_2}\right)\left(\frac{1}{\sqrt{|S_2\setminus \Ima(D_1)|}} \sum_{y \in S_2\setminus \Ima(D_1)} \ket{D_y^{L_1,f},S_2}\right)\\
            &\geq \sqrt{\frac{|(S_2\setminus D_{1,all})\cap (S_2\setminus \Ima(D_1))|}{|(S_2\setminus D_{1,all})\cup (S_2\setminus \Ima(D_1))|}}\\
            &\geq \sqrt{\frac{{|S_2|-2t}}{|S_2|}}\\
            &=\sqrt{\frac{M_2-2t}{M_2}}
        \end{split}
    \end{equation}

    But note that by~\Cref{lem:outputortho} we have that for all $(D,S_2)\neq (D',S_2')$, $V_2^{L_1} \ket{D,S_2}$ and $V_1^{L_1} \ket{D',S_2'}$ are orthogonal. And so for all $$\ket{\phi} = \sum_{D\in \mathcal{DB},S_2\in \mathcal{P}([N])} \alpha_{D,S_2}\ket{D,S_2},$$
    we have
    \begin{equation}
        \begin{split}
            &\bra{\phi} V_2^{L_1,\dagger} V_1^{L_1} \ket{\phi}\\
            &=\sum_{D\in \mathcal{DB},S_2\in \mathcal{P}([N])} |\alpha_{D,S_2}|^2 \bra{D,S_2} V_2^{L_2,\dagger} V_1^{L_1} \ket{D,S_2}\\
            &\geq \sum_{D\in \mathcal{DB},S_2\in \mathcal{P}([N])} |\alpha_{D,S_2}|^2 \sqrt{\frac{M_2-2t}{M_2}}\\
            &=\sqrt{\frac{M_2-2t}{M_2}}\\
            \geq 1-\negl(n)
        \end{split}
    \end{equation}
    and so by~\Cref{prop:opbound} we are done.
\end{proof}

Before proving the next hybrid, we will make a few crucial observations.
\begin{lemma}\label{claim:validswapin}
    Let $D\in \mathcal{DB}$ such that $D_x\in D_{1,all}$. Then for all $y\in [N]\setminus D_{1,all}$,
    $$\Valid(D) \cap \{S_2:y\in S_2\} = \Valid(D_y^{L_1,f}) = \Valid(D_y^{R_1,f}) = \Valid(D_y^{L_1,b}) = \Valid(D_y^{R_1,b})$$
\end{lemma}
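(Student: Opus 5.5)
The plan is to unfold the definition of $\Valid$ and compare the three validity conditions for $D$ (restricted to sets containing $y$) with the three conditions for each of the four updated databases. Write $D' $ for any of $D_y^{L_1,f}, D_y^{R_1,f}, D_y^{L_1,b}, D_y^{R_1,b}$. In each case only the first-part relation changes, by adding one pair whose two coordinates are $D_x$ and $y$ (in some order), and the new query point is $D'_x = y$. So the first step is to record the bookkeeping identities
$$D'_{1,all} = D_{1,all}\cup\{D_x,y\} = D_{1,all}\cup\{y\},\qquad D'_{2,all} = D_{2,all},$$
where the second equality in the first identity uses the hypothesis $D_x\in D_{1,all}$. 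These follow directly from $D_{1,all}=\Ima(D_1)\cup\Dom(D_1)\cup S_1$, since the union of domain and image is symmetric under swapping the pair's orientation and under moving it between $L_1$ and $R_1$. This symmetry is exactly why all four sets coincide.

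Next I will check the conditions for a fixed $S_2$ with $|S_2|=M_2$. For $D'$:
\begin{enumerate}
\item Condition (i) becomes $D_{1,all}\cup\{y\}\subseteq S_2$, that is, condition (i) for $D$ together with $y\in S_2$.
\item Condition (ii) is unchanged.
\item Condition (iii) for $D'$ is vacuous, since $D'_x=y\in D'_{1,all}$.
\end{enumerate}
For $D$, condition (iii) is also vacuous because $D_x\in D_{1,all}$. Hence $S_2\in\Valid(D')$ if and only if $S_2\in\Valid(D)$ and $y\in S_2$, which is the claimed equality.

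I should also confirm that the hypothesis $y\notin D_{1,all}$ is not needed for the set equality itself. It only ensures the updated relations are the genuinely new ones used by $V_2$. So I will note that it is consistent, but not invoke it.

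The main obstacle is being careful that $D_{2,all}$ excludes $S_1$ while $D_{1,all}$ includes it (the asymmetry the paper flags), and that $D_{all}$ is not what appears in $\Valid$. Once the identities for $D'_{1,all}$ and $D'_{2,all}$ are written down, the rest is a direct check.
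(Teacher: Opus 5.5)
Your proof is correct and matches the paper's: both unfold the three validity conditions, using $D'_{1,all}=D_{1,all}\cup\{y\}$ (which relies on $D_x\in D_{1,all}$) and $D'_{2,all}=D_{2,all}$, with condition (iii) vacuous on both sides. Treating all four updates at once via the orientation-symmetry of $\Dom\cup\Ima$ is exactly the ``symmetric argument'' the paper leaves implicit, and you are right that the hypothesis $y\notin D_{1,all}$ is not needed for the set equality.
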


\begin{proof}
    We will only show $\Valid(D) \cap \{S_2:y\in S_2\} = \Valid(D_y^{L_1,f})$, and the rest will follow by symmetric arguments.

    $\subseteq$: Let $S_2\in \Valid(D)$ and $y\in S_2$. We show
    \begin{enumerate}
        \item $(D_y^{L_1,f})_1 \subseteq S_2$: Note that $(D_y^{L_1,f})_1 = D_{1,all}\cup \{(D_x,y)\}$. We know $D_x\in D_{1,all}\subseteq S_2$ and $y\in S_2$, so $(D_y^{L_1,f})_1 \subseteq S_2$
        \item $(D_y^{L_1,f})_2\cap S_2 = \emptyset$: $(D_y^{L_1,f})_2 = D_{2,all}$ and $D_{2,all}\cap S_2=\emptyset$
        \item If $(D_y^{L_1,f})_x \notin (D_y^{L_1,f})_1$, then $(D_y^{L_1,f})_x \notin S_2$: $(D_y^{L_1,f})_x = y\in (D_y^{L_1,f})_1$, so this holds vacuously.
    \end{enumerate}
    and so $S_2\in \Valid(D_y^{L_1,f})$.

    $\supseteq$: Let $S_2 \in \Valid(D_y^{L_1,f})$. We show
    \begin{enumerate}
        \item $D_{1,all} \subseteq S_2$: Follows from $D_{1,all}\subseteq (D_y^{L_1,f})_1\subseteq S_2$
        \item $D_{2,all}\cap S_2 = \emptyset$: Follows from $D_{2,all} \subseteq (D_y^{L_1,f})_2$ and $(D_y^{L_1,f})_2 \cap S_2 = \emptyset$
        \item If $D_x\notin D_{1,all}$, then $D_x\notin S_2$: We know $D_x\in (D_y^{L_1,f})_2$ so this holds vacuously.
        \item $y\in S_2$: $y \in D_{1,all} \cup \{(D_x,y)\} = (D_y^{L_1,f})_1 \subseteq S_2$.
    \end{enumerate}
    and so $S_2\in \Valid(D)\cap \{S_2:y\in S_2\}$.
\end{proof}

\begin{lemma}\label{claim:validswapout}
    Let $D\in \mathcal{DB}$ such that $D_x\notin D_{1,all}$. Then for all $y\in [N]\setminus D_{all}$,
    $$\Valid(D) \cap \{S_2:y\notin S_2\} = \Valid(D_y^{L_2,f}) = \Valid(D_y^{R_2,f}) = \Valid(D_y^{L_2,b}) = \Valid(D_y^{R_2,b})$$
\end{lemma}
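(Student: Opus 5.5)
The proof will mirror that of \Cref{claim:validswapin}: a direct double-inclusion argument, writing it out only for $\Valid(D) \cap \{S_2:y\notin S_2\} = \Valid(D_y^{L_2,f})$. The other three equalities follow by the same argument, since in every case the new pair $(D_x,y)$ or $(y,D_x)$ is added to the second database, and $D_{2,all}$ depends only on $\Dom$ and $\Ima$ of $L_2\cup R_2$. Write $D' = D_y^{L_2,f}$. First I will record the bookkeeping:
\begin{itemize}
    \item $D'_{1,all} = D_{1,all}$, since $S_1$ and the first databases are unchanged.
    \item $D'_{2,all} = D_{2,all}\cup\{D_x,y\}$.
    \item $D'_x = y$.
\end{itemize}

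\textbf{Inclusion $\subseteq$.} Take $S_2\in\Valid(D)$ with $y\notin S_2$. Condition (i) for $D'$ holds because $D'_{1,all}=D_{1,all}\subseteq S_2$. For condition (ii), we have $D_{2,all}\cap S_2=\emptyset$ and $y\notin S_2$ by assumption. We also need $D_x\notin S_2$. Here the hypothesis $D_x\notin D_{1,all}$ enters: by condition (iii) for $D$, it gives $D_x\notin S_2$. For condition (iii) for $D'$, we need that if $y\notin D'_{1,all}$ then $y\notin S_2$, and this holds since $y\notin S_2$ anyway. So $S_2\in\Valid(D')$; the size constraint $|S_2|=M_2$ is unchanged.

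\textbf{Inclusion $\supseteq$.} Take $S_2\in\Valid(D')$. Condition (i) for $D$ follows from $D_{1,all}=D'_{1,all}$. Condition (ii) for $D$ follows since $D_{2,all}\subseteq D'_{2,all}$. Condition (iii) for $D$ follows from $D_x\in D'_{2,all}$, which is disjoint from $S_2$. Finally, $y\in D'_{2,all}$, so $y\notin S_2$.

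\textbf{Where the assumptions are used.} The step needing care is exactly the use of $D_x\notin D_{1,all}$ in the $\subseteq$ direction: without it, $D_x$ could lie in $S_2$ and the new second-database entry would violate condition (ii). I will also check that $y\notin D_{all}$ is what ensures $y\notin D_{1,all}$. Otherwise condition (i) for $D'$ would force $y\in S_2$ while condition (ii) forces $y\notin S_2$, making $\Valid(D')$ empty while the left-hand side need not be; the hypothesis $y\in[N]\setminus D_{all}$ rules this out. The transpose/conjugate variants ($R_2$, and the ``$b$'' versions) only swap the roles of domain and image, so by the definitions of $\Dom(D_2)$ and $\Ima(D_2)$ the same two sets $\{D_x,y\}$ are added, and the argument is verbatim.
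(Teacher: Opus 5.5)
Your proof is correct and is essentially the paper's argument: a double inclusion for $D_y^{L_2,f}$, with the other three cases by symmetry, and the hypothesis $D_x\notin D_{1,all}$ entering through condition (iii) in the $\subseteq$ direction. In the $\supseteq$ direction, your justification of (iii) via $D_x\in D'_{2,all}$ is the correct one; the paper's text slips there and calls that condition vacuous.

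One small correction to your closing aside: if $y\in D_{1,all}$, the left-hand side is also empty, because every valid $S_2$ contains $D_{1,all}$. So the identity actually holds for every $y\in[N]$, and the hypothesis $y\notin D_{all}$ is not used in this set identity; it only matters later, when counting $|\Valid(D_y^{L_2,f})|$.
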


\begin{proof}
    We will only show $\Valid(D) \cap \{S_2:y\notin S_2\} = \Valid(D_y^{L_2,f})$, and the rest will follow by symmetric arguments.

    $\subseteq$: Let $S_2\in \Valid(D)$ and $y\notin S_2$. We show
    \begin{enumerate}
        \item $(D_y^{L_2,f})_1 \subseteq S_2$: This follows from $(D_y^{L_2,f})_1 = D_{1,all}\subseteq S_2$.
        \item $(D_y^{L_2,f})_2\cap S_2 = \emptyset$: We know $(D_y^{L_1,f})_2 = D_{2,all} \cup \{(D_x,y)\}$. $y\notin S_2$ by assumption. Since $S_2$ is valid and $D_x\notin D_{1,all}$, $D_x\notin S_2$ and $D_{2,all}\cap S_2=\emptyset$.
        \item If $(D_y^{L_2,f})_x \notin (D_y^{L_2,f})_1$, then $(D_y^{L_2,f})_x \notin S_2$: $(D_y^{L_2,f})_x = y \notin S_2$.
    \end{enumerate}
    and so $S_2\in \Valid(D_y^{L_2,f})$.

    $\supseteq$: Let $S_2 \in \Valid(D_y^{L_2,f})$. We show
    \begin{enumerate}
        \item $D_{1,all} \subseteq S_2$: Follows from $D_{1,all}= (D_y^{L_2,f})_1\subseteq S_2$
        \item $D_{2,all}\cap S_2 = \emptyset$: Follows from $D_{2,all} \subseteq (D_y^{L_1,f})_2$ and $(D_y^{L_1,f})_2 \cap S_2 = \emptyset$.
        \item If $D_x\notin D_{1,all}$, then $D_x\notin S_2$: We know $D_x\in D_{1,all}$ so this holds vacuously.
        \item $y\notin S_2$: $y \in D_{2,all} \cup \{(D_x,y)\} = (D_y^{L_1,f})_2$. But $(D_y^{L_1,f})_2\cap S_2=\emptyset$, so $y\notin S_2$.
    \end{enumerate}
    and so $S_2\in \Valid(D)\cap \{S_2:y\in S_2\}$.
\end{proof}

\begin{lemma}\label{claim:painful}
    $$\norm{\A^{\Ora{V_3}}-\A^{\Ora{V_2}}}_2 \leq \negl(n)$$
\end{lemma}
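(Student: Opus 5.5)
We compare the two runs by a standard ``approximately invariant subspace'' induction. Write the global state after $j$ rounds of $\A^{\Ora{V_2}}$ as $\ket{\psi_j}$ and after $j$ rounds of $\A^{\Ora{V_3}}$ as $\ket{\psi'_j}$; both start from $\ket{\phi_0}$. First observe that $\ket{\phi_0}\in\Span(\Pi_{good})$: with all relations empty and $D_x$ (together with any adversary workspace) tensored in, $\Valid(D)$ is the set of all $S_2\supseteq S_1$ of size $M_2$. There is one exception, the component where $D_x\in S_2\setminus S_1$, and its weight is at most $M_2/N=\negl(n)$ for any fixed input. In our setup $S_1\subseteq S_2$ is enforced by the sampling, so $\ket{\phi_0}$ should be taken to be uniform over $S_2\supseteq S_1$, as in the lemma.

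We then show that each of the $t$ query steps, and each adversary step in between, maps $\Span(\Pi_{good})$ to within $\negl(n)$ of $\Span(\Pi_{good})$ in operator norm. Once this holds, the relation $V_3^X=\Pi_{good}V_2^X\Pi_{good}$ gives $\norm{(V_2^X-V_3^X)\Pi_{good}}_{op}\le\negl(n)$. A hybrid argument then yields $\norm{\ket{\psi_t}-\ket{\psi'_t}}\le O(t)\cdot\negl(n)$, which gives the claim.

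The key steps, in order, are as follows.

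First, for forward-type components ($V_2^{L_1},V_2^{R_1},\ol{V_2}^{L_1},\ol{V_2}^{R_1}$ on $D_x\in D_{1,all}$), \Cref{claim:validswapin} shows the map is \emph{exactly} invariant. Applying $V_2^{L_1}$ to $\ket{ValSt_D}$ gives
$$\frac{1}{\sqrt{|\Valid(D)|}}\sum_{S_2\in\Valid(D)}\frac{1}{\sqrt{|S_2\setminus D_{1,all}|}}\sum_{y\in S_2\setminus D_{1,all}}\ket{D_y^{L_1,f}}\ket{S_2},$$
and the swap identity regroups this as $\sum_y c_y\ket{ValSt_{D_y^{L_1,f}}}$. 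Here $|S_2\setminus D_{1,all}|=M_2-|D_{1,all}|$ is independent of $S_2$, and $|\Valid(D_y)|$ is independent of $y$ by symmetry, so the coefficients are uniform.

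The case $D_x\notin D_{1,all}$ is treated the same way with \Cref{claim:validswapout}. Here the amplitude on $y\in[N]\setminus(S_2\cup D_{all})$ regroups into the states $\ket{ValSt_{D_y^{L_2,f}}}$, again with $S_2$-independent normalization, because $|[N]\setminus(S_2\cup D_{all})|=N-M_2-|D_{2,all}\cup\{D_x\}|$ on valid $S_2$.

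Second, the adversary's unitaries act only on $\Reg{In}$ and its workspace. They can move $D_x$ to a fresh value $x\notin D_{1,all}$, and then validity additionally demands $x\notin S_2$. The overlap lost is $\Pr_{S_2\gets\Valid(D)}[x\in S_2]$, which is about $(M_2-|D_{1,all}|)/(N-|D_{all}|)=\negl(n)$. Summing coherently over $x$ gives an operator-norm error of $\negl(n)$, since the set $\Valid(D)$ is symmetric under permutations of $[N]\setminus(D_{1,all}\cup D_{2,all})$. In addition, a freshly revealed $x$ may come to lie in $D_{1,all}$ only through $S_1$, which is public and already included.

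Third, the adjoint queries ($V_2^{L_1,\dagger}$ and so on, which appear through the $V^{R,\dagger}$ terms and through inverse queries) are the main obstacle. Applying $V_2^{L_1,\dagger}$ to $\ket{ValSt_{D'}}$ removes the last pair $(x,y)$ from $L_1$. The resulting $D$ has $\Valid(D)\supsetneq\Valid(D')$, since $y$ is no longer forced into $S_2$. So the output is $\ket{D}$ tensored with a uniform superposition over only the subset $\{S_2\ni y\}$, and it is not exactly in $\Span(\Pi_{good})$.

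The intended fix is to show that, summed over all $y$ with $D_y^{L_1,f}$ in the support, the superpositions recombine. I plan to do this by a direct combinatorial count, the one deferred to \Cref{sec:combinatorics}. Project the output onto $\ket{ValSt_D}$ and bound the squared overlap from below by $1-O(|D_{1,all}|/M_2+M_2/N)$. This uses that $S_2\setminus D_{1,all}$ is a uniformly random $(M_2-|D_{1,all}|)$-subset of the complement, so the pre-image distribution over $y$ matches the uniform one up to ratios $\frac{M_1+2t}{M_2}$ and $\frac{M_2}{N}$. The same holds for the $L_2,R_2$ versions, via the complement count.

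Finally, combine these pieces. Each step is close to $\Span(\Pi_{good})$, the operators are partial isometries, and the operator norms compose by \Cref{lem:opprod}. Projecting by $\Pi_{good}$ after every step therefore costs $t\cdot\negl(n)$ in total.
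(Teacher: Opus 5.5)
Your overall structure matches the paper's. You show approximate invariance of $\Span(\Pi_{good})$ under every component of $\Ora{V_2}$ and under the adversary's unitaries, and the forward components are exactly invariant by \Cref{claim:validswapin,claim:validswapout}. You are also right that $\ket{\phi_0}$ should range over $S_2\supseteq S_1$, and that it lies in $\Span(\Pi_{good})$ only up to condition (iii).

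The gap is in your third step, which is the crux. You aim to show that the output of $V_2^{L_1,\dagger}$ has squared overlap $1-O(|D_{1,all}|/M_2+M_2/N)$ with $\ket{ValSt_D}$, because the pre-image distribution over $y$ is close to uniform. This holds, exactly, for the uniform superposition over $y$ that a forward query produces. It fails for general vectors in $\Span(\Pi_{good})$, which have arbitrary coefficients $\sum_y c_y\ket{ValSt_{D_y^{L_1,f}}}$. The adversary can produce such vectors: $y=D_x$ sits in $\Reg{In}$, so it may apply $y$-dependent phases before an inverse query, or before the $V^{R,\dagger}$ terms inside $V_2$. Write $P=N-|D_{all}|$ and $m=M_2-|D_{1,all}|$. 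Already for a single $\ket{ValSt_{D_y^{L_1,f}}}$, the output has squared norm $1/m$, of which only $1/P$ (a fraction $m/P\approx M_2/N$) lies in $\Span(\Pi_{good})$. So no relative-overlap bound of the kind you state can hold.

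What makes the step work is an absolute bound. For $D_x\in D_{1,all}$ one gets $V_2^{L_1,\dagger}\sum_y c_y\ket{ValSt_{D_y^{L_1,f}}}=\bigl(m\binom{P-1}{m-1}\bigr)^{-1/2}\ket{D}\sum_{S_2\in\Valid(D)}\bigl(\sum_{y\in S_2\setminus D_{1,all}}c_y\bigr)\ket{S_2}$.
\begin{itemize}
\item The uniform part of $c$ maps exactly to $\ket{ValSt_D}$.
\item For $c\perp\vec{1}$, the norm is at most $\bigl(\binom{P-2}{m-1}/(m\binom{P-1}{m-1})\bigr)^{1/2}\norm{c}\le\norm{c}/\sqrt{m}=\negl(n)\norm{c}$.
\end{itemize}
The second bound uses the spectrum of the $y$-versus-$S_2$ inclusion matrix $B$: $BB^T=\binom{P-2}{m-2}\mathbf{1}+\binom{P-2}{m-1}I$, whose eigenvalue on $\vec{1}^\perp$ is $\binom{P-2}{m-1}$. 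This is exactly \Cref{lem:comb}.

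The paper applies it in the dual form, bounding $\norm{\Pi_{good}V_2^{L_1}(I-\Pi_{good})}_{op}$. It splits $\Span(I-\Pi_{good})$ into invalid-$S_2$ states, which are mapped to invalid states, and valid-but-non-uniform ``type 2'' states. The $L_2,R_2$ cases use the same bound after complementing. Without this eigenvalue bound or an equivalent, your third step does not go through.

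Your adversary-unitary step is right in spirit. To make the coherent sum over $x$ rigorous, use the relaxed projector $\wt{\Pi}_{good}$ that drops condition (iii), as the paper does. It commutes exactly with such unitaries and is $\negl(n)$-close to $\Pi_{good}$.
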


The proof of this lemma is involved and is thus deferred to~\Cref{sec:painful}. In particular, the error in this reduction stems from the fact that for a state in $\Pi_{good}$, $V_2^{L_1,\dagger}$ may map it to a state slightly outside of $\Pi_{good}$. We then bound this overlap by a combinatorial argument given in~\Cref{sec:combinatorics}.

\begin{lemma}
    $V_4=V_3$ and $\ol{V}_4=\ol{V}_3$.
\end{lemma}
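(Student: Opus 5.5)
The plan is to verify the identity $V_3^X = ConS\cdot W_4^X\cdot ConS^\dagger$ component by component, for each $X\in\{L_1,R_1,L_2,R_2\}$ and the barred versions, on a basis of $\Ima(\Pi_{good})$. This will give $V_3^L=V_4^L$ and $V_3^R=V_4^R$ (and likewise for the barred maps). Since $ConS$ is an isometry with $ConS\,ConS^\dagger=\Pi_{good}$, products and the path-recording combinations $V^L(I-V^RV^{R,\dagger})+(I-V^LV^{L,\dagger})V^{R,\dagger}$ then transfer automatically. The reason is that $ConS^\dagger ConS=I$, so $ConS\,W\,ConS^\dagger\cdot ConS\,W'\,ConS^\dagger = ConS\,WW'\,ConS^\dagger$. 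One caveat: the identity operator inside these combinations becomes $\Pi_{good}$ rather than $I$. I will therefore check that both $V_3$ and $V_4$ are understood as supported on $\Pi_{good}$, which holds because $V_3^X=\Pi_{good}V_2^X\Pi_{good}$. Both sides vanish on $\Pi_{good}^\perp$ by construction, so it suffices to compare their actions on the orthonormal basis $\{\ket{ValSt_D}\}_D = \{ConS\ket{D}\}_D$.

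Fix $D$ and first take the case $D_x\in D_{1,all}$, with $X=L_1$. Every valid $S_2$ contains $D_x$, so $V_2^{L_1}$ acts nontrivially on each branch. This gives
\[
V_2^{L_1}\ket{ValSt_D}=\frac{1}{\sqrt{|\Valid(D)|}}\sum_{S_2\in\Valid(D)}\frac{1}{\sqrt{M_2-|D_{1,all}|}}\sum_{y\in S_2\setminus D_{1,all}}\ket{D_y^{L_1,f}}\ket{S_2}.
\]
I will swap the sums and invoke \Cref{claim:validswapin}: for $y\notin D_{1,all}$, the set of $S_2\in\Valid(D)$ containing $y$ is exactly $\Valid(D_y^{L_1,f})$, and this set is empty when $y\in D_{2,all}$. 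The expression then regroups as $\sum_{y\in[N]\setminus D_{all}} c_y\ket{ValSt_{D_y^{L_1,f}}}$ with $c_y=\sqrt{|\Valid(D_y^{L_1,f})|/(|\Valid(D)|(M_2-|D_{1,all}|))}$. This state already lies in $\Ima(\Pi_{good})$, so the outer $\Pi_{good}$ does nothing. To match $W_4^{L_1}$, I need $c_y=1/\sqrt{|[N]\setminus D_{all}|}$, which is a counting identity over binomials. Write $a=|D_{1,all}|$ and $b=|D_{2,all}|$, and assume $D_x\in D_{1,all}$ so that $|D_{all}|=a+b$. Then $|\Valid(D)|=\binom{N-a-b}{M_2-a}$ and $|\Valid(D_y)|=\binom{N-a-b-1}{M_2-a-1}$, whose ratio is $(M_2-a)/(N-a-b)$, exactly as required.

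The case $D_x\notin D_{1,all}$, with $X=L_2$, is the mirror image. Validity forces $D_x\notin S_2$, so $V_2^{L_2}$ applies on every branch. Using \Cref{claim:validswapout}, the $y$ ranging over $[N]\setminus(S_2\cup D_{all})$ regroup into $\Valid(D_y^{L_2,f})$. Here $|\Valid(D)|=\binom{N-a-b-1}{M_2-a}$, since $D_x$ is an extra excluded point, and $|\Valid(D_y)|=\binom{N-a-b-2}{M_2-a}$. With $|[N]\setminus(S_2\cup D_{all})|=N-M_2-b-1$, the normalization again collapses to $1/\sqrt{N-a-b-1}=1/\sqrt{|[N]\setminus D_{all}|}$.

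The remaining cases are the ones where $V_2^X$ acts as the identity. On these, $W_4^X$ is also the identity: in the case $D_x\in D_{1,all}$ every valid $S_2$ contains $D_x$, so the $L_2/R_2$ maps act trivially, and the converse holds in the other case. The barred versions and the $R$-maps follow by the same computation with the $f/b$ and $L/R$ labels swapped.

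The main obstacle I expect is bookkeeping rather than ideas. I must confirm that the $D_y$ for distinct $y$ (and distinct $D$, via \Cref{lem:outputortho}) give orthogonal $ValSt$ states, so that the regrouping is legitimate. I must also handle edge conventions, such as $D_x$ lying in $D_{2,all}$ or $\Valid(D)$ being empty, where both sides should vanish or agree. Finally, I must make sure that the identity appearing in the $V$-combination formula is consistently restricted to $\Pi_{good}$ on both sides.
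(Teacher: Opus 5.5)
Your proposal is correct and matches the paper's proof. Like the paper, you verify $V_3^X = ConS\cdot W_4^X\cdot ConS^\dagger$ component by component on the basis $\{\ket{ValSt_D}\}_D$ of $\Ima(\Pi_{good})$. You split on whether $D_x\in D_{1,all}$, regroup the $y$-sums via \Cref{claim:validswapin} and \Cref{claim:validswapout}, check the binomial normalizations, and observe that the remaining components act as the identity on both sides.

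Your extra discussion of how the combination formula "transfers" is harmless but not needed here. Both $V_3$ and $V_4$ are built from their components by the same formula, so componentwise equality already gives $V_3=V_4$ and $\ol{V}_3=\ol{V}_4$.
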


\begin{proof}
    This follows from directly reasoning about the partial isometries. As an example, we will show $V_4^{L_1} = V_3^{L_1}$ and $V_4^{L_2} = V_3^{L_2}$, and the other cases will be analogous.

    To show this, it is sufficient to show that they behave the same on $\ket{ValSt_D}$ for all databases $D$, since $\Span(\Pi_{good})=\Ima(ConS)$.

    Consider the case $D_x \in D_{1,all}$. Then
    \begin{equation}
        \begin{split}
            &V_4^{L_1} \frac{1}{\sqrt{|\Valid(D)|}}\ket{D}\sum_{S_2 \in \Valid(D)} \ket{S_2}\\
            &=ConS \cdot W_4^{L_1}\cdot  ConS^\dagger \frac{1}{\sqrt{|\Valid(D)|}}\ket{D}\sum_{S_2 \in \Valid(D)} \ket{S_2}\\
            &=ConS\cdot W_4^{L_1} \ket{D}\\
            &=ConS \frac{1}{\sqrt{|[N]\setminus D_{all}|}} \sum_{y\in [N]\setminus D_{all}} \ket{D_y^{L_1,f}} \\
            &=\frac{1}{\sqrt{|[N]\setminus D_{all}|}} \sum_{y\in [N]\setminus D_{all}}\frac{1}{\sqrt{|\Valid(D_y^{L_1,f})|}}\ket{D_y^{L_1,f}} \sum_{S_2\in \Valid(D_y^{L_1,f})} \ket{S_2}
        \end{split}
    \end{equation}
    We can then compute the normalization factor by
    \begin{equation}
        \begin{split}
            &|\Valid(D_y^{L_1,f})|\cdot |[N]\setminus D_{all}|\\
            &={N-|D_{all}|-1\choose M_2-|D_{1,all}|-1}\cdot (N-|D_{all}|)\\
            &=\frac{(N-|D_{all}|-1)!(N-|D_{all}|)}{(M_2-|D_{1,all}|-1)!(N-|D_{all}|-M_2+|D_{1,all}|)!}\\
            &=\frac{(N-|D_{all}|)!(M_2-|D_{1,all}|)}{(M_2-|D_{1,all}|)!(N-|D_{all}|-M_2+|D_{1,all}|)!}\\
            &={N-|D_{all}|\choose M_2-|D_{1,all}|}\cdot (M_2-|D_{1,all}|)
        \end{split}
    \end{equation}
    and so
    \begin{equation}
        \begin{split}
            &V_4^{L_1} \frac{1}{\sqrt{|\Valid(D)|}}\ket{D}\sum_{S_2 \in \Valid(D)} \ket{S_2}\\
            &=\frac{1}{\sqrt{{N-|D_{all}|\choose M_2-|D_{1,all}|}(M_2-|D_{1,all}|)}} \sum_{y\in [N]\setminus D_{all}}\ket{D_y^{L_1,f}} \sum_{S_2\in \Valid(D_y^{L_1,f})} \ket{S_2}
        \end{split}
    \end{equation}
    This let's us compute
    \begin{equation}
        \begin{split}
            &V_3^{L_1}\frac{1}{\sqrt{|\Valid(D)|}}\ket{D}\sum_{S_2 \in \Valid(D)} \ket{S_2}\\
            &=\frac{1}{\sqrt{\Valid(D)}}  \sum_{S_2\in \Valid(D)}\frac{1}{\sqrt{|S_2\setminus D_{1,all}|}}\sum_{y\in S_2\setminus D_{1,all}} \ket{D_y^{L_1,f}}\ket{S_2}\\
            &=\frac{1}{\sqrt{{N-|D_{all}|\choose M_2-|D_{1,all}|}}}\frac{1}{\sqrt{M_2-|D_{1,all}|}} \sum_{y\in [N]\setminus D_{all}}  \ket{D_y^{L_1,f}}\sum_{\substack{S_2 \in \Valid(D)\\y\in S_2}}\ket{S_2}\\
            &=V_4^{L_1}\frac{1}{\sqrt{|\Valid(D)|}}\ket{D}\sum_{S_2 \in \Valid(D)} \ket{S_2}
        \end{split}
    \end{equation}

    On the other side, when $D_x\in D_{1,all}$
    \begin{equation}
        \begin{split}
            &V_4^{L_2} \frac{1}{\sqrt{|\Valid(D)|}}\ket{D}\sum_{S_2 \in \Valid(D)} \ket{S_2}\\
            &=ConS \cdot W_4^{L_2}\cdot  ConS^\dagger \frac{1}{\sqrt{|\Valid(D)|}}\ket{D}\sum_{S_2 \in \Valid(D)} \ket{S_2}\\
            &=ConS\cdot W_4^{L_2} \ket{D}\\
            &=ConS \ket{D}\\
            &=\frac{1}{\sqrt{|\Valid(D)|}}\ket{D}\sum_{S_2 \in \Valid(D)} \ket{S_2}\\
            &=V_3^{L_2}\frac{1}{\sqrt{|\Valid(D)|}}\ket{D}\sum_{S_2 \in \Valid(D)} \ket{S_2}
        \end{split}
    \end{equation}

    For $D_x\notin D_{1,all}$, we similarly have
    \begin{equation}
        \begin{split}
            &V_4^{L_1} \frac{1}{\sqrt{|\Valid(D)|}}\ket{D}\sum_{S_2 \in \Valid(D)} \ket{S_2}\\
            &=ConS \cdot W_4^{L_1}\cdot  ConS^\dagger \frac{1}{\sqrt{|\Valid(D)|}}\ket{D}\sum_{S_2 \in \Valid(D)} \ket{S_2}\\
            &=ConS\cdot W_4^{L_1} \ket{D}\\
            &=ConS \ket{D}\\
            &=\frac{1}{\sqrt{|\Valid(D)|}}\ket{D}\sum_{S_2 \in \Valid(D)} \ket{S_2}\\
            &=V_3^{L_1}\frac{1}{\sqrt{|\Valid(D)|}}\ket{D}\sum_{S_2 \in \Valid(D)} \ket{S_2}
        \end{split}
    \end{equation}

    Thus, it remains to show 
    $$V_3^{L_2}\frac{1}{\sqrt{|\Valid(D)|}}\ket{D}\sum_{S_2 \in \Valid(D)} \ket{S_2}= V_4^{L_2}\frac{1}{\sqrt{|\Valid(D)|}}\ket{D}\sum_{S_2 \in \Valid(D)} \ket{S_2}$$
    We compute
    \begin{equation}
        \begin{split}
            &V_4^{L_2}\frac{1}{\sqrt{|\Valid(D)|}}\ket{D}\sum_{S_2 \in \Valid(D)} \ket{S_2}\\
            &=ConS\cdot W_4^{L_2} \ket{D}\\
            &=ConS \frac{1}{\sqrt{[N]\setminus D_{all}}} \sum_{y\in [N]\setminus D_{all}} \ket{D_y^{L_2,f}}\\
            &=\sum_{y\in [N]\setminus D_{all}}\frac{1}{\sqrt{|\Valid(D_y^{L_2,f})|\cdot (N-|D_{all}|)}} \ket{D_y^{L_2,f}} \sum_{S_2\in \Valid(D_y^{L_2,f})} \ket{S_2}\\
            &=\frac{1}{\sqrt{{N-|D_{all}|-1\choose M_2-|D_{1,all}|}(N-|D_{all}|)}}\sum_{y\in [N]\setminus D_{all}} \ket{D_y^{L_2,f}} \sum_{S_2\in \Valid(D_y^{L_2,f})}\ket{S_2}
        \end{split}
    \end{equation}
    On the other hand,
    \begin{equation}
    \begin{split}
        &V_3^{L_2}\frac{1}{\sqrt{|\Valid(D)|}}\ket{D}\sum_{S_2 \in \Valid(D)} \ket{S_2}\\
        &=\frac{1}{\sqrt{{N-|D_{all}|\choose M_2-|D_{1,all}|}}} \sum_{S_2\in \Valid(D)} \frac{1}{\sqrt{N-|S_2\cup D_{all}|}} \ket{D_y^{L_2,f}}\sum_{y\in N\setminus (S_2\cup D_{all})} \ket{S_2}\\
        &=\frac{1}{\sqrt{{N-|D_{all}|\choose M_2-|D_{1,all}|}(N-M_2-|D_{all}|+|D_{1,all}|)}}\sum_{y\in [N]\setminus D_{all}} \ket{D_y^{L_2,f}} \sum_{\substack{S_2 \in \Valid(D)\\y\notin S_2}}\ket{S_2}\\
        &=\frac{1}{\sqrt{{N-|D_{all}|\choose M_2-|D_{1,all}|}(N-M_2-|D_{all}|+|D_{1,all}|)}}\sum_{y\in [N]\setminus D_{all}} \ket{D_y^{L_2,f}} \sum_{S_2\in \Valid(D_y^{L_2,f})}\ket{S_2}\\
        &=\frac{1}{\sqrt{{N-|D_{all}|-1\choose M_2-|D_{1,all}|}(N-|D_{all}|)}}\sum_{y\in [N]\setminus D_{all}} \ket{D_y^{L_2,f}} \sum_{S_2\in \Valid(D_y^{L_2,f})}\ket{S_2}\\
        &=V_4^{L_2}\frac{1}{\sqrt{|\Valid(D)|}}\ket{D}\sum_{S_2 \in \Valid(D)} \ket{S_2}
    \end{split}
    \end{equation}
    where the second to last equality comes from~\Cref{claim:validswapout}.
\end{proof}

\begin{lemma}
    $\A^{\Ora{V_5}}=\A^{\Ora{V_4}}$
\end{lemma}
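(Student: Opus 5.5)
The plan is to show that the two oracle tuples are related by conjugation with a fixed isometry on the internal state register alone, so the adversary's view is unchanged. By definition, each component of $\Ora{V_4}$ has the form $ConS\cdot W_4^X\cdot ConS^\dagger$ (and $ConS\cdot \ol{W}_4^X\cdot ConS^\dagger$). The composite operators $V_4,\ol{V}_4$ are built from the $L$ and $R$ components through products and the expressions $I-V^LV^{L,\dagger}$. I would first check that these composite forms are also conjugations. Since $ConS^\dagger ConS=I$, we get
\[
(ConS\,W\,ConS^\dagger)(ConS\,W'\,ConS^\dagger)=ConS\,WW'\,ConS^\dagger .
\]
For the terms containing the identity, $I-V^LV^{L,\dagger}$ acts on $\Ima(ConS)$ as $ConS(I-W^LW^{L,\dagger})ConS^\dagger$. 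Multiplying on either side by $V^L$ or $V^{R,\dagger}$, both of which have support and range inside $\Ima(ConS)$, removes the complement part. Hence $V_4=ConS\,W_4\,ConS^\dagger$, and likewise for $V_4^\dagger$, $\ol{V}_4$ and $\ol{V}_4^\dagger$. Here I only use that $ConS$ acts as $\ket{D}\mapsto\ket{ValSt_D}$, tensored with the identity on $\Reg{In}$, and that $W_4$ acts on the $D$ part only. One caveat needs checking: the $D_x$ label lives in $\Reg{In}$, and $ConS$ must commute with the adversary's operations on $\Reg{In}$ in the appropriate sense. This is the point addressed next.

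Next I would compare $\A^{\Ora{V_4}}$ and $\A^{\Ora{V_5}}$ by induction over the queries. The initial internal state of $V_4$ should equal $ConS$ applied to $V_5$'s initial state. With $D$ empty, $\Valid$ is exactly the family of $M_2$-sets containing $S_1$, which I would reconcile with the definition of $\ket{\phi_0}$; this presumably relies on $S_1\subseteq S_2$ in the sampling. The inductive claim is that after each step, the joint state in the $V_4$ world is $(I\otimes ConS')$ applied to the joint state in the $V_5$ world. Here $ConS'$ is the register-$\Reg{St}$-only isometry
\[
\ket{L_1,R_1,L_2,R_2}\mapsto \ket{L_1,R_1,L_2,R_2}\otimes\frac{1}{\sqrt{|\Valid(D)|}}\sum_{S_2\in\Valid(D)}\ket{S_2}.
\]
The delicate point is that $\Valid(D)$ depends on $D_x$, which sits in the adversary's input register. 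So $ConS$ is not literally an operator on $\Reg{St}$ alone, and adversary operations on $\Reg{In}$ need not commute with it. I would resolve this by showing that between queries the relevant valid set depends only on the relations. After an oracle call, the output $y$ always lies in $D_{1,all}$ or $D_{2,all}$, so condition (iii) of validity is vacuous. Hence on post-query states $\Valid(D)$ equals a set $\Valid'(L_1,R_1,L_2,R_2)$ determined by $\Reg{St}$ alone.

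This last step is the main obstacle. At query time the adversary may feed in a fresh $x$, and the $V_4$ oracle conjugates by $ConS$, whose restriction depends on $x$. I would handle it by noting that $V_4=ConS\,W_4\,ConS^\dagger$ applied to a state of the form $(I\otimes ConS')\ket{\psi}$ gives the same result as $(I\otimes ConS')\,W_4\ket{\psi}$. On inputs where $x$ would violate condition (iii), the component is annihilated by $\Pi_{good}$ in both descriptions; this must be squared with $V_5=W_4\otimes I$, possibly with the identity parts of the definitions absorbing it. With that established, the adversary's reduced state is identical in both worlds at every step, since $ConS'$ is an isometry on a register the adversary never touches. Tracing out the internal register then gives $\A^{\Ora{V_5}}=\A^{\Ora{V_4}}$ exactly.
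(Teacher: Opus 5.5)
Your first step is fine: because $ConS^\dagger ConS=I$, the composite operators satisfy $V_4=ConS\,W_4\,ConS^\dagger$ exactly, and likewise $V_4^\dagger$, $\ol{V}_4$, $\ol{V}_4^\dagger$. You are also right that $ConS$ is not an isometry on $\Reg{St}$ alone, since condition (iii) of validity refers to $D_x$, which lives in $\Reg{In}$. The paper's one-line proof (``identical up to an isometry on register $\Reg{St}$'') passes over exactly this point.

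Your proposed resolution, however, fails in two places. First, post-query outputs need not lie in $D_{1,all}\cup D_{2,all}$. Make an inverse query on a fresh $x\notin S_1$, then a forward query on the answer. The second query takes the branch $(I-W_4^LW_4^{L,\dagger})W_4^{R,\dagger}$, deletes the pair from $R_2$, and returns $x$ with $x\notin D_{1,all}\cup D_{2,all}$.

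Second, and more fundamentally, the adversary can place such an $x$ in $\Reg{In}$ before any query. For such $D$ the identity $ConS\,W_4\,ConS^\dagger(I\otimes ConS')=(I\otimes ConS')\,W_4$ is false. Here $ConS^\dagger$ projects the uniform superposition over $\wt{\Valid}(D)$ (conditions (i) and (ii) only) onto the one over $\Valid(D)\subsetneq\wt{\Valid}(D)$. This multiplies the amplitude by $\sqrt{|\Valid(D)|/|\wt{\Valid}(D)|}=\sqrt{(N-M_2-|D_{2,all}|)/(N-|D_{1,all}|-|D_{2,all}|)}<1$. Meanwhile $V_5=W_4\otimes I$ annihilates nothing, so no identity branch can absorb the loss.

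Concretely, take a single forward query on a fresh $x\notin S_1$, with the $S_2$ register uniform over $M_2$-sets containing $S_1$. It leaves squared norm $(N-M_2)/(N-M_1)$ in the $V_4$ world and $1$ in the $V_5$ world. The same issue breaks your claim that the initial state lies in the image of $ConS$. So exact equality cannot be proved this way, and in fact it does not hold as stated.

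What does hold, and is all the hybrid chain needs, is negligible closeness. Your $ConS'$ is the right tool once it is defined with $\wt{\Valid}$. Set $V_4'=(I\otimes ConS')\,W_4\,(I\otimes ConS')^\dagger$, and similarly for the other three oracles. Now $ConS'$ genuinely acts on $\Reg{St}$ only, and $ConS'\ket{\emptyset}^{\otimes 4}$ is the uniform superposition over $M_2$-sets containing $S_1$, which is what $\ket{\phi_0}$ must be. So your induction goes through verbatim and shows that $\A^{\Ora{V_4'}}$ gives exactly the $V_5$ view.

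Next, $ConS\ket{D}=(I\otimes ConS')\ket{D}$ unless $D_x\notin D_{1,all}\cup D_{2,all}$. In that case $\norm{(ConS-I\otimes ConS')\ket{D}}_2^2=2-2\sqrt{|\Valid(D)|/|\wt{\Valid}(D)|}=O(M_2/N)$. These error vectors are orthogonal for distinct $D$, so $\norm{ConS-I\otimes ConS'}_{op}\le\negl(n)$. This is the same computation the paper does for $\Pi_{good}$ versus $\wt{\Pi}_{good}$. Hence $\norm{V_4-V_4'}_{op}\le 2\norm{ConS-I\otimes ConS'}_{op}$ for each of the four oracles. Then \Cref{prop:diamop} with \Cref{thm:querydiam} shows that $\A^{\Ora{V_4}}$ and $\A^{\Ora{V_5}}$ are within $O(t)\cdot\negl(n)$ of each other.
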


\begin{proof}
    This follows immediately from the fact that $\Ora{V_5}$ and $\Ora{V_4}$ are identical up to an isometry on register $St$.
    \todo{maybe cite a fact in prelims}
\end{proof}

\begin{lemma}
    $\A^{\Ora{V_6}}=\A^{\Ora{V_5}}$
\end{lemma}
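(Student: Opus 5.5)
We prove $\A^{\Ora{V_6}}=\A^{\Ora{V_5}}$ by exhibiting a partial isometry $T$ on the internal register. $T$ maps the state $\ket{L,R}$ of the single path-recording oracle $\Ora{W_6}$ (only the $L_1,R_1$ slots are used; the others stay empty) to a split database $\ket{L_1,R_1,L_2,R_2}$ for $\Ora{W_4}$. We then show the intertwining relations
$$T W_6^{X} = W_4^{X'} T$$
for the matching components. The adversary's view depends only on the oracle up to conjugation by an isometry on $\Reg{St}$, exactly as in the step $V_4\to V_5$. Since both initial internal states are empty databases and $T\ket{\emptyset}=\ket{\emptyset}$, the final states agree after applying $T$ to the internal register, so the output distributions are equal.

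Concretely, $T$ is the classical (permutation-on-basis) map that sends a basis database $(L,R)$ to the partition obtained from the ``reachability'' rule in the sketch. A pair $(x,y)\in L\cup R$ goes to $(L_1,R_1)$ (keeping its $L$/$R$ label) iff it is connected to $S_1$ by a chain of pairs $(x_1,y_1),\dots,(x_s,y_s)=(x,y)$ with $x_1\in S_1$ and $x_i=y_{i-1}$. Every other pair goes to $(L_2,R_2)$. Because edges are recorded in both directions via the $f/b$ orientation conventions, I would define the chain with respect to the orientation induced by how pairs were added: connectivity in the undirected graph on $[N]$ whose edges are the recorded pairs, rooted at $S_1$. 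On reachable databases, i.e.\ those arising from $W_6$ with fresh $y\notin D_{all}$ each time, every new vertex has degree one when created. The graph is therefore a forest, and this connectivity is well defined and injective. So $T$ is an isometry on the reachable subspace, and we take $T$ to be zero on its complement.

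The main step is checking the intertwining on basis states $\ket{x}\ket{L,R}$. Take a forward $L$-query. Let $D=T(L,R)$. Then $x\in D_{1,all}$ iff $x\in S_1$ or $x$ touches an edge in the $S_1$-component. Here $D_{1,all}$ includes $S_1$, which is exactly why the asymmetry in its definition was chosen. In that case, adding the fresh edge $(x,y)$ attaches $y$ to the $S_1$-component, so $T$ places it in $L_1$, matching $W_4^{L_1}$. Otherwise $x$ lies in a component not touching $S_1$, or is isolated. The new edge hangs off that component, so it goes to $L_2$, matching $W_4^{L_2}$. In both cases the sampling range is the same, $[N]\setminus D_{all}$, because $D_{all}$ is invariant under $T$. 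The crucial point is that a fresh $y\notin D_{all}$ never merges two components, so no existing edge changes side. The same check applies to $R$, $\ol{W}$ and the adjoints, using the $b$-orientation.

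The main obstacle is the combined operators $W=W^L(I-W^RW^{R\dagger})+(I-W^LW^{L\dagger})W^{R\dagger}$. For $\Ora{W_4}$ these are built from the products $W_4^{L_1}W_4^{L_2}$ and $W_4^{R_1}W_4^{R_2}$. We must check that these products act, on $T$'s image, exactly as the single $W_6^{L_1}$ (respectively $W_6^{R_1}$) does after conjugation by $T$. That holds because on each basis state exactly one factor acts nontrivially, according to whether $x\in D_{1,all}$. We must also check that the projectors $W^{R}W^{R\dagger}$ commute appropriately with $T$. For the adjoint parts I would argue via $T^\dagger T$ being the projector onto reachable databases, and use that inverse path-recording removes a leaf edge. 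Removing a leaf preserves the forest structure and does not change the side assignment of the remaining edges, so $T W_6^{X\dagger}=W_4^{X'\dagger}T$ on the relevant subspace. I expect this leaf-removal and degree bookkeeping to be the delicate part.
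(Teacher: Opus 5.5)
Your proposal is correct and takes the same route as the paper. The paper defines the same splitting map $T$ via undirected reachability from $S_1$ through $L\cup R$, and verifies $(I\otimes T)W_6^{L_1}=W_4^{L_1}W_4^{L_2}(I\otimes T)$ by the same case split on whether $x$ is reachable, using that a fresh $y\notin D_{all}$ cannot move any existing pair to the other side. Your explicit treatment of the adjoint and combined operators (leaf removal preserving the partition, and $S_1\subseteq D_{all}$ so sampled points never land in $S_1$) fills in what the paper dismisses as ``analogous''; the forest argument is not needed for injectivity, since $L$ and $R$ are recovered directly as $L_1'\cup L_2'$ and $R_1'\cup R_2'$.
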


\begin{proof}
    To show this, we will define a partial isometry $T$ acting on $\mathcal{H}(R_{[N]})^{\otimes 4}$ such that $(I\otimes T) W_6= W_4 (I\otimes T)$.

    In particular, let $L,R \in R_{[N]}$. We will define $(L_1',R_1',L_2',R_2')$ such that
    $$T\ket{L}\ket{R}\ket{\emptyset}\ket{\emptyset} = \ket{L_1'}\ket{R_1'}\ket{L_2'}\ket{R_2'}$$

    We will define $Reachable(L,R)$ to be the set of $x\in [N]$ such that there exists a "path" connecting $x$ to $S_1$ in $L$ and $R$. Formally, we build $Reachable(L,R)$ as follows
    \begin{enumerate}
        \item $S_1 \subseteq Reachable(L,R)$
        \item If there exists $y \in Reachable(L,R)$ such that $(x,y)\in L\cup R$ or $(y,x)\in L\cup R$, then $x\in Reachable(L,R)$.
    \end{enumerate}

    We then define 
    $$L_1' = \{(x,y)\in L:x\in Reachable(L,R)\}$$
    $$R_1' = \{(x,y)\in R:x\in Reachable(L,R)\}$$
    $$L_2' = \{(x,y)\in L:x\notin Reachable(L,R)\}$$
    $$R_2' = \{(x,y)\in R:x\notin Reachable(L,R)\}$$

    We will then show that $(I\otimes T)W_6 = W_4(I\otimes T)$. In particular, we will show $(I\otimes T)W_6^{L_1} = W_4^{L_1} (I\otimes T)$ and $(I\otimes T)W_6^{L_2} = W_4^{L_2} (I\otimes T)$, and the same will hold for other variants by an analogous argument.

    Let $L,R\in R_{[N]}$, $x\in [N]$. Define $DB=\Ima(L)\cup \Dom(L)\cup \Ima(R)\cup \Dom(R)$.

    We will show that $(I\otimes T)W_6^{L_1} = W_4^{L} (I\otimes T)$. The other cases will follow analogously, and so together we will get $(I\otimes T)W_6 = W_4 (I\otimes T)$
    
    We will consider the case where $x\in Reachable(L,R)$ and $x\notin Reachable(L,R)$ separately.

    If $x\in Reachable(L,R)$,
    \begin{equation}
        \begin{split}
            &(I\otimes T) W_6^{L_1} \ket{x}\ket{L}\ket{R}\\
            &=T \frac{1}{\sqrt{N-|DB|}}\sum_{y \in [N]\setminus DB} \ket{y}\ket{L\cup \{(x,y)\}}\ket{R}\\
            &=\frac{1}{\sqrt{N-|DB|}} \sum_{y\in [N]\setminus DB} \ket{y}\ket{L_1'\cup \{(x,y)\}}\ket{R_1'}\ket{L_2'}\ket{R_2'}\\
            &=W_4^{L_1} \ket{x}\ket{L_1'}\ket{R_1'}\ket{L_2'}\ket{R_2'}\\
            &=W_4^{L_1} W_4^{L_2} \ket{x}\ket{L_1'}\ket{R_1'}\ket{L_2'}\ket{R_2'}\\
            &=W_4^{L_1} W_4^{L_2} (I\otimes T)\ket{x}\ket{L}\ket{R}\\
            &=W_4^{L}(I\otimes T)\ket{x}\ket{L}\ket{R}\\
        \end{split}
    \end{equation}
    Here the key observation is that since $x \in Reachable(L,R)$, $(x,y)$ will appear in the $L_1'$ register after applying $T$. Furthermore, since $x\in Reachable(L,R)$, if we define $D=(x,L_1',R_1',L_2',R_2')$, $D_x \in D_{1,all}$ and so $W_4^{L_2}\ket{D}=\ket{D}$.

    If $x\notin Reachable(L,R)$,
    \begin{equation}
        \begin{split}
            &(I\otimes T) W_6^{L_1} \ket{x}\ket{L}\ket{R}\\
            &=T \frac{1}{\sqrt{N-|DB|}}\sum_{y \in [N]\setminus DB} \ket{y}\ket{L\cup \{(x,y)\}}\ket{R}\\
            &=\frac{1}{\sqrt{N-|DB|}} \sum_{y\in [N]\setminus DB} \ket{y}\ket{L_1'}\ket{R_1'}\ket{L_2'\cup \{(x,y)\}}\ket{R_2'}\\
            &=W_4^{L_2} \ket{x}\ket{L_1'}\ket{R_1'}\ket{L_2'}\ket{R_2'}\\
            &=W_4^{L_2}W_4^{L_1} \ket{x}\ket{L_1'}\ket{R_1'}\ket{L_2'}\ket{R_2'}\\
            &=W_4^{L} (I\otimes T)\ket{x}\ket{L}\ket{R}
        \end{split}
    \end{equation}
    Here the key observation is that since $x \notin Reachable(L,R)$, and since $y\notin DB$ is definitely not in $Reachable(L,R)$, $(x,y)$ will appear in the $L_2'$ register after applying $T$. Furthermore, if we define $D=(x,L_1',R_1',L_2',R_2')$, we have $x\notin D_{1,all}$, and so $W_4^{L_1}\ket{D}=\ket{D}$.
\end{proof}

\begin{lemma}
    $\norm{\A^{\Ora{V_7}}-\A^{\Ora{V_6}}}_2^2\leq \negl(n)$
\end{lemma}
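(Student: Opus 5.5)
The plan is to mirror the proof of \Cref{claim:hyb2}. We need to recall what $V_7$ is: the standard inverse-capable path recording oracle of \cite{SMLBH25}, built from $V_7^L,V_7^R,\ol{V}_7^L,\ol{V}_7^R$ acting on $\ket{x}\ket{L}\ket{R}$. These sample the fresh $y$ uniformly from $[N]\setminus \Ima(L\cup R)$ for forward-type components, and from $[N]\setminus\Dom(L\cup R)$ for inverse-type components. By contrast, $W_6$ samples from $[N]\setminus D_{all}$, where $D_{all}=\Ima\cup\Dom\cup S_1\cup\{D_x\}$.

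The steps, in order, are as follows.

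First, we fix the identification of registers. We view $W_6$ as acting on $\ket{x}\ket{L}\ket{R}$, with the two unused relation registers in state $\ket{\emptyset}$ and the $S_2$ register tensored with $I$. The point to check is that both oracles start from the same state and that $V_7$ and $V_6$ act on the same space with the same combining formula
$$V=V^L(I-V^RV^{R,\dagger})+(I-V^LV^{L,\dagger})V^{R,\dagger}.$$

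Second, we bound each component separately. For any basis state $\ket{D}$, the outputs $W_6^{L_1}\ket{D}$ and $V_7^{L}\ket{D}$ are uniform superpositions over the sets $A=[N]\setminus D_{all}$ and $B=[N]\setminus \Ima(D)$ of the same labelled states $\ket{D_y^{L_1,f}}$. The computation in \Cref{claim:hyb2} therefore gives an overlap of at least $\sqrt{|A\cap B|/|A\cup B|}$. Since $A\cup B\subseteq[N]$ and $|A\cap B|\ge N-M_1-2t-1$, this overlap is at least $\sqrt{1-(M_1+2t+1)/N}$. This lower bound is $1-\negl(n)$ because $M_1\le M_2$ and $M_2/N=\negl(n)$.

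Third, we extend the bound to arbitrary states. For basis states $D\neq D'$, \Cref{lem:outputortho} gives that the images $W_6^{L_1}\ket{D}$ and $V_7^{L}\ket{D'}$ are orthogonal. Hence for any $\ket{\phi}$ the quantity $\mathrm{Re}\bra{\phi}V_7^{L,\dagger}W_6^{L_1}\ket{\phi}$ is a convex combination of these per-basis overlaps. This yields $\norm{W_6^{L_1}-V_7^L}_{op}\le\negl(n)$ on the relevant domain, namely relations of size below $t_{max}$. The same argument applies to $R_1$, $\ol{L}_1$ and $\ol{R}_1$.

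Fourth, we assemble the bound on the full oracle. We apply \Cref{lem:opprod} and \Cref{prop:opinv} to the combining formula above. Each term is a product or sum of the four pieces and their adjoints, so $\norm{V_6-V_7}_{op}$ and $\norm{\ol{V}_6-\ol{V}_7}_{op}$ are each at most a constant multiple of the per-component error. \Cref{prop:diamop} then converts this into a diamond-norm bound, and \Cref{thm:querydiam} with $t=\poly(n)$ queries finishes the proof.

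The main obstacle is the third step. \Cref{lem:outputortho} requires that $y$ avoid the intersection of the two databases' supports, and the inverse components use $\Dom$ rather than $\Ima$, so the orthogonality across different $D$ must be verified with care. The plan is to handle this by restricting to the common sampling set $A\cap B$: the component of each output lying on $A\triangle B$ has weight $\negl(n)$ and is discarded first via the triangle inequality. On $A\cap B$, the first-step identification makes the two maps coincide exactly.
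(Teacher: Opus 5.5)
Your proposal is correct and follows the paper's route. The paper proves this lemma in one line by invoking the technique of \Cref{claim:hyb2}, and your steps are exactly that technique: the per-basis overlap bound $\sqrt{1-(M_1+2t+1)/N}$, the orthogonality across distinct $D$, and the assembly via \Cref{lem:opprod,prop:opinv,prop:diamop,thm:querydiam}. Your bound rightly accounts for $S_1\subseteq D_{all}$ through $M_1\le M_2$ and $M_2/N=\negl(n)$.

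The workaround in your final paragraph is unnecessary. Its claim that the maps coincide exactly on $A\cap B$ also ignores the normalization ratio $\sqrt{|A|/|B|}$. The cross terms $\braket{D_y^{L_1,f}}{(D')_y^{L_1,f}}$ already vanish directly, because $y$ avoids $\Ima(D)\cup\Dom(D)$ on the $W_6$ side and avoids $\Ima(D')$ (resp.\ $\Dom(D')$) on the $V_7$ side.
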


\begin{proof}
    This follows from the exact same techniques as~\Cref{claim:hyb2}.
\end{proof}

\begin{lemma}
    $\norm{\A^{\Ora{V_f}}-\A^{\Ora{V_7}}}_2^2\leq \negl(n)$
\end{lemma}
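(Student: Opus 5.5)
The plan is to obtain this bound directly from \Cref{thm:pathrecordingreverse}, instantiated with $S=[N]$ and $t_{max}=t$. By construction, $\Ora{V_7}$ is the standard inverse, conjugate and transpose capable path recording oracle $\vec{V}_{[N],t}$. So the only work is to check that the extra registers carried by the hybrids change nothing, and that the resulting error term is negligible.

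First I will reconcile the registers. In every hybrid the internal state lives on $\mathcal{H}(R_{[N]})^{\otimes 4}\otimes \mathcal{H}(\mathcal{P}([N]))$, with initial state $\ket{\phi_0}$. The hybrid $\Ora{V_7}$ acts nontrivially only on $\Reg{In}$ and on the two relation registers that play the roles of $L$ and $R$, that is, the $L_1,R_1$ registers exactly as in $W_6$. It acts as the identity on the remaining two relation registers (which stay $\ket{\emptyset}$) and on the $S_2$ register (which stays in the uniform superposition over ${[N]\choose M_2}$). Since these registers are never touched by the oracle or by $\A$, the joint state at the end has the form $\A^{\vec{V}_{[N],t}}\otimes \ketbra{\emptyset}^{\otimes 2}\otimes \ketbra{\phi_{S_2}}$. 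Tracing them out therefore leaves the adversary's reduced state unchanged. In the other world, $V_f=U$ does not touch the internal state register at all, so the adversary's output is exactly $\E_{U\gets\Haar([N])}[\A^{\Ora{U}}]$ tensored with the untouched initial state. Any distinguishing advantage is thus a bound on the distance between the adversary's reduced states, and I will compare those.

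Next I apply \Cref{thm:pathrecordingreverse} with $S=[N]$. It gives
$$TD\left(\E_{U\gets\Haar([N])}\left[\A^{\Ora{U}}\right],\A^{\Ora{V_7}}\right)\leq \frac{9t(t+1)}{N^{1/8}}.$$
The right-hand side is negligible because $t=\poly(n)$, and $N\geq M_2/\negl(n)\geq 1/\negl(n)$ is superpolynomial, since $M_1/M_2$ and $M_2/N$ are both negligible. I will also confirm that $t_{max}=t\leq N$, as the definition of the path recording oracle requires. Finally, I convert trace distance into the norm used in the statement via \Cref{prop:opbound}: the $2$-norm of a difference of density matrices is at most the trace norm. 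Squaring only decreases a quantity below $1$.

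The one point needing care, and the step I expect to be the main obstacle, is making sure the operators labelled $\Ora{V_7}$ coincide exactly with $\vec{V}_{[N],t}$. This includes the convention that $V_7 = V^L(I-V^RV^{R,\dagger})+(I-V^LV^{L,\dagger})V^{R,\dagger}$. It also includes checking that the conjugate and transpose components send forward outputs into the $R$ register and backward outputs into the $L$ register, matching the definition from \cite{SMLBH25}. Once that identification is checked, which is a syntactic comparison with the definitions of $W_6^{L_1},W_6^{R_1},\ol{W_6}^{L_1},\ol{W_6}^{R_1}$ with $D_{all}$ replaced by $\Ima(L\cup R)$ or $\Dom(L\cup R)$, the lemma follows immediately.
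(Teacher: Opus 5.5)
Your proposal is correct and matches the paper's approach: the paper proves this lemma in one line by invoking \Cref{thm:pathrecordingreverse} with $S=[N]$. Your additional bookkeeping (tracing out the untouched internal registers, checking that $N$ is superpolynomial, and converting trace distance to the $2$-norm) just makes explicit what the paper leaves implicit; the one nit is that \Cref{prop:opbound} is stated only for pure states, but $\norm{X}_2\leq\norm{X}_1$ holds for any operator, so the conversion still goes through.
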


\begin{proof}
    This follows from~\Cref{thm:pathrecordingreverse}.
\end{proof}
\section{Combinatorial Lemma}\label{sec:combinatorics}

We make use of the following combinatorial lemma in the proof of~\Cref{claim:painful}. We would like to acknowledge that the proof sketch for this lemma was derived by Chat-GPT.

\begin{lemma}\label{lem:comb}
    Let $P$ be any set of size $|P|=N$ and let $2 \leq M \leq N/2$. Define $\mathcal{S} = \{S \subseteq P : |S| = M\}$. Let $\vec{a}:\mathcal{S} \to \C$ be such that
    $$\sum_{S\in \mathcal{S}} \alpha_S = 0$$
    $$\norm{\vec{\alpha}}_2 = 1$$
    Then,
    $$\sum_{y \in P}\abs{\sum_{S \in \mathcal{S}:y\in S} a_S}^2 \leq {N-2 \choose M-1}$$
\end{lemma}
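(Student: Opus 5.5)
The left-hand side is $\|A\vec{\alpha}\|_2^2$, where $A$ is the $N\times\binom{N}{M}$ point--set inclusion matrix: $(A\vec\alpha)_y=\sum_{S\ni y}\alpha_S$. We want to bound the quadratic form $\vec\alpha^\dagger A^\dagger A\vec\alpha$ on the subspace orthogonal to the all-ones vector in $\C^{\mathcal S}$. It is easier to move to the $N\times N$ side and use
$$\|A\vec\alpha\|^2\le \lambda\,\|\vec\alpha\|^2,$$
where $\lambda$ is the largest eigenvalue of $AA^\dagger$ restricted to the range of $A$ applied to vectors orthogonal to $\mathbf 1_{\mathcal S}$.

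First I will compute $AA^\dagger$ explicitly. Its $(y,y')$ entry counts the sets containing both $y$ and $y'$, so
$$AA^\dagger=\binom{N-1}{M-1}I+\binom{N-2}{M-2}(J-I)=\Big(\binom{N-1}{M-1}-\binom{N-2}{M-2}\Big)I+\binom{N-2}{M-2}J=\binom{N-2}{M-1}I+\binom{N-2}{M-2}J,$$
where $J$ is the all-ones matrix. Its eigenvalues are therefore $\binom{N-2}{M-1}$, on vectors orthogonal to $\mathbf 1_P$, and $\binom{N-2}{M-1}+N\binom{N-2}{M-2}$, on $\mathbf 1_P$.

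Next I will show that the hypothesis $\sum_S\alpha_S=0$ forces $A\vec\alpha\perp\mathbf 1_P$. Indeed,
$$\sum_y (A\vec\alpha)_y=\sum_S |S|\,\alpha_S=M\sum_S\alpha_S=0.$$
Equivalently, $A^\dagger\mathbf 1_P=M\mathbf 1_{\mathcal S}$, so $\mathbf 1_P^\dagger A\vec\alpha=M\,\mathbf 1_{\mathcal S}^\dagger\vec\alpha=0$.

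The remaining step is to pass from the $N\times N$ Gram matrix to a bound on $\|A\vec\alpha\|$ itself; this is the one point needing care. Since $A^\dagger A$ and $AA^\dagger$ share nonzero eigenvalues, with $A$ mapping eigenvectors of $A^\dagger A$ to eigenvectors of $AA^\dagger$, I will decompose $\vec\alpha$ into eigenvectors of $A^\dagger A$.
\begin{itemize}
\item The component in $\ker A$ contributes nothing.
\item The eigenvector of $A^\dagger A$ attached to the large eigenvalue is $A^\dagger\mathbf 1_P\propto\mathbf 1_{\mathcal S}$, and $\vec\alpha$ is orthogonal to it by hypothesis.
\item Every other eigenvector of $A^\dagger A$ has eigenvalue $\binom{N-2}{M-1}$.
\end{itemize}
Hence $\vec\alpha^\dagger A^\dagger A\vec\alpha\le\binom{N-2}{M-1}\|\vec\alpha\|^2=\binom{N-2}{M-1}$.

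For a more hands-on alternative, use a singular value decomposition of $A$. Write $\vec\alpha=\sum_i c_i v_i$ over right singular vectors, so that $\|A\vec\alpha\|^2=\sum_i\sigma_i^2|c_i|^2$. Exactly one singular vector, $v\propto\mathbf 1_{\mathcal S}$, has $\sigma^2$ larger than $\binom{N-2}{M-1}$, and $c_v=0$ by hypothesis, which gives the claim. The hypotheses $M\ge2$ and $M\le N/2$ only serve to keep the binomials well defined and nondegenerate.
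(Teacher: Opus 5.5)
Your proof is correct and follows the paper's argument essentially step for step. You compute $AA^\dagger=\binom{N-2}{M-1}I+\binom{N-2}{M-2}J$, transfer its spectrum to $A^\dagger A$, identify $\mathbf{1}_{\mathcal S}$ as the unique top eigenvector via $A^\dagger\mathbf{1}_P=M\mathbf{1}_{\mathcal S}$, and use $\vec\alpha\perp\mathbf{1}_{\mathcal S}$ to bound the quadratic form by $\binom{N-2}{M-1}$ (your top eigenvalue $\binom{N-2}{M-1}+N\binom{N-2}{M-2}$ equals the paper's $M\binom{N-1}{M-1}$).
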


\begin{proof}
    We first define an $N\times {N\choose M}$ matrix $B$ with rows indexed by $P$ and columns indexed by $\mathcal{S}$ as follows:
    $$B_{y,S} = \begin{cases}
        1&y\in S\\
        0&y \notin S
    \end{cases}$$
    
    For ease of notation, let $\vec{1}_{P},\vec{1}_{\mathcal{S}}$ be the all-ones vectors indexed by $P$ and $\mathcal{S}$ respectively. Let $\mathbf{1}_{P\times P}$ be the all-ones matrix with rows and columns both indexed by $P$.

    Let $\vec{a}:\mathcal{S} \to \C$ be such that $\norm{\vec{a}}_2=1$ and $\norm{\vec{a}}_1=0$. Observe that 
    \begin{equation*}
        \begin{split}
            \norm{B\cdot \vec{a}}_2^2 &= \sum_{y \in P} \abs{\sum_{S\in \mathcal{S}}B_{y,S} \cdot a_S}^2\\
            &=\sum_{y \in P}\abs{\sum_{S \in \mathcal{S}:y\in S} a_S}^2
        \end{split}
    \end{equation*}
    and so it suffices to show $\norm{B\vec{a}}_2^2 \leq {N-2 \choose M-1}$. But note that $\norm{B\vec{a}}_2^2 = \vec{a}^\dagger B^\dagger B \vec{a}$. In order to bound this value, we will compute the eigenvalues of $B^\dagger B = B^TB$.

    Note that the non-zero eigenvalues of $B^T B$ are the squared singular values of $B$ or $B^T$, and so the non-zero eigenvalues of $B^T B$ are the same as the non-zero eigenvalues of $BB^T$. We can write $BB^T$ as follows
    \begin{equation}
        \begin{split}
            (BB^T)_{x,y} = \sum_{S \in \mathcal{S}} Ind[x\in S]\cdot Ind[y\in S]\\
            =|\{S:\{x,y\}\subseteq S\}| = \begin{cases}
        {N-1\choose M-1} & \text{if} \ x=y\\
        {N-2 \choose M-2} & \text{if} \ x\neq y
    \end{cases}
        \end{split}
    \end{equation}
    Then
    \begin{equation}
        BB^T = {N-2\choose M-2}\mathbf{1}_{P\times P} + \left({N-1\choose M-1} - {N-2 \choose M-2}\right)I\\
        ={N-2\choose M-2}\mathbf{1}_{P\times P} + {N-2 \choose M-1}I
    \end{equation}

    Now, we can explicitly compute the spectrum of $BB^T$. For each $x\in P$,
    $$(BB^T \vec{1})_x = {N-1\choose M-1} + (N-1){N-2\choose M-2}$$
    and so $\vec{1}$ is an eigenvector with eigenvalue 
    \begin{equation}
        \begin{split}
            \lambda_1 &= {N-1\choose M-1} + (N-1){N-2\choose M-2}\\
            &=\frac{(N-1)!}{(M-1)!(N-M)!} + \frac{(N-1)(N-2)!}{(M-2)!(N-M)!}\\
            &=\frac{(N-1)!}{(M-1)!(N-M)!}+\frac{(M-1)(N-1)!}{(M-1)!(N-M)!}\\
            &=M\frac{(N-1)!}{(M-1)!(N-M)!}=M{N-1\choose M-1}
        \end{split}
    \end{equation}. Furthermore, consider any vector $\vec{v}$ such that $\vec{v}\cdot \vec{1}=0$. Then $\mathbf{1}\vec{v} = \vec{0}$ and so 
    $$BB^T \vec{v} = \left({N-2\choose M-2}\mathbf{1} + {N-2\choose M-1}I\right)\vec{v} = {N-2 \choose M-1}\vec{v}$$

    Therefore, $BB^T$ is a full rank matrix with eigenvalues $M{N-1\choose M-1}$ with multiplicity $1$ and ${N-2\choose M-1}$ with multiplicity $N-1$. Similarly, $B^TB$ has eigenvalues $M{N-1\choose M-1}$ with multiplicity $1$, ${N-2\choose M-1}$ with multiplicity $N-1$, and $0$ with multiplicity ${N\choose M} - N$.

    To complete the argument, we will show that the all-ones vector $\vec{1}_{\mathcal{S}}$ is an eigenvector of $B^TB$ with eigenvalue $M{N-1\choose M-1}$.
    \begin{equation}
        \begin{split}
            (B\vec{1})_S = \sum_{y \in P} Ind[y\in S] = |\{S : y\in S\}| = {N-1\choose M-1}\\
            (B^TB\vec{1})_y = \sum_{S\in \mathcal{S}} Ind[y\in S](B\vec{1})_{S} = M{N-1 \choose M-1}\\
            B^TB\vec{1} = M\cdot {N-1\choose M-1}\vec{1}
        \end{split}
    \end{equation}

    Thus, since the largest eigenvalue of $B^TB$ has multiplicity $1$ and is spanned by $\vec{1}_{\mathcal{S}}$, and since we know $\vec{a}\cdot \vec{1}_{\mathcal{S}} = 0$, we must have that $\vec{a}^\dagger B^TB\vec{a}$ is bounded by the second largest eigenvalue of $B^TB$, which is ${N-2\choose M-1}$.

    In detail, let $\vec{v}_1,\dots,\vec{v}_{{N\choose M}}$ be any basis with $\vec{v}_1 = \vec{1}$, $(\vec{v}_2,\dots,\vec{v}_N)$ orthogonal eigenvectors for the eigenvalue ${N-2\choose M-1}$, and $(\vec{v}_{N+1},\dots,\vec{v}_{N\choose M})$ a basis for the kernel of $B^TB$. Then, $\vec{a} = \sum_i a_i\vec{v}_1$, with $a_i=0$ since $\vec{a}\cdot \vec{1}_{\mathcal{S}}=0$. And so
    \begin{equation}
        \begin{split}
            \vec{a}^\dagger B^TB\vec{a}= \sum_{i,j} a_i^* a_j \vec{v}_i B^TB \vec{v}_j\\
            =\sum_{i \neq 0} |a_i|^2 \lambda_i \leq {N-2 \choose M-1} \sum_{i} |a_i|^2\\
            \leq {N-2 \choose M-1}
        \end{split}
    \end{equation}
\end{proof}

\section{Proof of Lemma B.5}\label{sec:painful}

\begin{lemma}[\Cref{claim:painful} restated]
    $$\norm{\A^{\vec{V_3}}-\A^{\vec{V_2}}}_2 \leq \negl(n)$$
\end{lemma}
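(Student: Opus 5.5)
The plan is to show that, along the run of $\A^{\Ora{V_2}}$, the joint state of the adversary and the oracle stays negligibly close to the image of $I\otimes\Pi_{good}$. Once that holds, replacing each $V_2^X$ by $\Pi_{good}V_2^X\Pi_{good}$ changes the final state only negligibly.

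First, the initial state $\ket{\phi_0}$ lies exactly in $\Pi_{good}$. With $D=(x,\emptyset,\emptyset,\emptyset,\emptyset)$, the set $\Valid(D)$ consists of the $M_2$-sets containing $S_1$ and avoiding $x$ (when $x\notin S_1$). This is not literally the uniform superposition over all $S_2\supseteq S_1$. So I would first tidy the starting point: either observe that $\ket{\phi_0}$ is the correct uniform superposition over $\{S_2 : S_1\subseteq S_2\}$ (the sampling in the lemma), or absorb the initial $\Reg{In}$ content into the same "adversary map" analysis used in the second step below.

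The core is a one-step estimate. For every operator $Q$ that the adversary interleaves, namely an oracle call $V_2^X$, $V_2^{X,\dagger}$, $\ol{V_2}^X$ or $\ol{V_2}^{X,\dagger}$, or an arbitrary unitary $A$ on $\Reg{A}\otimes\Reg{In}$, I would prove
\[
\norm{(I-\Pi_{good})\,Q\,\Pi_{good}}_{op}\le\negl(n).
\]
I would handle the cases as follows.

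\emph{Forward oracle maps.} For $Q=V_2^{L_1}$ on $\ket{ValSt_D}$ with $D_x\in D_{1,all}$, the image is exactly $V_4^{L_1}\ket{ValSt_D}\in\Pi_{good}$. This is the computation already carried out in the proof that $V_4=V_3$, using \Cref{claim:validswapin}. The $L_2$-type maps are handled the same way using \Cref{claim:validswapout}. When $D_x\notin D_{1,all}$, validity forces $D_x\notin S_2$, so only the $2$-maps act, again landing in $\Pi_{good}$. Orthogonality across different $D$ (\Cref{lem:outputortho}) then lifts these pointwise statements to superpositions.

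\emph{Adversary maps.} For $A$ acting on $\Reg{In}$, write $\Pi_{good}$-states as $\sum_D\alpha_D\ket{ValSt_D}$. After $A$, the $\Reg{In}$ value $x$ is changed while $\Valid(\cdot)$ still refers to the old $x$. The component that leaves $\Pi_{good}$ is exactly the part where the new $x'\notin D_{1,all}$ but $x'\in S_2$. For a fixed $D$, the fraction of $S_2\in\Valid(D)$ containing a given $x'$ is about $(M_2-|D_{1,all}|)/(N-|D_{all}|)=\negl(n)$. Because the states $\ket{ValSt_D}$ for different $D$ are orthogonal and $A$ does not touch the $S_2$ register, a Cauchy–Schwarz argument over $x'$ bounds this uniformly. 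I expect a bound of roughly $\sqrt{M_2/N}$.

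\emph{Inverse maps.} This is the main obstacle. Take $Q=V_2^{L_1,\dagger}$ (and its $R$ and conjugate analogues) applied to a general $\Pi_{good}$-state. Preimages are states $\ket{D}\ket{S_2}$ whose $S_2$-amplitudes are proportional to $\sum_{y}$ of indicators $[y\in S_2]$ restricted to $\Valid$. That distribution is not uniform over $\Valid(D)$, so the output leaves $\Pi_{good}$. I would decompose the output's $S_2$-coefficient vector into its projection onto the uniform vector over $\Valid(D)$, which is good, plus a mean-zero remainder. I would then apply \Cref{lem:comb}, with $P=[N]\setminus D_{all}$ and $M=M_2-|D_{1,all}|$, to the amplitudes pulled back through the $y$-sum. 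This bounds the bad norm squared by roughly ${N'-2\choose M'-1}/{N'-1\choose M'-1}\cdot(\text{normalization})\approx (N'-M')/N'$ times a factor $1/(M'\cdot\ldots)$. Done carefully, I expect the bound to come out as $O(1/M_2)$ or $O(M_2/N)$, both negligible. Getting these normalizations right, and handling the $(I-V V^\dagger)$ combinations in $V_2$, is where I expect most of the work.

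Finally, I would combine the steps by a standard induction. Let $\ket{\psi_i}$ be the state after $i$ steps using $\Ora{V_2}$, and $\ket{\psi_i'}$ the state using $\Ora{V_3}$. Then $\norm{\ket{\psi_{i+1}}-\ket{\psi'_{i+1}}}$ grows by at most $\norm{(I-\Pi_{good})Q\Pi_{good}}_{op}$ plus the previous error, using $V_3^X=\Pi_{good}V_2^X\Pi_{good}$ and the contractivity of partial isometries (\Cref{lem:opprod}). Since each composite $V_2$ is built from polynomially many such factors and the adversary makes $t=\poly(n)$ queries, the total error is $\poly(n)\cdot\negl(n)$, which gives the lemma.
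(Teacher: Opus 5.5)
Your architecture is the paper's: show that $\Pi_{good}$ is nearly invariant under each elementary operation, then run a hybrid induction. Forward maps go through \Cref{claim:validswapin,claim:validswapout}, the adjoint direction through \Cref{lem:comb}, and adversary unitaries through a perturbation argument (the paper's cleaner version uses $\wt{\Pi}_{good}$). Your remark about $\ket{\phi_0}$ is a fair catch.

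\textbf{The gap.} The inverse-map estimate $\norm{(I-\Pi_{good})V_2^{L_1,\dagger}\Pi_{good}}_{op}\le\negl(n)$ is false. Fix $z\notin S_1$ and let $D=(z,\emptyset,\emptyset,\emptyset,\emptyset)$. For $y\notin S_1\cup\{z\}$ let $D^{(y)}=D^{L_1,f}_y=(y,\{(z,y)\},\emptyset,\emptyset,\emptyset)$. Then $\Valid(D^{(y)})$ consists of the $M_2$-sets containing $S_1\cup\{z,y\}$, so $\ket{\chi}=(N-|S_1|-1)^{-1/2}\sum_y\ket{ValSt_{D^{(y)}}}$ is a unit vector in $\Pi_{good}$. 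A direct computation gives $V_2^{L_1,\dagger}\ket{\chi}\propto\ket{D}\sum_{S_2\supseteq S_1\cup\{z\}}\ket{S_2}$, with squared norm $1-1/(M_2-|S_1|)$. But every $S_2\in\Valid(D)$ must avoid $z$ by condition (iii), since $z=D_x\notin D_{1,all}=S_1$. So this image is orthogonal to $\Pi_{good}$.

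The full inverse query fails the same way. $V_2^{R,\dagger}$ annihilates that image (its $R_1$ is empty and $D_x\in S_2$), and $\norm{(I-V_2^LV_2^{L,\dagger})\ket{\chi}}^2=1/(M_2-|S_1|)$. Hence $\norm{(I-\Pi_{good})V_2^{\dagger}\ket{\chi}}\approx 1$.

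\textbf{Where your argument breaks.} You decompose the preimage's $S_2$-amplitudes into a uniform part on $\Valid(D)$ plus a mean-zero part. This presumes the support stays inside $\Valid(D)$, i.e.\ $\Valid(D^{L_1,f}_y)\subseteq\Valid(D)$. That is \Cref{claim:validswapin}, which requires $D_x\in D_{1,all}$. It fails exactly when the pair being undone is the only thing placing $D_x$ in $D_{1,all}$. In that case the preimage lives entirely on invalid $S_2$, a component that \Cref{lem:comb} never sees.

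The paper's written argument has the same defect. Its claim that type-1 states stay type-1 under $V_2^{L_1}$ applies \Cref{claim:validswapin} without that hypothesis, and the type-1 state $\ket{D}\sum_{S_2\supseteq S_1\cup\{z\}}\ket{S_2}$ above is mapped isometrically into $\Pi_{good}$. So no commutation bound over all of $\Pi_{good}$ can hold.

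\textbf{What is missing.} You need a reachability invariant, namely the $Reachable$ structure from the $V_5\to V_6$ step. Under $\Ora{V_3}$, a pair enters $D_1$ only by attaching a fresh point to a point of $D_{1,all}$. An adjoint only deletes a pair whose $\Reg{In}$-endpoint is a leaf outside $S_1$. Hence every point of $\Dom(D_1)\cup\Ima(D_1)$ stays connected to $S_1$ through pairs of $D_1$.

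Let $\Pi_{conn}$ project onto such databases. It ignores $D_x$, so it commutes with $\Pi_{good}$ and with the adversary's unitaries. It contains the initial state and is preserved by $\Ora{V_3}$. On $\Pi_{good}\Pi_{conn}$, an undone pair is never an isolated component, so the preimage has $D_x\in D_{1,all}$, and your \Cref{lem:comb} computation then goes through. If you state all one-step estimates relative to $\Pi_{good}\Pi_{conn}$ instead of $\Pi_{good}$, the induction closes.
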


We will show that $\Pi_{good}$ almost commutes with all operations used in $\A^{\vec{V_2}}$. In particular, we will show that $$\norm{\Pi_{good} V_2^{L_1} - V_2^{L_1} \Pi_{good}}_{op} \leq \negl(n)$$ and similar statements hold for all components of $V_2,\ol{V_2}$. We will also show that for all unitaries $A$ acting only on an internal register $\Reg{A}$ and register $\Reg{In}$, $\norm{(A_{\Reg{A},\Reg{In}} \otimes I_{\Reg{St}} (I_{\Reg{A}}\otimes \Pi_{good}) - (I_{\Reg{A}}\otimes \Pi_{good}) (A_{\Reg{A},\Reg{In}\otimes I_{\Reg{St}})}}_{op} \leq \negl(n)$. Note that $\ket{\phi_0} \in \Span(\Pi_{good})$. By~\Cref{prop:diamop,lem:opprod,prop:opinv} we get that $\norm{\Pi_{good} V_2 \Pi_{good} - V_2\Pi_{good}}_{op} \leq \negl(n)$ and the same holds for $V_2^\dagger, \ol{V_2},\ol{V_2}^\dagger$ as well as $A$ as described prior. 

Thus, there exist maps $W_1,\dots,W_{2t}$ such that
$$\A^{\vec{V_2}} = W_{2t} \cdots W_3 \cdot W_2 \cdot W_1 \ket{0}\ket{\phi_0}$$
and for each $i$, $\norm{\Pi_{good} W_i \Pi_{good} - W_i \Pi_{good}}_{\diamond} \leq \negl(n)$. 

Thus, if $\approx$ denotes that the two states are negligibly close in trace distance, we have by induction
\begin{equation}
\begin{split}
    &\A^{\vec{V_2}} = W_{2t} \cdots W_3 \cdot W_2 \cdot W_1 \ket{0}\ket{\phi_0}\\
    &=W_{2t} \cdots W_3 \cdot W_2 \cdot W_1 \cdot (I \otimes \Pi_{good})\ket{0}\ket{\phi_0}\\
    &\approx W_{2t} \cdots W_3 \cdot W_2 \cdot (I \otimes \Pi_{good}) \cdot W_1 \cdot (I \otimes \Pi_{good})\ket{0}\ket{\phi_0}\\
    &\approx W_{2t} \cdots W_3 \cdot (I \otimes \Pi_{good}) \cdot W_2 \cdot (I \otimes \Pi_{good}) \cdot W_1 \cdot (I \otimes \Pi_{good})\ket{0}\ket{\phi_0}\\
    &\approx W_{2t} \cdot (I \otimes \Pi_{good}) \cdots W_3 \cdot (I \otimes \Pi_{good}) \cdot W_2 \cdot (I \otimes \Pi_{good}) \cdot W_1 \cdot (I \otimes \Pi_{good})\ket{0}\ket{\phi_0}\\
    &=\A^{\vec{V_3}}
\end{split}
\end{equation}

In order to show that $\Pi_{good}$ approximately commutes with $\vec{V_2}$, we will show that it approximately commutes with all components of $\vec{V_2}$. We will only show the cases of $V_2^{L_1}$ and $V_2^{L_2}$, and all other cases will follow by symmetric arguments.

\begin{lemma}
    $\norm{\Pi_{good} V_2^{L_1} - V_2^{L_1} \Pi_{good}}_{op} \leq \negl(n)$
\end{lemma}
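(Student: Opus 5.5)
Since $\Pi_{good}$ is a projector, $\norm{\Pi_{good}V_2^{L_1}-V_2^{L_1}\Pi_{good}}_{op}$ is bounded by $\norm{(I-\Pi_{good})V_2^{L_1}\Pi_{good}}_{op}+\norm{\Pi_{good}V_2^{L_1}(I-\Pi_{good})}_{op}$. The second term equals $\norm{(I-\Pi_{good})V_2^{L_1,\dagger}\Pi_{good}}_{op}$ by \Cref{prop:opinv}. So I will bound two quantities: the leakage of $\Pi_{good}$-states under $V_2^{L_1}$ and under $V_2^{L_1,\dagger}$.

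\textbf{Forward leakage.} I will show that $V_2^{L_1}$ maps $\Span(\Pi_{good})$ into itself exactly. Fix a basis element $\ket{ValSt_D}$.
\begin{itemize}
\item If $D_x\notin D_{1,all}$, every valid $S_2$ has $D_x\notin S_2$, so $V_2^{L_1}$ acts as the identity.
\item If $D_x\in D_{1,all}$, every valid $S_2$ contains $D_x$. The same computation used to prove $V_3^{L_1}=V_4^{L_1}$ gives
$$V_2^{L_1}\ket{ValSt_D}\propto\sum_{y\in[N]\setminus D_{all}}\ket{D_y^{L_1,f}}\sum_{S_2\in\Valid(D),\,y\in S_2}\ket{S_2}.$$
By \Cref{claim:validswapin}, the inner sum equals $\sum_{S_2\in\Valid(D_y^{L_1,f})}\ket{S_2}$. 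All the sets $\Valid(D_y^{L_1,f})$ have the same size, so the result is a uniform combination of the states $\ket{ValSt_{D_y^{L_1,f}}}$, which lies in $\Span(\Pi_{good})$.
\end{itemize}
Superpositions cause no trouble: the images of distinct $D$ are orthogonal by \Cref{lem:outputortho}, and $\Pi_{good}$ is block diagonal in $D$. Hence this term is $0$.

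\textbf{Adjoint leakage.} Write a general $\ket{\phi}\in\Span(\Pi_{good})$ as $\sum_{D'}\beta_{D'}\ket{ValSt_{D'}}$.
\begin{itemize}
\item Terms that $V_2^{L_1,\dagger}$ does not annihilate come from $D'=D_y^{L_1,f}$ for some predecessor $D$, i.e.\ $y=D'_x$ is the last point added to $L_1$.
\item Applying $V_2^{L_1,\dagger}$ to $\ket{D_y^{L_1,f}}\ket{S_2}$ gives $\ket{D}\ket{S_2}/\sqrt{|S_2\setminus D_{1,all}|}$. This normalization is the constant $M_2-|D_{1,all}|$ on valid sets, so no cross-$S_2$ distortion arises from it.
\item Collect, for a fixed predecessor $D$, all $y$ contributing to it. The resulting $S_2$-register vector is $\sum_y\gamma_y\sum_{S_2\in\Valid(D),\,y\in S_2}\ket{S_2}$, up to the common normalization.
\item Its component along $\ket{ValSt_D}$ is in $\Pi_{good}$. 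The remainder has zero coefficient sum over $\Valid(D)$.
\end{itemize}
I must show this remainder has negligible norm relative to $\norm{\phi}$. Working dually, the norm of the off-$\Pi_{good}$ part is controlled by inner products $\langle\text{(bad vector)}, V_2^{L_1,\dagger}\phi\rangle = \langle V_2^{L_1}\text{(bad)},\phi\rangle$. Take a bad vector $\ket{D}\sum_{S_2}\alpha_{S_2}\ket{S_2}$ with $\sum\alpha_{S_2}=0$ and $\norm{\alpha}_2=1$, supported on $\Valid(D)$ (support outside $\Valid(D)$ maps outside $\Span(\Pi_{good})$ and contributes nothing). Its image under $V_2^{L_1}$ has overlap with $\ket{ValSt_{D_y^{L_1,f}}}$ equal to
$$\frac{\sum_{S_2\ni y}\alpha_{S_2}}{\sqrt{(M_2-|D_{1,all}|)\,|\Valid(D_y^{L_1,f})|}}.$$
Summing the squares over $y$ and applying \Cref{lem:comb} on the ground set $P=[N]\setminus D_{all}$ with $M=M_2-|D_{1,all}|$ bounds the total by
$$\frac{\binom{N'-2}{M-1}}{M\binom{N'-1}{M-1}}\approx\frac{1}{M}\cdot\frac{N'-M}{N'}\le\frac{1}{M_2-O(t)},$$
where $N'=N-|D_{all}|$. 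This is negligible since $M_2\ge M_1\cdot\mathrm{superpoly}$. The same bound must then be summed coherently over different $D$, which is harmless because the images remain orthogonal by \Cref{lem:outputortho}.

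\textbf{Main obstacle and the $V_2^{L_2}$ case.} I expect the main obstacle to be this adjoint step: setting up the duality carefully and checking that the hypotheses of \Cref{lem:comb} hold on the restricted ground set (including $M\le N'/2$). The $V_2^{L_2}$ case is analogous. The forward direction is again exact, now using \Cref{claim:validswapout}. The adjoint direction uses \Cref{lem:comb} with complements ($y\notin S_2$), or equivalently the lemma applied to complement sets of size $N'-M$, giving a bound of order $1/(N-M_2)$, which is also negligible.
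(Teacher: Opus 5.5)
\textbf{There is a genuine gap here, and in fact the statement as written is false.} Your route is the paper's: exact forward invariance of $\Span(\Pi_{good})$, then splitting $\Span(I-\Pi_{good})$ into invalid-support and zero-sum valid-support pieces, with \Cref{lem:comb} on the latter. It breaks at your parenthetical claim that support outside $\Valid(D)$ ``maps outside $\Span(\Pi_{good})$ and contributes nothing.'' That claim holds only when $D_x\in D_{1,all}$, which is the hypothesis of \Cref{claim:validswapin}. Suppose instead $D_x\notin D_{1,all}$ but $D_x\in S_2$, so $S_2$ violates only condition (iii). Then $V_2^{L_1}$ records the query in $L_1$. Afterwards $D_x\in(D_y^{L_1,f})_{1,all}$, and $S_2$ becomes valid for the new database.

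Concretely, take $x\notin S_1$, $D=(x,\emptyset,\emptyset,\emptyset,\emptyset)$, and
\[
\ket{\phi}\propto\ket{D}\sum_{S_2\supseteq S_1\cup\{x\},\ |S_2|=M_2}\ket{S_2}.
\]
Every such $S_2$ is invalid for $D$, so $\Pi_{good}\ket{\phi}=0$. On the other hand, for every $y\in S_2\setminus S_1$ we have $\Valid(D_y^{L_1,f})=\{S_2:|S_2|=M_2,\ S_1\cup\{x,y\}\subseteq S_2\}$. Regrouping by $y$, the unit vector $\ket{\psi}:=V_2^{L_1}\ket{\phi}$ is a combination of the states $\ket{ValSt_{D_y^{L_1,f}}}$, so it lies entirely in $\Span(\Pi_{good})$. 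Hence $\norm{(\Pi_{good}V_2^{L_1}-V_2^{L_1}\Pi_{good})\ket{\phi}}_2=1$. In your language, $V_2^{L_1,\dagger}\ket{\psi}=\ket{\phi}\perp\Span(\Pi_{good})$, so the adjoint leakage is $1$, not negligible. The same example works for the other components that act on $D_x\in S_2$, namely $V_2^{R_1}$, $\ol{V_2}^{L_1}$ and $\ol{V_2}^{R_1}$.

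So the two-sided operator-norm bound cannot be proved, and no bookkeeping in your duality step will fix this. The paper's proof has exactly the same gap: its observation that type-1 states are mapped to type-1 states cites \Cref{claim:validswapin} outside its hypothesis.

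What survives:
\begin{itemize}
\item your forward step $(I-\Pi_{good})V_2^{L_1}\Pi_{good}=0$;
\item your \Cref{lem:comb} estimate for the zero-sum part;
\item the $V_2^{L_2}$-type components. These act as the identity when $D_x\in S_2$, so invalid states really do stay invalid there, via \Cref{claim:validswapout}.
\end{itemize}
To push the $V_2\to V_3$ hybrid through, one possible route is a smaller invariant subspace. For example, restrict $\Pi_{good}$ to databases all of whose $L_1,R_1$ pairs are connected to $S_1$, in the spirit of the $Reachable$ partition used for $V_5\to V_6$; this excludes states like $\ket{\psi}$. Another possible route is an argument along the actual trajectory, showing that the honest state puts only negligible amplitude on such configurations, since a query outside $D_{1,all}$ lands in $S_2$ with probability roughly $M_2/N$. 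I have not checked either fix in detail.
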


\begin{proof}
    It turns out that $\Pi_{good} V_2^{L_1} \Pi_{good} = V_2^{L_1} \Pi_{good}$. To show this, let $\ket{\phi} = \frac{1}{\sqrt{|\Valid(D)|}} \ket{D}\sum_{S_2\in \Valid(D)} \ket{S_2}$ be a basis state of $\Pi_{good}$. We will show that $V_2^{L_1} \ket{\phi} \in \Span(\Pi_{good})$. 

    If $D_x \notin D_1$, then for all $S_2\in \Valid(D)$, $D_x\notin S_2$. And so,
    \begin{equation}
        \begin{split}
            V_2^{L_1} \ket{\phi}\\
            =\frac{1}{\sqrt{|\Valid(D)|}}\sum_{S_2\in \Valid(D)} V_2^{L_1}( \ket{D} \ket{S_2})\\
            =\frac{1}{\sqrt{|\Valid(D)|}} \ket{D}\sum_{S_2\in \Valid(D)} \ket{S_2}\\
            =\ket{\phi}
        \end{split}
    \end{equation}
    which is clearly in $\Span(\Pi_{good})$.

    If $D_x\in D_1$, then for all $S_2\in \Valid(D)$, $D_1\subseteq S_2$ and so $D_x \in S_2$. Observe that for all $S_2\in \Valid(D)$, for all $y\in S_2\setminus D_1$, $y\neq D_x$ (since $D_x\in D_1$) and $y\notin D_2$ (since $D_2\cap S_2=\emptyset$). Thus, for all $y\in S_2\setminus D_1$, $y\in [N]\setminus D_{all}$.
    \begin{equation}
        \begin{split}
            &V_2^{L_1} \ket{\phi}\\
            &=\frac{1}{\sqrt{|\Valid(D)|}}\sum_{S_2\in \Valid(D)} V_2^{L_1}( \ket{D} \ket{S_2})\\
            &=\frac{1}{\sqrt{|\Valid(D)|}} \sum_{S_2\in \Valid(D)} \frac{1}{\sqrt{|S_2\setminus D_1|}} \sum_{y \in S_2\setminus D_1}\ket{D_y^{L_1,f}}\ket{S_2}\\
            &=\frac{1}{\sqrt{|\Valid(D)|\cdot (M_2-|D_1|)}} \sum_{y \in [N]\setminus D_{all}} \ket{D_y^{L_1,f}} \sum_{\substack{S_2\in \Valid(D)\\y\in S_2}}\ket{S_2}\\
            &=\frac{1}{\sqrt{|\Valid(D)|\cdot (M_2-|D_1|)}} \sum_{y \in [N]\setminus D_{all}} \ket{D_y^{L_1,f}} \sum_{\substack{S_2\in \Valid(D_y^{L_1,f})}}\ket{S_2}\\
        \end{split}
    \end{equation}
    where the last line follows from~\Cref{claim:validswapin}. But note that this is a superposition of basis states for $\Span(\Pi_{good})$, and so is clearly also in $\Span(\Pi_{good})$.

    It thus follows that $\Pi_{good}V_2^{L_1}\Pi_{good} = V_2^{L_1} \Pi_{good}$.

    And so, it remains to show 
    $$\norm{\Pi_{good} V_2^{L_1} \Pi_{good} - \Pi_{good}V_2^{L_1}}_{op}\leq \negl(n)$$
    Observe
    \begin{equation}
        \begin{split}
            \norm{\Pi_{good}V_2^{L_1} (I-\Pi_{good})}_{op}\\
            =\norm{\Pi_{good}V_2^{L_1} - \Pi_{good}V_2^{L_1}\Pi_{good}}_{op}\\
            =\norm{\Pi_{good}V_2^{L_1}\Pi_{good}-\Pi_{good}V_2^{L_1} }_{op}\\
        \end{split}
    \end{equation}
    and so it is sufficient to bound 
    \begin{equation}
        \begin{split}
            &\sup_{\abs{\braket{\phi}}^2\leq1} \norm{\Pi_{good} V_2^{L_1} (I-\Pi_{good})\ket{\phi}}_2^2\\
            &= \sup_{\substack{\ket{\phi} \in \Span(I-\Pi_{good})\\\abs{\braket{\phi}}^2\leq 1}} \norm{\Pi_{good} V_2^{L_1}\ket{\phi}}_2^2
        \end{split}
    \end{equation}
    We can define a spanning set for $\Span(I-\Pi_{good})$ consisting of two types of states
    \begin{enumerate}
        \item States of type $1$ will be states with an "invalid" $S_2$. These will be states of the form
        $$\ket{\phi_1} = \ket{D}\ket{S_2}$$
        for some $D\in \mathcal{DB}$, $S_2\notin \Valid(D)$.
        
        \item States of type $2$ will be states with a valid $S_2$ register, but which are orthogonal from the uniform superposition over valid registers. These will be states of the form
        $$\ket{\phi_2} = \ket{D} \sum_{S_2 \in \Valid(D)} \alpha_{S_2} \ket{S_2}$$
        for some $\vec{\alpha}:\mathcal{P}([N])\to \C$ satisfying
        $$\sum_{S_2\in \Valid(D)} \alpha_{S_2}=\norm{\vec{\alpha}}_1=0$$
        $$\sum_{S_2\in \Valid(D)} \alpha_{S_2}^2=\norm{\vec{\alpha}}_2^2=1$$
    \end{enumerate}
    We then make the following observations
    \begin{enumerate}
        \item States of type $1$ are mapped by $V_2^{L_1}$ into states of type $1$. In particular, let $D\in \mathcal{DB},S_2\notin\Valid(D)$, and $\ket{\phi_1} = \ket{D}\ket{S_2}$. Then if $D_x\in S_2$
        \begin{equation}
            \begin{split}
                &V_2^{L_1} \ket{\phi}\\
                &=V_2^{L_1}\ket{D}\ket{S_2}\\
                &=\ket{D_y^{L_1,f}}\ket{S_2}
            \end{split}
        \end{equation}
        which is also a state of type $1$, since $\Valid(D_y^{L_1,f})\subseteq \Valid(D)$ by~\Cref{claim:validswapin}. If $D_x\notin S_2$, then $V_2^{L_1} \ket{\phi}=\ket{\phi}$ which is also a state of type $1$.
        
        This implies that for all states $\ket{\phi_1}$ of type $1$, $\norm{\Pi_{good}V_2^{L_1}\ket{\phi_1}}_2^2=0$. 

        \item Let $D\in \mathcal{DB}$, $S_2\in \Valid(D)$. Define $$\Pi_{good}^D = \sum_{y \in [N]\setminus D_{all}} \ketbra{ValSt_{D_y^{L_1,f}}}$$
        Then 
        $$\Pi_{good} \ket{D}\ket{S_2} = \Pi_{good}^D \ket{D}\ket{S_2}$$
        and for any $D\neq D'$, by~\Cref{lem:outputortho},
        $$\Pi_{good}^{D'} \Pi_{good}^D = 0$$

        \item Let $A\neq B\in \mathcal{DB}$ and $\vec{\alpha},\vec{\beta}$ be such that 
        $$\ket{\phi_A} = \ket{A}\sum_{S_2\in \Valid(A)} \alpha_{S_2}\ket{S_2}$$
        $$\ket{\phi_B} = \ket{B}\sum_{S_2\in \Valid(B)} \beta_{S_2}\ket{S_2}$$
        are states of type $2$. Then
        \begin{equation}
        \begin{split}
            &\norm{\Pi_{good}V_2^{L_1}(\ket{\phi_A}+\ket{\phi_B})}_2^2 \\
            =& \norm{\Pi_{good}^A V_2^{L_1}\ket{\phi_A} + \Pi_{good}^B V_2^{L_1}\ket{\phi_B}}_2^2 \\
            =&|\bra{\phi_A} V_2^{L_1,\dagger} \Pi_{good}^A V_2^{L_1}\ket{\phi_A}+\bra{\phi_A} V_2^{L_1,\dagger} \Pi_{good}^A \Pi_{good}^B V_2^{L_1}\ket{\phi_B}\\
            &+\bra{\phi_B} V_2^{L_1,\dagger} \Pi_{good}^B\Pi_{good}^A V_2^{L_1}\ket{\phi_A}+\bra{\phi_B} V_2^{L_1,\dagger} \Pi_{good}^B V_2^{L_1}\ket{\phi_B}|\\
            =&|\bra{\phi_A} V_2^{L_1,\dagger} \Pi_{good}^A V_2^{L_1}\ket{\phi_A}+\bra{\phi_B} V_2^{L_1,\dagger} \Pi_{good}^B V_2^{L_1}\ket{\phi_B}|\\
            =&\norm{\Pi_{good}^A V_2^{L_1}\ket{\phi_A}}_2^2+\norm{\Pi_{good}^B V_2^{L_1}\ket{\phi_B}}_2^2\\
            =&\norm{\Pi_{good}V_2^{L_1}\ket{\phi_A}}_2^2+\norm{\Pi_{good} V_2^{L_1}\ket{\phi_B}}_2^2
        \end{split}
        \end{equation}
        
        \item For any state of type $2$, $\norm{V_2^{L_1} \ket{\phi_2}}^2 \leq \negl(n)$. In particular, let $D\in \mathcal{DB}$, and $\vec{\alpha}$ satisfying the requirements for states of type $2$. Let $$\ket{\phi_2} = \ket{D}\sum_{S_2\in \Valid(D)}\alpha_{S_2}\ket{S_2}$$. We will split this up into two cases.
        \begin{enumerate}
            \item  Consider the case when $D_x\in D_1$. Then,
            \begin{equation}
            \begin{split}
                &V_2^{L_1} \ket{\phi_2}\\
                =& \sum_{S_2\in \Valid(D)} \alpha_{S_2} V_2^{L_1} (\ket{D}\ket{S_2})\\
                =& \sum_{S_2\in \Valid(D)} \frac{1}{\sqrt{|S_2\setminus D_1|}} \alpha_{S_2} \sum_{y\in S_2\setminus D_1} \ket{D_y^{L_1,f}}\ket{S_2}\\
                =&\frac{1}{\sqrt{M_2-|D_1|}}\sum_{y\in [N]\setminus D_{all}} \ket{D_y^{L_1,f}}\sum_{\substack{S_2\in \Valid(D)\\y\in S_2}} \alpha_{S_2}\ket{S_2}\\
                =&\frac{1}{\sqrt{M_2-|D_1|}} \sum_{y\in [N]\setminus D_{all}} \ket{D_y^{L_1,f}} \sum_{S_2\in \Valid(D_y^{L_1,f})} \alpha_{S_2} \ket{S_2}
            \end{split}
            \end{equation}
            where the last line follows from~\Cref{claim:validswapin}.
    
            Observe that for all $y\in [N]\setminus D_{all}$, since $D_x\in D_1$, $|\Valid(D_y^{L_1,f})|={N-|D_{all}|-1\choose M_2-|D_1|-1}$.
            
            We can then compute
            \begin{equation}
                \begin{split}
                    &\norm{\Pi_{good} V_2^{L_1} \ket{\phi_2}}_2^2\\
                    =&\norm{\Pi_{good}^D V_2^{L_1} \ket{\phi_2}}_2^2\\
                    =&\sum_{y\in [N]\setminus D_{all}} \abs{\bra{ValSt_{D_y^{L_1,f}}} V_2^{L_1}\ket{\phi_2}}^2\\
                    =&\sum_{y\in [N]\setminus D_{all}} \abs{\frac{1}{\sqrt{|\Valid(D_y^{L_1,f})|\cdot (M_2-|D_1|)}} \sum_{S_2\in \Valid(D_y^{L_1,f}) }\ket{S_2}}^2\\
                    =&\frac{1}{(M_2-|D_1|)\cdot {N-|D_{all}|-1\choose M_2-|D_1|-1}}\sum_{y\in [N]\setminus D_{all}}\abs{\sum_{\substack{S_2\in \Valid(D)\\y\in S_2}}\alpha_{S_2}}^2\\
                    =&\frac{1}{(M_2-|D_1|)\cdot {N-|D_{all}|-1\choose M_2-|D_1|-1}}\sum_{y\in [N]\setminus D_{all}}\abs{\sum_{\substack{S_2\subseteq [N]\setminus D_{all}\\|S_2|=M_2-|D_1|\\y\in S_2}}\alpha_{S_2 \cup D_1}}^2
                \end{split}
            \end{equation}
            So setting $P=[N]\setminus D_{all}$ and $M=M_2-|D_1|$ in~\Cref{lem:comb}, we get
            \begin{equation}
                \begin{split}
                    &\norm{\Pi_{good} V_2^{L_1} \ket{\phi_2}}_2^2\\
                    \leq& \frac{{N-|D_{all}|-2 \choose M_2 - |D_1| - 1}}{(M_2-|D_1|)\cdot {N-|D_{all}|-1\choose M_2-|D_1|-1}}\\
                    \leq& \negl(n)
                \end{split}
            \end{equation}
            \item Now, consider the case where $D_x\notin D_1$. Then for all $S_2\in \Valid(D)$, we know $D_x\notin S_2$. Thus, we get
            \begin{equation}
                \begin{split}
                    &V_2^{L_1} \ket{\phi_2}\\
                    =&\sum_{S_2\in \Valid(D)} \alpha_{S_2}V_2^{L_1}(\ket{D}\ket{S_2})\\
                    =&\ket{D}\sum_{S_2\in \Valid(D)}\alpha_{S_2}\ket{S_2}\\
                    =&\ket{\phi_2}
                \end{split}
            \end{equation}
            which is in $\Span(I-\Pi_{good})$.
        \end{enumerate}
    \end{enumerate}

    Finally, we can write $\ket{\phi} \in \Span(I-\Pi_{good})$ as a superposition of states of types $1$ and $2$ where, for each database $D$, $D$ appears at most once as a state of type $1$ or type $2$. In particular, for all databases $D\in \mathcal{DB}$ and subsets $S_2 \in \mathcal{P}([N])$, there exist constants $b_{D,S_2}, a_{D}, \alpha_{D,S_2}$ such that
    $$\ket{\phi} = \sum_{D\in \mathcal{DB}}\left(\ket{D}\left(\sum_{S_2\notin \Valid(D)} b_{D,S_2}\ket{S_2} + a_D\sum_{S_2\in \Valid(D)} \alpha_{D,S_2} \ket{S_2}\right)\right)$$
    $$\sum_{D} |a_D|^2 + \sum_{D,S_2} |b_{D,S_2}|^2=1$$
    and for all $D$,
    $$\sum_{S_2\in \Valid(D)}|\alpha_{D,S_2}|^2=1$$
    $$\sum_{S_2\in \Valid(D)}\alpha_{D,S_2}=0$$
    Then by our first observation,
    \begin{equation}
    \begin{split}
        &\norm{\Pi_{good}V_2^{L_1}\ket{\phi}}_2^2\\
        =&\norm{\Pi_{good}V_2^{L_1}\left(\sum_{D\in \mathcal{DB}}\ket{D}\left(\sum_{S_2\notin \Valid(D)} b_{D,S_2}\ket{S_2} + a_D\sum_{S_2\in \Valid(D)} \alpha_{D,S_2} \ket{S_2}\right)\right)}_2^2\\
        =&\norm{\Pi_{good}V_2^{L_1}\left(\sum_{D\in \mathcal{DB}}a_D\ket{D} \sum_{S_2\in \Valid(D)} \alpha_{D,S_2} \ket{S_2}\right)}_2^2\\
        \end{split}
    \end{equation}
    And by our third and fourth observations,
    \begin{equation}
    \begin{split}
        &\norm{\Pi_{good}V_2^{L_1}\ket{\phi}}_2^2\\
        =&\sum_{D\in \mathcal{DB}} \norm{a_D \Pi_{good} V_2^{L_1} \ket{D}\sum_{S_2\in \Valid(D) \alpha_{D,S_2}\ket{S_2}}}_2^2\\
        \leq& \left(\sum_{D\in \mathcal{DB}} |a_D|^2 \right)\cdot \negl(n)\\
        =&\negl(n)
        \end{split}
    \end{equation}
\end{proof}

\begin{lemma}
    $\norm{\Pi_{good} V_2^{L_2} - V_2^{L_2} \Pi_{good}}_{op} \leq \negl(n)$
\end{lemma}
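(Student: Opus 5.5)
We follow the structure of the preceding proof for $V_2^{L_1}$, showing first that $V_2^{L_2}$ exactly preserves $\Span(\Pi_{good})$, and then that it maps $\Span(I-\Pi_{good})$ to states with negligible weight on $\Pi_{good}$.

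\textbf{Step 1: $V_2^{L_2}\Pi_{good}=\Pi_{good}V_2^{L_2}\Pi_{good}$.} Take a basis state $\ket{ValSt_D}$ of $\Pi_{good}$. If $D_x\in D_{1,all}$, every $S_2\in\Valid(D)$ contains $D_x$, so $V_2^{L_2}$ acts as the identity on $\ket{ValSt_D}$. If $D_x\notin D_{1,all}$, validity forces $D_x\notin S_2$ for every $S_2\in\Valid(D)$. We then expand
$$V_2^{L_2}\ket{ValSt_D}=\frac{1}{\sqrt{|\Valid(D)|}}\sum_{S_2\in\Valid(D)}\frac{1}{\sqrt{N-|S_2\cup D_{all}|}}\sum_{y\in[N]\setminus(S_2\cup D_{all})}\ket{D_y^{L_2,f}}\ket{S_2}.$$
Because $D_{1,all}\subseteq S_2$ and $S_2\cap D_{all}=D_{1,all}$, the normalizer $N-|S_2\cup D_{all}|=N-M_2-|D_{all}|+|D_{1,all}|$ is independent of $S_2$. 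We can therefore swap the sums into $\sum_{y\in[N]\setminus D_{all}}\ket{D_y^{L_2,f}}\sum_{S_2\in\Valid(D),\,y\notin S_2}\ket{S_2}$. By \Cref{claim:validswapout} the inner sum equals $\sum_{S_2\in\Valid(D_y^{L_2,f})}\ket{S_2}$, and $|\Valid(D_y^{L_2,f})|={N-|D_{all}|-1\choose M_2-|D_{1,all}|}$ is independent of $y$. The result is a uniform combination of $\ket{ValSt_{D_y^{L_2,f}}}$, which lies in $\Span(\Pi_{good})$. This is the same computation as in the $V_3^{L_2}=V_4^{L_2}$ step.

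\textbf{Step 2: bounding $\norm{\Pi_{good}V_2^{L_2}(I-\Pi_{good})}_{op}$.} We use the same spanning set of type-1 and type-2 states. A type-1 state $\ket{D}\ket{S_2}$ with $S_2\notin\Valid(D)$ maps to states $\ket{D_y^{L_2,f}}\ket{S_2}$. The $\supseteq$ direction of \Cref{claim:validswapout} gives $\Valid(D_y^{L_2,f})\subseteq\Valid(D)$, so these are still invalid. The case $y\in D_{all}$ does not arise, since $V_2^{L_2}$ samples $y\notin D_{all}$. Hence type-1 states contribute nothing. The orthogonality argument of the third observation, using \Cref{lem:outputortho} for the $L_2$ variants, again splits the norm additively over databases $D$.

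For a type-2 state with $D_x\in D_{1,all}$, $V_2^{L_2}$ is the identity and the state stays in $\Span(I-\Pi_{good})$. For $D_x\notin D_{1,all}$, we expand as in Step 1 and get
$$\norm{\Pi_{good}V_2^{L_2}\ket{\phi_2}}_2^2=\frac{1}{(N-M_2-|D_{all}|+|D_{1,all}|)\,{N-|D_{all}|-1\choose M_2-|D_{1,all}|}}\sum_{y\in P}\Big|\sum_{S_2\not\ni y}\alpha_{S_2}\Big|^2,$$
with $P=[N]\setminus D_{all}$. Since $\sum_{S_2}\alpha_{S_2}=0$, we have $\sum_{S_2\not\ni y}\alpha_{S_2}=-\sum_{S_2\ni y}\alpha_{S_2}$. \Cref{lem:comb}, applied with $M=M_2-|D_{1,all}|\le |P|/2$, then bounds the numerator by ${|P|-2\choose M-1}$.

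\textbf{Main obstacle: the final ratio.} Simplifying gives ratio $=\frac{M}{(|P|-M)(|P|-1)}\approx M_2/N^2$, which is negligible because $M_2/N=\negl(n)$. This contrasts with the $L_1$ case, where the ratio was $\approx 1/M_2$. The delicate point is checking that the factorial simplification indeed yields this ratio and that the hypothesis $M\le|P|/2$ of \Cref{lem:comb} holds; both follow from $M_2\ll N$ and $|D_{all}|\le 4t+1$.
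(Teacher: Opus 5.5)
Your proposal is correct and follows the paper's proof essentially step for step: exact invariance of $\Span(\Pi_{good})$ via \Cref{claim:validswapout}, type-1 states being annihilated, the flip from $y\notin S_2$ to $y\in S_2$ using $\sum_{S_2}\alpha_{S_2}=0$, and then \Cref{lem:comb}; your explicit simplification of the final ratio to $\frac{M}{(|P|-M)(|P|-1)}$ also fills in a step the paper only asserts. One small slip: $|D_{all}|$ is not bounded by $4t+1$, since $D_{1,all}\supseteq S_1$. The correct bound $|D_{all}|\le M_1+O(t)$, together with $M_1\le M_2\ll N$ and $M_1/M_2=\negl(n)$, still gives $2\le M\le |P|/2$ and $|P|\approx N$, so your conclusion is unaffected.
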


\begin{proof}
    The proof of this will mostly follow the same argument as the previous lemma. There are only two portions of the proof which differ. In particular, we need to separately argue that $$\Pi_{good} V_2^{L_2} \Pi_{good} = V_2^{L_2} \Pi_{good}$$
    and that for $\ket{\phi_2}$ a state of type 2,
    $$\norm{V_2^{L_2}\ket{\phi_2}}_2^2 \leq \negl(n)$$

    Let us first show $\Pi_{good} V_2^{L_2} \Pi_{good} = V_2^{L_2} \Pi_{good}$. Let $\ket{\phi} = \frac{1}{\sqrt{M_2}} \ket{D}\sum_{S_2\in \Valid(D)} \ket{S_2}$ be a basis state of $\Pi_{good}$. We will show that $V_2^{L_2} \ket{\phi} \in \Span(\Pi_{good})$. 

    If $D_x \in D_1$, then for all $S_2\in \Valid(D)$, $D_x\in S_2$. And so,
    \begin{equation}
        \begin{split}
            &V_2^{L_2} \ket{\phi}\\
            =&\frac{1}{\sqrt{M_2-|D_{1}|}}\sum_{S_2\in \Valid(D)} V_2^{L_2}( \ket{D} \ket{S_2})\\
            =&\frac{1}{\sqrt{M_2-|D_1|}} \ket{D}\sum_{S_2\in \Valid(D)} \ket{S_2}\\
            =&\ket{\phi}
        \end{split}
    \end{equation}
    which is clearly in $\Span(\Pi_{good})$.

    If $D_x\notin D_1$, then for all $S_2\in \Valid(D)$, $D_x \in S_2$. Observe that for all $S_2\in \Valid(D)$, for all $y\in [N]\setminus (S_2\cup D_{all})$, $y\neq D_x$ and $y\notin D_2$ (since $D_2\cup \{D_x\}\subseteq D_{all}$). Thus, for all $y\in [N]\setminus D_{all}$, $y\in [N]\setminus (S_2\cup D_{all})$.
    \begin{equation}
        \begin{split}
            &V_2^{L_2} \ket{\phi}\\
            =&\frac{1}{\sqrt{M_2}}\sum_{S_2\in \Valid(D)} V_2^{L_2}( \ket{D} \ket{S_2})\\
            =&\frac{1}{\sqrt{M_2}} \sum_{S_2\in \Valid(D)}\frac{1}{\sqrt{|[N]\setminus (S_2\cup D_{all})|}} \sum_{y \in [N]\setminus (S_2\cup D_{all})}\ket{D_y^{L_2,f}}\ket{S_2}\\
            =&\frac{1}{\sqrt{M_2\cdot (N-M_2-|D_2\cup \{D_x\}|)}} \sum_{y \in [N]\setminus D_{all}} \ket{D_y^{L_2,f}} \sum_{\substack{S_2\in \Valid(D)\\y\notin S_2}}\ket{S_2}\\
            =&\frac{1}{\sqrt{M_2\cdot (N-M_2-|D_2\cup \{D_x\}|)}} \sum_{y \in [N]\setminus D_{all}} \ket{D_y^{L_2}} \sum_{\substack{S_2\in \Valid(D_y^{L_2,f})}}\ket{S_2}\\
        \end{split}
    \end{equation}
    where the last line follows from~\Cref{claim:validswapout}. But note that this is a superposition of basis states for $\Span(\Pi_{good})$, and so is clearly also in $\Span(\Pi_{good})$.

    It thus follows that $\Pi_{good}V_2^{L_2}\Pi_{good} = V_2^{L_2} \Pi_{good}$.

    And so it remains to show that for $\ket{\phi_2}$ a state of type 2,
    $$\norm{V_2^{L_2}\ket{\phi_2}}_2^2 \leq \negl(n)$$

    We will write
    $$\ket{\phi_2} = \ket{D} \sum_{S_2\in \Valid(D)} \alpha_{S_2}\ket{S_2}$$

    Just as in the previous lemma, we split into two cases
    \begin{enumerate}
        \item If $D_x \in D_1$, then for all $S_2\in \Valid(D)$, we know $D_x \in S_2$. And so we get
        \begin{equation}
            V_2^{L_2} \ket{\phi_2} = \ket{\phi_2}
        \end{equation}
        \item If $D_x \notin D_1$, then 
        \begin{equation}
        \begin{split}
            &V_2^{L_2} \ket{\phi_2}\\
            =&\sum_{S_2\in \Valid(D)} \alpha_{S_2} V_2^{L_2}(\ket{D}\ket{S_2})\\
            =&\sum_{S_2\in \Valid(D)}\frac{1}{\sqrt{|[N]\setminus D_{all}|}}\alpha_{S_2} \sum_{y\in [N]\setminus D_{all}}\ket{D_y^{L_2,f}}\ket{S_2}\\
            =&\frac{1}{\sqrt{N-M_2-|D_2\cup \{D_x\}|}}\sum_{y\in [N]\setminus D_{all}}\ket{D_y^{L_2,f}} \sum_{\substack{S_2\in \Valid(D)\\y\notin S_2}}\alpha_{S_2}\ket{S_2}\\
            =&\frac{1}{\sqrt{N-M_2-|D_2\cup \{D_x\}|}}\sum_{y\in [N]\setminus D_{all}}\ket{D_y^{L_2,f}} \sum_{\substack{S_2\in \Valid(D_y^{L_2,f})}}\alpha_{S_2}\ket{S_2}
        \end{split}
        \end{equation}
        where the last line follows from~\Cref{claim:validswapout}. Observe that for all $y\in [N]\setminus D_{all}$,
        $$|\Valid(D_y^{L_2,f})|= {N-|D_{all}| - 1\choose M_2 - |D_1|}$$
        Analogously to the $V_2^{L_1}$ case, we have
        \begin{equation}
            \begin{split}
                &\norm{\Pi_{good}V_2^{L_2}\ket{\phi_2}}_2^2\\
                =&\frac{1}{(N-M_2-|D_2\cup \{D_x\}|)\cdot {N-|D_{all}|-1\choose M_2-|D_1|}} \sum_{y\in [N]\setminus D_{all}}\abs{\sum_{\substack{S_2\in \Valid(D)\\y\notin S_2}} \alpha_{S_2}}^2\\
                =&\frac{1}{(N-M_2-|D_2\cup \{D_x\}|)\cdot {N-|D_{all}|-1\choose M_2-|D_1|}} \sum_{y\in [N]\setminus D_{all}}\abs{\sum_{\substack{S_2\in \Valid(D)\\y\in S_2}} \alpha_{S_2}}^2\\
                =&\frac{1}{(N-M_2-|D_2\cup \{D_x\}|)\cdot {N-|D_{all}|-1\choose M_2-|D_1|}} \sum_{y\in [N]\setminus D_{all}}\abs{\sum_{\substack{S_2\subseteq [N]\setminus D_{all}\\|S_2|=M_2-|D_1|\\y\in S_2}} \alpha_{S_2\cup D_1}}^2
            \end{split}
        \end{equation}
        where the second equality comes from the fact that $\sum_{S_2\in \Valid(D)} \alpha_{S_2}=0$. So setting $P=[N]\setminus D_{all}$ and $M=M_2-|D_1|$ in~\Cref{lem:comb}, we get
        \begin{equation}
        \begin{split}
            &\norm{\Pi_{good}V_2^{L_1}\ket{\phi_2}}_2^2\\
            \leq& \frac{{N-|D_{all}|-2\choose M_2-|D_1|-1}}{(N-M_2-|D_2\cup \{D_x\}|)\cdot {N-|D_{all}|-1\choose M_2-|D_1|}}\\
            \leq& \negl(n)
            \end{split}
        \end{equation}
    \end{enumerate}
\end{proof}

We will conclude the argument by showing that unitaries $A$ which do not touch register $\Reg{St}$ approximately commute with $\Pi_{good}$.

\begin{lemma}
    Let $A$ be a unitary acting on $\Reg{A},\Reg{In}$. Then
    $$\norm{(A_{\Reg{A},\Reg{In}} \otimes I_{\Reg{St}} (I_{\Reg{A}}\otimes \Pi_{good}) - (I_{\Reg{A}}\otimes \Pi_{good}) (A_{\Reg{A},\Reg{In}\otimes I_{\Reg{St}})}}_{op} \leq \negl(n)$$
\end{lemma}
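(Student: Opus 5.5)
Since $A$ acts only on $\Reg{A},\Reg{In}$ and $\Pi_{good}$ acts on $\Reg{In}$ and $\Reg{St}$, the two operators share only $\Reg{In}$. The plan is to bound the commutator by showing that $A$ moves a good state outside $\Pi_{good}$ only when the new value in $\Reg{In}$ lands in $S_2\setminus D_{1,all}$. Fix a basis state $\ket{ValSt_D}$ of $\Pi_{good}$, tensored with some state of $\Reg{A}$.

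Write $D=(x,D')$, where $D'=(L_1,R_1,L_2,R_2)$ is the database part. Then $A$ replaces $\ket{x}$ with a superposition $\sum_{x'}\beta_{x'}\ket{x'}$, entangled with $\Reg{A}$, and leaves $D'$ and the $S_2$ register unchanged. Since $\Valid$ depends on $x$ only through condition (iii), the set $\Valid(x',D')$ equals
\[
\mathcal{V}_0(D')\coloneqq\{S_2 : |S_2|=M_2,\ D_{1,all}\subseteq S_2,\ D_{2,all}\cap S_2=\emptyset\}
\]
whenever $x'\in D_{1,all}$. When $x'\notin D_{1,all}$, it is the subfamily of $\mathcal{V}_0(D')$ with $x'\notin S_2$. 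This suggests replacing $\Pi_{good}$ by the $x$-independent projector
\[
\Pi'=\sum_{D'}\ketbra{\mathcal{V}_0(D')}\otimes I_{\Reg{In}},
\]
where $\ket{\mathcal{V}_0(D')}$ denotes $\ket{D'}$ tensored with the uniform superposition over $\mathcal{V}_0(D')$. The projector $\Pi'$ commutes exactly with $A\otimes I$.

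The argument then shows $\norm{\Pi_{good}-\Pi'}_{op}\leq\negl(n)$, which gives
\[
\norm{(A\otimes I)\Pi_{good}-\Pi_{good}(A\otimes I)}_{op}\leq 2\norm{\Pi_{good}-\Pi'}_{op}.
\]
Both projectors are block diagonal in $(x,D')$. For $x\in D_{1,all}$ the two blocks coincide. For $x\notin D_{1,all}$, $\Pi_{good}$ projects onto $\ket{u}$, the uniform superposition over $\{S_2\in\mathcal{V}_0(D'):x\notin S_2\}$, while $\Pi'$ projects onto $\ket{v}$, the uniform superposition over all of $\mathcal{V}_0(D')$. The operator distance between two rank-one projectors is $\sqrt{1-\abs{\braket{u}{v}}^2}$, and $\abs{\braket{u}{v}}^2$ is the fraction of $S_2\in\mathcal{V}_0(D')$ avoiding $x$. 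If $x\in D_{2,all}$, every such $S_2$ avoids $x$, so the overlap is $1$. Otherwise this fraction is
\[
1-\frac{M_2-|D_{1,all}|}{N-|D_{1,all}|-|D_{2,all}|}\geq 1-\frac{2M_2}{N},
\]
which is $1-\negl(n)$ because $M_2/N=\negl(n)$ and $|D_{all}|\leq |S_1|+4t\ll N$. Hence $\norm{\Pi_{good}-\Pi'}_{op}\leq\sqrt{2M_2/N}=\negl(n)$.

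The main point to check is that the blocks really are orthogonal across different $(x,D')$, so that the block-wise bound gives the operator-norm bound. This holds because the $\ket{ValSt_D}$ for distinct $D$ are orthogonal, as the paper already notes. One must also handle states outside the image of $ConS$, on which $\Pi_{good}$ and $\Pi'$ both vanish. Both projectors are supported on registers whose relations are consistent with some $S_2$, and on the complement of these blocks both are zero. A minor subtlety is that $\Pi'$ is not exactly $\Pi_{good}$ on good states with $x\in D_{2,all}$, but there the overlap is exactly $1$, so this case causes no loss.
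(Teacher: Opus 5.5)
Your proof is correct and is essentially the paper's argument. Your $\Pi'$ is exactly the paper's $\wt{\Pi}_{good}$, the projector obtained by dropping the $D_x$ condition from $\Valid(D)$. It commutes exactly with $A$, it is negligibly close to $\Pi_{good}$ via the same counting ratio $\frac{N-M_2-|D_{2,all}|}{N-|D_{1,all}|-|D_{2,all}|}$, and the triangle inequality then finishes the proof. Your block-diagonal rank-one computation is a slightly cleaner route to the operator-norm bound than the paper's two-sided projection argument. Like the paper, it relies on restricting to databases of size $O(t)$. Your side remark that both projectors vanish off $\Ima(ConS)$ is inaccurate for $\Pi'$, but it is unnecessary, since the blockwise bound already covers every vector.
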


\begin{proof}
    We will define a different projector $\wt{\Pi}_{good}$ which is close to $\Pi_{good}$ in operator norm and exactly commutes with all such unitaries $A$. In particular, define $\wt{\Valid}(D)$ to be the same as $\Valid(D)$ except for the condition on $D_x$. That is, $\wt{\Valid}(D) = \{S_2:D_1\subseteq S_2,D_2\cap S_2=\emptyset\}$. Similarly, we define
    $$\ket{\wt{ValSt}_D} = \frac{1}{\sqrt{|\wt{\Valid}(D)|}}\ket{D}\sum_{S_2\in \wt{\Valid}(D)} \ket{S_2}$$
    Define $\wt{\Pi}_{good}= \sum_D \ketbra{\wt{ValSt}_D}$.

    We will separately show 
    $$\norm{\Pi_{good}\wt{\Pi}_{good}-\Pi_{good}}_{op}\leq \negl(n)$$
    $$\norm{\wt{\Pi}_{good}\Pi_{good}-\Pi_{good}}_{op}\leq \negl(n)$$

    By~\Cref{prop:opbound}, it is sufficient to show that for all unit vectors $\ket{\phi}$,
    $$|\bra{\phi} \wt{\Pi}_{good}\Pi_{good}\wt{\Pi}_{good}\ket{\phi}|^2 \geq 1- \negl(n)$$
    $$|\bra{\phi} \Pi_{good}\wt{\Pi}_{good}\Pi_{good}\ket{\phi}|^2 \geq 1 - \negl(n)$$
    That is, for all unit vectors $\ket{\phi}\in \Span(\Pi_{good})$,
    $$\norm{\wt{\Pi}_{good}\ket{\phi}}_2^2\geq 1- \negl(n)$$
    and for all unit vectors $\ket{\psi}\in \Span(\wt{\Pi}_{good})$,
    $$\norm{{\Pi_{good}}\ket{\phi}}_2^2\geq 1- \negl(n)$$

    Note that $|\wt{\Valid}(D)|={N-|D_1|-|D_2|\choose M_2-|D_1|}$. Meanwhile, $|\Valid(D)|\geq {N-|D_1|-|D_2|-1\choose M_2-|D_1|}$

    And so we observe
    \begin{equation}
    \begin{split}
        &\braket{\wt{ValSt}_D}{ValSt_D}\\
        =&\frac{|\Valid(D)\cap \wt{\Valid}(D)|}{\sqrt{|\Valid(D)|\cdot |\wt{\Valid}(D)|}}\\
        \geq& \frac{|\Valid(D)|}{|\wt{\Valid}(D)|}\\
        \geq& \frac{{N-|D_1| - |D_2|-1\choose M_2-|D_1|}}{{N-|D_1|-|D_2|\choose M_2-|D_1|}}\\
        =&\frac{(N-|D_1|-|D_2|-1)!(M_2-|D_1|)!(N-M_2-|D_2|)!}{(N-|D_1|-|D_2|)!(M_2-|D_1|)!(N-M_2-|D_2|-1)!}\\
        =&\frac{N-M_2-|D_2|}{N-|D_1|-|D_2|}\\
        \geq& \frac{N-M_2-2t}{N}\\
        \geq& 1-\negl(n)
        \end{split}
    \end{equation}
    and for $D\neq D'$,
    $$\braket{\wt{ValSt}_D}{ValSt_D}=0$$

    \begin{enumerate}
        \item Let $\ket{\phi} = \sum_{D} \alpha_D \ket{ValSt_D}$ be a unit vector in $\Span(\Pi_{good})$. Define 
        $$\ket{\wt{\phi}} = \sum_D \alpha_D \ket{\wt{ValSt}_D}$$
        Then 
        \begin{equation}
            \begin{split}
                &\braket{\wt{\phi}}{\phi}\\
                =&\sum_{D}|\alpha_D|^2 (1-\negl(n))\\
                \geq& 1-\negl(n)
            \end{split}
        \end{equation}
        Thus, since $\ket{\wt{\phi}}\in \Span(\wt{\Pi}_{good})$, $$\norm{\wt{\Pi_{good}} \ket{\phi}}_2^2 \geq 1-\negl(n)$$
        \item A symmetric argument gives that for all $\ket{\psi} \in \Span(\wt{\Pi}_{good})$,
        $$\norm{\Pi_{good}\ket{\psi}}_2^2 \geq 1-\negl(n)$$
    \end{enumerate}
    Thus, $\norm{\Pi_{good}-\wt{\Pi}_{good}}_{op}\leq \negl(n)$.

    But it is clear that for any unitary $A$, $$(A_{\Reg{A},\Reg{In}}\otimes I_{\Reg{St}})\cdot (I_{\Reg{A}}\otimes \wt{\Pi}_{good}) = (I_{\Reg{A}}\otimes \wt{\Pi}_{good})\cdot (A_{\Reg{A},\Reg{In}}\otimes I_{St}).$$ And so by~\Cref{lem:opprod}, where $\approx$ means negligibly close in operator distance, we have
    \begin{equation}
    \begin{split}
        A\Pi_{good}\approx A\wt{\Pi}_{good} = \wt{\Pi}_{good}A \approx \Pi_{good}A
    \end{split}
    \end{equation}
\end{proof}
\fi

\end{document}